\NewDocumentCommand{\new}{+m}{%
  {\begingroup\color{black}#1\endgroup}
}
\newtheorem{theorem}{Theorem}[section]
\newtheorem{thm}{Theorem}[section]
\newtheorem{lem}{Lemma}[section]
\newtheorem{cor}{Corollary}[section]
\newtheorem{prop}{Proposition}[section]
\newtheorem{asmp}{Assumption}[section]
\newtheorem{defn}{Definition}[section]
\newcounter{subassumption}[asu]
\renewcommand{\p@subassumption}{\theasu}% Counter prefix.
\newtheoremstyle{remarkstyle}
  {}                    % Space above
  {}                    % Space below
  {\normalfont}         % Body font
  {}                    % Indent amount
  {\itshape}            % Theorem head font
  {.}                   % Punctuation after theorem head
  { }                   % Space after theorem head
  {}                    % Theorem head spec (can be left empty, meaning ‘normal’)
\theoremstyle{remarkstyle}
\newtheorem{rem}{Remark}[section]
\definecolor{wjs}{RGB}{200,0,50}
\definecolor{hyw}{RGB}{153,000,000}
\def\ceil#1{\lceil #1 \rceil}
\def\floor#1{\lfloor #1 \rfloor}
\def\1{\bm{1}}
\def\eps{{\varepsilon}}
\def\epso{{\varepsilon_0}}
\def\deltao{{\delta_0}}
\def\rd{{\textnormal{d}}}
\def\re{{\textnormal{e}}}
\def\rp{{\textnormal{p}}}
\def\HC{\mathrm{HC}}
\DeclareMathAlphabet{\mathsfit}{\encodingdefault}{\sfdefault}{m}{sl}
\SetMathAlphabet{\mathsfit}{bold}{\encodingdefault}{\sfdefault}{bx}{n}
\renewcommand{\xi}{\zeta}
\def\0{{\bf 0}}
\def\1{{\bf 1}}
\def\AM{{\mathcal A}}
\def\GM{{\mathcal G}}
\def\FM{{\mathcal F}}
\def\NM{{\mathcal N}}
\def\PM{{\mathcal P}}
\def\RM{{\mathcal R}}
\def\SM{{\mathcal S}}
\def\UM{{U}}
\def\WM{{\mathcal W}}
\def\RB{{\mathbb R}}
\def\FB{{\mathbb F}}
\DeclareMathOperator{\EB}{\mathbb{E}}
\def\PB{{\mathbb P}}
\def\FPM{\PM_{\Delta}}
\newcommand{\KL}{D_{\mathrm{KL}}}
\newcommand{\Var}{\mathrm{Var}}
\def\rd{{\mathrm{d}}}
\def\re{{\mathrm{e}}}
\def\Algo{{{Tr-GoF}}}
\def\Ent{{\mathrm{Ent}}}
\def\Key{{\mathtt{Key}}}
\def\Simplex{{\mathrm{Simp}}}
\def\token{{w}}
\def\ttoken{\widetilde{w}}
\def\tn{{n_0}}
\def\Voca{{\WM}}
\newcommand\bP{\bm{P}}
\newcommand\bQ{\bm{Q}}
\def\SMmax{\SM^{\mathrm{gum}}}
\def\ars{{\mathrm{ars}}}
\def\Yars{Y}
\def\Sars{T^{\mathrm{ars}}}
\def\hars{h_{\mathrm{ars}}}
\def\hlog{h_{\mathrm{log}}}
\def\hind{h_{\mathrm{ind},\delta}}
\def\hlog{h_{\mathrm{log}}}
\def\hoptars{{h_{\mathrm{opt}, \Delta}}}
\def\hoptarso{{h_{\mathrm{opt}, \Delta_0}}}
\def\bfPt{{\overline{f}_{1,t}}}
\def\bfPn{{\overline{f}_{1,n}}}
\def\bmun{{\overline{\mu}_{1,n}}}
\def\bmut{{\overline{\mu}_{1,t}}}
\def\BFB{\overline{\FB}}
\title{Robust Detection of Watermarks for Large Language Models Under Human Edits}
\author{
{Xiang Li\thanks{University of Pennsylvania; Email: \texttt{lx10077@upenn.edu}. } } 
\and
{Feng Ruan\thanks{Northwestern University; Email: \texttt{fengruan@northwestern.edu}. }} 
\and
{Huiyuan Wang\thanks{University of Pennsylvania; Email: \texttt{huiyuanw@upenn.edu}. }  }  
\and
{Qi Long\thanks{University of Pennsylvania; Email: \texttt{qlong@upenn.edu}. Joint corresponding author.} }
\and
{Weijie J.\ Su\thanks{University of Pennsylvania; Email: \texttt{suw@wharton.upenn.edu}. Joint corresponding author.}}
}
\date{November 21, 2024}
\begin{document}

\maketitle

\begin{abstract}

Watermarking has offered an effective approach to distinguishing text generated by large language models (LLMs) from human-written text. However, the pervasive presence of human edits on LLM-generated text dilutes watermark signals, thereby significantly degrading detection performance of existing methods. In this paper, by modeling human edits through mixture model detection, we introduce a new method in the form of a truncated goodness-of-fit test for detecting watermarked text under human edits, which we refer to as \Algo. We prove that the \Algo~test achieves optimality in robust detection of the Gumbel-max watermark in a certain asymptotic regime of substantial text modifications and vanishing watermark signals. Importantly, \Algo~achieves this optimality \textit{adaptively} as it does not require precise knowledge of human edit levels or probabilistic specifications of the LLMs, in contrast to the optimal but impractical (Neyman--Pearson) likelihood ratio test. Moreover, we establish that the \Algo~test attains the highest detection efficiency rate in a certain regime of moderate text modifications. In stark contrast, we show that sum-based detection rules, as employed by existing methods, fail to achieve optimal robustness in both regimes because the additive nature of their statistics is less resilient to edit-induced noise. Finally, we demonstrate the competitive and sometimes superior empirical performance of the \Algo~test on both synthetic data and open-source LLMs in the OPT and LLaMA families.

\end{abstract}

\newtcolorbox{gptbox}{
    colback=blue!5!white, colframe=blue!75!black,
    rounded corners, boxrule=0.4mm, width=0.9\textwidth,
    before skip=2pt, after skip=2pt, boxsep=2pt,
    left=2mm, right=2mm, % Reduced space to fit avatar inside
    enhanced
}

\newtcolorbox{userbox}{
    colback=gray!10!white, colframe=black!75!black,
    rounded corners, boxrule=0.4mm, width=0.9\textwidth,
    before skip=2pt, after skip=2pt, boxsep=2pt,
    left=2mm, right=2mm, % Reduced space to fit avatar inside
    enhanced
}

\section{Introduction}\label{sec:intro}
Large language models (LLMs) have recently emerged as a transformative technique for generating human-like text and other media~\citep{touvron2023llama,openai2023,achiam2023gpt}. While this advancement boosts productivity across various industries, it also introduces risks related to the ownership and creation of content. These risks include the spread of misinformation \citep{zellers2019defending,weidinger2021ethical,starbird2019disinformation}, challenges to education and academic integrity \citep{stokel2022ai,milano2023large}, and issues concerning data authenticity \citep{radford2023robust,shumailov2023curse}. These problems highlight the urgent need for methodologies to authenticate and verify the origin of text, specifically, determining whether it is generated by LLMs or humans.

Watermarking text during generation by LLMs has offered a principled and viable approach to resolving these issues \citep{kirchenbauer2023watermark,scott2023watermarking,christ2023undetectable}. Since 2023, a variety of watermarking schemes have been introduced \citep{fernandez2023three,kuditipudi2023robust,hu2023unbiased,wu2023dipmark,zhao2024permute,zhao2024provable,liu2024adaptive,giboulot2024watermax,fu2024gumbelsoft,Dathathri2024,xie2024debiasing}. Loosely speaking, a watermarking scheme embeds signals into the process of generating text by using controlled pseudorandomness. This is made possible by the probabilistic nature of LLMs through next-token prediction (NTP) for sequentially generating text. The watermark signals are nearly unnoticeable, if possible, to human readers but are provably detectable once a verifier has knowledge of how the pseudorandomness is constructed \citep{christ2023undetectable}.

\begin{figure}[!htp]
\centering
\begin{flushright}
\begin{userbox}
\begin{minipage}[t]{\textwidth} % Adjust width of text section
\vspace{-0.01in} % Ensure text starts from the top
\textbf{Prompt:} How can statistics benefit the study of watermarks for LLMs?
\end{minipage}
\end{userbox}
\end{flushright}
\vspace{-0.3in}
\begin{flushleft}
\begin{gptbox}
\begin{minipage}[t]{0.05\textwidth} % Adjust width of avatar section
\centering
\includegraphics[width=6mm]{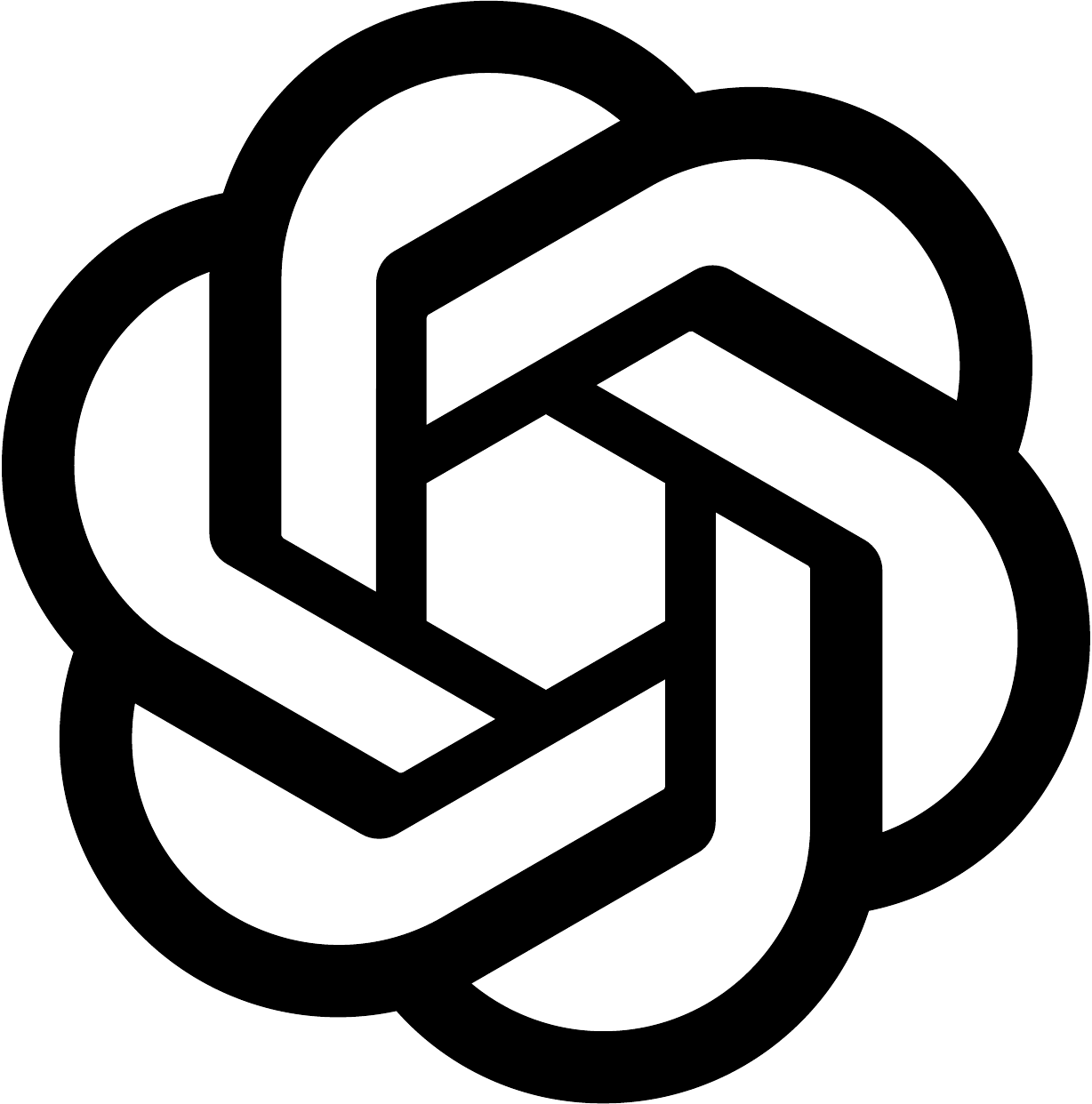} % ChatGPT avatar
\end{minipage}
\begin{minipage}[t]{0.94\textwidth} % Adjust width of text section
\vspace{-0.2in} % Ensure text starts from the top
\textbf{ChatGPT:} Statistics serves as an invaluable conduit for discerning the intricate interplay of resilience, detectability, and adaptive robustness in watermarking frameworks tailored for large language models, thereby anchoring a profound understanding of their operational fidelity across diverse linguistic landscapes.
\end{minipage}
\end{gptbox}
\end{flushleft}
\vspace{-0.3in}
\begin{flushright}
\begin{userbox}
\begin{minipage}[t]{\textwidth} % Adjust width of text section
\vspace{-0.0in} % Ensure text starts from the top
\textbf{Human edits:} Statistics serves as \textcolor{orange}{\underline{a valuable tool}} for \textcolor{orange}{\underline{examining}} the \sout{intricate} interplay of resilience, detectability, and \sout{adaptive} robustness \sout{in} \textcolor{blue!60!black}{\underline{of}} watermarking frameworks \sout{tailored} for large language models, \textcolor{orange}{\underline{helping}} \sout{anchoring} \textcolor{blue!60!black}{\underline{to build}} a \textcolor{orange}{\underline{clearer}} understanding of their \textcolor{orange}{\underline{reliability}} across \textcolor{orange}{\underline{different}} linguistic \textcolor{orange}{\underline{contexts}}.
\end{minipage}
\end{userbox}
\end{flushright}
\caption{How users modify ChatGPT's response through edits such as substitution, deletion, and insertion. \textcolor{orange}{\underline{Orange underline}} marks substitutions, \sout{black strikethrough} indicates deletions, and \textcolor{blue!60!black}{\underline{blue underline}} highlights insertions.
}
\label{fig:human-edit}
\vspace{-0.1in}
\end{figure}

In real-world scenarios, however, text generated from LLMs often undergoes various forms of human edits before its use (see Figure \ref{fig:human-edit} for an illustration), which presents perhaps the most significant challenge for applying watermarks \citep{kirchenbauer2023reliability,nature_editorial}. In the simplest case, a user of LLMs might replace several words of the generated text, either to improve the text from the user's perspective or make it less LLM-sounding. This simple editing process would weaken the watermark since the signals from the modified tokens---the smallest unit in text generation, which can be a word or a sub-word---turn to noise. Worse, how the editing process proceeds is unknown to the verifier. For example, it is not clear if a token in the text is generated by the LLM or has been modified by a human. In light of this, a watermark detection rule should not only be efficient in the edit-free scenario but also, perhaps more importantly, its detection power should not be unduly affected by various forms of human edits.
% In light of this, a watermark detection rule should not only be efficient in the edit-free scenario but also robust to various forms of human edits. 
Equally important is adaptivity: the rule should perform well without prior knowledge of how or to what extent the text has been edited, ensuring practical applicability in handling diverse and unpredictable editing behaviors.

To get a handle on how robust and adaptive existing watermark detection rules are against human edits, we conduct a numerical experiment and present the results in Figure~\ref{fig:intro}. The experiment is concerned with the Gumbel-max watermark \citep{scott2023watermarking}, which has been implemented internally at OpenAI and is the first unbiased\footnote{A watermarking scheme is (approximately) unbiased if the watermarked LLM generates each token with (approximately) the same probability distribution as the unwatermarked counterpart.} watermark. With 5\% of the tokens modified by humans, the detection power when text length is 400 drops from 87.8\% in the edit-free case to 64.7\% with paraphrase edits and further to 30.2\% with adversarial edits. While this significant performance degradation is concerning, it is not surprising as these detection rules \citep{li2024statistical,scott2023watermarking} are developed without any robustness consideration, in particular, assuming there is no signal corruption due to human edits. More critically, the editing process would render the computation of pseudorandomness, which takes as input the before-edited tokens \citep{kirchenbauer2023watermark,scott2023watermarking}, impossible for detecting watermarks.

\begin{figure}[!t]
\vspace{-0.1in}
\centering
\includegraphics[width=0.58\textwidth]{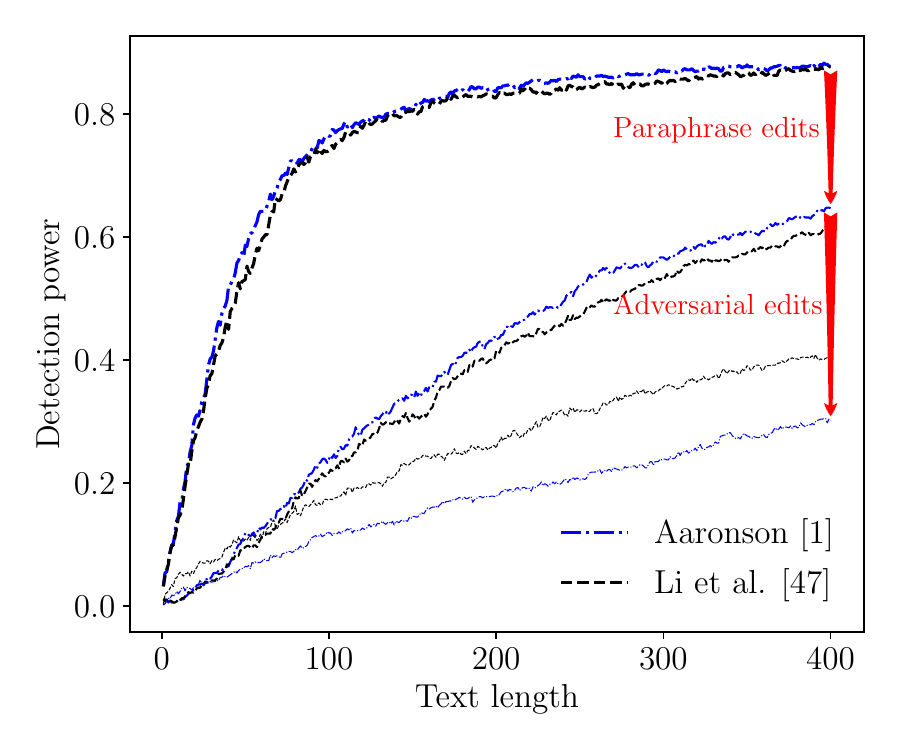}
\vspace{-0.2in}
\caption{
Empirical study of detection methods for the Gumbel-max watermark. Statistical power is evaluated at a 0.01 significance level using 1,000 prompts from the C4 dataset~\citep{raffel2020exploring} and 400 generated tokens per prompt with the OPT-1.3B model~\citep{zhang2022opt} (temperature 0.3). Paraphrase edits randomly replace 5\% of tokens with WordNet synonyms~\citep{miller1995wordnet}, while adversarial edits replace the 5\% with the strongest watermark signals. Thinner curves indicate higher levels of human edits. See Supplementary~\ref{exp:detials} for details.
}
\label{fig:intro}    
\vspace{-0.1in}
\end{figure}

Thus, there is a pressing need to develop robust and adaptive methods for detecting watermarks in LLM-generated text that may undergo human edits. Existing work on watermark detection typically assumes a scenario where each token contributes a signal to the watermark. For example, Li et al. \citep{li2024statistical} introduced a statistical framework for evaluating detection efficiency based on sum-based statistics that measure the cumulative watermark strength across all tokens. However, this framework operates under the ideal assumption that tokens are either fully human-written or entirely generated by an LLM. When applied to human-edited text---a combination of human- and LLM-generated tokens---sum-based statistics may be easily corrupted by noise, leading to substantial degradation in efficiency, as illustrated in Figure~\ref{fig:intro}.

\subsection{Our Contributions} In this paper, we address the robustness of detecting LLM watermarks from a statistical viewpoint, providing both a statistical framework and methodologies with theoretical guarantees alongside empirical validation.

\new{
A central challenge in robust watermark detection is understanding how human edits affect detection performance. Although the exact editing process cannot be recovered from the final modified text, a key insight is that—due to the indistinguishability between pseudorandomness and true randomness—each token either contributes a watermark signal or acts as noise, depending on whether it has been altered by human edits.
In the framework of Li et al.~\cite{li2024statistical}, the noise follows a fixed and known distribution, while the signal depends on the token distribution predicted by the LLM. We refer to this predictive multinomial distribution as the NTP distribution, denoted by $\bP_t = (P_{t,\token})_{\token \in \Voca}$, where $\Voca$ is the vocabulary used for predicting the $t$-th token. For instance, GPT-2/3.5 models use a vocabulary of size 50,257 \citep{radford2019language,brown2020language}, while the LLaMA-2 models use 32,000 tokens \citep{touvron2023llama}.
Motivated by this binary behavior, we model the watermark signal across all tokens as a mixture: with probability $\eps_n$, a token carries signal; otherwise, it contributes noise. This leads to a natural statistical question: under what conditions can we reliably distinguish this mixture (representing watermarked text potentially modified by humans) from a pure noise distribution (representing entirely human-written text)? We address this question in the following result, illustrated, for example, in the case of the Gumbel-max watermark. (We write $a_n \asymp b_n$ to mean there exist constants $C_1$ and $C_2$ such that $C_1 a_n \le b_n \le C_2 a_n$ for all $n$):
}

% First and foremost, a key question is when it is statistically possible to distinguish this mixture as the alternative hypothesis (indicating watermarked text under human editing) from the pure noise distribution as the null hypothesis (indicating human-written text), for example, in the case of the Gumbel-max watermark. This is addressed in the following finding of our paper (we write $a_n \asymp b_n$ if there exist constants $C_1$ and $C_2$ such that $C_1 a_n \le b_n \le C_2 a_n$ for all $n$):

\vspace{-1.1em} 
\paragraph{Phase transition for watermark detection.} We consider an asymptotic regime where the text length $n$ tends to infinity, but individual tokens contain less watermark signal. Specifically, we assume $\eps_n \asymp n^{-p}$ and $1 - \max_{\token \in \Voca} P_{t,\token} \asymp n^{-q}$ for constants $0 < p, q < 1$, inspired by the sparse mixture detection problem \cite{donoho2004higher}. Detection becomes more challenging as either $p$ or $q$ increases. A larger value of $p$ means fewer tokens contribute to the watermark after human editing, resulting in a diluted watermark; similarly, a larger value of $q$ indicates that the NTP distributions are less regular due to being approximately degenerate, which weakens watermark signals (see elaboration in Section~\ref{sec:detectability}). Our finding is that the optimal detection region in the $(p, q)$ plane has boundary $q + 2p = 1$. Explicitly, when $q + 2p > 1$, no test can asymptotically achieve zero Type I and Type II errors for the Gumbel-max watermark; when $q + 2p < 1$, in stark contrast, both Type I and Type II errors approach zero using the likelihood-ratio test.

While the likelihood-ratio test achieves theoretical optimality, it requires knowledge of the watermark fraction $\eps_n$ and all NTP distributions $\bP_t$ for $t = 1, \ldots, n$. To make this boundary practically relevant, it is essential to develop a practical method that can reliably separate the null and alternative distributions as $n \rightarrow \infty$. For this purpose, it is instructive to examine the statistical limits of sum-based test statistics as employed in \cite{li2024statistical,kuditipudi2023robust,fernandez2023three}. We prove that these sum-based detection rules fail to achieve the optimal boundary $q + 2p = 1$ in general, and more precisely, the best possible detection boundary these methods can achieve is $q + p = 1/2$.

This analysis offers insights into the causes of this suboptimality, which we leverage to develop a new, practically adaptive method that achieves the optimal boundary:

\vspace{-1.1em} 
\paragraph{An adaptively optimal method.} Our approach utilizes the empirical cumulative distribution function of p-values from the text under detection, comparing it with the null counterpart through a form of divergence \citep{jager2007goodness}. Unlike sum-based approaches \cite{li2024statistical,kuditipudi2023robust,fernandez2023three}, this goodness-of-fit test adaptively identifies areas of significant departure between null and alternative distributions without requiring prior knowledge of the watermark fraction or NTP regularity. Moreover, a crucial innovation is to truncate the search for the location to filter out extreme scores for stability. This gives a family of truncated goodness-of-fit tests, which we refer to as \Algo~for short (details in Algorithm \ref{alg:detection}). We show that \Algo~achieves the robust detection boundary $q + 2p = 1$ without relying on any tuning parameters, thereby possessing adaptive optimality \cite{donoho2004higher,donoho2015higher}. Figure \ref{fig:intro-better-result} shows the robustness of \Algo~in detecting the Gumbel-max watermark in human-edited text.

% \vspace{-0.1in}
\begin{figure}[t!]
\centering
\includegraphics[width=\textwidth]{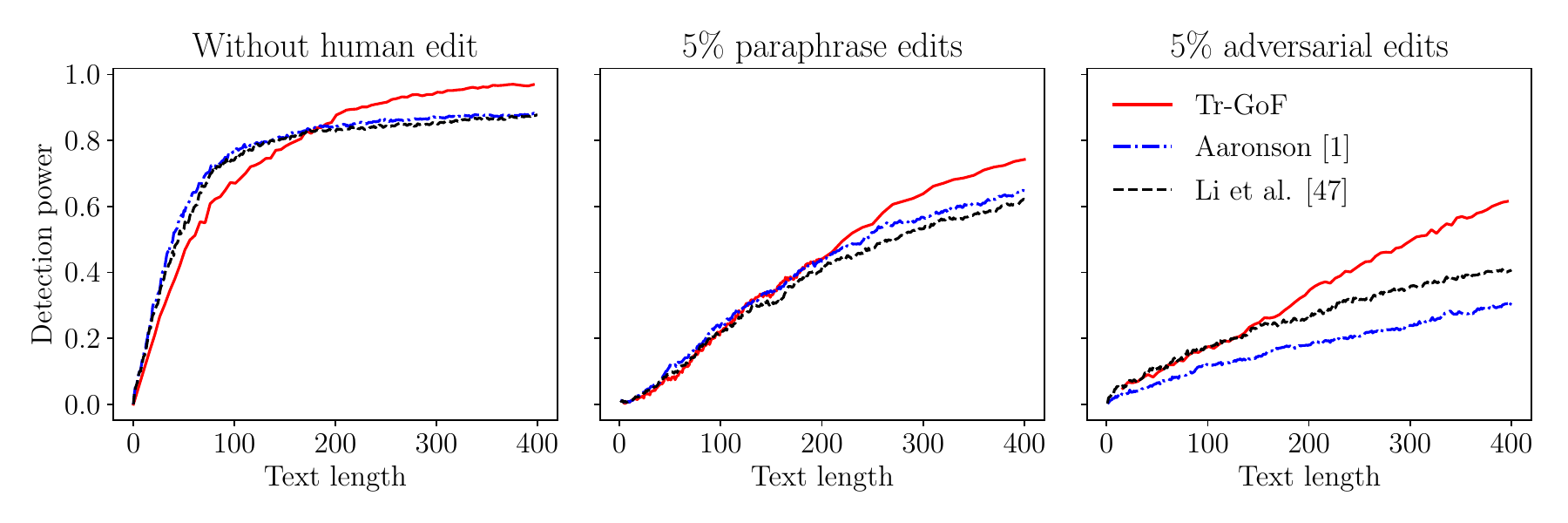}
\vspace{-0.2in}
\caption{
\Algo~shows improved detection efficiency under the same setup as Figure \ref{fig:intro}, with a more pronounced performance advantage in adversarial edits compared to paraphrase edits.
}
\label{fig:intro-better-result} 
% \vspace{-0.1in}
\end{figure}
% \vspace{-0.1in}

Having established the optimality of \Algo~for sparse watermark detection, we further examine its performance in a regime of ``dense'' watermark signals. More precisely, we consider the setting where the watermark fraction $\eps_n \equiv \eps$ remains constant, and the regularity level of the NTP distribution satisfies $1 - \max_{\token \in \Voca} P_{t,\token} \ge \Delta$ for some constant $0 < \Delta < 1$. Below is our third main result in this paper:

\vspace{-1.5em}
\paragraph{\Algo~attains optimal efficiency rate.} We show that the \Algo~test achieves the highest class-dependent efficiency under the framework of watermarks for LLMs proposed by Li et al.~\cite{li2024statistical}. This optimality is adaptive in that it does not rely on any prior knowledge of $\Delta$ or $\eps$, whereas existing sum-based detection rules fail to achieve the same level of efficiency due to their lack of adaptivity for dense watermark detection (see Figure~\ref{fig:efficiency} for illustration). This finding is unexpected as the \Algo~test was originally motivated by sparse watermark detection. Taken together, our results show that the \Algo~test achieves optimality in both sparse and dense regimes of watermark detection.

\subsection{Related Work}

Most robust watermarking methods focus on algorithmic or cryptographic approaches. Algorithmic methods emphasize resilient watermarking designs \citep{yoo2023robust, kuditipudi2023robust, zhu2024duwak, hou2024semstamp, ren2023robust, giboulot2024watermax}, while cryptographic approaches protect watermark signals through robust pseudorandom codes \citep{golowich2024edit, christ2024pseudorandom}. However, rigorous statistical analysis of robustness in watermark detection remains limited, largely due to the absence of a cohesive statistical framework. Definitions of robustness also vary widely.

Our approach introduces a statistical perspective by modeling human edits as a mixture distribution, enabling the establishment of detection boundaries and class-dependent efficiency rates. Another statistical approach to robustness is the edit tolerance limit, which quantifies the maximum proportion of edits a detection method can withstand while remaining effective. Studies often apply this perspective to the green-red list watermark, initially proposed in \citep{kirchenbauer2023watermark, kirchenbauer2023reliability} and valued for its simplicity \citep{fernandez2023three, hu2023unbiased, wu2023dipmark, zhu2024duwak, liu2024a}. Zhao et al.~\cite{zhao2024provable} further improved its tolerance by introducing a fixed pseudorandom code, enhancing resilience against text modifications. Additional refinements of this approach are discussed in \citep{kuditipudi2023robust, cai2024towards, huang2023towards}.

Other forms of robustness include decoder-focused and cryptographic approaches. Zhao et al.~\citep{zhao2024permute} examines decoder functions that map logit vectors to token probabilities, proposing a provably robust decoder for enhanced resilience. In cryptographic studies, robustness is designed to make pseudorandom codes resistant to a fixed proportion of human edits \citep{golowich2024edit, christ2024pseudorandom}. Our statistical viewpoint complements these approaches, providing an orthogonal framework for assessing robustness.

Robust statistics, which traditionally addresses resilience to outliers and perturbations, provides a relevant foundation for our study \citep{huber1992robust, huber2011robust, diakonikolas2023algorithmic}. Our mixture formulation connects with Huber’s contamination model \citep{huber1992robust}; however, unlike in Huber’s model where the contaminated distribution is typically unknown, in watermark detection, this distribution is known. Additionally, our work is related to sparse mixture detection, first introduced in \citep{dobrusin1958statistical} to identify sparse signals, with further developments in mixtures model \citep{ingster1996some,donoho2004higher,tony2011optimal,cai2014optimal,arias2017distribution} and rare-signal models \citep{jin2016rare,ke2014covariance,jin2021optimal}. 
Robust watermark detection also diverges from traditional sparse detection due to the autoregressive nature of text generation, which results in time-varying signal (or NTP) distributions.
 While some methods address inhomogeneity using stationary Gaussian processes \citep{hall2008properties} or temporal correlations \citep{hall2010innovated}, they are inapplicable to text generation. Our theoretical analysis directly leverages the autoregressive nature to analyze inhomogeneity in watermark detection, as detailed in Section \ref{sec:theory}.

\subsection{Organization of the Paper}
The remainder of the paper is organized as follows. In Section~\ref{sec:pre}, we introduce basics of LLMs and the statistical framework of LLM watermarks \citep{li2024statistical}. In Section~\ref{sec:robust}, we develop the \Algo~test and establish its theoretical guarantees for robust detection in Section~\ref{sec:theory}. We evaluate the empirical performance of the \Algo~test in Sections~\ref{sec:simulation} and~\ref{sec:LLM-experiments}. We conclude in Section~\ref{sec:discuss} with a discussion of future research directions. Most technical proofs are provided in the appendix. \new{Source code is publicly available at \url{https://github.com/lx10077/TrGoF}.}

\section{Preliminaries}\label{sec:pre}

\new{
\paragraph{Watermarking protocol.} 
We consider a standard watermarking protocol involving three parties \citep{kuditipudi2023robust,xie2024debiasing}: the model provider, the user, and the verifier. For example, a student (the user) may use an LLM to assist with a homework assignment. The model provider embeds a statistical watermark into the generated text using a secret key $\Key$, which is shared with the verifier (e.g., a teacher), but not with the user. The user may modify the generated text before submitting it. The verifier, who observes only the final (possibly edited) text, attempts to determine whether it contains a watermark. Importantly, the verifier does not have access to the original prompt, the unedited generated text, or the internal NTP distributions used by the model.
This protocol unfolds over three stages: first, watermark embedding, where the model encodes a hidden signal into the generated text; second, human editing, where the user may modify the output; and finally, watermark detection, where the verifier analyzes the submitted text to detect any remaining watermark signal. 

\paragraph{Watermark embedding.}  
We begin by describing the embedding mechanism.  
LLMs, such as GPT models \citep{radford2019language,brown2020language}, generate text by sampling tokens autoregressively. At each position $t$, given the previously generated tokens $\token_{1:(t-1)} := \token_1 \token_2 \cdots \token_{t-1}$, the model computes a next-token prediction (NTP) distribution $\bP_t = (P_{t,w})_{w \in \Voca}$ and samples the next token $\token_t$ accordingly.
To embed a watermark, this sampling process is modified in a deterministic yet secret-key-dependent way. Specifically, at each step $t$, the model computes a pseudorandom variable $\xi_t = \AM(\token_{(t-m):(t-1)}, \Key)$, where $\AM$ is a cryptographic hash function, $\Key$ is the watermarking key, and $m$ is the context window size. The token is then deterministically selected as $\token_t = \SM(\bP_t, \xi_t)$ using a decoding function $\SM$. The variable $\xi_t$ is termed pseudorandom because it behaves like a random variable, yet can be deterministically reconstructed by the verifier from the observed tokens $\token_{1:n}$, the key $\Key$, and the hash function $\AM$.
For theoretical analysis, we model $\xi_t$ as i.i.d. samples from a reference distribution $\pi$, reflecting the cryptographic properties of $\AM$ \citep{barak2021book, schneier1996applied, wu2023dipmark, zhao2024permute}. A decoder $\SM$ is said to be unbiased if it preserves the original distribution in expectation, i.e., $\PB_{\xi \sim \pi}(\SM(\bP, \xi) = \token) = P_{\token}$ for any $\bP$ and token $\token$. By this definition, an unbiased decoder corresponds to standard multinomial sampling methods.

\paragraph{Watermark detection.}
Since our primary interest lies in designing and analyzing detection methods, we adopt the perspective of the verifier. With a slight abuse of notation, we refer to the observed (potentially edited) text still as $\token_{1:n}$.
During watermark detection, the verifier receives the text $\token_{1:n}$ and reconstructs the corresponding pseudorandom sequence $\zeta_{1:n} := \zeta_1 \zeta_2 \cdots \zeta_n$ using the shared key and hash function. If the text is entirely human-written, there should be no statistical dependence between $\token_{1:n}$ and $\zeta_{1:n}$, as human users do not know the key or hash function. In contrast, if the text is fully generated by a watermarked LLM, each token $\token_t$ is deterministically chosen based on $\zeta_t$ through the decoder function $\SM$.
This observation reduces watermark detection to a problem of testing for statistical dependence between the observed tokens and the pseudorandom variables. Li et al. \citep{li2024statistical} formalized this idea as a hypothesis testing framework and analyzed the effectiveness of various detection methods. The central tool in this framework is a pivotal statistic $\Yars_t = Y(\token_t, \zeta_t)$, where the function $Y(\cdot, \cdot)$ is designed so that the distribution of $\Yars_t$ is known whenever $\token_t$ and $\zeta_t$ are independent—regardless of the marginal distribution of $\token_t$.
In this way, under the null hypothesis $H_0$, where the text is entirely human-written, the statistics $\Yars_t$ are i.i.d. samples from a known reference distribution $\mu_0$. Under the alternative hypothesis $H_1$, where the text is fully generated by a watermarked LLM, each $\Yars_t = Y(\SM(\bP_t, \zeta_t), \zeta_t) \mid \bP_t$ follows a distribution that depends on the NTP distribution $\bP_t$, denoted by $\mu_{1, \bP_t}$. Watermark detection is thus formulated as the following hypothesis testing problem: \begin{align} \label{eq:original-detectopn-problem} H_0: Y_t \sim \mu_0~~\text{i.i.d.}~\text{for}~1 \le t \le n \quad\text{v.s.}\quad H_1: Y_t \mid \bP_t \sim \mu_{1, \bP_t}~\text{for}~1 \le t \le n. \end{align}

\paragraph{The Gumbel-max watermark: Decoder and detection.}
Our paper focuses on the Gumbel-max watermark, one of the most influential and the first unbiased watermark \citep{scott2023watermarking}, which has been implemented internally at OpenAI \citep{openai2024understanding} and serves as a baseline in many studies. This watermark utilizes the Gumbel-max trick, a sampling technique for multinomial distributions \citep{gumbel1948statistical, papandreou2011perturb, maddison2014sampling, jang2016categorical}. The trick ensures that $\arg\max_{\token \in \Voca} P_\token^{-1} \log U_\token$ follows the distribution $\bP = (P_\token)_{\token \in \Voca}$, where $\xi = (U_\token)_{\token \in \Voca}$ consists of $|\Voca|$ i.i.d. $U(0,1)$ random variables.
Scott Aaronson proposed the following unbiased decoder \citep{scott2023watermarking}:
\begin{equation}\label{eq:it}
\SMmax(\bP, \xi) := \arg\max_{\token \in \Voca} \frac{\log U_w}{P_w}.
\end{equation}
The pivotal statistic is $\Yars_t = Y(\token_t, \xi_t) = U_{t, \token_t}$, where $\xi_t = (U_{t, \token})_{\token \in \Voca}$ collects all pseudorandom numbers. The watermark is detected when the sum-based statistic $\Sars_n = \sum_{t=1}^n \hars(\Yars_t)$, with $\hars(y) = -\log(1-y)$, exceeds a set threshold.
The method works because, without a watermark, the $U_{t, \token_t}$'s are i.i.d. $U(0,1)$, so $\Yars_t \sim \mu_0 = U(0,1)$. In contrast, if the watermark is present, \eqref{eq:it} makes tokens with a larger pseudorandom number more likely to be selected.
Indeed, we have $Y_t \mid \bP_t \sim \mu_{1, \bP_t}$, where $\mu_{1, \bP}(Y \leq r) = \sum_{\token \in \Voca} P_\token r^{1/P_\token}$ for $r \in [0,1]$ \citep{li2024statistical}. This alternative distribution is distinct from $\mu_0$ unless $\bP$ is degenerate (i.e., concentrated on a single token).
% The alternative distribution $\mu_{1, \bP}$ is distinct from $\mu_0$ unless $\bP$ equals $(1, 0, \ldots, 0)$, up to a permutation.
{
Detection thus hinges on distinguishing the statistical gap between $\mu_0$ and $\mu_{1, \bP}$.} Beyond $\hars$, other scoring functions have been proposed. Examples include the log function $\hlog(y) = \log y$ \citep{kuditipudi2023robust, fernandez2023three}, and the optimal least-favorable score $\hoptars$ from \citep{li2024statistical} where $\Delta$ is a user-specified regularity parameter. The function $\hoptars$ is optimal under the assumption that each $\bP_t$ belongs to a $\Delta$-regular class, denoted $\FPM$, defined as: 
\begin{equation}
\label{eq:regular-set}
\FPM = \{\bP: \max_{\token \in \Voca} P_{\token} \le 1-\Delta \}.
\end{equation}
Li et al. \citep{li2024statistical} shows that $\hoptars$ achieves the fastest exponential decay of Type II error at a fixed significance level $\alpha$ when testing against the least favorable distribution in $\FPM$ (see Def.~\ref{def:efficiency}).}

\section{Method}\label{sec:robust}

\subsection{Problem Formulation}
\label{sec:setting}
\new{
In this section, we formally define the robust detection problem and then introduce our proposed test, \Algo, in Section~\ref{sec:GOF-test}. As outlined in Section~\ref{sec:pre}, the watermarking process proceeds in three stages: watermark embedding, human editing, and detection. The prior framework of Li et al.~\cite{li2024statistical} assumes no human edits and formulates detection as a binary hypothesis test \eqref{eq:original-detectopn-problem}, so it assumes that the entire text $\token_{1:n}$ is either fully human-written or generated by a watermarked LLM. However, this formulation does not capture the reality where users edit watermarked content.

To understand the effect of human edits, consider the pivotal statistic $\Yars_t := Y(\token_t, \xi_t)$, where both the token $\token_t$ and the pseudorandom variable $\xi_t$ are derived from the observed (possibly edited) text. Suppose the subtext $\token_{(t-m):t}$ is copied exactly from the original fully watermarked text. In this case, both $\token_t$ and $\xi_t$ remain unchanged, as $\xi_t = \AM(\token_{(t-m):(t-1)}, \Key)$ is purely determined by its preceding segment $\token_{(t-m):(t-1)}$. Therefore, $\Yars_t$ follows the alternative distribution $\mu_{1, \bP_t}$, which reflects the presence of a watermark signal.
However, if this $(m+1)$-length subtext has been edited—meaning it does not match any corresponding window in the original watermarked text—then either the recomputed $\xi_t$ or the observed $\token_t$ differs from the one originally used. Because any cryptographic hash function $\AM$ is highly sensitive to input changes, even a minor modification results in a statistically independent output. In this case, $\xi_t$ is no longer associated with the generation of $\token_t$, and the pair $(\token_t, \xi_t)$ is independent. By the definition of a pivotal statistic, this implies that $\Yars_t \sim \mu_0$, the null distribution.
These observations motivate a more flexible detection framework. 
Depending on the fraction of watermark signal—that is, the dependence between $\token_t$ and $\xi_t$—the distribution of $\Yars_t$ should be a mixture of $\mu_0$ and $\mu_{1, \bP_t}$. We therefore model the observed $\Yars_t$ as arising from a mixture distribution:
\begin{align}
\begin{split}
\label{eq:robust}
H_0:\Yars_t \sim \mu_0,~\text{i.i.d.}~\forall 1 \le t \le n
~\text{v.s.}~
H_1^{\mathrm{mix}}:\Yars_t\mid \bP_t \sim (1-\eps_n) \mu_0 + \eps_n  \mu_{1, \bP_t},~\forall 1 \le t \le n.
\end{split}
\end{align}
This $H_1^{(\mathrm{mix})}$ implies that, on average, an $\eps_n$-fraction of watermark signals remain intact and thus detectable. When $\eps_n \equiv 1$, the mixture model in \eqref{eq:robust} reduces to the original \eqref{eq:original-detectopn-problem}.
}

\subsection{The \Algo~Detection Method}
\label{sec:GOF-test}
In this section, we introduce our detection method, \Algo~(Algorithm \ref{alg:detection}), which we apply to the robust watermark detection problem \eqref{eq:robust}. 
To begin, we explain the rationale behind this approach. The null hypothesis $H_0$ in \eqref{eq:robust} assumes that all pivotal statistics $\Yars_{1:n}$ are drawn from the same distribution $\mu_0$. This indicates that the testing problem in \eqref{eq:robust} essentially assesses how well the entire data set $\Yars_{1:n}$ conforms to the null distribution $\mu_0$.
Roughly speaking, \Algo~rejects the null hypothesis $H_0$ when the deviation between the empirical distribution of $\Yars_{1:n}$ and the null distribution $\mu_0$ is sufficiently large. 

 We now introduce the deviation measure which is based on a specific $\phi$-divergence \citep{jager2007goodness}. 
 Without loss of generality, we assume the data used in \Algo~are all $\rp$-values, which is always feasible by transforming observations using the CDF of $\mu_0$.
 In the Gumbel-max watermark, the $\rp$-values are linear transformations of $\Yars_{1:n}$.
 Given that $\mu_0 = \UM(0, 1)$, the $\rp$-value for each observation is:
\[
\rp_t := \PB_{0}( Y \ge \Yars_t|\Yars_t) = 1- \Yars_t.
\]
Let $\FB_n(r)$ denote the empirical distribution of the $\rp$-values,
\[
\FB_n(r) = \frac{1}{n} \sum_{t=1}^n \1_{\rp_t \le r}~~\text{for}~~r \in [0, 1],
\]
and its expected distribution under $H_0$ is always the uniform $\UM(0, 1)$, with CDF $\FB_0(r) = r$ for $r \in [0, 1]$.
\Algo~utilizes the following test statistic to measure the deviation between the empirical CDF $\FB_n$ and the expected CDF $\FB_0$:
\begin{equation}
\label{eq:S+}
S_n^+(s) = \sup\limits_{r \in [\rp^+, 1) }  K_s^+(\FB_n(r), \FB_0(r)).
\end{equation}
In the above, there are two truncations: the first is on the $r$-domain, where $\rp^+ = \sup \{\rp_{(t)}: \rp_{(t)} \le c_n^+\}$ with $c_n^+ \in [0, 1]$ introduced for stability \new{and $\rp_{(t)}$ is the $t$-th smallest p-values among $Y_{1:n}$}. The second is to truncate $K_s$ to $K_s^+$, which is defined as a truncated version of $K_s$ defined by
\begin{equation}
\label{eq:Ks+}
K_s^+(u, v) = \begin{cases}
K_s(u, v), &~~\text{if}~~0 < v < u < 1,\\
0, &\new{~~\text{otherwise}},
\end{cases}
\end{equation}
where $K_s(u, v)$ represents the $\phi_s$-divergence between $\mathrm{Ber}(u)$ and $\mathrm{Ber}(v)$:\footnote{$\mathrm{Ber}(u)$ denotes a Bernoulli distribution with parameter (or head probability) $u$.}
\begin{align*}
 \label{eq:Ks}
 \begin{split}
 K_s(u, v) &= D_{\phi_s}(\mathrm{Ber}(u) \| \mathrm{Ber}(v))
 = v \phi_s\left( \frac{u}{v} \right) + (1-v) \phi_s\left( \frac{1-u}{1-v} \right).
 \end{split}
\end{align*}
Here, the scalar function $\phi_s(x)$, indexed by $s \in \RB$, is convex in $x$ and is defined by \citep{jager2007goodness}:
\begin{equation}
\label{eq:phi}
\phi_s(x) = \begin{cases}
x \log x -x +1, & ~~\text{if}~~s = 1, \\
\frac{1-s+sx-x^s}{s(1-s)}, & ~~\text{if}~~s \neq 0, 1, \\
-\log x +x -1,& ~~\text{if}~~s = 0.
\end{cases}
\end{equation}
In this family, $\phi_s$ provides a range of examples depending on the value of $s$. For $s = 1$, $\phi_1(x) = x \log x - x + 1$, which reduces $K_1$ to the KL divergence: $K_1(u, v) = v \log \frac{u}{v} + (1 - u) \log \frac{1 - u}{1 - v}$. When $s = 2$, $\phi_2(x) = \frac{1}{2}(x^2 - x - 1)$, yielding $K_2(u, v) = \frac{(u - v)^2}{2v(1 - v)}$, a form associated with Higher Criticism \citep{donoho2004higher, donoho2015higher}, as discussed in Remark \ref{rem:relation-HC}. For $s \neq 0, 1$, we have $K_s(u, v) = \frac{1}{s(1 - s)} \left[ 1 - u^s v^{1 - s} - (1 - u)^s (1 - v)^{1 - s} \right]$. This definition of $\phi_s$ ensures continuity in $s$ for all $x \in (0, 1)$.

\begin{algorithm}[t]
\caption{Truncated GoF detection method (\Algo)}
\label{alg:detection}
\begin{algorithmic}[1]
\State \textbf{Input:} Edited text $\token_{1:n}$, hash function $\AM$, secret key $\Key$, pivot statistic function $\Yars$, stability parameter $c_n^+$, and critical value $\delta$.
\State For $t = 1, 2, \ldots, n$, compute the pseudorandomness number $\xi_t = \AM(\token_{(t-m):(t-1)}, \Key)$.
\State For $t = 1, 2, \ldots, n$, compute the pivot statistic $\Yars_t = \Yars(\token_t, \xi_t)$.
\State For $t=1,2,\ldots, n$, calculate the p-value as $\rp_t = 1- \Yars_t$. %hfill \# Tailored for Gumbel-max watermark.
\State Sort the p-values in the ascending order $\rp_{(1)} < \rp_{(2)}<\ldots < \rp_{(n)}$ and set $\rp_{(n+1)}=1$.
\State Define the function $K_s^+$ as in \eqref{eq:Ks+} and compute the test statistic by
\begin{equation}
\label{eq:equivalent-Sn}
S_n^+(s) = \sup_{t: \rp_{(t+1)} \ge c_n^+} K_s^+ (t/n, \rp_{(t)}). 
\end{equation}
\State \textbf{Claim:} Text $\token_{1:n}$ is partially LLM-generated if \eqref{eq:goodness-of-fit-test} holds; otherwise, it is human-written.
\end{algorithmic}
\end{algorithm}
% \vspace{-0.5em}

\Algo~is based on the test statistic $S_n^+(s)$, which is introduced in \eqref{eq:S+} and can be straightforwardly computed using \eqref{eq:equivalent-Sn}.
This statistic is expected to be small under the null hypothesis $H_0$ and large under the alternative hypothesis $H_1^{(\mathrm{mix})}$. 
We define the detection method as follows: for a small $\delta > 0$, we reject $H_0$ in favor of $H_1^{(\mathrm{mix})}$ if  
\begin{equation}
\label{eq:goodness-of-fit-test}
n \cdot S_n^+(s)  \ge (1+\delta) \log \log n.
\end{equation}
In other words, if our measure $S_n^+(s)$ of deviance from $\FB_n$ to $\FB_0$ is sufficiently large—satisfying the condition in~\eqref{eq:goodness-of-fit-test}—then the detection method will conclude that the observed text was (partially) generated by a watermarked LLM rather than being human-written.

\begin{rem}[Selection of the critical value]
We discuss the selection of the critical value $(1+\delta) \log \log n$. Given the setup in \eqref{eq:robust}, where $Y_{1:n}$ are i.i.d. samples from $\mu_0$, the Type I error can be controlled by carefully choosing this critical value. Notably, this control is independent of any human edits. In our experiments, we determine the critical value using Monte Carlo simulations, a method that is both computationally efficient and effective in controlling the Type I error even with finite sample sizes.
\end{rem}

\begin{rem}[Differences from previous work]
\label{rem:HC-K}
Our \Algo~test departs from the GoF test in \citep{jager2007goodness} by using two truncations. First, we truncate $K_s$ to $K_s^+$ to facilitate theoretical analysis, leveraging the property that $\lim_{n \to \infty} \PB_0(\rp_{(t)} \le t/n + \delta,~\forall t \in [n]) = 1$ for any $\delta > 0$. Second, we restrict the domain of $r$ to $[\rp^+, 1)$ instead of $(0, 1)$ to exclude small $\rp$-values, such as $\rp_{(1)}$ and $\rp_{(2)}$, in defining $S_n^+(s)$. 
In summary, Jager and Wellner~\citep{jager2007goodness} focus on the untruncated statistic $S_n(s) := \sup_{r \in (0, 1)} K_s(\FB_n(r), \FB_0(r))$ and its theoretical properties under i.i.d. data. In contrast, we propose a truncated $S_n^+(s)$ in settings (watermark detection) where the data is not i.i.d.
\end{rem}

\begin{rem}[Drawbacks of an untruncated $r$-domain]
\label{rem:drawbacks-of-untruncation}
As introduced in Remark \ref{rem:HC-K}, the untruncated $S_n(s)$ has two main drawbacks. First, its weak convergence is slow: Jager and Wellner~\citep{jager2007goodness} show that $n \cdot S_n(s)$ converges weakly to a random variable under $H_0$ (see their Theorem 3.1), but Gontscharuk
et al.~\citep{gontscharuk2015intermediates} indicate that this convergence rate is extremely slow, suggesting $r$-domain truncation as a remedy. This observation also motivates our use of Monte Carlo simulations for determining critical values, despite the exact computation available in \citep{li2015higher,moscovich2023fast}.
Second, without removing extreme values from small $\rp$-values like $\rp_{(1)}$ or $\rp_{(2)}$, $S_n(s)$ is prone to a heavy-tail issue. In fact, it holds that $\PB_0(n \cdot S_n(s) \ge z) \ge \frac{1}{2z}$ for large $z > 0$ over certain $s$ values. Test statistics with heavy tails are generally undesirable, as they reduce power at stringent significance levels in small samples. Removing these small $\rp$-values mitigates the heavy-tail effect significantly in numerical experiments. See Supplementary \ref{appen:truncation} for more detailed discussion.
\end{rem}

\begin{rem}[Relationship to Higher Criticism]
\label{rem:relation-HC}
The celebrated Higher Criticism (HC) is a special instance of the above \Algo. 
HC rejects $H_0$ if a test statistic, denoted by $\HC_n^{+}$, exceeds $\sqrt{2(1+\delta)\log\log n}$ for a small value $\delta > 0$.
This test statistic is related to our $S_n^+(2)$ as follows:
\begin{equation}
\label{eq:relation-between-GoFT-HC}
n \cdot S_n^+(2) = \sup\limits_{r \in [ \rp^+, 1) }  \frac{n}{2}\frac{(\FB_n(r) - r)^2}{r(1-r)} \1_{\FB_n(r) \ge r} = \frac{1}{2} (\HC_n^{+})^2.
\end{equation}
Our rejection rule aligns perfectly with HC as described in the literature \citep{donoho2004higher, tony2011optimal, cai2014optimal}. See Supplementary \ref{appen:HC} for further introduction and its theoretical property. 
Since HC using $\HC_n^+$ can be viewed as a special case of \Algo~involving $nS_n^+(2)$, our primary analysis focuses on $nS_n^+(s)$ to cover a wide range of $s$. 
For completeness, we perform a simulation study on HC in Supplementary \ref{sec:exp-hist} and \ref{sec:HC-boundary}.
\end{rem}

\section{Theoretical Guarantees}
\label{sec:theory}

\new{
In this section, we present the theoretical properties of \Algo. We begin with several distributional assumptions on the LLM watermarks and human edits.
\begin{asmp}
\label{asmp:main}
Let $\bP_t$ denote the NTP distribution for $\token_t$.
Let $\FM_{t-1} = \sigma( 
\{\token_j, \xi_j, \bP_{j+1} \}_{j=1}^{t-1})$ be the $\sigma$-field generated by all text-generating randomness within the subtext $\token_{1:(t-1)}$. We assume
\begin{enumerate}
\item[\rm{(a)}] Perfect pseudorandomness: $\xi_{1:n}$ are i.i.d. and $\xi_t \perp \FM_{t-1}$.
% \item[\rm{(b)}] Contextual independence: $\token_t\mid\FM_{t-1} = \token_t \mid \bP_t$.
\item[\rm{(b)}] Mixture of latent sources: $\token_t=\SM(\bP_t, \zeta_t)$ with probability $\eps_n$; otherwise, $\token_t$ is drawn independently from $\bP_t$, and is independent of $\zeta_t$ when conditional on $\FM_{t-1}$.
\end{enumerate}
\end{asmp}
We briefly comment on Assumption~\ref{asmp:main}. The $\sigma$-field $\FM_{t-1}$ captures all related information from the subtext $\token_{1:(t-1)}$, including the observed tokens, computed pseudorandom numbers, and their corresponding NTP distributions.
Condition (a) states that the pseudorandom $\xi_t$, which either generates or verifies the watermark signal in $\token_t$, behaves as an independent draw from a fixed distribution and is statistically independent of the past. 
This treatment—viewing pseudorandom as truly random—is standard in the watermarking literature, whether explicitly stated \citep{wu2023dipmark, zhao2024permute, piet2023mark, li2024statistical} or assumed implicitly \citep{kirchenbauer2023watermark, kirchenbauer2023reliability, fernandez2023three}. Notably, Condition (a) aligns with the spirit of the Working Hypothesis proposed in our companion paper \citep{li2024statistical}.
% Condition (b) asserts that, given all prior information $\FM_{t-1}$, the distribution of $\token_t$ depends only on $\bP_t$. In other words, once $\bP_t$ is known, the generation of $\token_t$ is independent of the past. This assumption is both mild and natural: it implies that both human edits and LLM generations are driven by randomness external to $\FM_{t-1}$.
Condition (b) characterizes the relationship between the observed $\token_t$ and the computed $\zeta_t$. The rationale behind it is that, since a verifier doesn't know which tokens have been edited, modeling $\token_t$ as a mixture of a watermarked token and a human-written one offers a principled way to capture this uncertainty.
Specifically , $\token_t$ is either watermarked using $\zeta_t$ with probability $\eps_n$, or it does not depend on $\zeta_t$ due to a broken watermark signal.
This formulation enables theoretical analysis without modeling human behaviors, while still preserving the essential statistical structure for effective watermark detection.
}

\subsection{Detectability}
\label{sec:detectability}

We begin by focusing on the challenging scenario where the watermark signal diminishes asymptotically. Specifically, with Assumption \ref{asmp:main} and the following Assumption \ref{asmp:regular}, we assume that both the non-null fraction, $\eps_n \asymp n^{-p}$, and the distribution singularity, $\Delta_n \asymp n^{-q}$, decrease with the text length $n$. 
This setup draws inspiration from the classic framework of sparse detection problems \citep{donoho2004higher,donoho2015higher}, where the signal strength of non-null effects also declines at a polynomial rate in sparse Gaussian mixture detection.
Additionally, Assumption \ref{asmp:regular} can be relaxed to a weaker condition, which we discuss in Remark \ref{rem:weaker-condition}.

\begin{asmp}\label{asmp:regular}
Define $\Delta(\bP) := 1-\max_{\token \in \Voca} P_{\token}$ as the distribution singularity of $\bP$.
For any positive integer $t \le n$, we assume $\Delta(\new{\bP_t}) = \Delta_n \asymp n^{-q}$ with $q \in [0, 1]$. 
\end{asmp}
\begin{rem}
The distribution singularity $\Delta(\bP)$ is equivalent to the token entropy \new{up to constant factors}, the latter defined as $\mathrm{Ent}(\bP):= \sum_{\token \in \Voca} P_{\token} \log \frac{1}{P_{\token}}$. They are related via $\mathrm{Ent}(\bP) = \Theta(\Delta(\bP) \log \frac{1}{\Delta(\bP)})$. The proof is provided in Proposition \ref{prop:expectation-gaps} in the appendix. 
\end{rem}

The first question to address is whether there exists a statistical algorithm that can reliably solve the problem \eqref{eq:robust}.
As an extreme example, if $\eps_n=0$ or $\Delta_n=0$, we would have $H_1^{\mathrm{mix}} = H_0$ and thus no test could differentiate $H_1^{(\mathrm{mix})}$ from $H_0$.
In general, the detectability depends on the difference between the joint distributions of $Y_{1:n}$ under $H_0$ and $H_1^{(\mathrm{mix})}$. In the terminology of Donoho and Jin~\citep{donoho2004higher}, we say that $H_0$ and $H_1^{(\mathrm{mix})}$ merge asymptotically if the total variation distance between the joint distribution of $\Yars_{1:n}$ under $H_0$ and that under $H_1^{(\mathrm{mix})}$ tends to zero as the sequence length $n$ goes to infinity. Otherwise, we say they separate asymptotically.
The detection problem is detectable if and only if $H_0$ and $H_1^{(\mathrm{mix})}$ separate asymptotically.

The detectability behaves very differently in two regimes: the heavy edit regime where $\frac{1}{2} < p \le 1$ and the light edit regime where $0 < p \le \frac{1}{2}$.
We begin by considering the heavy edit regime where $p \in (1/2, 1]$. Theorem~\ref{thm:gumbel-sparse} proves that no statistical test based on the observed pivotal statistics $Y_{1:n}$ can reliably detect the embedded watermark\new{, if there are too many human edits.} Notably, this impossibility result holds irrespective of the underlying NTP distribution $\bP_t$'s. Therefore, in the heavy edit regime, watermark detection is impossible, regardless of the NTP distribution $\bP_t$'s.

\begin{thm}[Heavy edits case]
\label{thm:gumbel-sparse}
Under Assumption \ref{asmp:main}, if $\frac{1}{2} < p \le 1$, $H_0$ and $H_1^{(\mathrm{mix})}$ merge asymptotically.
For any test, the sum of Type I and Type II errors is 1 as  $n\to\infty$.
\end{thm}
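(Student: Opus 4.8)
The plan is to bound the total variation distance between the joint laws of $Y_{1:n}$ under $H_0$ and $H_1^{(\mathrm{mix})}$ via the chi-square (or Hellinger) distance, and show it vanishes when $p > 1/2$. Since under $H_0$ the $Y_t$ are i.i.d.\ $\mu_0$ and under $H_1^{(\mathrm{mix})}$ each $Y_t \mid \bP_t$ has the mixture law $(1-\eps_n)\mu_0 + \eps_n \mu_{1,\bP_t}$, I would first condition on the sequence $\bP_{1:n}$ (which is itself random, adapted to the filtration $\FM_{t-1}$). Conditionally, the coordinates are independent but not identically distributed, so the standard tensorization bound gives $\chi^2(H_1^{(\mathrm{mix})} \,\|\, H_0) \le \prod_{t=1}^n \bigl(1 + \eps_n^2\, \chi^2(\mu_{1,\bP_t}\,\|\,\mu_0)\bigr) - 1 \le \exp\bigl(\eps_n^2 \sum_{t=1}^n \chi^2(\mu_{1,\bP_t}\,\|\,\mu_0)\bigr) - 1$. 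So the key quantity to control is $\chi^2(\mu_{1,\bP}\,\|\,\mu_0)$ uniformly over $\bP$, and here is where I expect the main obstacle: the claim is that the impossibility holds \emph{irrespective} of $\bP_t$, so I must show $\chi^2(\mu_{1,\bP}\,\|\,\mu_0)$ is bounded by an \emph{absolute} constant (not depending on $\Delta(\bP)$) for the Gumbel-max watermark.

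To handle that, recall $\mu_{1,\bP}$ has CDF $F_{1,\bP}(r) = \sum_{\token} P_\token r^{1/P_\token}$ on $[0,1]$, so its density is $f_{1,\bP}(r) = \sum_\token r^{1/P_\token - 1}$, while $\mu_0$ has density $1$. Then $\chi^2(\mu_{1,\bP}\,\|\,\mu_0) = \int_0^1 f_{1,\bP}(r)^2\,dr - 1 = \sum_{\token,\token'} \int_0^1 r^{1/P_\token + 1/P_{\token'} - 2}\,dr - 1 = \sum_{\token,\token'} \frac{1}{1/P_\token + 1/P_{\token'} - 1} - 1$. I would show this double sum is at most an absolute constant (indeed I expect it is at most something like $2$ or so): pairing terms and using $\frac{1}{1/P_\token + 1/P_{\token'} - 1} \le \frac{P_\token P_{\token'}}{P_\token + P_{\token'}} \le \min(P_\token, P_{\token'})$ — actually more carefully $\frac{1}{1/a + 1/b - 1} \le \frac{ab}{a+b}$ when $a+b \le 1$ — one gets $\sum_{\token,\token'} \frac{P_\token P_{\token'}}{P_\token + P_{\token'}} \le \sum_{\token,\token'} \min(P_\token,P_{\token'})$, and a symmetry/ordering argument bounds $\sum_{\token,\token'}\min(P_\token,P_{\token'})$ by a constant (it equals $2\sum_k (\text{$k$-th largest }P)\cdot k - \sum P^2$-type expression, which is $O(1)$ since the $P_\token$ sum to $1$; in fact $\sum_{\token,\token'} \min(P_\token, P_{\token'}) \le 1 + 2\sum_{j<k} P_{(j)}$ where... ). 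The cleanest route may just be: $\sum_{\token,\token'}\min(P_\token,P_{\token'}) = \sum_\token P_\token^2 + 2\sum_{\token' : P_{\token'} < P_\token} \ldots$ — the point is it is bounded by $1$ plus $2$ times $\sum_\token P_\token \cdot (\text{mass below } P_\token) \le 1 + 2 = 3$, an absolute constant. So $\chi^2(\mu_{1,\bP}\,\|\,\mu_0) \le C_0$ for a universal $C_0$.

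Given that uniform bound, the conditional chi-square satisfies $\chi^2 \le \exp(C_0\, \eps_n^2 n) - 1$, and since $\eps_n \asymp n^{-p}$ with $p > 1/2$ we have $\eps_n^2 n \asymp n^{1-2p} \to 0$, so the conditional chi-square $\to 0$ for every realization of $\bP_{1:n}$. Then by the bound $\mathrm{TV} \le \tfrac12\sqrt{\chi^2}$ applied conditionally, and taking expectation over $\bP_{1:n}$ (using that the joint TV is at most the expected conditional TV, via the data-processing / convexity of TV), the unconditional total variation distance tends to $0$. Finally, the standard Neyman--Pearson / Le Cam bound gives that for any test $\phi$, the sum of Type I and Type II errors is at least $1 - \mathrm{TV}(H_0, H_1^{(\mathrm{mix})}) \to 1$; combined with the trivial upper bound of $1$ (e.g.\ the constant test), the sum equals $1$ asymptotically, which is the claim. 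The one subtlety to be careful about is the conditioning argument — since $\bP_t$ is revealed only through $\FM_{t-1}$ and $Y_t$ depends on $\bP_t$, I should set up the likelihood ratio as a product of conditional densities along the filtration and verify the tensorization still goes through; this is routine once the martingale/filtration structure from Assumption~\ref{asmp:main} is used, with part (a) ensuring $\xi_t \perp \FM_{t-1}$ and hence the pivotal property $Y_t \sim \mu_0$ under $H_0$ coordinate-wise.
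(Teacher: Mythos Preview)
Your high-level strategy matches the paper's: reduce to per-coordinate divergence of the form $\eps_n^2\cdot\chi^2(\mu_{1,\bP_t}\|\mu_0)$, control $\chi^2$ uniformly in $\bP$, then use $n\eps_n^2\to 0$. The paper phrases this via Hellinger (bounding $H^2$ by the second-order term, which is exactly $\eps_n^2\chi^2$), but the content is identical. There are, however, two concrete errors in your key step.

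First, the inequality $\frac{1}{1/P_\token+1/P_{\token'}-1}\le\frac{P_\token P_{\token'}}{P_\token+P_{\token'}}$ goes the wrong way: the left side equals $\frac{P_\token P_{\token'}}{P_\token+P_{\token'}-P_\token P_{\token'}}$, which is \emph{at least} $\frac{P_\token P_{\token'}}{P_\token+P_{\token'}}$. More seriously, $\chi^2(\mu_{1,\bP}\|\mu_0)$ is \emph{not} bounded by an absolute constant: for $\bP$ uniform on $|\Voca|$ tokens the double sum is $|\Voca|^2/(2|\Voca|-1)\asymp|\Voca|$, and likewise $\sum_{\token,\token'}\min(P_\token,P_{\token'})=|\Voca|$ there, so your bound of $3$ fails. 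The paper instead rewrites $\chi^2=\sum_{\token,j}P_{\token} P_j\frac{(1-P_{\token})(1-P_j)}{1-(1-P_{\token})(1-P_j)}$, uses $1-(1-P_{\token})(1-P_j)\ge P_{\token}\vee P_j$, and obtains $\chi^2\le 4|\Voca|\,(1-P_{\max})\le 4|\Voca|$. Since $|\Voca|$ is fixed (not growing with $n$), this suffices and $n\eps_n^2\cdot O(|\Voca|)=O(n^{1-2p})\to 0$ when $p>\tfrac12$. The fix is simply to accept an $|\Voca|$-dependent constant rather than an absolute one.

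Second, conditioning on $\bP_{1:n}$ upfront does not give conditionally independent $Y_t$'s with marginals $(1-\eps_n)\mu_0+\eps_n\mu_{1,\bP_t}$: since $\bP_{t+1}$ is a function of $\token_{1:t}$ and hence correlated with $Y_t$, conditioning on future $\bP$'s leaks information back into $Y_t$. You correctly flag this. The paper's remedy is a filtration-adapted tensorization: peel one coordinate at a time along $\FM_{t-1}$ and take the worst case over $\bP_t$ at each step, yielding $H^2(\rho_0,\rho_1)\le 1-\prod_{t=1}^n\bigl(1-\sup_{\bP}H^2(\mu_0,(1-\eps_n)\mu_0+\eps_n\mu_{1,\bP})\bigr)$. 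This is exactly the ``product of conditional densities along the filtration'' you anticipate, so your outline is complete once the two patches above are applied.
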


The situation is much more involved when the edit is light where $p \in (0, 1/2]$. It turns out that the detectability of the embedded watermark further depends on the NTP distribution $\bP_t$'s. 
 To account for this, we introduce Assumption \ref{asmp:regular} which assumes the largest probability in all NTP distributions share the same value $1-\Delta_n$ and $\Delta_n \asymp n^{-q}$.

\begin{thm}[Light edits case]
\label{thm:gumbel-dense}
Under Assumptions \ref{asmp:main} and \ref{asmp:regular}, let $0 \le p \le \frac{1}{2}$ and $0 \le q \le 1$.
\begin{enumerate}
\item[\rm{(a)}] If $q + 2p > 1$, $H_0$ and $H_1^{(\mathrm{mix})}$ merge asymptotically.
Hence, for any test, the sum of Type I and Type II errors tends to 1 as  $n\to\infty$.
\item[\rm{(b)}] If $q + 2p < 1$, $H_0$ and $H_1^{(\mathrm{mix})}$ separate asymptotically.
For the likelihood-ratio test that rejects $H_0$ if the log-likelihood ratio is positive, the sum of Type I and Type II errors tends to 0 as $n\to\infty$.
\end{enumerate}
\end{thm}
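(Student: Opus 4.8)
The plan is to reduce both parts to a second-moment / Hellinger-type computation on the joint law of $Y_{1:n}$, exploiting the product structure that Assumption \ref{asmp:main}(a) confers. Conditional on the sequence $\bP_{1:n}$ of NTP distributions, the $Y_t$ are independent: each is an i.i.d.\ draw from $\mu_0$ under $H_0$, and from the mixture $g_t := (1-\eps_n)\mu_0 + \eps_n \mu_{1,\bP_t}$ under $H_1^{(\mathrm{mix})}$. So I would first establish a per-coordinate estimate for the $\chi^2$-divergence (or Hellinger affinity) between $\mu_0$ and $g_t$, then multiply. Writing $\mu_{1,\bP}$ through its density, the key quantity is $\int \big(\frac{d\mu_{1,\bP}}{d\mu_0}-1\big)^2 d\mu_0$; I would show this is $\asymp \Delta_n^{-1}$ up to logarithmic factors (more carefully, $\asymp 1/\Delta_n$ but possibly with a $\log(1/\Delta_n)$, which under $\Delta_n\asymp n^{-q}$ only contributes $n^{o(1)}$). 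Indeed, for the Gumbel-max watermark $\mu_{1,\bP}(Y\le r)=\sum_\token P_\token r^{1/P_\token}$, so the density is $\sum_\token r^{1/P_\token -1}$, and the dominant contribution to the $\chi^2$ distance comes from the token achieving $P_\token = 1-\Delta_n$, whose density near $r=0$ behaves like $r^{\Delta_n/(1-\Delta_n)}$, giving an integral of order $1/\Delta_n$. Hence the per-coordinate $\chi^2$-divergence between $\mu_0$ and $g_t$ is of order $\eps_n^2/\Delta_n \asymp n^{-2p+q}$.

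For part (a) ($q+2p>1$), the total $\chi^2$-divergence over $n$ coordinates is of order $n \cdot n^{-2p+q} = n^{1-2p-q}\to 0$. Since the joint laws are products (conditionally on $\bP_{1:n}$, which are fixed by Assumption \ref{asmp:regular}), $\chi^2\big(\prod g_t \,\|\, \mu_0^{\otimes n}\big) = \prod_t(1+\chi^2(g_t\|\mu_0)) - 1 \le \exp\big(\sum_t \chi^2(g_t\|\mu_0)\big)-1 \to 0$, and the total variation distance is bounded by $\sqrt{\chi^2}\to 0$, so $H_0$ and $H_1^{(\mathrm{mix})}$ merge and no test beats random guessing. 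One technical wrinkle: the $\chi^2$ bound can blow up if the likelihood ratio $d\mu_{1,\bP}/d\mu_0$ is not $\mu_0$-square-integrable uniformly; I would check the exponent $1/P_\token - 1 > -1/2$, i.e.\ $P_\token < 2$, which always holds, so the integral converges and the $\Delta_n^{-1}$ bound is legitimate. If a residual logarithmic factor appears, it is harmless since $q+2p>1$ is a strict inequality, leaving polynomial room.

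For part (b) ($q+2p<1$), I would analyze the log-likelihood ratio $L_n = \sum_{t=1}^n \log\frac{g_t(Y_t)}{\mu_0(Y_t)}$ directly and show it drifts to $+\infty$ under $H_1^{(\mathrm{mix})}$ and to $-\infty$ under $H_0$, each at a rate dominating its standard deviation, so that $\{L_n>0\}$ has vanishing Type I and Type II error. Under $H_1$, $\EB[L_n] = \sum_t \mathrm{KL}(g_t\|\mu_0) \gtrsim \sum_t \chi^2$-type lower bound; the careful step is to lower-bound the per-coordinate KL by something of order $\eps_n^2/\Delta_n$ (up to logs), which again reduces to a local expansion of $\log(1+\eps_n(\frac{d\mu_{1,\bP}}{d\mu_0}-1))$ — here I must be cautious because the likelihood ratio is unbounded near $r=0$, so the naive $\log(1+x)\approx x - x^2/2$ expansion fails on a small-probability region and I would split the integral at a threshold (say $\frac{d\mu_{1,\bP}}{d\mu_0} \le 1/\eps_n$ versus above). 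On the bulk region one gets $\EB[L_n] \asymp n\eps_n^2/\Delta_n = n^{1-2p-q}\to\infty$; on the tail region one checks the contribution is lower order. For the variance, $\Var(L_n) = \sum_t \Var(\log\frac{g_t}{\mu_0}(Y_t))$, which I would bound by a similar quantity $\lesssim n^{1-2p-q}\cdot(\log n)^{O(1)}$, so $\Var(L_n) = o(\EB[L_n]^2)$ as long as $1-2p-q>0$; Chebyshev then gives $\PB_{H_1}(L_n\le 0)\to 0$. The symmetric argument under $H_0$ — where $\EB_{H_0}[L_n] = -\sum_t\mathrm{KL}(\mu_0\|g_t) \to -\infty$ at the same rate — gives $\PB_{H_0}(L_n> 0)\to 0$.

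The main obstacle, as flagged above, is handling the unbounded likelihood ratio $d\mu_{1,\bP}/d\mu_0$ near $r=0$: all the clean ``second moment'' heuristics rely on expansions that are only valid away from that singularity, so the real work is a careful truncation argument showing the singular region contributes only lower-order terms to both the mean and the variance of $L_n$ (and to the $\chi^2$-divergence in part (a)). A secondary, more bookkeeping-level obstacle is keeping track of the logarithmic factors relating $\Delta_n$, the entropy, and the divergences, and confirming they never upgrade a strict inequality in $q+2p$ to an equality — which they cannot, since $n^{o(1)}$ is swallowed by any fixed polynomial gap.
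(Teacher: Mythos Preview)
Your overall architecture---bound the per-coordinate $\chi^2$/Hellinger divergence, tensorize for part (a), analyze $L_n$ directly by mean/variance for part (b)---matches the paper's. The fatal problem is the per-coordinate computation itself.

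You claim $\chi^2(\mu_{1,\bP}\|\mu_0)\asymp 1/\Delta_n$, but the correct order is $\Theta(\Delta_n)$. First, sanity check the direction: when $\Delta_n=0$ the NTP distribution is degenerate on one token, and then the Gumbel-max density $f_{1,\bP}(r)=\sum_{\token} r^{1/P_\token-1}$ is identically $1$, so $\mu_{1,\bP}=\mu_0$ and the divergence is zero. Small $\Delta_n$ means \emph{weak} signal, not strong. Second, the density is never singular near $r=0$: every exponent $1/P_\token-1$ is nonnegative, so $0\le f_{1,\bP}(r)\le |\Voca|$ uniformly. There is no ``unbounded likelihood ratio near $r=0$'' to truncate; that entire obstacle is a phantom. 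If you carry out the integral (as the paper does in its Lemma~\ref{lem:upper-bound-for-square} and Lemma~\ref{lem:lower-bound-for-square}) you get $\EB_0(f_{1,\bP}(Y)-1)^2=\Theta(\Delta_n)$, hence the per-coordinate mixture divergence is $\Theta(\eps_n^2\Delta_n)$.

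You then land on the right exponent $n^{1-2p-q}$ only because of a second error that cancels the first: you write $n\cdot n^{-2p+q}=n^{1-2p-q}$, but of course $n\cdot n^{-2p+q}=n^{1-2p+q}$. With your claimed per-coordinate order $\eps_n^2/\Delta_n$, part (a) would never go through (the product divergence would diverge for all $q\ge 0$). The paper's actual computation gives $n\cdot \eps_n^2\Delta_n = n^{1-2p-q}$, which tends to $0$ when $q+2p>1$ and to $\infty$ when $q+2p<1$, as desired. Once you fix the $\Theta(\Delta_n)$ scaling, the rest of your plan---in particular the $\log(1+x)$ expansion for $L_n$---goes through cleanly with no truncation needed, since $\eps_n|f_{1,\bP}-1|\le |\Voca|\eps_n\to 0$ uniformly.
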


\begin{rem}[A weaker assumption for Theorem \ref{thm:gumbel-dense}]
\label{rem:weaker-condition}
For simplicity of presentation, we adopt Assumption \ref{asmp:regular}, though the proof relies on a weaker condition: we assume that all NTP distributions $\bP_t$ are either uniformly $\bP_{1:n} \subset \PM_{\Delta_n}$ or $\bP_{1:n} \subset \PM_{\Delta_n}^c$ according to \eqref{eq:regular-set}.
In our appendix, we show that Theorem \ref{thm:gumbel-dense} holds under Assumption \ref{asmp:main} and this weaker condition.
Specifically, we prove that under Assumption \ref{asmp:main}, if $q + 2p > 1$ and $\bP_{1:n} \subset \PM_{\Delta_n}$, $H_0$ and $H_1^{(\mathrm{mix})}$ merge asymptotically; conversely, if $q + 2p < 1$ and $\bP_{1:n} \subset \PM_{\Delta_n}^c$, $H_0$ and $H_1^{(\mathrm{mix})}$ separate asymptotically.
The same reasoning applies to the subsequent Theorem \ref{thm:adaptivity-fitness-of-good} and \ref{thm:suboptimality}.
\end{rem}

Theorem~\ref{thm:gumbel-dense} shows that given $q + 2p < 1$, it is possible to distinguish between $H_0$ and $H_1^{(\mathrm{mix})}$ asymptotically, with the likelihood-ratio test an effective detection method. Conversely, if $q + 2p > 1$, $H_0$ and $H_1^{(\mathrm{mix})}$ become indistinguishable asymptotically.
Hence, $q + 2p = 1$ represents the theoretical boundary distinguishing detectable from undetectable regions.
Unfortunately, the likelihood-ratio test is impractical cause it relies on the unknown token distribution $\bP_t$ and non-null fraction $\eps_n$.

% 
% \begin{cor}[Detection boundary]
% Under Assumption \ref{asmp:main} and \ref{asmp:regular}, the hypothesis testing problem \eqref{eq:robust} is detectable if and only if $q+2p < 1$.
% \end{cor}

\subsection{Adaptive Optimality}
\label{sec:optimality}

An ideal optimal method should achieve the detection boundary $q + 2p = 1$ automatically in the previous difficult case, without requiring any problem-dependent knowledge such as $p$, $q$, or $\bP_t$'s.
This property is known as ``adaptive optimality'' in the literature \citep{donoho2004higher,donoho2015higher}, as it consistently works regardless of the underlying parameters and does not require this information.
In the following theorem, we show that \Algo~achieves adaptive optimality.

\begin{thm}[Optimal adaptivity]
\label{thm:adaptivity-fitness-of-good}
Under Assumptions \ref{asmp:main} and \ref{asmp:regular}, if $q + 2p < 1$, for \Algo~with any $0 \le c_n^+ \le 1/n$ and $s \in [-1, 2]$, the sum of Type I and Type II errors tends to 0 as $n\to\infty$.
\end{thm}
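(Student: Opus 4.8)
The plan is to control the Type~I and Type~II errors separately and show each vanishes as $n\to\infty$.

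For the Type~II error we exhibit a single deterministic threshold at which the empirical p-value CDF already departs from the uniform by far more than the critical level. A signal-to-noise heuristic suggests taking $r_n\asymp\Delta_n\asymp n^{-q}$ (a constant if $q=0$), which lies well inside $(c_n^+,1)$ since $c_n^+\le 1/n$ and $q<1$. First I would establish a \emph{uniform} lower bound on the alternative p-value CDF of the Gumbel-max watermark, $G_{\bP}(r)=1-\sum_{\token}P_{\token}(1-r)^{1/P_{\token}}$: writing $G_{\bP}(r)-r=\sum_{\token}P_{\token}\big[(1-r)-(1-r)^{1/P_{\token}}\big]$ and splitting off the dominant token shows $c_0\Delta_n\le G_{\bP}(r_n)-r_n\le 2\Delta_n$ for every $\bP$ with $\Delta(\bP)=\Delta_n$, regardless of how the residual mass is distributed; hence $\overline{G}_n(r_n)-r_n\asymp\Delta_n$ under Assumption~\ref{asmp:regular}. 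Second, I would show $\FB_n(r_n)$ concentrates around its conditional mean: since $\xi_t\perp\FM_{t-1}$ and $\bP_t$ is $\FM_{t-1}$-measurable, Assumption~\ref{asmp:main} gives $\EB[\1_{\rp_t\le r_n}\mid\FM_{t-1}]=(1-\eps_n)r_n+\eps_n G_{\bP_t}(r_n)$, so $\1_{\rp_t\le r_n}-\EB[\1_{\rp_t\le r_n}\mid\FM_{t-1}]$ is a bounded martingale-difference sequence with conditional variance $O(r_n)$, and Freedman's inequality yields $\FB_n(r_n)=r_n+\eps_n\big(\overline{G}_n(r_n)-r_n\big)+o_p(\eps_n\Delta_n)$ \emph{precisely} because $n\eps_n^2\Delta_n\asymp n^{1-2p-q}\to\infty$ in this regime; combined with the signal bound, $\FB_n(r_n)\ge r_n+\tfrac{c_0}{2}\eps_n\Delta_n$ with probability tending to one. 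Setting $N_n=n\FB_n(r_n)$ we get $\rp_{(N_n)}\le r_n<N_n/n$ and $\rp_{(N_n+1)}>r_n\gg c_n^+$; since $v\mapsto K_s(u,v)$ is decreasing on $\{v<u\}$, this gives $S_n^+(s)\ge K_s^+(N_n/n,\rp_{(N_n)})\ge K_s(N_n/n,r_n)$. Because $N_n/n=r_n(1+O(\eps_n))$, the quadratic expansion $K_s(u,v)=\tfrac{(u-v)^2}{2v(1-v)}(1+o(1))$ holds uniformly in $s\in[-1,2]$, giving $nS_n^+(s)\gtrsim n\eps_n^2\Delta_n\asymp n^{1-2p-q}\to\infty$, which eventually exceeds $(1+\delta)\log\log n$.

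For the Type~I error, under $H_0$ the $\rp_t$ are i.i.d.\ $\UM(0,1)$, and the claim reduces to the asymptotic null behavior of the truncated $\phi_s$-divergence statistic $S_n^+(s)$. I would invoke the Jager--Wellner theory for this family, whose natural normalization is $\log\log n$, together with the fact that the truncations---restricting $K_s$ to $K_s^+$ and the domain to $r\ge\rp^+$---discard the few smallest order statistics that otherwise render $nS_n(s)$ heavy-tailed (cf.\ Remark~\ref{rem:drawbacks-of-untruncation}). After this stabilization, the iterated-logarithm rate of the normalized uniform empirical process gives $nS_n^+(s)\le(1+o_p(1))\log\log n$, hence $\PB_0\big(nS_n^+(s)\ge(1+\delta)\log\log n\big)\to 0$; combined with the Type~II bound this completes the proof.

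The main obstacle is the \emph{sharpness} of the Type~I analysis: a crude comparison such as $K_s\le C_s K_2$ (reducing the statistic to Higher Criticism) loses the leading constant and would force over-rejection, so one genuinely needs the fine weak-convergence/iterated-logarithm control of $S_n^+(s)$ with constant exactly $1$, plus a careful check that the truncation level $c_n^+\le 1/n$ removes the extreme-order-statistic contribution without disturbing the bulk. A secondary technical point is that the NTP distributions $\bP_t$ are produced autoregressively and vary with $t$, so the signal bound must be uniform over the $\Delta_n$-class and the concentration step martingale-based rather than relying on independence; these are handled by the explicit form of $\mu_{1,\bP}$ and by Freedman's inequality, respectively.
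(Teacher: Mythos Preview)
Your Type~II argument is essentially the paper's: evaluate the empirical p-value CDF at the single threshold $r_n=\Delta_n$, establish the uniform signal lower bound $G_{\bP}(\Delta_n)-\Delta_n\ge c_0\Delta_n$ (the paper does this via its Lemma~\ref{lem:help-ratio}), apply martingale concentration to $\FB_n(\Delta_n)$ (the paper uses Chebyshev on the quadratic variation rather than Freedman, but either works), and reduce $K_s^+$ to $K_2^+$. On this last step the paper uses an explicit comparison inequality, Lemma~\ref{lem:reduce-Ks-to-K2}, namely $K_s^+(u,v)\ge K_2^+(u,v)\bigl[1-(1-v)\bigl(1-(v/u)^{2-s}\bigr)\bigr]$ for $s\neq 1$ and $K_1^+(u,v)\ge K_2^+(u,v)\cdot v/u$, together with a separate lemma (Lemma~\ref{lem:bounded-ratio}) showing $\Delta_n/\FB_n(\Delta_n)$ stays between positive constants. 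Your direct Taylor expansion $K_s(u,v)=\tfrac{(u-v)^2}{2v(1-v)}(1+o(1))$ as $u/v\to 1$ is the infinitesimal version of the same identity and is equally valid here since $\FB_n(\Delta_n)/\Delta_n=1+O(\eps_n)$.

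Your Type~I argument is overcomplicated in one respect. You treat the truncation as a delicate stabilizing step needed to secure the sharp constant~$1$, and flag ``a careful check that the truncation level $c_n^+\le 1/n$ removes the extreme-order-statistic contribution without disturbing the bulk'' as the main obstacle. In fact the paper's Type~I argument is a one-line monotonicity: both truncations (replacing $K_s$ by $K_s^+$ and restricting to $r\ge\rp^+$) can only \emph{decrease} the statistic, so $S_n^+(s)\le S_n(s)$ deterministically, and Jager--Wellner's Theorem~3.1 applies directly to the \emph{untruncated} $S_n(s)$, giving $nS_n(s)-b_n\Rightarrow Z$ with $b_n\sim\log\log n$. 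The heavy-tail issue of Remark~\ref{rem:drawbacks-of-untruncation} concerns finite-sample power and practical calibration, not the asymptotic null behavior, so truncation plays no role in this step and the ``careful check'' you anticipate is unnecessary.
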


Due to the prevalence of sum-based detection rules in the literature \citep{scott2023watermarking,fernandez2023three,li2024statistical}, we next examine whether these tests can achieve the same optimal adaptivity.
We say a score function $h$ is parameter-free if, for any $y$, $h(y)$ does not depend on $\Delta$ and $\eps$.

\begin{thm}[Suboptimality of sum-based detection rules]
\label{thm:suboptimality}
Let Assumptions \ref{asmp:main} and \ref{asmp:regular} hold.
Consider the detection rule specified by $h$: $T_h(Y_{1:n}) = 1$ if $\sum_{t=1}^n h(\Yars_t) \ge n \cdot \EB_0 h(\Yars) +\Theta(1) \cdot n^{\frac{1}{2}} a_n$, otherwise it equals 0, where $a_n \to \infty$ and $\frac{a_n}{n^{\gamma}} \to 0$ for any $\gamma > 0$.\footnote{The choice of $a_n$ ensures that $T_h$ has a vanishing Type I error. Examples include any polynomial function of $\log n$ or $\log\log n$ with positive coefficients.}
For any score function that is \rm{(i)} non-decreasing, \rm{(ii)} non-constant, \rm{(iii)} parameter-free, and \rm{(iv)} does not have discontinuities at both $0$ and $1$, the following results hold for $T_h$:
\begin{enumerate}
\item If $q + p < \frac{1}{2}$, the sum of Type I and Type II errors tends to 0.
\item If $q + p > \frac{1}{2}$, the sum of Type I and Type II errors tends to 1.
\end{enumerate}
\end{thm}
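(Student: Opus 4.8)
The plan is to analyze the behavior of $\sum_{t=1}^n h(\Yars_t)$ under $H_0$ and $H_1^{(\mathrm{mix})}$ via its first two moments, then invoke a CLT/Chebyshev argument. Under $H_0$ the $\Yars_t$ are i.i.d.\ $\mu_0$, so $\sum h(\Yars_t)$ has mean $n\EB_0 h(\Yars)$ and variance $n\Var_0(h(\Yars))$; since $h$ is non-constant and has no discontinuities at both endpoints, $\Var_0(h(\Yars)) = \Theta(1)$ (I would need to rule out the degenerate case, which conditions (ii) and (iv) are designed to do). The threshold is $n\EB_0 h(\Yars) + \Theta(1)\cdot n^{1/2} a_n$ with $a_n\to\infty$ sub-polynomially, so by Chebyshev the Type I error vanishes. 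The crux is therefore to compute the \emph{signal}, i.e.\ the shift in mean under $H_1^{(\mathrm{mix})}$, which is
\[
\mu_n := \EB_{H_1}\!\Big[\sum_{t=1}^n h(\Yars_t)\Big] - n\EB_0 h(\Yars)
= \eps_n \sum_{t=1}^n \big(\EB_{\mu_{1,\bP_t}} h(\Yars) - \EB_{\mu_0} h(\Yars)\big).
\]
The test succeeds (Type II error $\to 0$) when $\mu_n \gg n^{1/2} a_n$ and fails when $\mu_n \ll n^{1/2}$ (after also controlling the variance under $H_1$, which is still $\Theta(n)$ since the mixture perturbation is small).

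The heart of the argument is the asymptotic evaluation of the per-token signal gap $g(\bP_t) := \EB_{\mu_{1,\bP_t}} h(\Yars) - \EB_{\mu_0} h(\Yars)$ for the Gumbel-max alternative $\mu_{1,\bP}(Y\le r) = \sum_{\token} P_\token r^{1/P_\token}$, under Assumption~\ref{asmp:regular} where $\Delta(\bP_t) = \Delta_n \asymp n^{-q}$. Writing $g(\bP_t) = \int_0^1 h(y)\, d(\mu_{1,\bP_t}-\mu_0)(y)$ and integrating by parts, $g(\bP_t) = -\int_0^1 (\mu_{1,\bP_t}(y) - y)\, dh(y)$ (with care at the endpoints, where condition (iv) guarantees at most one-sided jump so the boundary terms are controlled). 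The key estimate is that when $\Delta_n \asymp n^{-q}$ is small, $\mu_{1,\bP}(y) - y$ is of order $\Delta_n$ away from the extremes but concentrates near $y\approx 1$; a careful expansion — using that the dominant token has mass $1-\Delta_n$ and contributes $\sum_\token P_\token y^{1/P_\token} \approx (1-\Delta_n) y^{1/(1-\Delta_n)} + (\text{mass }\Delta_n\text{ spread over near-degenerate terms})$ — should yield $g(\bP_t) \asymp \Delta_n \asymp n^{-q}$ for any admissible parameter-free, non-decreasing, non-constant $h$ whose derivative does not blow up too fast near both endpoints. (The one subtlety: for scores like $\hars(y) = -\log(1-y)$ the integral near $y=1$ diverges, so I expect $g(\bP_t)$ can be a larger power of $\Delta_n$ for some $h$; but condition (iv), forbidding simultaneous discontinuity/blowup at $0$ and $1$, is exactly what caps the gap at order $\Delta_n$ — this is the technical linchpin I would isolate as a lemma.) Granting $g(\bP_t) \asymp n^{-q}$, we get $\mu_n \asymp n\cdot n^{-p}\cdot n^{-q} = n^{1-p-q}$.

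Comparing $\mu_n \asymp n^{1-p-q}$ against the scale $n^{1/2}$: if $q+p < 1/2$ then $1-p-q > 1/2$, so $\mu_n \gg n^{1/2} a_n$ (since $a_n$ is sub-polynomial), giving vanishing Type II error; if $q+p > 1/2$ then $\mu_n \ll n^{1/2}$, and a matching lower bound on the variance under $H_1$ plus Chebyshev (or a Berry–Esseen / Lindeberg CLT, since the summands are independent but not identically distributed) shows that the distributions of $\sum h(\Yars_t)$ under $H_0$ and $H_1^{(\mathrm{mix})}$ overlap to first order, forcing the error sum to $1$. For the fail direction I would make this rigorous by noting both null and alternative distributions of the normalized statistic converge to the same Gaussian (or by a two-point Le Cam argument comparing $H_0$ with a representative $H_1$ whose signal is genuinely $o(n^{1/2})$), so any rejection region pinned to control Type I error at the stated threshold must have Type II error tending to $1$.

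\textbf{Main obstacle.} The delicate part is the uniform (over the admissible class of $h$) two-sided estimate $g(\bP_t) \asymp \Delta_n$: establishing the \emph{upper} bound $g(\bP_t) = O(\Delta_n)$ requires exploiting condition (iv) to tame the behavior of $h$ near $y=1$ (and $y=0$), since the Gumbel-max density $d\mu_{1,\bP}/dy$ is heavy near $1$; and establishing the \emph{lower} bound $g(\bP_t) = \Omega(\Delta_n)$ requires showing that a non-decreasing, non-constant $h$ cannot have its total variation concentrated exactly where $\mu_{1,\bP}-\mu_0$ vanishes. I expect this lemma — essentially a sharp quantitative statement that "sum-based scores extract signal of order $\Delta_n$, no more and no less, uniformly over reasonable $h$" — to carry the entire proof, with the CLT/Chebyshev bookkeeping being routine once it is in hand.
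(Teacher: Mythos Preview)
Your approach matches the paper's almost exactly: both control Type~I error by Chebyshev, then isolate the key lemma that the per-token expectation gap satisfies $g(\bP_t)=\EB_{1,\bP_t}h(Y)-\EB_0 h(Y)=\Theta(\Delta_n)$ via integration by parts against $y-F_{1,\bP_t}(y)$, yielding total signal $\mu_n\asymp n\eps_n\Delta_n=n^{1-p-q}$ to be compared with the $n^{1/2}$ fluctuation scale. The paper proves the gap lemma by writing $J(p)=\int_0^1(y-y^{1/p})\,dh(y)$ and Taylor-expanding $J(P_{\max})=-J'(1)\Delta+o(\Delta)$ with $J'(1)=-\int_0^1 y\log(1/y)\,dh(y)<0$, where conditions (ii) and (iv) guarantee $J'(1)\neq 0$ and $h(1)-h(0)<\infty$; this is precisely the ``sharp quantitative statement'' you flagged as the main obstacle.

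Two small corrections. First, under $H_1^{(\mathrm{mix})}$ the $Y_t$ are \emph{not} independent (each $\bP_t$ depends on the history through $\FM_{t-1}$), so your ``independent but not identically distributed'' Lindeberg/Berry--Esseen framing does not apply directly; the paper instead observes that $X_n:=\sum_t\bigl(h(Y_t)-\EB_1[h(Y_t)\mid\FM_{t-1}]\bigr)$ is a square-integrable martingale with $\EB_1|X_n|^2\le Cn$, and applies Chebyshev to $X_n$. Second, the fail direction is simpler than you propose: since the theorem concerns the \emph{specific} rule $T_h$ with threshold $n\EB_0 h(Y)+\Theta(\sqrt n)a_n$, no Le~Cam or distributional-overlap argument is needed --- one shows directly that $\PB_1(T_h=1)\to 0$ because $|\overline X_n|/\sqrt n=O(n^{1/2-p-q})=o(1)$ while the threshold sits $\Theta(a_n)$ standard deviations above the null mean, and Chebyshev on the martingale $X_n$ finishes.
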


As shown in Theorem \ref{thm:suboptimality}, for nearly any non-decreasing and parameter-free score function $h$, the sum-based detection rule it introduces is strictly suboptimal. Specifically, Theorem \ref{thm:suboptimality} shows that the detection boundary for these rules is $q + p = \frac{1}{2}$, rather than the optimal $q + 2p = 1$. Consequently, all existing sum-based detection rules fail to achieve the optimal detection boundary.

\begin{cor}
Under Assumptions \ref{asmp:main} and \ref{asmp:regular}, the detection boundary for the existing score function $h \in \{\hars, \hlog, \hind, \hoptarso\}$ with both $\delta, \Delta_0 \in (0, 1)$ is $q+p=\frac{1}{2}$.
\end{cor}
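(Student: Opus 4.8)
The plan is to read the corollary off Theorem~\ref{thm:suboptimality}: it suffices to check that each of the four scores $\hars,\hlog,\hind,\hoptarso$ satisfies the four structural hypotheses there---non-decreasing, non-constant, parameter-free, and not discontinuous at both $0$ and $1$. Once that is done, Theorem~\ref{thm:suboptimality} yields that the sum-based rule built from each such $h$ has a vanishing sum of Type~I and Type~II errors when $q+p<\tfrac12$ and a sum of errors tending to $1$ when $q+p>\tfrac12$, which is exactly the claim that its detection boundary is the line $q+p=\tfrac12$. So the whole proof reduces to four elementary verifications.

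For $\hars(y)=-\log(1-y)$ and $\hlog(y)=\log y$ I would argue as follows. Both are strictly increasing on $(0,1)$, hence non-decreasing and non-constant, and neither contains $\Delta$ or $\eps$, so (i)--(iii) hold. For (iv), $\hars$ is finite and continuous at $y=0$ with its only singularity at $y=1$, whereas $\hlog$ is finite and continuous at $y=1$ with its only singularity at $y=0$; each is discontinuous at just one endpoint. I would also record $\EB_0\hars(Y)=1$ and $\EB_0\hlog(Y)=-1$, both with finite variance under $\mu_0$, so the centering and scaling in the threshold of Theorem~\ref{thm:suboptimality} are well posed.

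For $\hind=h_{\mathrm{ind},\delta}$ with $\delta\in(0,1)$: this is the indicator that the pivotal statistic $Y$ (equivalently the $\rp$-value $1-Y$) crosses a fixed threshold that lies strictly inside $(0,1)$ precisely because $\delta\in(0,1)$; hence it is a non-decreasing, non-constant, bounded two-valued step function whose single jump is at an interior point rather than at $0$ or $1$, so (i)--(iv) hold. For $\hoptarso=h_{\mathrm{opt},\Delta_0}$ with $\Delta_0\in(0,1)$: I would invoke its explicit form from~\cite{li2024statistical} as (up to an additive constant) the log-likelihood ratio $\log\frac{\d\mu_{1,\bP^{\ast}}}{\d\mu_0}$ against the least-favorable member $\bP^{\ast}$ of the $\Delta_0$-regular class $\PM_{\Delta_0}$. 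Since $\mu_{1,\bP^{\ast}}$ has density $\sum_{\token\in\Voca}y^{1/P^{\ast}_{\token}-1}$, a finite sum of non-decreasing powers of $y$, the ratio is non-decreasing; it is non-constant because $\mu_{1,\bP^{\ast}}\neq\mu_0$ whenever $\Delta_0<1$; it is parameter-free because $\Delta_0$ is a user-chosen constant, distinct from the true singularity $\Delta_n$ and fraction $\eps_n$; and its only singularity is at $y=0$, where the alternative density vanishes, the density remaining bounded at $y=1$. Thus (i)--(iv) hold.

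The only mildly delicate point is checking (iv) for $\hars$ and $\hoptarso$, where one must confirm that the blow-up is confined to a single endpoint; this is precisely what the restriction $\delta,\Delta_0\in(0,1)$ secures, since the endpoint values $0$ or $1$ would either collapse $\hind$ to a constant or push $\hoptarso$ toward a degenerate two-sided shape. Everything else is bookkeeping, and the corollary follows by applying Theorem~\ref{thm:suboptimality} once to each of the four scores.
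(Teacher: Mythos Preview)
Your proposal is correct and is exactly the paper's implicit argument: the corollary is stated immediately after Theorem~\ref{thm:suboptimality} with no separate proof, and follows by checking that each of $\hars,\hlog,\hind,\hoptarso$ satisfies the four hypotheses (i)--(iv), which you do correctly under the intended reading of (iv) that $h$ is allowed to blow up at one endpoint but not both. The one subtlety worth flagging is that the paper's own proof of Theorem~\ref{thm:suboptimality} actually singles out $\hars$ for special treatment (its expectation gap is $\Theta(\Delta\log\tfrac1\Delta)$ rather than $\Theta(\Delta)$, via Proposition~\ref{prop:expectation-gaps}), but this does not change the boundary and the theorem statement is meant to cover it, so your reduction is sound.
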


\subsection{Optimal Efficiency Rate}
\label{sec:optimal-rate}
We now turn to the constant edit region, where $\Delta_n = \Delta \in (0, 1)$ and $\eps_n \equiv \eps \in (0, 1]$. In this scenario, $p = q = 0$ according to the notation $\Delta_n \asymp n^{-p}$ and $\eps_n \asymp n^{-q}$. Theorem \ref{thm:gumbel-sparse} implies that detection is always possible, as $H_0$ and $H_1^{(\mathrm{mix})}$ asymptotically separate. Furthermore, Theorems \ref{thm:adaptivity-fitness-of-good} and \ref{thm:suboptimality} suggest that both \Algo~and sum-based detection rules are viable detection methods. Unlike the detection boundary, which assesses adaptivity to varying problem difficulties, we employ a different criterion to evaluate feasible detection methods.

Li et al.~\citep{li2024statistical} introduces a notion of test efficiency for watermark detection. The key idea is to define efficiency as the rate of exponential decrease in Type II errors for a fixed significance level $\alpha$, considering the least-favorable NTP distribution within a belief class $\PM$. The formal definition is provided in Definition \ref{def:efficiency}.

\begin{defn}[$\PM$-efficiency \citep{li2024statistical}]
\label{def:efficiency}
Consider the detection rule that rejects $H_0$ if $S(\Yars_{1:n})$ is larger than a critical value.
Let $\gamma_{n,\alpha}$ be the critical value that ensures a Type I error of $\alpha$ for the problem \eqref{eq:robust}, that is, $\PB_0(S(\Yars_{1:n}) \ge \gamma_{n,\alpha}) = \alpha$ for all $n \ge 1$.
For a given belief class $\PM$, we define the following limit as the $\PM$-efficiency rate of $S$ and denote it by $R_{\PM}(S)$:

\[
\lim_{n \to \infty}\sup_{\bP_t \in \PM}\frac{1}{n}\log\PB_1(S_n \le \gamma_{n,\alpha})  = -
R_{\PM}(S).
\]
\end{defn}

This efficiency notion is referred to as class-dependent efficiency, as it depends on a given class $\PM$, which characterizes the prior belief about the underlying NTP distributions $\bP_t$'s. In their study, they set the prior class as the $\Delta$-regular set $\FPM$ (defined in \eqref{eq:regular-set}) and identify the optimal score function $\hoptars$ for sum-based detection rules. It is natural to compute the $\FPM$-efficiency rate for \Algo, which provides a quantitative measure of their detection efficiency in the constant parameter region. We present the results in the following theorem.

\begin{thm}[Optimal $\FPM$-efficiency]
\label{thm:optimal-rate}
Assume Assumption \ref{asmp:main} hold.
Let $s \in (0, 1)$, $c_n^+=0$, $\eps_n \equiv \eps \in (0, 1]$ and $\Delta_n\equiv\Delta \in (0, 1)$.
Given a dataset of pivotal statistics $\Yars_{1:n}$, it follows that
% \vspace{-0.1em}
\[
\sup_{\text{measurable}~S} R_{\PM_{\Delta}}(S)
=  \KL(\mu_0, (1-\eps)\mu_0 + \eps \mu_{1, \bP_{\Delta}^{\star}}) = R_{\PM_{\Delta}}(S_n^+(s))
% \vspace{-0.5em}
\]
where $\bP_{\Delta}^{\star}$ is the least-favorable NTP distribution defined by
\begin{equation}
\label{eq:worst-case-P}
\bP_{\Delta}^{\star} =\biggl(\underbrace{1-\Delta, \ldots, 1-\Delta}_{\floor{\frac{1}{1-\Delta}}\ \textnormal{times}}, 1-(1-\Delta)\cdot \left\lfloor\frac{1}{1-\Delta}\right\rfloor, 0, \ldots\biggr).
\end{equation}
\end{thm}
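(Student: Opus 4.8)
The plan is to establish the theorem in two halves: first an \emph{upper bound} showing that no measurable test can beat the rate $\KL(\mu_0, (1-\eps)\mu_0 + \eps \mu_{1,\bP_\Delta^\star})$ against the least-favorable NTP choice, and then a \emph{matching lower bound} showing that $\Algo$ (i.e. the statistic $S_n^+(s)$ for $s \in (0,1)$) attains it. For the upper bound I would argue as follows. Fix all NTP distributions equal to the worst-case $\bP_\Delta^\star$. Under $H_0$ the $Y_t$ are i.i.d.\ $\mu_0$; under $H_1^{(\mathrm{mix})}$ they are i.i.d.\ from $g_\star := (1-\eps)\mu_0 + \eps\mu_{1,\bP_\Delta^\star}$. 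This is a simple-versus-simple i.i.d.\ testing problem, so the Neyman--Pearson lemma plus a standard Chernoff/large-deviation computation gives that the best achievable exponential decay of the Type II error at fixed level $\alpha$ is exactly $\KL(\mu_0, g_\star)$ (the relevant exponent for the test ``reject when the empirical log-likelihood ratio is small'' is the KL divergence in this order, by Stein's lemma / Sanov's theorem). Since $\PM_\Delta$ contains $\bP_\Delta^\star$, taking the supremum over $\bP_t \in \PM_\Delta$ in Definition~\ref{def:efficiency} can only make the Type II error larger, so $\sup_S R_{\PM_\Delta}(S) \le \KL(\mu_0, g_\star)$; and this sup is attained by the likelihood-ratio test against $\bP_\Delta^\star$.

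For the lower bound --- the substantive part --- I need to show $R_{\PM_\Delta}(S_n^+(s)) \ge \KL(\mu_0, g_\star)$, i.e.\ that for every sequence of NTP distributions in $\PM_\Delta$, $\PB_1(S_n^+(s) \le \gamma_{n,\alpha})$ decays at least as fast as $e^{-n\KL(\mu_0,g_\star)}$ up to subexponential factors. The first step is to pin down $\gamma_{n,\alpha}$: under $H_0$, Remark~\ref{rem:drawbacks-of-untruncation} and the $\log\log n$ scaling in~\eqref{eq:goodness-of-fit-test} show that $n S_n^+(s)$ is (stochastically) of order $\log\log n$, so $\gamma_{n,\alpha} = O(\log\log n / n) = o(1)$. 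Hence it suffices to show that under $H_1^{(\mathrm{mix})}$, $S_n^+(s)$ is bounded below by a strictly positive constant $c(s,\eps,\Delta) > 0$ except on an event of probability $\le e^{-n\KL(\mu_0,g_\star)(1+o(1))}$, and moreover that the best such constant one can certify, optimized appropriately, equals $\KL(\mu_0, g_\star)$. The key is the pointwise (in $r$) lower bound: for fixed $r \in (0,1)$, $\FB_n(r) \to G_\star(r) := (1-\eps)r + \eps\, \mu_{1,\bP_\Delta^\star}(\rp \le r)$ under $\bP_\Delta^\star$ (and to the analogous mixed CDF for general $\bP_t$), so $K_s^+(\FB_n(r), r) \to K_s(G_\star(r), r)$, a fixed positive quantity since $G_\star(r) > r$ for $r$ in a set of positive measure (this uses that $\mu_{1,\bP_\Delta^\star} \ne \mu_0$, which holds because $\bP_\Delta^\star$ is non-degenerate). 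Taking the $\sup$ over $r$, $S_n^+(s) \to \sup_r K_s(G_\star(r), r) =: c_\star(s) > 0$. Because $\gamma_{n,\alpha} \to 0 < c_\star(s)$, the Type II error is eventually the probability of the large-deviation event $\{S_n^+(s) < \gamma_{n,\alpha}\} \subseteq \{\sup_r K_s^+(\FB_n(r),r) < c_\star(s) - o(1)\}$.

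The heart of the argument is to show this large-deviation probability decays with exponent \emph{at least} $\KL(\mu_0, g_\star)$, \emph{and no faster is needed} --- i.e.\ the certified exponent is exactly $\KL(\mu_0,g_\star)$, not something smaller, for the particular choice of evaluation point $r$. Here is the mechanism I expect the paper uses. Rather than taking $\sup$ over all $r$, lower-bound $S_n^+(s) \ge K_s^+(\FB_n(r_0), r_0)$ for a cleverly chosen fixed threshold $r_0$ (depending on $\bP_\Delta^\star$, $s$, $\eps$, $\Delta$ but \emph{not} on $n$). Now $\FB_n(r_0) = \frac1n\sum_t \1_{\rp_t \le r_0}$ is an average of independent Bernoulli variables with means $\approx G_t(r_0) \ge G_\star$-type quantity $\ge r_0$ under $H_1$, so by a Cramér/Chernoff bound the event $\{\FB_n(r_0) \le u\}$ for $u$ slightly above $r_0$ has probability $\le e^{-n\,[\,\inf_t \mathrm{KL}(\mathrm{Ber}(u)\|\mathrm{Ber}(G_t(r_0)))\,](1+o(1))}$. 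The function $K_s$ is exactly a $\phi_s$-divergence between Bernoullis, and the inequality $S_n^+(r_0) \ge K_s^+(\FB_n(r_0), r_0)$ combined with the structure of $K_s$ versus $K_1 = \mathrm{KL}$ (all the $K_s$, $s \in (0,1)$, are comparable and all vanish only on the diagonal) lets one convert a small value of $S_n^+$ into $\FB_n(r_0)$ being close to $r_0$, hence far below $G_t(r_0)$, hence a rare event. Optimizing the choice of $r_0$ to make the resulting Chernoff exponent as favorable as possible, and using that the least-favorable $\bP_\Delta^\star$ is precisely the distribution minimizing $\KL(\mu_0, (1-\eps)\mu_0 + \eps\mu_{1,\bP})$ over $\bP \in \PM_\Delta$ (this is why \eqref{eq:worst-case-P} has its staircase form --- it is the solution of that convex minimization, the same $\bP_\Delta^\star$ identified in \citep{li2024statistical}), one gets exponent $\ge \KL(\mu_0, g_\star)$, uniformly over $\bP_t \in \PM_\Delta$. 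Matching with the upper bound gives the claimed equality, and since the argument produces the \emph{same} exponent regardless of which $s \in (0,1)$ is used (the $\phi_s$ family is equivalent near the diagonal), the efficiency of $S_n^+(s)$ is $s$-independent and optimal.

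\textbf{Main obstacle.} The delicate point is the interchange of $\sup_r$ and the large-deviation estimate: $S_n^+(s)$ is a supremum over a continuum of $r$, and naively union-bounding over $r$ would inflate the exponent. The fix --- evaluating at a single fixed $r_0$ for the lower bound (which only makes $S_n^+$ smaller, hence the Type II bound \emph{valid}) while using the pointwise limit to guarantee $r_0$ can be chosen so that the $H_1$-mean $G_t(r_0)$ is bounded away from $r_0$ uniformly in $t$ and in $\bP_t \in \PM_\Delta$ --- requires showing that such an $r_0$ exists for \emph{all} admissible NTP sequences simultaneously, which is exactly where the least-favorable/minimax structure of $\bP_\Delta^\star$ and the monotonicity of $\mu_{1,\bP}$ in the singularity $\Delta(\bP)$ enter. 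I also expect the precise identification of the optimizing $r_0$ (equivalently, showing the certified exponent \emph{equals} rather than merely lower-bounds $\KL(\mu_0,g_\star)$) to require a short convex-duality / KKT computation tying the staircase form~\eqref{eq:worst-case-P} to the variational problem $\min_{\bP \in \PM_\Delta}\KL(\mu_0,(1-\eps)\mu_0+\eps\mu_{1,\bP})$; that minimization, and the verification that its value is what the upper-bound half produced, is the calculational crux.
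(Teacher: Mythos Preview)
Your overall architecture is right — upper bound by fixing $\bP_t\equiv\bP_\Delta^\star$ and invoking Stein/Hodges--Lehmann, lower bound by showing $S_n^+(s)$ actually achieves that exponent — and you correctly identify that $\gamma_{n,\alpha}=o(1)$. But the mechanism you propose for the lower bound has a genuine gap.

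Your plan is to drop the supremum and bound $S_n^+(s)\ge K_s^+(\FB_n(r_0),r_0)$ at a single, cleverly chosen $r_0$, then Chernoff the Bernoulli $\FB_n(r_0)$. The certified exponent this produces is $\sup_{r_0}K_1\bigl(r_0,G_\star(r_0)\bigr)$ with $G_\star$ the CDF of $g_\star=(1-\eps)\mu_0+\eps\mu_{1,\bP_\Delta^\star}$. By the data-processing inequality applied to the map $y\mapsto\1_{y\le r_0}$, this is \emph{always} $\le\KL(\mu_0,g_\star)$, with equality only when the likelihood ratio $g_\star/1$ is $\sigma(\1_{\cdot\le r_0})$-measurable --- which fails here since $g_\star$ has a genuinely continuous density. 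The supremum over $r_0$ is attained (compactness) and therefore strictly smaller than $\KL(\mu_0,g_\star)$. No convex-duality/KKT trick will close this; the information lost by thresholding is real.

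The paper does the opposite of dropping the supremum: it uses the \emph{full} functional event. The key observation (this is where the restriction $s\in(0,1)$ enters) is that $K_s$ is then jointly continuous on all of $[0,1]^2$, so smallness of $S_n^+(s)$ forces $\FB_n(r)\le r+\deltao$ \emph{uniformly} in $r$ (Lemma~\ref{lem:simplify-K}). The heterogeneity over $\bP_t\in\PM_\Delta$ is handled not by a KL minimization but by the pointwise CDF ordering $F_{1,\bP}(r)\le F_{1,\bP_\Delta^\star}(r)$ (Lemma~\ref{lem:worst-P}), which yields $G_{\bP_t}^{-1}(r)\le G_{\bP_\Delta^\star}^{-1}(r)$ and lets one pass to the single worst-case CDF after the probability-integral transform $G_{\bP_t}(\rp_t)\stackrel{\text{i.i.d.}}{\sim}U(0,1)$. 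The resulting event is a Poisson-process boundary-crossing event $\{\tfrac1n\mathrm{N}(nr)\le G_{\bP_\Delta^\star}^{-1}(r)+\deltao,\forall r\}$, whose large-deviation rate is given exactly by a result of Nikitin (1987) for empirical-process boundary problems, and equals $-\int_0^1[G_{\bP_\Delta^\star}^{-1}]'\log[G_{\bP_\Delta^\star}^{-1}]'=-\KL(\mu_0,g_\star)$. That functional large-deviation step --- not a single-threshold Chernoff --- is what recovers the full KL and matches the upper bound.
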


\begin{figure}[t!]
\centering
\includegraphics[width=0.9\textwidth]{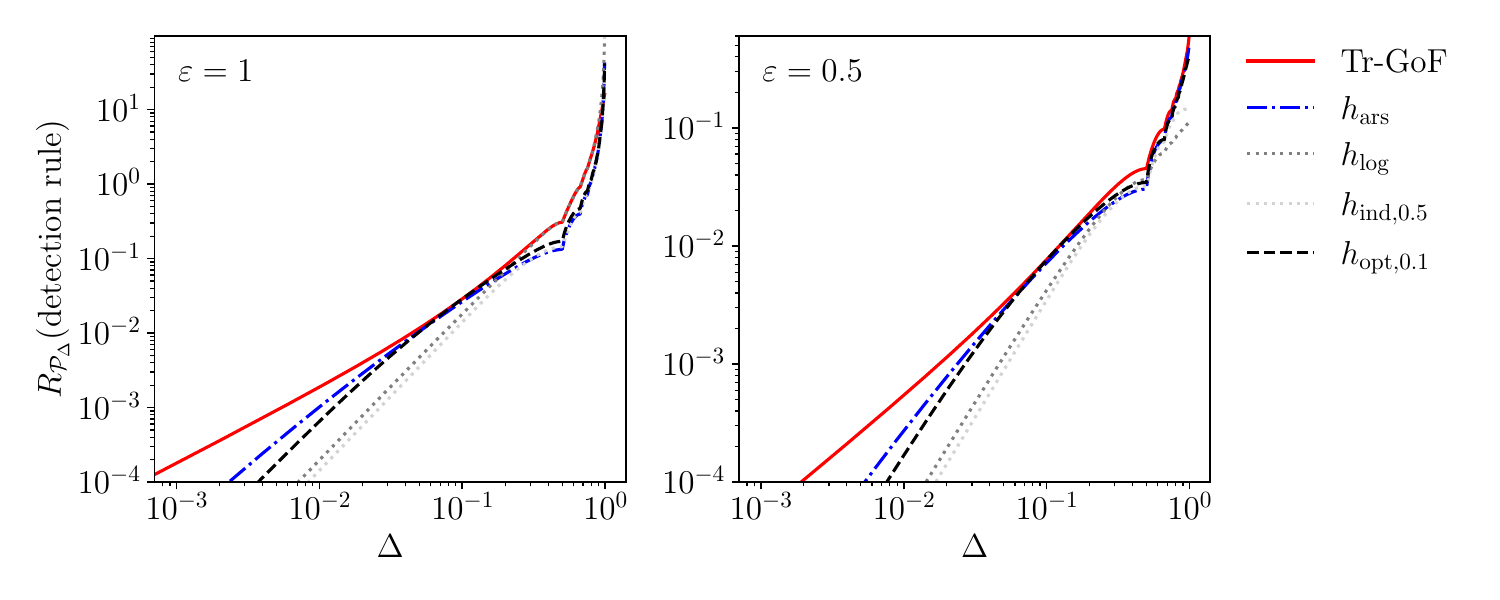}
\vspace{-0.2 in}
\caption{$\FPM$-efficiency rates of different detection methods for the Gumbel-max watermarks. $R_{\FPM}$ has non-smooth points when $\Delta = \frac{1}{2}, \frac{2}{3}, \frac{3}{4}, \ldots$ \citep{li2024statistical}.
}
\label{fig:efficiency}    
\vspace{-0.1in}
\end{figure}

Theorem \ref{thm:optimal-rate} establishes both upper and lower bounds in terms of $\FPM$-efficiency for the robust detection problem \eqref{eq:robust} under the constant parameter region. On one hand, it implies that the $\FPM$-efficiency rate of any measurable function is upper-bounded by $\KL(\mu_0, (1-\eps)\mu_0 + \eps \mu_{1, \bP_{\Delta}^{\star}})$. On the other hand, it shows that \Algo achieves this optimal $\FPM$-efficiency rate without any prior knowledge of $\eps$ and $\Delta$. When $\eps = 1$, this optimality is also achieved by the sum-based detection rule introduced by $\hoptars$ in \citep{li2024statistical}. However, as can easily be seen, computing $\hoptars$ requires knowledge or correct belief about the value of $\Delta$, which limits its practical applicability.
Furthermore, when $\eps < 1$, $\hoptars$ is no longer optimal because it does not consider the factor $\eps$.

To illustrate this optimal efficiency further, Figure \ref{fig:efficiency} presents the $\FPM$-efficiency rates of different detection methods for different values of $\Delta$. The left panel shows the case \(\eps = 1\) and the right panel shows \(\eps = 0.5\). Here, \(\hars\), \(\hlog\), \(\hind\), and \(\hoptarso\) represent commonly used sum-based detection rules, with \(\delta = 0.5\) and \(\Delta_0 = 0.1\) set for illustration purposes. 
In both cases, \Algo~consistently achieves the optimal $\FPM$-efficiency rate for all \(\Delta \in [0.001, 1]\). In contrast, sum-based detection rules generally fail to reach the optimal efficiency rate across most values of \(\Delta\) due to their lack of adaptivity. For example, when \(\eps = 1\), \(\hoptarso\) attains the optimal $\FPM$-efficiency rate if \(\Delta = \Delta_0\). However, once \(\eps < 1\), \(\hoptarso\) loses this optimality, as it no longer adapts to the reduced $\eps$.

\section{Simulations}
\label{sec:simulation}

In this section, our simulation studies first visualize the null and alternative distributions of $S_n^+(s)$ and then verify the empirical transition boundaries for different detection methods.

\subsection{Experimental Setup}
\label{sec:simulation-main}

We use a vocabulary of size $|\Voca|$ and model the pseudorandom variables $\xi_{t, \token}$ as true i.i.d. samples from $\UM(0, 1)$. We will explore the practical setting where $\xi_t$ is computed via a hash function later.
For given $p, q \in (0, 1]$, we set $\eps_n = n^{-p}$ and $\Delta_n = n^{-q}$ where $n$ is the text length.
We use the following procedure to obtain samples of $\log(n S_n^+(s))$ under different settings:
\begin{enumerate}
\item Draw $n$ i.i.d. samples from $\UM(0, 1)$ to represent $\Yars_{1:n}$ in $H_0$ and calculate $\log(n S_n^+(s))$.
\item Replace $\ceil{n \eps_n}$ of the previous samples by the same number of samples from $F_{1,\bP_{t}}$ where the NTP distributions $\bP_t$ are i.i.d. generated. Then, calculate $\log(n S_n^+(s))$.
\item Repeat Steps 1 and 2 $N$ times. Obtain $N$ independent samples of $\log(n S_n^+(s))$ under $H_0$ and $H_1^{\mathrm{mix}}$ respectively and create histograms of the simulated statistics.
\end{enumerate}
In all subsequent experiments, we fix $N=10^3$ while varying the values of $|\Voca|$, $n$, and the pair $(p, q)$, which will be specified accordingly.
We explore two methods for generating $\bP_t$, denoted by \textsf{M1} and \textsf{M2}. In particular, \textsf{M1} randomly generates $\bP_t$ by first setting its largest probability to $1-\Delta_n$, and then configuring the remaining probabilities in $\bP_t$ to adhere to Zipf's law \citep{zipf2016human}.
    In particular, we first i.i.d. sample $a_t \sim \UM(0.95, 1.5)$ and $b_t \sim \UM(0.01, 0.1)$, and then define $P_{t, \token} = \Delta_n \cdot (\token - 1 + b_t)^{-a_t} / C$, where $C = \sum_{\token=2}^{|\Voca|} (\token -1 + b_t)^{-a_t}$ serves as the normalizing constant. \textsf{M2} straightforwardly sets $\bP_t$ to be equivalent to $\left(1-\Delta_n, \frac{\Delta_n}{|\Voca|-1},\ldots, \frac{\Delta_n}{|\Voca|-1}\right)$.
     
% \begin{itemize}
%     \item [\textsf{M1}] %\footnote{If the value of $|\Voca|$ is large, $1-\Delta_n$ typically would be the largest probability in this generation of $\bP_t$.}
%     \item [\textsf{M2}] straightforwardly sets $\bP_t$ to be equivalent to $\left(1-\Delta_n, \frac{\Delta_n}{|\Voca|-1},\ldots, \frac{\Delta_n}{|\Voca|-1}\right)$.
% \end{itemize}

\subsection{Histograms of \Algo~Statistics}

\begin{figure}[t!]
\centering
\includegraphics[width=1.0\textwidth]{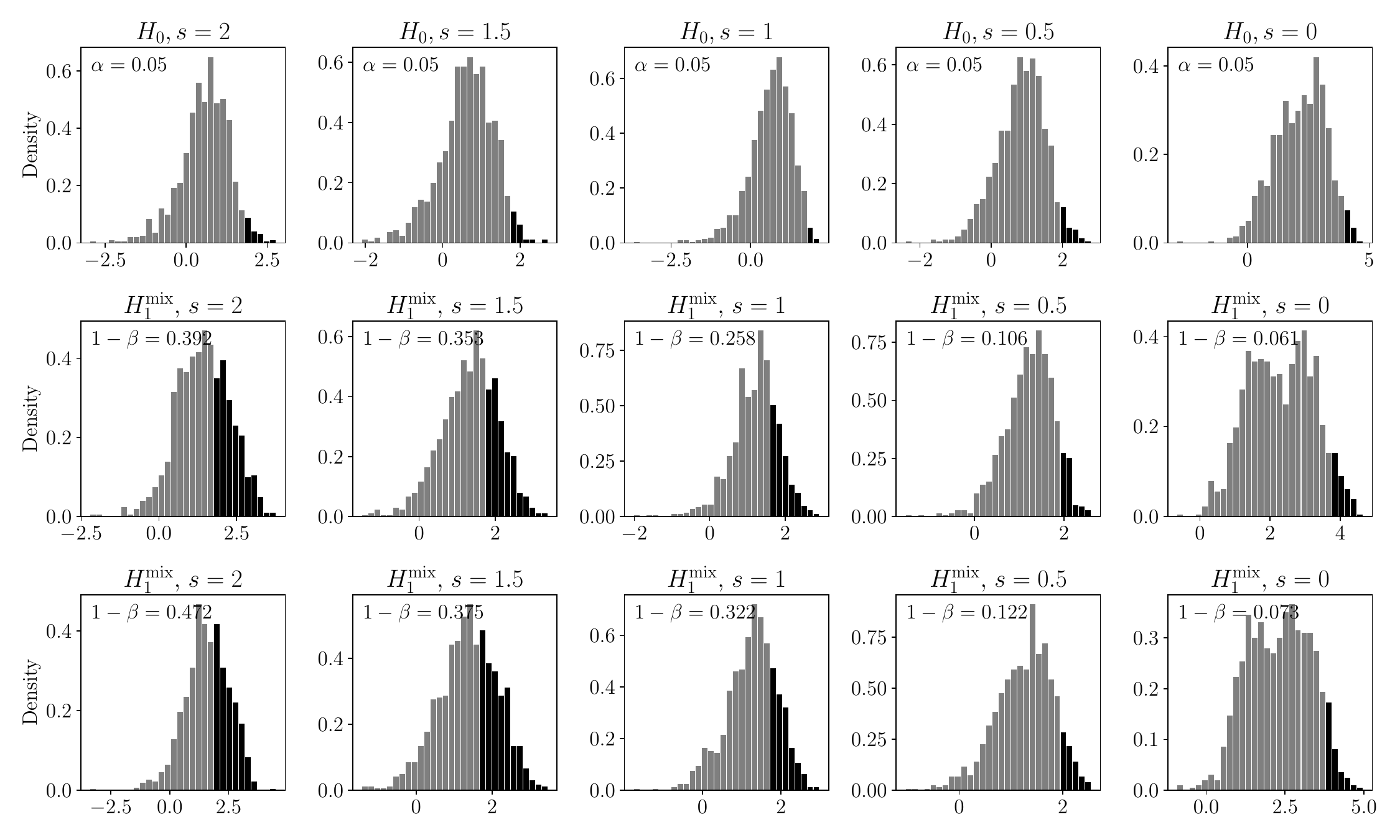}
\vspace{-0.3in}
\caption{
Density histograms and powers of $\log(nS_n^+(s))$ for different values of $s$, with $c_n^+ = \tfrac{1}{n^2}$ and $(p, q) = (0.2, 0.5)$. The first row reports results under $H_0$, while the second and third rows correspond to $H_1^{\mathrm{mix}}$ in the \textsf{M1} and \textsf{M2} settings. The shaded region indicates the rejection area, where $\alpha$ denotes the Type I error and $1-\beta$ the power.
}
\label{fig:other-s-hist}    
\vspace{-0.1in}
\end{figure}

To begin with, we investigate the empirical distribution of $\log(nS_n^+(s))$ under $H_0$ and $H_1^{\mathrm{mix}}$.
We set $|\Voca|=n=10^3$ in this investigation and collect $N=10^3$ samples of $\log(nS_n^+(s))$ for different settings.
In Figure \ref{fig:other-s-hist}, we visualize the empirical density histograms of $\log(n S_n^+(s))$ for five values of $s \in \{2, 1.5, 1,0.5, 0\}$ and three settings.
We shade the rejection regions and mark the corresponding statistical powers as $1-\beta$.
We observe that 

\begin{itemize}
    \item For $0.5 \le s \le 1.5$, the null distributions of $\log(n S_n^+(s))$ exhibit similar shapes and supports. However, for $s \in \{0, 2\}$, there are noticeable changes in shape and support. Remarkably, the maximum value that $\log(n S_n^+(s))$ could take increases dramatically from $2$ (at $s=1$) to $4$ or even $5$ (when $s=0$ or $2$).
    \item For the majority of the $s$ values, the empirical distribution of $\log(n S_n^+(s))$ under $H_1^{\mathrm{mix}}$ resembles its counterpart under $H_0$, albeit with a notable shift to the right. This shift can range from moderate (for $s \in \{0, 0.5\}$) to quite significant (for $s \in \{1,1.5, 2\}$). It is this shift that enables the statistics $n S_n^+(s)$ to detect the distributional differences caused by embedded watermarks.
    % \item The critical value is determined using the $(1-\alpha)$-quantile of the empirical null distribution, and we display the statistical power (denoted by $1-\beta$) in the top-left corner of figures for results under $H_1^{\mathrm{mix}}$. 
    % Note that $\bP_t$ generated by \textsf{M2} exhibits higher entropy than that generated by \textsf{M1}. Consequently, detection under the setup \textsf{M2} would be easier, resulting in a larger statistical power.
\end{itemize}
Even if we set $c_n^+ = 0$, the observed patterns remain consistent, except that the support under $H_1^{\mathrm{mix}}$ would be considerably larger due to the heavy-tailed behavior of $\rp_{(1)}$. See Supplementary \ref{sec:additional-result-GOT-nomask} for additional histogram results of $c_n^+ = 0$. Negative values of $s \in [-1, 0)$ yielded results nearly identical to $s = 0$ in both histograms and statistical power, so we omit them.

\subsection{Detection Boundary and Optimal Adaptivity}
\label{sec:adaptivity}

\begin{figure}[t!]
\centering
\includegraphics[width=1.0\textwidth]{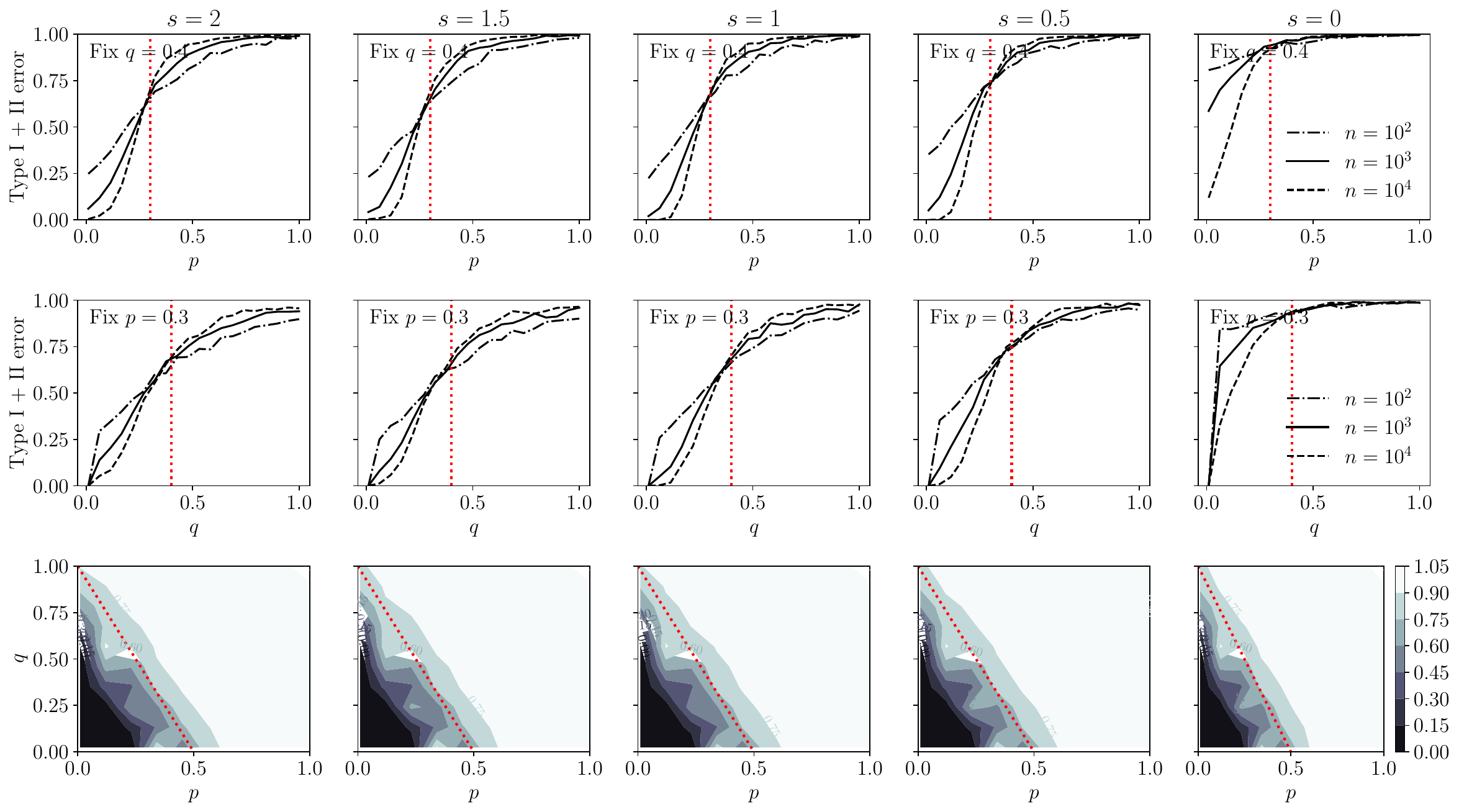}
\vspace{-0.3in}
\caption{Empirical detection boundaries of \Algo for different values of $s$.
Each column corresponds to results for a specific value of $s$.
The first two rows display the sum of Type I and II errors, with $q$ fixed at 0.4 and $p$ at 0.3, respectively.
The last row features the contour plot which illustrates the sum of errors across the domain $[0, 1]^2$, with $n = 10^4$.
Red dotted lines present the theoretical detection boundaries.
All the results are averaged over $N=10^3$ independent trials. 
}
\label{fig:empirical-error-others}
\vspace{-0.1in}
\end{figure}

We then study the detection boundary of \Algo. They reject $H_0$ when $\log(nS_n^+(s)) \ge C_{\mathrm{GoF}}$ for a predetermined critical value $C_{\mathrm{GoF}}$ and a given value $s \in \{2,1.5,1,0.5,0\}$.
We use the same setup introduced in Section \ref{sec:simulation-main} but with $|\Voca|=5$ and focus on the \textsf{M2} construction for NTP distributions. We aim to check how the smallest sum of Type I and Type II errors changes as we vary the other parameters, namely $(p, q)$, and $n$. Following the approach used in \cite{tony2011optimal}, we compute the smallest sum errors by tuning the critical value $C_{\mathrm{GoF}}$ from a predetermined set. Here, we use the set $\RM(0, 30, 10^3)$ where $\RM(a, b, K)$ consist of $K$ equally spaced points starting from $a$ to $b$ over an interval defined by $(K-1)$ divisions, that is, $\RM(a, b, K)=\{a + \frac{k}{K-1} \cdot (b-a):k=0, 1, \ldots, K-1 \}$.

\vspace{-1em}\paragraph{Phase transition for a fixed $p$ or $q$.}
We first fix either $q=0.4$ or $p=0.3$ and select a sample size $n$ from $\{10^2, 10^3, 10^4\}$.
According to Theorem~\ref{thm:gumbel-dense}, the detection boundary is given by $q+2p=1$, which suggests a transition at either $p=0.3$ or $q=0.4$. 
This prediction is validated by the first and second rows of Figure~\ref{fig:empirical-error-others}. For instance, when $q$ is fixed at 0.4, the error sum $\alpha + \beta$ initially increases from zero and stabilizes at one as we increase $p$ from zero to one. The transition point occurs around $p=0.3$ and aligns well with the red dashed line. Furthermore, larger sample sizes make the alignment with the theoretical prediction more pronounced.

\vspace{-1em}\paragraph{Adaptive optimality of \Algo.}
To accurately capture the empirical detection boundary, we use $n=10^4$ independent samples to calculate $\log(n S_n^+(s))$.
For any $p \in \RM(0.01, 1, 20)$ and $q \in \RM\left(\log_n\frac{|\Voca|}{|\Voca|-1}, 1, 20\right)$,\footnote{$\log_n\frac{|\Voca|}{|\Voca|-1} \le q$ is solved from the fact $1-n^{-q} \ge \frac{1}{|\Voca|}$ which always holds due to the pigeonhole principle.} we compute the smallest sum of Type I and Type II errors by tuning the critical value as mentioned earlier.
These results are displayed in the bottom row of Figure~\ref{fig:empirical-error-others}.
Here, darker areas indicate lower error sums, while lighter regions present higher error sums.
A red dashed line represents the theoretical detection boundary $q+2p=1$. Most darker regions are below this boundary, while lighter regions are above it. 
This empirical boundary aligns well with the theoretical prediction in Theorem \ref{thm:adaptivity-fitness-of-good}, no matter what the value of $s$ we choose.

\vspace{-1em}\paragraph{Suboptimality of existing detection methods.}

\begin{figure}[ht]
\centering
\vspace{-0.2in}
\includegraphics[width=0.9\textwidth]{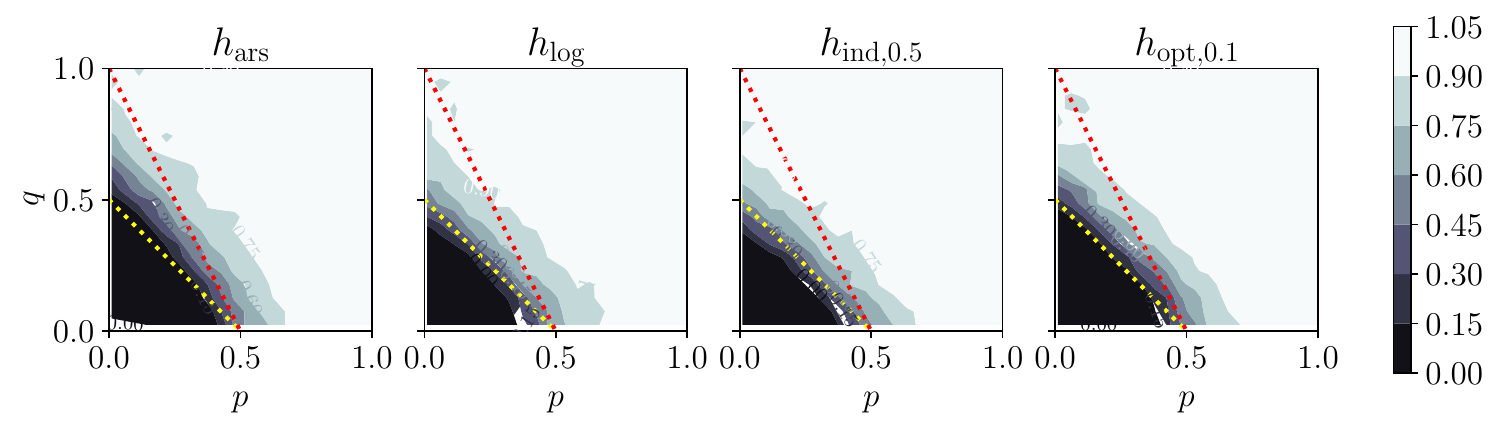}
\vspace{-0.1in}
\caption{Empirical detection boundaries for sum-based detection rules $T_h$'s.
}
\label{fig:empirical-error-existing}
\vspace{-0.1in}
\end{figure}
For any function $h \in \{\hars, \hlog, \hind, \hoptarso\}$, we consider the following sum-based detection rule specified by $h$:
\vspace{-1em}
\begin{equation*}
T_h(Y_{1:n}) = 
\begin{cases}
1 & ~\text{if}~\sum_{t=1}^n h(\Yars_t) \ge n \cdot \EB_0 h(\Yars) + C_{\mathrm{sum}} \cdot n^{\frac{1}{2}} \log n,\\
0 &~\text{if}~\sum_{t=1}^n h(\Yars_t) < n \cdot \EB_0 h(\Yars) + C_{\mathrm{sum}} \cdot n^{\frac{1}{2}} \log n.
\end{cases}
\vspace{-1em}
\end{equation*}
The expectation $\EB_0 h(\Yars)$ for different $h$'s has been listed in Table 1 in \citep{li2024statistical}.
Mirroring our approach with \Algo, for each detection rule $T_h$, we tune the parameter $C_{\mathrm{sum}}$ from a given set to obtain the smallest sum of Type I and Type II errors.
This set for $\hars$ is $\RM(8, 60, 10^3)$, for $\hlog$ is $\RM(-20, 0, 10^3)$, and for $h_{\mathrm{ind}, 0.5}, h_{\mathrm{opt}, 0.1}$ is $\RM(-10, 10, 10^3)$.
The findings are presented in Figure \ref{fig:empirical-error-existing}, which was created using the same procedure as the contour plot in Figure \ref{fig:empirical-error-others}. Observations reveal that the empirical boundaries closely match the theoretical prediction of $p+q = \frac{1}{2}$, represented by yellow dotted lines. 
This alignment corroborates our Theorem \ref{thm:suboptimality}.
In contrast, the red dotted line represents the optimal detection boundary $q+2p=1$.
The discrepancy between the yellow and red dotted lines shows the suboptimality of sum-based tests.

\section{Experiments on Open-Source LLMs}
\label{sec:LLM-experiments}
% In this section, we evaluate the empirical performance of the \Algo~test using real datasets and language models.

\subsection{Experiment Setup}

We follow the experimental setup from \citep{kuditipudi2023robust, li2024statistical}. We begin by sampling 1,000 documents from the news-like C4 dataset \citep{raffel2020exploring}, which serve as the initial prompts. Using these prompts, we then ask two language models, the OPT-1.3B model \citep{zhang2022opt} and Sheared-LLaMA-2.7B \citep{xia2023sheared}, to generate an additional $n=400$ tokens for each document. Our evaluation focuses on two key aspects:

\begin{enumerate}   
    \item \textbf{Statistical power}:
    To evaluate the statistical power of different detection methods, we set the significance level at $\alpha = 0.01$ by tuning the critical values, either using central limit theorem predictions or Monte Carlo simulations. We assess the actual Type I errors using unwatermarked texts and Type II errors using watermarked texts. 
    \vspace{-0.5em}
   \item \textbf{Robustness to edits}: Almost any editing method tends to weaken watermark signals. We focus on three representative types of edits \citep{kuditipudi2023robust}: random edits (including substitution, insertion, and deletion), adversarial edits, and roundtrip translation. In random edits, a random fraction of the watermarked tokens is replaced, inserted, or deleted, with substitutions and insertions involving tokens uniformly selected from the vocabulary \(\Voca\). 
   Adversarial edits, by contrast, are more targeted, selectively modifying watermarked tokens to maximize the removal of watermark signals within a given edit budget. 
   Finally, roundtrip translation involves translating the text from English to French and back to English using another language model.
Random and adversarial edits enable systematic control over the level of edits, while roundtrip translation is more likely to be encountered in practice.
    \vspace{-0.5em}
\end{enumerate}

We compare the \Algo~test across three different values of $s$ from the set $\{1, 1.5, 2\}$ alongside three sum-based detection rules, each specified by a particular score function $h$. These include: (i) Aaronson's function $\hars(y) = - \log(1-y)$ \citep{scott2023watermarking}, (ii) the logarithmic function $\hlog(y) = \log y$ \citep{kuditipudi2023robust, fernandez2023three}, and (iii) the optimal least-favorable function \citep{li2024statistical} with $\Delta$ a user-specified regularity parameter
Among $\Delta \in {0.1, 0.2, 0.3}$, $\Delta = 0.1$ performs best and is used in all subsequent experiments.
In all experiments, we use 1-sequence repeated context masking \citep{Dathathri2024}. This approach watermarks a token only when the current text window is unique within the generation history, aiming to preserve text quality.
In the following, we present our numerical results on the OPT-1.3B model for illustrative purposes, as the results for Sheared-LLaMA-2.7B are similar and are provided in the appendix. Additional experimental details are also deferred to the appendix.

\subsection{Statistical Power}
\label{sec:statistical-power}

\begin{wrapfigure}[13]{r}{0.5\textwidth}
\vspace{-0.3in}
\includegraphics[width=0.5\textwidth]{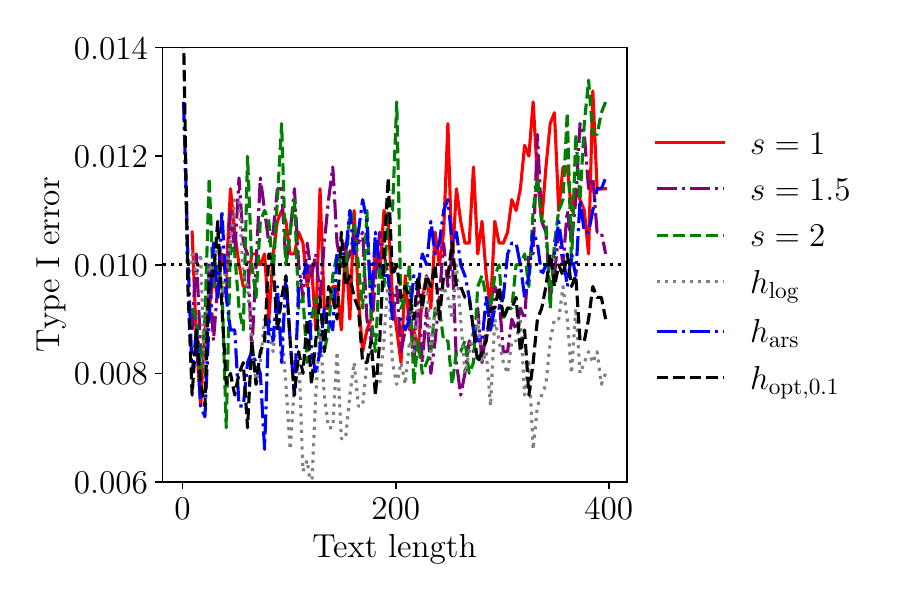} 
\vspace{-0.3in}
\caption{Empirical Type I errors.}
\label{fig:null}
\end{wrapfigure}

We first evaluate the statistical power of considered detection methods using unmodified texts.
Unlike in simulation studies, we cannot manipulate each NTP distribution $\bP_t$ in language model experiments to ensure they are $\Delta$-regular. However, $\Delta$-regularity correlates closely with the temperature parameter in LLMs \citep{ackley1985learning}. The temperature parameter modulates the raw outputs of the model's final layer (also known as the logit layer) before the softmax function is applied. Typically, a high temperature yields more uniform probabilities, encouraging diverse generations, while a low temperature sharpens the distribution, emphasizing the most probable prediction and thus favoring the greedy generation. As an approximation, we use four temperatures $\{0.1, 0.3, 0.5, 0.7\}$ to check the effect of $\Delta$-regularity on statistical power. The evaluation results are in Figures \ref{fig:null} and \ref{fig:type-I-and-II-errors}.
Results on the Sheared-LLaMA-2.7B model are similar and documented in Supplementary \ref{appen:extened-results}.

\paragraph{Type I error control.} 
We begin by examining Type I error control, using 5000 unwatermarked texts sampled from the C4 dataset as human-written data and evaluating Type I error at the significance level $\alpha=0.01$. Figure \ref{fig:null} shows how the empirical Type I error varies with increasing text lengths, revealing a consistent pattern: across all detection methods, empirical Type I errors remain well-controlled within the interval $[0.006, 0.014]$. This suggests that the pseudorandom variables effectively mimic the behavior of true random variables, enabling their empirical performance to closely align with theoretical expectations.

\begin{figure}[ht!]
\vspace{-0.5em}
\centering
\includegraphics[width=\textwidth]{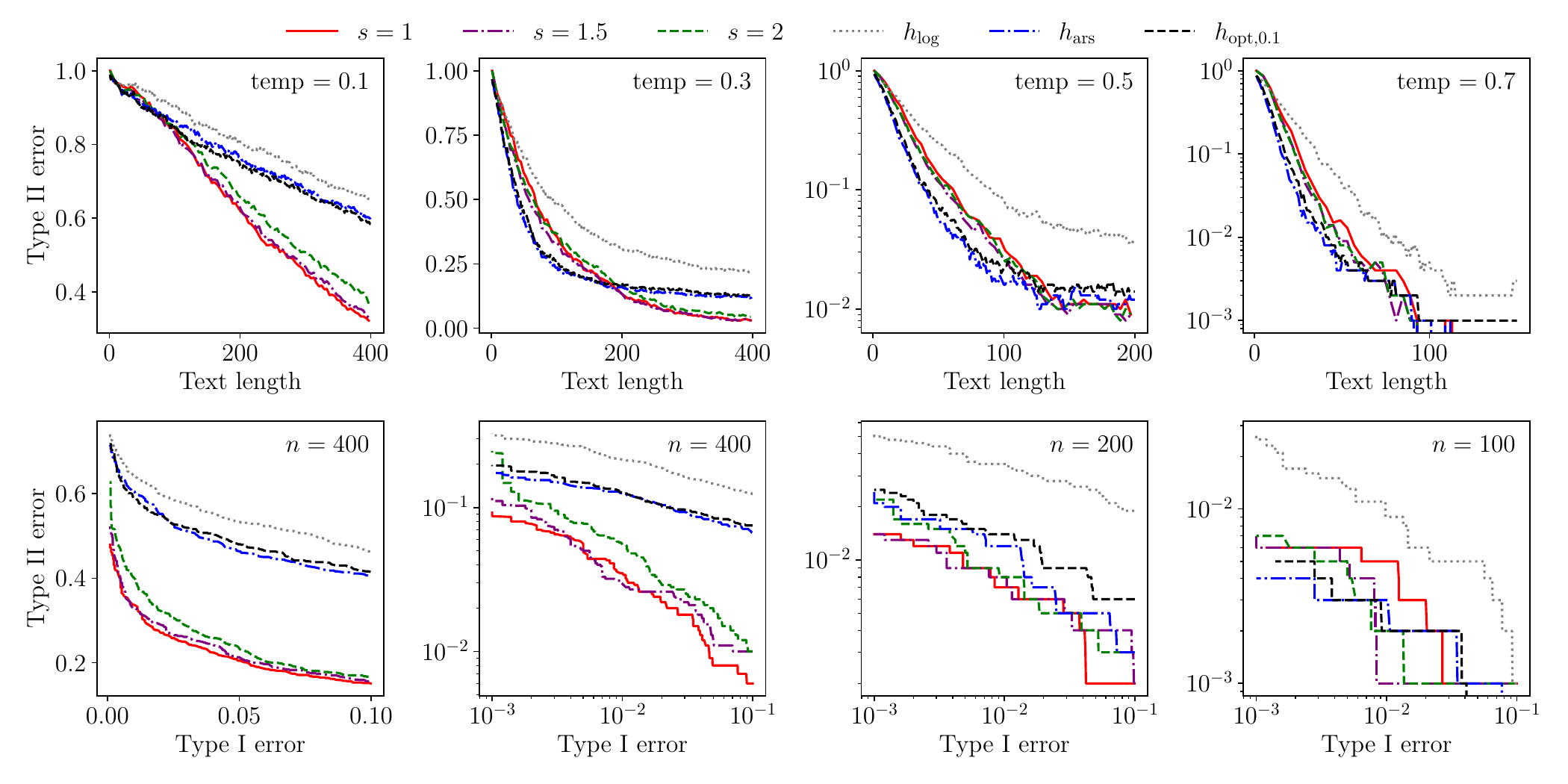}
\vspace{-0.3in}
\caption{Empirical Type II errors (top row) across different detection rules applied to the Gumbel-max watermark. The bottom row illustrates the trade-off function in the log-log scale for a specific length $n$. The temperatures used, from left to right columns, are 0.1, 0.3, 0.5, and 0.7, respectively.
}
\label{fig:type-I-and-II-errors}
\vspace{-0.2in}
\end{figure}

\vspace{-1em}\paragraph{Type II error decay.}
We then examine the decay of Type II errors at a significance level of $\alpha = 0.01$, shown in the top row of Figure \ref{fig:type-I-and-II-errors}. The bottom row displays a trade-off function illustrating detection performance across all critical values, extending beyond the fixed $\alpha = 0.01$ case. From left to right, the columns show increasing temperatures from $0.1$ to $0.7$.
We observe:
\begin{enumerate}
    \item  \textit{\Algo~excels in the low-temperature region.} 
    When the temperature is relatively low (that is, $0.1, 0.3$), the Type II errors for the \Algo~test decrease more rapidly and eventually fall below those of all baseline detection rules (when $s \in \{1, 1.5\}$). For instance, at a temperature of $0.1$ and using $400$ watermarked tokens, the baseline detection methods obtain a Type II error of approximately $0.6$, whereas the best \Algo~test exhibits Type II errors around $0.3$. Interestingly, although the optimal least-favorable function $h_{\mathrm{opt}, 0.1}$ performs (slightly) better than $\hars$, it is still inferior to the \Algo~test. The superior performance of the \Algo test is further evidenced by the trade-off functions, where the \Algo test consistently outperforms other methods for nearly all given Type I errors. However, this does not contradict the claimed optimality of $\hoptarso$ in \citep{li2024statistical}. This is because the optimality is defined for the least-favorable case, but in practice, the conditions are not adversarial enough for it to matter.

    \item \textit{\Algo~ performs comparably in the high-temperature region.}
    At relatively high temperatures (that is, $0.5$ and $0.7$), the \Algo~test achieves Type II error decay comparable to the baseline $\hars$. At a temperature of $0.5$, most detection methods reach a Type II error of $0.01$ with only $100$ watermarked tokens. The trade-off function indicates that the \Algo~test maintains a slightly lower trade-off for small Type I errors (>$0.001$). At a temperature of $0.7$, nearly all Type II errors decay to $0.01$ with approximately 50 watermarked tokens. The improved detection effectiveness at higher temperatures stems from the increased variability in token generation, which strengthens statistical signals and facilitates watermark detection.

\end{enumerate}

\subsection{Robustness Evaluation}
\label{sec:robust-evaluation}

\begin{figure}[!t]
\vspace{-0.5em}
\centering
\includegraphics[width=\textwidth]{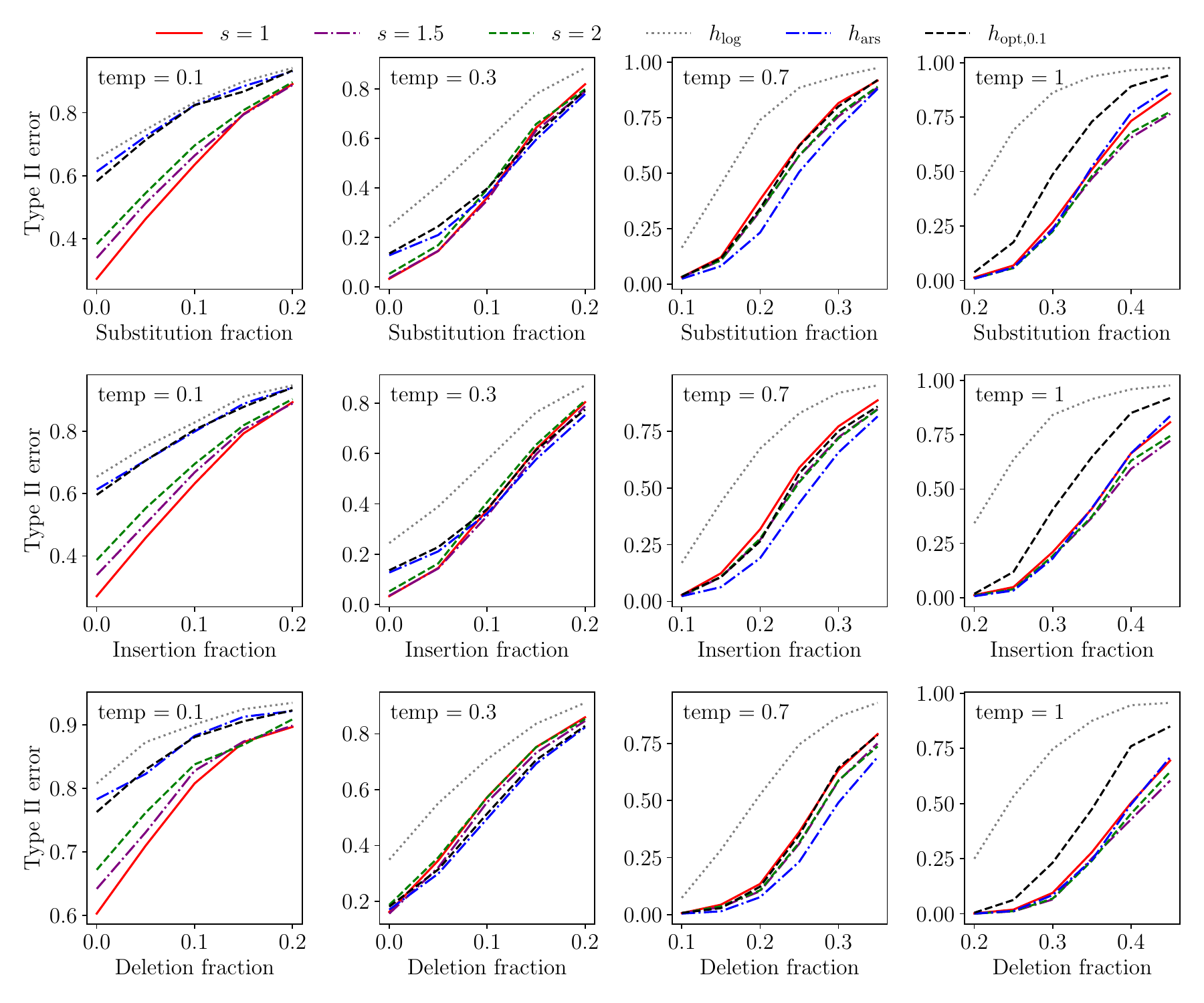}
\vspace{-0.3in}
\caption{Effect of three random edits on Type II error across different temperatures at a fixed Type I error of $\alpha=0.01$. The top, middle, and bottom plots correspond to random substitution, insertion, and deletion, respectively.
}
\label{fig:modification-nomask}
\vspace{-0.2in}
\end{figure}

\paragraph{Robustness to random edits.}
We examine three types of edits: random substitution, insertion, and deletion. For each specified edit method and a given fraction, we randomly modify the corresponding fraction of tokens in the watermarked text. We set the Type I error to $\alpha = 0.01$ and investigate how the Type II error changes with the edit fraction across four different temperatures: $\{0.1, 0.3, 0.7, 1\}$. This approach allows us to assess the robustness of watermark detection methods under various text edit conditions and temperature settings.
Generally, with a text window of size $m$, modifying a single token can affect the computation of up to $m$ pseudorandom numbers. Therefore, a window size of $m$ can always withstand an edit fraction of up to $1/m$, as not all watermarked signals are removed. In our experiments, we set $m = 5$.
The results of random edits are shown in Figure \ref{fig:modification-nomask}, with the top, middle, and bottom rows corresponding to random substitution, insertion, and deletion, respectively. Across all edit types, we observe a consistent pattern: any edit increases the Type II error for all detection methods. 
At lower temperatures, the error rate increases sharply with rising edit fractions, while at higher temperatures, the increase is more gradual, indicating reduced sensitivity to edits. 
\begin{enumerate}
\vspace{-0.5em}
\item  \textit{\Algo~demonstrates better robustness at lower temperatures.} The \Algo~test consistently achieves the lowest Type II error rates across all three edit types at lower temperatures (that is, 0.1, 0.3), outperforming other methods such as $h_{\mathrm{opt}, 0.1}$ and \(h_{\ars}\). This makes it particularly effective when the LLM outputs are more deterministic.
\vspace{-0.5em}
\item \textit{\Algo~maintains comparable robustness at higher temperatures.} While the \Algo~test slightly trails behind \(h_{\ars}\) at a temperature of $0.7$, it remains competitive, with an average difference in Type II error of only 0.02 compared to \(\hars\) across different edit fractions. At a temperature of 0.7, $s=2$ slightly outperforms the others. This suggests that, even in high-temperature scenarios, the \Algo~test maintains robust performance and adapts effectively to the impact of temperature on detection difficulty.
\end{enumerate}

\vspace{-1em}\paragraph{Edit tolerance limit.}
We define the edit tolerance limit as the largest fraction of edits that can be applied to a watermarked text with the detection method still rejecting the null hypothesis $H_0$. In general, the higher the edit tolerance limit, the more robust the detection method, as it is more sensitive to the weak watermark signal. 

We compute the edit tolerance limits for different detection methods across two tasks: (a) poem recitation, where the LLM is asked to recite an existing poem, and (b) poem generation, where the LLM generates a new poem in the style of a given one. The latter task is more open-ended, leading to more stochastic and regular generation. 
The results, averaged over 100 popular poems, are reported in Figure \ref{tab:largest-fraction-1.3B}, with the highest values (in percentage) highlighted in bold. Notably, the \Algo~test with $s=2$ consistently achieves the highest edit tolerance across all three edit types and both tasks. These findings align with the results shown in Figure \ref{fig:modification-nomask}. Similar results for the Sheared-LLaMA-2.7B model are provided in Supplementary \ref{appen:extened-robust-results}.

\begin{table}
\vspace{-0.1in}
\centering
\resizebox{\textwidth}{!}{ 
\begin{tabular}{c|c|cccccccc}
\toprule
Task & {Edit types}  & $s=1$ & $s=1.5$& $s=2$  & $\hlog$ & $\hars$ & $h_{\mathrm{opt}, 0.3}$ & $h_{\mathrm{opt}, 0.2}$& $h_{\mathrm{opt}, 0.1}$ \\
\midrule
\multirow{3}{*}{\shortstack[c]{Poem \\Recitation}} 
& Substitution & 37.16&38.8&\textbf{39.49}&25.04&37.96&27.83&30.04&33.87\\
\cmidrule{2-10}
& Insertion &42.15&45.25&\textbf{45.64}&26.46&44.12&30.38&33.32&37.78\\
\cmidrule{2-10}
& Deletion & 39.78&41.85&\textbf{42.67}&23.21&41.92&27.08&30.12&33.76\\
\midrule
\multirow{3}{*}{\shortstack[c]{Poem \\Generation}} 
& Substitution & 36.89&38.72&\textbf{38.9}&24.71&38.64&27.47&30.04&33.49\\
\cmidrule{2-10}
& Insertion & 40.08&42.52&\textbf{43.09}&25.54&41.65&29.49&32.75&36.28\\
\cmidrule{2-10}
& Deletion & 39.79&41.83&\textbf{42.42}&26.38&40.99&29.59&32.3&35.15\\
\bottomrule
\end{tabular}}
\caption{The edit tolerance limits $(\%)$ for detection methods on the OPT-1.3B model.}
\label{tab:largest-fraction-1.3B}
\end{table}

\vspace{-1em}\paragraph{Robustness to adversarial edits.}
In adversarial edits, we assume the human user knows the hash function $\AM$ and the secret key $\Key$, allowing them to selectively replace tokens with the strongest watermark signals to evade detection. To approximate this behavior, we use the following procedure: for the LLM-generated response, the user first computes all corresponding pivotal statistics, identifies a given fraction of tokens with the highest pivotal statistics, and replaces them with randomly selected tokens. This targeted replacement is more disruptive than random edits. Results for a $5\%$ replacement are shown in Figure \ref{fig:adversarial-best}, with results for other fractions in Supplementary \ref{appen:extened-robust-results}. 

In this adversarial setting, the \Algo~test demonstrates steady robustness, consistently achieving a lower Type II error across most temperature settings. Notably, $\hars$ is less resilient under adversarial edits, while $h_{\mathrm{opt}, 0.1}$ performs better. This increased robustness in $h_{\mathrm{opt}, 0.1}$ likely stems from its design through minimax optimization, which enhances its ability to withstand adversarial edits.

\begin{figure}[ht]
\vspace{-0.1in}
\centering
\includegraphics[width=\textwidth]{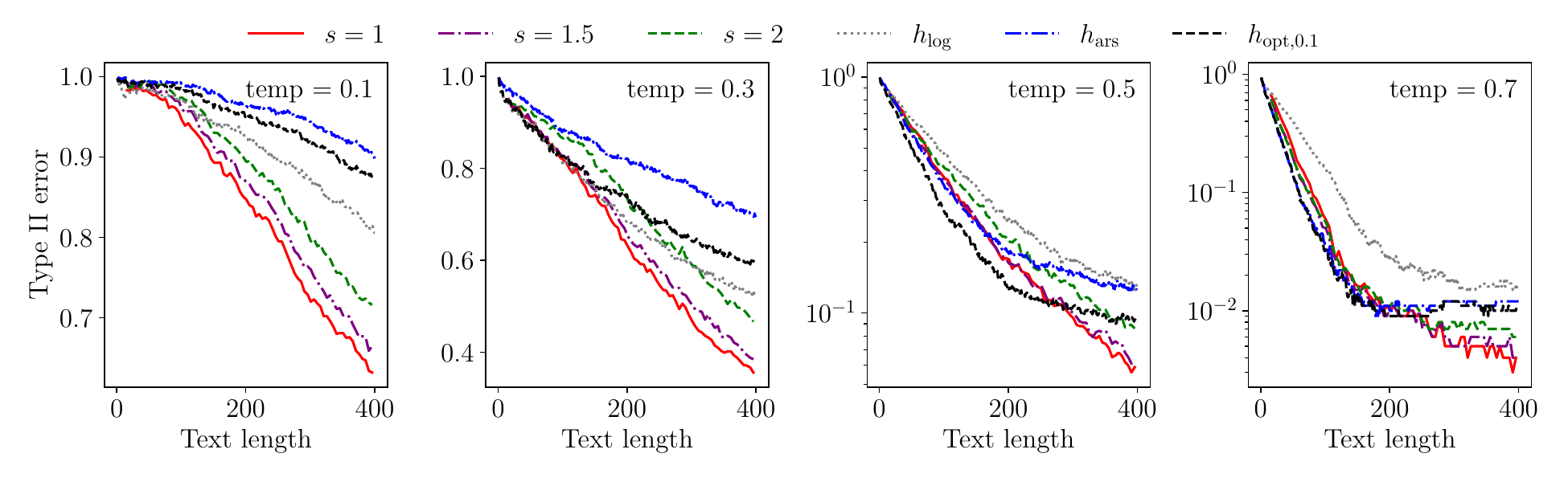}
\vspace{-0.3in}
\caption{
Effect of adversarial edits on Type II error across different temperatures.
}
\label{fig:adversarial-best}
\vspace{-0.05in}
\end{figure}

\begin{figure}[ht]
\vspace{-0.05in}
\centering
\includegraphics[width=\textwidth]{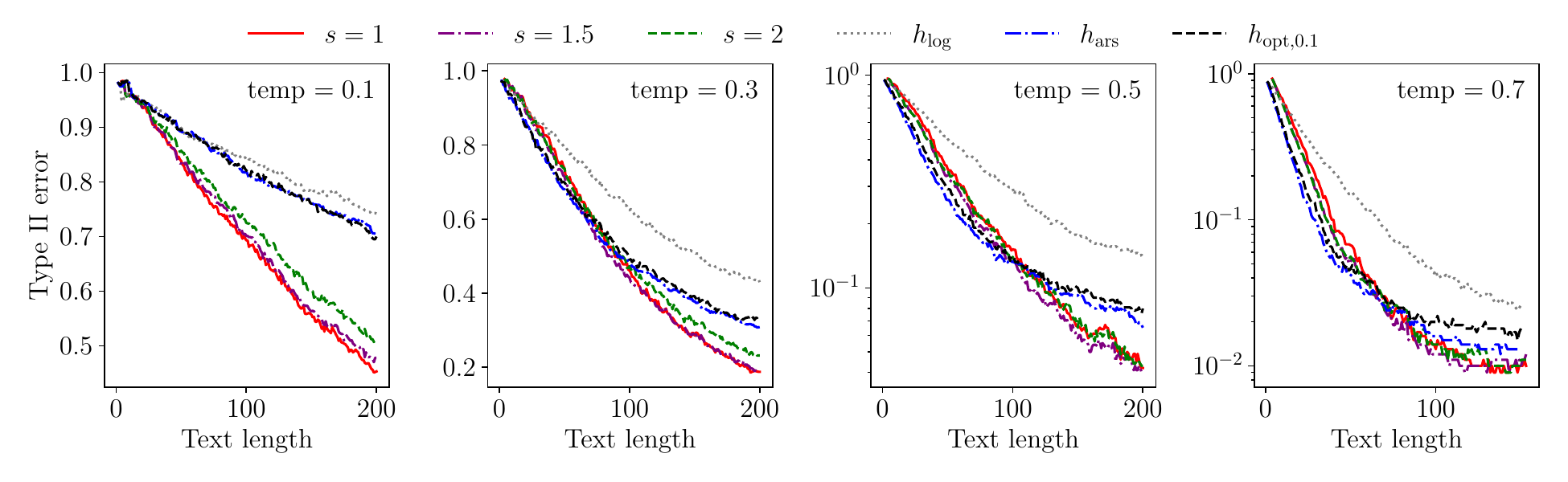}
\vspace{-0.3in}
\caption{
Effect of roundtrip translation on Type II error across different temperatures.
}
\label{fig:translation-best}
\vspace{-0.1in}
\end{figure}

\vspace{-1em}\paragraph{Robustness to roundtrip translation.}
In the roundtrip translation, we cannot control the edit level. By maintaining a fixed significance level of $\alpha=0.01$, we examine how the Type II error varies with the modified text length and different temperatures. The results are present in Figure \ref{fig:translation-best}. Using $n=200$ tokens, we observe that the final Type II errors decrease as the temperature increases, suggesting that higher temperatures facilitate easier detection. In the low-temperature range, the \Algo~test with $s = 1$ consistently outperforms all other detection methods. Conversely, in the high-temperature range, \Algo~tests with $s \in \{1, 1.5\}$ achieve comparable or occasionally superior performance to the previous $\hars$. These experiments underscore the robust performance and detection efficiency of \Algo for Gumbel-max watermarks.

\section{Discussion}
\label{sec:discuss}

In this paper, we have developed an adaptive and robust method for detecting watermarks in text generated by LLMs but subsequently edited by humans, along with theoretical guarantees corroborated by numerical experiments. We start by proposing a procedure for modeling the human editing process, which prompts us to formulate this problem as sparse mixture detection. Our method, which we call the \Algo~test, is shown to achieve the optimal detection boundary in the regime of increasing edit level and vanishing regularity of NTP distributions under the framework of \cite{li2024statistical}. In contrast, we show that sum-based detection rules are provably suboptimal in the sense that the detection region is strictly smaller than that of the \Algo~test. Additionally, we show that the \Algo~test continues to provide high detection efficiency when the edit level and regularity of NTP distributions remain constant. In contrast, sum-based detection rules fail to achieve robustness in either regime due to their inability to adapt to unknown specifics of problem instances.

Our findings open new avenues for the robust detection of LLM watermarks. First, although we focus on the Gumbel-max watermark, which is perhaps the most influential, it would be valuable to investigate robust detection for other watermarking schemes \citep{xie2024debiasing}, such as the inverse transform watermark \cite{kuditipudi2023robust} and the green-red list watermark \cite{kirchenbauer2023watermark}. In these cases, the truncation technique underlying \Algo~could offer useful insights. More broadly, future research might consider alternative nonparametric testing approaches \citep{nikitin1995asymptotic,d2017goodness}, such as the Kolmogorov--Smirnov test, Pearson's Chi-square test, and the Khmaladze--Aki statistic \citep{khmaladze1982martingale, podkorytova1994large}, which may offer adaptivity comparable to or even surpassing goodness-of-fit tests in certain parameter regimes. Empirical evaluation of these classical nonparametric methods could motivate the development of new procedures tailored to specific challenges, such as high-temperature regimes, where \Algo~may undergo a decline in detection power. Additionally, detection power could be enhanced by incorporating spatial information from watermark signals, as human edits often form clusters within text. From a theoretical standpoint, it is noteworthy that the optimal detection boundary $q+2p=1$ is achieved through pivotal statistics, suggesting that investigating an information-theoretic detection boundary could yield interesting insights. Lastly, for interpretive purposes, estimating the watermark fraction may offer insights into the degree of human contribution in content generated collaboratively by LLMs and humans.

% {\small
\section*{Acknowledgments}
This work was supported in part by NIH grants RF1AG063481, U01CA274576, and R01EB036016; NSF grant DMS-2310679; a Meta Faculty Research Award; and Wharton AI for Business. The content is solely the responsibility of the authors and does not necessarily represent the official views of the NIH.

% This work was supported in part by NIH grants RF1AG063481,  U01CA274576, and R01-EB036016, NSF DMS-2310679, a Meta Faculty Research Award, and Wharton AI for Business. The content is solely the responsibility of the authors and does not necessarily represent the official views of the NIH.

\bibliographystyle{plainnat}
\bibliography{bib/chatgpt,bib/privacy,bib/stat}

% }
\newpage
\appendix
\begin{appendix}
% \onecolumn
% \allowdisplaybreaks

\begin{center}
{\huge {Supplementary Material}}
\end{center}

This Supplementary Material includes the remaining proofs and technical details. Section~\ref{proof} presents the proofs, while Sections~\ref{sec:additional-simulation} and~\ref{sec:LLM} provide additional details for the simulation and LLM experiments, respectively.

\section{Proof for Theoretical Guarantees}
\label{proof}

\paragraph{Notations.}
We establish some conventions for the proofs in the appendix. 
For simplicity, we denote $\EB_{1, \bP_t}[\cdot] = \EB_{\Yars \sim \mu_{1, \bP_t}}[\cdot]$ and $\EB_{0}[\cdot] = \EB_{\Yars \sim \mu_{0}}[\cdot]$. When using $\EB_1[\cdot\mid\FM_t]$, we assume $\Yars_t$ follows the alternative hypothesis and take the expectation conditioned on the filtration $\FM_t$. When the context is clear, we will omit subscripts.
When the context is unclear, we will explicitly specify which variable the expectation is taken with respect to.
PDF stands for probability density function, and CDF stands for cumulative distribution function.

\subsection{Proof of Theorem~\ref{thm:gumbel-sparse}}
Detectability is essentially determined by whether the Hellinger distance between the joint distribution of $\Yars_{1:n}$ under $H_0$ and $H_1^{\mathrm{mix}}$ approaches 1 as $n$ goes to infinity.
It is important to note that due to the autoregressive generation structure, the joint distribution of $\Yars_{1:n}$ is not $\prod_{t=1}^n \mu_{1, \bP_t}$, but rather $\prod_{t=1}^n \bmut$, where $\bmut = \EB_1[\mu_{1, \bP_t}|\Yars_1,\ldots,\Yars_{t-1}]$ is the conditional version of $\mu_{1, \bP_t}$ given the historical information $\Yars_{1:(t-1)}$. As a result, the Hellinger distance's tensorization cannot apply directly as $\bmut$ still depends on the history $\Yars_{1:(t-1)}$, so we derive a generalization in the following lemma.

\begin{lem}
\label{lem:H2}
Let $\PM$ denote a prior set on which $\bP_t \in \PM$ for all $t \in [n]$.
Let $\rho_i$ denote the joint probability density distribution of $(\Yars_1, \ldots, \Yars_n)$ under the hypothesis $H_i$ for $i \in \{0, 1\}$.
\begin{enumerate}
\item If $n \cdot \sup_{\bP_t \in \PM }H^2(\mu_0, (1-\eps_n) \mu_0 +  \eps_n   \mu_{1,\bP_t}) = o(1)$, then
\[
\mathrm{TV}(\rho_0, \rho_1) \to 0~\text{as}~n \to \infty.
\]
\item Let $\bmut = \EB_1[\mu_{1, \bP_t}|\Yars_1,\ldots,\Yars_{t-1}]$ be the conditional version of $\mu_{1, \bP_t}$ given the history information $\Yars_1, \ldots,\Yars_{t-1}$.
If there exists a positive non-random sequence $c_n$ satisfying $n c_n \to \infty$ such that 
$\min_{t \in [n]}\inf_{\bP_t \in \PM} H^2(\mu_0, (1-\eps_n) \mu_0 + \eps_n  \bmut) \ge c_n$ holds almost surely for each $n \ge 1$, then
\[
\mathrm{TV}(\rho_0, \rho_1)\to 1~\text{as}~n \to \infty.
\]
\end{enumerate}
\end{lem}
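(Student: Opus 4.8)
\emph{Setup and one-step reduction.} The plan is to pass to the Hellinger affinity between the two joint laws and exploit the autoregressive chain rule, so that the non-product alternative collapses into a sequence of one-step Hellinger computations. Write $L_n := \rho_1(\Yars_{1:n})/\rho_0(\Yars_{1:n})$ for the likelihood ratio; under $H_0$ the $\Yars_t$ are i.i.d.\ $\mu_0$, so $\rho_0$ factorizes over $t$. Under $H_1^{\mathrm{mix}}$, conditioning on the finer $\sigma$-field $\FM_{t-1}$ (with respect to which $\bP_t$ is measurable) the conditional density of $\Yars_t$ is $(1-\eps_n)\mu_0+\eps_n\mu_{1,\bP_t}$; averaging over $\FM_{t-1}$ given $\Yars_{1:(t-1)}$, i.e.\ the tower property for conditional densities, replaces $\mu_{1,\bP_t}$ by $\bmut$, so the conditional density of $\Yars_t$ given $\Yars_{1:(t-1)}$ equals $(1-\eps_n)\mu_0+\eps_n\bmut$ with $\bmut$ being $\Yars_{1:(t-1)}$-measurable. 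The chain rule then gives $L_n=\prod_{t=1}^n\big(1-\eps_n+\eps_n\,\tfrac{\mathrm{d}\bmut}{\mathrm{d}\mu_0}(\Yars_t)\big)$, and the crux is the one-step identity
\[
\EB_0\big[\sqrt{L_t}\,\big|\,\Yars_{1:(t-1)}\big]=\sqrt{L_{t-1}}\,\big(1-H_t^2\big),\qquad H_t^2:=H^2\big(\mu_0,\,(1-\eps_n)\mu_0+\eps_n\bmut\big),
\]
which holds because $L_{t-1}$ is $\Yars_{1:(t-1)}$-measurable and, conditionally, $\EB_0\big[\sqrt{1-\eps_n+\eps_n\,\mathrm{d}\bmut/\mathrm{d}\mu_0(\Yars_t)}\mid\Yars_{1:(t-1)}\big]=\int\sqrt{\mathrm{d}\mu_0\,\mathrm{d}((1-\eps_n)\mu_0+\eps_n\bmut)}=1-H_t^2$ is precisely the Hellinger affinity. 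Since $\sqrt{L_{t-1}}\ge0$, iterating this identity propagates any pointwise bound on the $H_t^2$ to a bound on $\EB_0[\sqrt{L_n}]$.

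\emph{Part (i): merging.} Here I would bound $H_t^2$ by the worst case over $\PM$. Since $(1-\eps_n)\mu_0+\eps_n\bmut=\EB_1[(1-\eps_n)\mu_0+\eps_n\mu_{1,\bP_t}\mid\Yars_{1:(t-1)}]$ is a conditional average of measures of the form $(1-\eps_n)\mu_0+\eps_n\mu_{1,\bP}$ with $\bP\in\PM$, and $\nu\mapsto H^2(\mu_0,\nu)$ is convex (its Hellinger affinity $\nu\mapsto\int\sqrt{\mathrm{d}\mu_0\,\mathrm{d}\nu}$ is concave, by concavity of $\sqrt{\cdot}$), Jensen's inequality gives $H_t^2\le\sup_{\bP\in\PM}H^2(\mu_0,(1-\eps_n)\mu_0+\eps_n\mu_{1,\bP})=:\bar{H}_n^2$ almost surely, a deterministic quantity not depending on $t$. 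Iterating the one-step identity with $1-H_t^2\ge1-\bar{H}_n^2\ge0$ yields $\EB_0[\sqrt{L_n}]\ge(1-\bar{H}_n^2)^n\ge1-n\bar{H}_n^2$ by Bernoulli's inequality, which tends to $1$ under the hypothesis $n\bar{H}_n^2=o(1)$. Combined with $\EB_0[\sqrt{L_n}]\le\sqrt{\EB_0 L_n}=1$, we get $H^2(\rho_0,\rho_1)=1-\EB_0[\sqrt{L_n}]\to0$, hence $\mathrm{TV}(\rho_0,\rho_1)\le\sqrt{2\,H^2(\rho_0,\rho_1)}\to0$.

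\emph{Part (ii): separation.} Now I would use the hypothesized almost-sure lower bound $H_t^2\ge c_n$ for every $t$ directly. Iterating the one-step identity with $1-H_t^2\le1-c_n$ (and $\sqrt{L_{t-1}}\ge0$) gives $\EB_0[\sqrt{L_n}]\le(1-c_n)^n\le e^{-nc_n}\to0$ since $nc_n\to\infty$. Therefore $H^2(\rho_0,\rho_1)=1-\EB_0[\sqrt{L_n}]\to1$, and because $\mathrm{TV}(\rho_0,\rho_1)\ge H^2(\rho_0,\rho_1)$ (as $\int\min(\rho_0,\rho_1)\le\int\sqrt{\rho_0\rho_1}$) while $\mathrm{TV}\le1$, we conclude $\mathrm{TV}(\rho_0,\rho_1)\to1$.

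\emph{Main obstacle.} The delicate part is the measure-theoretic bookkeeping behind the chain rule: one must justify that the conditional density of $\Yars_t$ given $\Yars_{1:(t-1)}$ under $H_1^{\mathrm{mix}}$ is genuinely $(1-\eps_n)\mu_0+\eps_n\bmut$ — i.e.\ that taking $\EB_1[\cdot\mid\Yars_{1:(t-1)}]$ of the $\FM_{t-1}$-conditional density is legitimate and that $\bmut$ so defined is $\Yars_{1:(t-1)}$-measurable — and that the one-step conditional expectation under $H_0$ cleanly factors out $\sqrt{L_{t-1}}$. After that, the convexity/Jensen step in Part (i), used to pass from the random mixture $\bmut$ to the deterministic worst case $\bar{H}_n^2$, is the only other point needing care, and it is short once the Hellinger affinity is recognized as concave in its second argument.
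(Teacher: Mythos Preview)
Your proposal is correct and follows essentially the same route as the paper: pass to the Hellinger affinity $\EB_0\sqrt{L_n}$, use the autoregressive chain rule to obtain the one-step identity, apply Jensen's inequality (via concavity of the affinity) in Part~(i) to replace $\bmut$ by the worst case over $\PM$, and use the almost-sure bound directly in Part~(ii), then iterate to a product and convert back to total variation. The paper carries out exactly this computation, with the conditional-density justification you flag as the ``main obstacle'' proved separately as \eqref{eq:help-rho} via Lemma~\ref{lem:expectation-of-Y}; the only cosmetic differences are the specific elementary inequalities used at the end (you invoke Bernoulli and $\mathrm{TV}\ge H^2$, the paper uses $e^{-2\log 2\,x}\le 1-x$ and $\tfrac12 H^2\le\mathrm{TV}$).
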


We use Lemma \ref{lem:H2} to establish the detectability and identify the optimal detection boundary. With the established Lemma \ref{lem:H2}, we can prove Theorem~\ref{thm:gumbel-sparse}.
\begin{proof}[Proof of Theorem~\ref{thm:gumbel-sparse}]
Recall that $f_{1,\bP_t}(r) = \sum_{\token \in \Voca} r^{1/P_{t,\token}-1}$ is the PDF of $\mu_{1,\bP_t}$.
Note that the density ratio between $\mu_{1,\bP_t}$ and $\mu_0$ is still $f_{1,\bP_t}$.
By Lemma~\ref{lem:H2}, it suffices to show 
\[
n \cdot \sup_{t \in [n]} \sup_{\bP_t}H^2(\mu_0, (1-\eps_n) \mu_0 + \eps_n \mu_{1,\bP_t}) = o(1).
\]
By the definition of Hellinger distance, it is equivalent to
\begin{equation}
\label{eq:help-0}
\inf_{t \in [n]} \inf_{\bP_t} \EB_{0} \sqrt{1-\eps_n + \eps_n f_{1,\bP_t} (Y)} = 1 + o (1/n).
\end{equation}
Using the inequality that $|\sqrt{1+x}-1-x/2| \le c \cdot x^2$ for any $x \ge -1$ and plugging $x = \eps_n(f_{1,\bP_t}(Y)-1)$ into it, we have that
\[
\EB_{0} \sqrt{1-\eps_n + \eps_n f_{1,\bP_t} (Y)} \ge 1- c \cdot \EB_{0} x^2 \ge 1- O( |\Voca|) \cdot  \eps_n^2. 
\]
The last inequality uses the result $ \EB_0(f_{1,\bP_t}(Y)-1))^2 =  O(|\Voca|)$ from Lemma \ref{lem:upper-bound-for-square}.
Due to $p > 0.5$, we have $ \eps_n^2 =n^{-2p}= o(1/n)$ and thus complete the proof.
\end{proof}

% \begin{lem}
% \label{lem:H2}
% Let $\PM$ denote a prior set on which $\bP_t \in \PM$ for all $t \in [n]$.
% Let $\rho_i$ denote the joint probability density distribution of $(\Yars_1, \ldots, \Yars_n)$ under the hypothesis $H_i$ for $i \in \{0, 1\}$.
% \begin{enumerate}
% \item If 
% \[
% \sum_{t \in [n]}\sup_{\bP_t \in \PM }H^2(\mu_0, (1-\eps_n) \mu_0 +  \eps_n   \mu_{1,\bP_t}) = o(1),
% ~\text{then}~\mathrm{TV}(\rho_0, \rho_1) \to 0~\text{as}~n \to \infty.
% \]
% \item Let $\bmut = \EB_1[\mu_{1, \bP_t}|\Yars_1,\ldots,\Yars_{t-1}]$ be the conditional version of $\mu_{1, \bP_t}$ given the history information $\Yars_1, \ldots,\Yars_{t-1}$.
% If there exists a positive non-random sequence $c_n$ such that $H^2(\mu_0, (1-\eps_n) \mu_0 + \eps_n  \bmut) \ge c_n$ holds almost surely for each $n \ge 1$ and $n c_n \to \infty$, then
% \[
% \mathrm{TV}(\rho_0, \rho_1)\to 1~\text{as}~n \to \infty.
% \]
% \end{enumerate}
% \end{lem}

\subsubsection{Proof of Lemma~\ref{lem:H2}}

\begin{proof}[Proof of Lemma~\ref{lem:H2}]
It suffices to focus on the the Hellinger distance between $\rho_0$ and $\rho_1$ due to the following inequality:
\[
\frac{1}{2} H^2(\rho_0, \rho_1) \le \mathrm{TV}(\rho_0, \rho_1) \le H(\rho_0, \rho_1) \sqrt{1- \frac{1-H^2(\rho_0, \rho_1)}{4}} \le 1,
\]	
As $\rho_0 \equiv 1$, it follows by definition that
\begin{align}
\label{eq:def-Hellinger}
H^2(\rho_0, \rho_1) =1 - \EB_0 \sqrt{\rho_1(\Yars_1, \ldots, \Yars_n)}.
\end{align}
where the expectation $\EB_0$ means $(\Yars_1, \ldots, \Yars_n) \sim \mu_0^n.$
On the other hand, a crucial fact is that
\begin{equation}
\label{eq:help-rho}
\rho_1(\Yars_1, \ldots, \Yars_{n-1}, \Yars_n) = \rho_1(\Yars_1, \ldots, \Yars_{n-1}) \cdot \left[ (1-\eps_n)  + \eps_n \bfPn(\Yars_n)  \right],
\end{equation}
with 
\[
\bfPt(y) = \EB_1[f_{1, \bP_t}(y)|\GM_{t-1}],
\]
where the conditional expectation is taken with respect to $\bP_t$.
Here $\GM_{n} = \sigma(\{\Yars_t\}_{t=1}^n)$ is the $\sigma$-field generated by all $\Yars_1, \ldots,\Yars_n$.
We will prove this equation at the end of the proof.
We define the measure introduced by the PDF $\bfPt$:
for any measurable set $A$,
\begin{equation}
\label{eq:bar-mu-measure}
\bmut(A) = \EB_1[\mu_{1, \bP_t}(A)|\GM_{t-1}] = \int_A \bfPt(y) \rd y.
\end{equation}

We are now ready to prove this lemma.
\begin{enumerate}
\item By conditional Jensen's inequality, it follows that
\begin{align*}
\EB_{\Yars_n \sim \mu_0} \sqrt{(1-\eps_n)  + \eps_n \bfPn(\Yars_n)}
&\ge  \EB_{\Yars_n \sim \mu_0} \EB_1 \left[ 
\sqrt{(1-\eps_n)+\eps_n f_{1,\bP_n}(\Yars_n)}\bigg|\GM_{n-1}
\right]\\
&\ge \inf_{\bP_n \in \PM} \EB_{\Yars_n \sim \mu_0} \sqrt{(1-\eps_n)  + \eps_n f_{1,\bP_n}(\Yars_n)} \\
&=1-\sup_{\bP_n \in \PM} H^2(\mu_0, (1-\eps_n) \mu_0 + \eps_n \mu_{1,\bP_n}).
\end{align*}
By the last inequality, \eqref{eq:help-rho}, and \eqref{eq:def-Hellinger}, it follows that
\[
H^2(\rho_0, \rho_1) \le 1- \prod_{t=1}^n\left(  1 - \sup_{\bP_t \in \PM} {H^2(\mu_0,(1-\eps_n)\mu_0 + \eps_n \mu_{1,\bP_t})}\right).
\]
We prove the first part by using the inequality that $\re^{- 2\log2 \cdot x} \le 1-x$ for any $x \in [0, 1/2]$. 
\item By the condition, it follows that
\begin{align*}
\EB_{\Yars_n \sim \mu_0} \sqrt{(1-\eps_n)  + \eps_n \bfPn(\Yars_n)}
&=1- H^2(\mu_0, (1-\eps_n) \mu_0 + \eps_n \bmun) \le 1- c_n.
\end{align*}
By the last inequality, \eqref{eq:help-rho}, and \eqref{eq:def-Hellinger}, it follows that
\[
H^2(\rho_0, \rho_1) \ge 1- \left(  1 - c_n\right)^n \ge 1- \re^{-nc_n} \to 1.
\]
\end{enumerate}
\end{proof}

\begin{proof}[Proof of \eqref{eq:help-rho}]
To prove equation \eqref{eq:help-rho}, we require an important lemma to understand the dependence of the pivotal statistic $\Yars_t$ on all text-generating randomness. This result is presented in Lemma \ref{lem:expectation-of-Y}. We will reference Lemma \ref{lem:expectation-of-Y} multiple times in the appendix.

\begin{lem}
\label{lem:expectation-of-Y}
Let Assumption \ref{asmp:main} hold with the filtration $\FM_t$ defined therein. It follows that for any integrable function $h$,
\begin{align}
\label{eq:expect}
\EB_1[h(\Yars_t)\mid\FM_{t-1}]=\EB_1[h(\Yars_t)|\bP_t]
&= \eps_n \cdot \EB_{1, \bP_t} h(Y) + (1-\eps_n) \cdot \EB_{0} h(Y) .
\end{align}
\end{lem}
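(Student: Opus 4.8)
\textbf{Proof proposal for Lemma~\ref{lem:expectation-of-Y}.}

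The plan is to expand the conditional expectation of $h(\Yars_t)$ directly using the mixture structure in Assumption~\ref{asmp:main}(b), and to leverage the conditional independence statements in both parts (a) and (b) to reduce everything to fixed, known one-step distributions. First I would argue that it suffices to condition on $\bP_t$ rather than on all of $\FM_{t-1}$: by construction $\bP_t$ is $\FM_{t-1}$-measurable (indeed $\FM_{t-1} = \sigma(\{\token_j,\xi_j,\bP_{j+1}\}_{j=1}^{t-1})$ already records $\bP_t$), and the only other randomness entering $\Yars_t = Y(\token_t,\xi_t)$ is (i) the pseudorandom variable $\xi_t$, which by Condition~(a) is independent of $\FM_{t-1}$ with law $\pi$, and (ii) the latent Bernoulli($\eps_n$) selector together with the fresh draw of $\token_t$ in the ``edited'' branch, which by Condition~(b) depends on the past only through $\bP_t$. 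Hence $\EB_1[h(\Yars_t)\mid \FM_{t-1}] = \EB_1[h(\Yars_t)\mid \bP_t]$, which gives the first equality in \eqref{eq:expect}.

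Next I would condition further on the latent selector $B_t \sim \mathrm{Ber}(\eps_n)$. On the event $\{B_t = 1\}$ we have $\token_t = \SM(\bP_t,\xi_t)$, so $\Yars_t = Y(\SM(\bP_t,\xi_t),\xi_t)$ with $\xi_t \sim \pi$ independent of $\bP_t$; by the definition of $\mu_{1,\bP_t}$ in Section~\ref{sec:pre} this contributes $\EB_{1,\bP_t} h(Y)$. On the event $\{B_t = 0\}$, $\token_t$ is drawn from $\bP_t$ and is conditionally independent of $\xi_t$ given $\FM_{t-1}$ (equivalently given $\bP_t$, using Condition~(a) to drop the rest of the past); by the defining property of the pivotal statistic—namely that $Y(\token,\xi)$ has law $\mu_0$ whenever $\token \perp \xi$ regardless of the marginal of $\token$—this contributes $\EB_0 h(Y)$. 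Combining the two branches with weights $\eps_n$ and $1-\eps_n$ yields
\[
\EB_1[h(\Yars_t)\mid \bP_t] = \eps_n\, \EB_{1,\bP_t} h(Y) + (1-\eps_n)\, \EB_0 h(Y),
\]
which is exactly \eqref{eq:expect}. A measure-theoretic caveat: since $\bP_t$ is itself random, one should phrase the branch computations as identities of conditional expectations (functions of $\bP_t$) and invoke the tower property, rather than treating $\bP_t$ as a fixed parameter; I would state a small sublemma that for any fixed distribution $\bP$, if $\token \sim \bP$ and $\token \perp \xi$ with $\xi\sim\pi$ then $\EB[h(Y(\token,\xi))] = \EB_0 h(Y)$, and apply it pointwise in $\bP = \bP_t$.

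The main obstacle is the careful bookkeeping of the conditioning $\sigma$-fields: one must verify that conditioning on $\FM_{t-1}$ and then on $B_t$ really does decouple $\token_t$ from $\xi_t$ in the null branch, i.e. that the ``human-written'' token in Condition~(b) is independent of $\xi_t$ not merely marginally but conditionally given $\FM_{t-1}$ (which is what Condition~(b) asserts, combined with Condition~(a) to strip away any dependence of $\xi_t$ on the past). Once this is pinned down, the pivotal property of $Y$ finishes the null branch and the definition of $\mu_{1,\bP_t}$ finishes the signal branch, so the remaining steps are routine. I would also note that integrability of $h$ under $\mu_0$ and under $\mu_{1,\bP_t}$ (uniformly enough in $\bP_t$ for the later applications) is all that is needed for the interchange of expectation and the mixture decomposition to be valid.
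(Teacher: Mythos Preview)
Your proposal is correct and follows essentially the same approach as the paper's proof: both use the mixture structure of Condition~(b) to split into the watermarked branch (yielding $\mu_{1,\bP_t}$ by definition) and the edited branch (yielding $\mu_0$ via the pivotal property and the conditional independence furnished by Conditions~(a) and~(b)), then observe that the resulting conditional law depends on $\FM_{t-1}$ only through $\bP_t$. Your write-up is somewhat more explicit about the latent Bernoulli selector and the tower-property bookkeeping, but the argument is the same.
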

\begin{proof}[Proof of Lemma \ref{lem:expectation-of-Y}]
Condition (b) in Assumption~\ref{asmp:main} specifies the joint distribution of $(\token_t, \zeta_t) \mid \FM_{t-1}$. With probability $\eps_n$, we have $\token_t = \SM(\bP_t, \zeta_t)$; with the remaining probability $1 - \eps_n$, $\token_t$ is drawn from $\bP_t$, independently of both $\FM_{t-1}$ and $\zeta_t$.
By the definition of pivotal statistics, it follows that $\Yars_t \mid \FM_{t-1} \sim (1 - \eps_n)\mu_0 + \eps_n \mu_{1, \bP_t}$. Since this conditional distribution depends only on $\bP_t$, we have $\EB_1[h(\Yars_t) \mid \FM_{t-1}] = \EB_1[h(\Yars_t) \mid \bP_t]$.
Note that we use Condition (a) in Assumption~\ref{asmp:main} in the above argument, which ensures that, in the $1 - \eps_n$ case where $\token_t$ is not watermarked, $\token_t$ and $\zeta_t$ are statistically independent conditional on $\FM_{t-1}$.
\end{proof}

 Recall that $\GM_t =  \sigma(\{\Yars_j\}_{j=1}^t)$ be the $\sigma$-field generated by all pivotal statistics before and including iteration $t$.
Hence, $\GM_{n-1} \subset \FM_{n-1}$ due to $\Yars_t = \Yars(\token_t, \xi_t)$ and $\FM_{t} = \sigma( 
\{\token_j, \xi_j, \bP_{j+1} \}_{j=1}^{t})$.
For a given measurable $A_n$,
\[
\EB_1[\1_{\Yars_n \in A_n}\mid\GM_{n-1}]
=\EB_1[\EB_1[\1_{\Yars_n \in A_n}\mid\FM_{n-1}]|\GM_{n-1}]
=\EB_1[(1-\eps_n)\mu_0(A_n)+\eps_n\mu_{1,\bP_t}(A_n)|\GM_{n-1}],
\]
where the last equation uses Lemma \ref{lem:expectation-of-Y}.
We emphasize that $\EB_1[\mu_{1,\bP_t}(A_n)|\GM_{n-1}]$ should be regarded as a function of $\Yars_1, \ldots, \Yars_{n-1}$ due to the measurability of $\GM_{n-1}$.
It has a closed expression in \eqref{eq:bar-mu-measure} due to Fubini's theorem.

Let $A_1,\ldots, A_{n-1}$ be any measurable sets.
On the one hand, it follows that
\begin{equation}
\label{eq:expresion1}
\PB_1(\Yars_t \in A_t,\forall t \in [n]) = \int_{\prod_{t=1}^nA_t} \rho_1(\Yars_1, \ldots, \Yars_n) \rd \Yars_1 \ldots \rd \Yars_n.
\end{equation}
On the other hand,
\begin{align}
\PB_1(\Yars_t \in A_t,\forall t \in [n])
&=\EB_1 \prod_{t=1}^n \1_{\Yars_t \in A_t}
=\EB_1 \left[\EB_1\left[ \prod_{t=1}^n \1_{\Yars_t \in A_t}|\GM_{n-1}\right] \right] \nonumber\\
&=\EB_1\left[\prod_{t=1}^{n-1} \1_{\Yars_t \in A_t} \EB_1\left[\1_{\Yars_n \in A_n}\mid\GM_{n-1}\right]\right]\nonumber\\
&=\EB_1\left[\prod_{t=1}^{n-1} \1_{\Yars_t \in A_t} 
\left( (1-\eps_n) \mu_0(A_n) + \EB_1[\mu_{1,\bP_t}(A_n)|\GM_{n-1}] \right)
\right]\nonumber\\
&= \int_{\prod_{t=1}^{n-1}A_t} \rho_1(\Yars_1, \ldots, \Yars_{n-1})
\left[ (1-\eps_n) \mu_0(A_n) + \EB_1[\mu_{1,\bP_t}(A_n)|\GM_{n-1}] \right]
\rd \Yars_1 \ldots \rd \Yars_{n-1}  \nonumber\\
&=\int_{\prod_{t=1}^{n-1}A_t} \rho_1(\Yars_1, \ldots, \Yars_{n-1}) \rd \Yars_1 \ldots \rd \Yars_{n-1} \cdot \int_{A_n}[(1-\eps_n)+\eps_n \bfPn(\Yars_n)] \rd \Yars_n.
\label{eq:expersion2}
\end{align}
By the arbitrariness of $A_1, \ldots, A_n$ in \eqref{eq:expresion1} and \eqref{eq:expersion2}, we complete the proof of \eqref{eq:help-rho}.
\end{proof}

\subsection{Proof of Theorem~\ref{thm:gumbel-dense}}
\label{proof:gumbel-dense}

\renewcommand{\thetheorem}{\ref{thm:gumbel-dense}}

\begin{theorem}[Restated version of Theorem \ref{thm:gumbel-dense}]
Under Assumption \ref{asmp:main}, let $0 < p \le \frac{1}{2}$ and $\Delta_n  \asymp  n^{-q}$ with $0 \le q \le 1$.%\footnote{Here $\Theta(1)$ denotes any universal positive constant that does not depend on $p, q, n$.}
\begin{enumerate}
\item If $q + 2p > 1$ and  $\bP_{1:n} \subset \PM_{\Delta_n}^c$, $H_0$ and $H_1^{\mathrm{mix}}$ merge asymptotically.
Hence, for any test, the sum of Type I and Type II errors tends to 1 as  $n\to\infty$.
\item If $q + 2p < 1$ and $\bP_{1:n} \subset \PM_{\Delta_n}$, $H_0$ and $H_1^{\mathrm{mix}}$ separate asymptotically.
Furthermore, for the likelihood-ratio test that rejects $H_0$ if the log-likelihood ratio is positive, the sum of Type I and Type II errors tends to 0 as $n\to\infty$.
\end{enumerate}\end{theorem}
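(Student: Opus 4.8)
As in the proof of Theorem~\ref{thm:gumbel-sparse}, both halves follow from Lemma~\ref{lem:H2}, which turns the merging/separation dichotomy into an estimate on (conditional) squared Hellinger distances. Writing the Gumbel-max alternative density as $f_{1,\bP}(r)=\sum_{\token\in\Voca}r^{1/P_\token-1}$ (so $\EB_0 f_{1,\bP}=1$ and $0\le f_{1,\bP}\le|\Voca|$) and using $|\sqrt{1+x}-(1+\tfrac x2-\tfrac{x^2}8)|\le C|x|^3$ for $|x|\le\tfrac12$ with $x=\eps_n(f_{1,\bP}(\cdot)-1)$ (valid for $n$ large since $\eps_n\to0$), one gets
\[
H^2\bigl(\mu_0,\ (1-\eps_n)\mu_0+\eps_n\mu_{1,\bP}\bigr)=\tfrac{\eps_n^2}{8}\,\EB_0\bigl[(f_{1,\bP}-1)^2\bigr]+O\!\bigl(\eps_n^3\,\EB_0|f_{1,\bP}-1|^3\bigr),
\]
and likewise with $\bfPt:=\EB_1[f_{1,\bP_t}\mid\GM_{t-1}]$ in place of $f_{1,\bP}$, since $\EB_0[\bfPt-1]=0$ too. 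So everything reduces to two-sided control of $\EB_0[(f_{1,\bP}-1)^2]$ (the $\chi^2$-divergence of $\mu_{1,\bP}$ from $\mu_0$) in terms of the singularity $\Delta(\bP)$, plus the companion bounds $\EB_0|f_{1,\bP}-1|^3\lesssim_{|\Voca|}\Delta(\bP)$ and $\EB_0|\bfPt-1|^3\lesssim_{|\Voca|}\Delta_n$ (the latter by conditional Jensen) — refinements of Lemma~\ref{lem:upper-bound-for-square} that render the cubic remainder negligible next to $\eps_n^2\Delta_n$ because $\eps_n\to0$.

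\textbf{Part 1 (merging, $q+2p>1$).} Only the upper bound is needed. Ordering $P_{(1)}\ge P_{(2)}\ge\cdots$, the identity $\EB_0 f_{1,\bP}^2=\sum_{\token,\token'}\bigl(1/P_\token+1/P_{\token'}-1\bigr)^{-1}$ together with $\bigl(1/P_\token+1/P_{\token'}-1\bigr)^{-1}\le\min(P_\token,P_{\token'})$ gives $\EB_0[(f_{1,\bP}-1)^2]\le 2\sum_{i\ge2}(i-1)P_{(i)}\le 2(|\Voca|-1)\Delta(\bP)$. Since $\bP_{1:n}\subset\PM_{\Delta_n}^c$ forces $\Delta(\bP_t)<\Delta_n$, the expansion above yields $n\cdot\sup_t\sup_{\bP_t}H^2\bigl(\mu_0,(1-\eps_n)\mu_0+\eps_n\mu_{1,\bP_t}\bigr)\lesssim n\,\eps_n^2\Delta_n\asymp n^{1-2p-q}\to0$ because $q+2p>1$; Lemma~\ref{lem:H2}(1) then gives $\mathrm{TV}(\rho_0,\rho_1)\to0$, hence the error-sum claim. (Under Assumption~\ref{asmp:regular}, $\Delta(\bP_t)=\Delta_n$ exactly, so one applies this with $\tilde\Delta_n:=2\Delta_n\asymp n^{-q}$, for which $\bP_t\in\PM_{\tilde\Delta_n}^c$.)

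\textbf{Part 2 (separation, $q+2p<1$).} The crux is a conditional lower bound $\EB_0[(\bfPt-1)^2]\ge c\,\Delta_n$ holding almost surely for all $t$ with a universal $c>0$ once $\Delta_n$ is small. I first treat Assumption~\ref{asmp:regular}, where $\Delta(\bP_t)=\Delta_n$ makes the leading power $g(r):=r^{\Delta_n/(1-\Delta_n)}$ deterministic, so $\bfPt=g+h_t$ with $h_t:=\EB_1\bigl[\sum_{\token\neq(1)}r^{1/P_{t,\token}-1}\mid\GM_{t-1}\bigr]\ge0$ and $\int_0^1 h_t=\Delta_n$. In $L^2(dr)$, $\EB_0[(\bfPt-1)^2]=\|g-1\|_2^2+2\langle g-1,h_t\rangle+\|h_t\|_2^2$: direct integration gives $\|g-1\|_2^2=\tfrac{2\Delta_n^2}{1+\Delta_n}$, and comparing $\int_0^1 g\,r^{1/P_{t,\token}-1}$ with $\int_0^1 r^{1/P_{t,\token}-1}=P_{t,\token}$ term by term gives $\langle g-1,h_t\rangle\ge-O(\Delta_n^3)$, so the first two terms are $O(\Delta_n^2)$. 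The decisive term is $\|h_t\|_2^2$: each $\token\neq(1)$ has $P_{t,\token}\le\Delta_n$, so $r^{1/P_{t,\token}-1}$ has exponent $\ge1/\Delta_n-1$ and places $\ge(1-e^{-1})$ of its $L^1$-mass on $[1-\Delta_n,1]$; averaging over $\token\neq(1)$ and over $\GM_{t-1}$ gives $\int_{1-\Delta_n}^1 h_t\ge(1-e^{-1})\Delta_n$, and Cauchy--Schwarz on an interval of length $\Delta_n$ yields $\|h_t\|_2^2\ge\int_{1-\Delta_n}^1 h_t^2\ge(1-e^{-1})^2\Delta_n$. Hence $\EB_0[(\bfPt-1)^2]\gtrsim\Delta_n$, so $H^2\bigl(\mu_0,(1-\eps_n)\mu_0+\eps_n\bmut\bigr)\gtrsim\eps_n^2\Delta_n\asymp n^{-2p-q}=:c_n$ with $n c_n\asymp n^{1-2p-q}\to\infty$; Lemma~\ref{lem:H2}(2) gives $\mathrm{TV}(\rho_0,\rho_1)\to1$, i.e.\ separation, and since the likelihood-ratio test attains $\inf_{\text{tests}}(\text{Type I}+\text{Type II})=1-\mathrm{TV}(\rho_0,\rho_1)$, its error sum tends to $0$. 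For the general hypothesis $\bP_{1:n}\subset\PM_{\Delta_n}$ (only $\Delta(\bP_t)\ge\Delta_n$), where the leading power is not deterministic, I would instead use the variational bound $\EB_0[(\bfPt-1)^2]\ge\bigl(\int_0^1\psi\,d\bmut-\int_0^1\psi\,d\mu_0\bigr)^2/\Var_{\mu_0}(\psi)$ with the fixed test function $\psi=\1_{\{r\ge1-\Delta_n\}}$, reducing the claim to the pointwise estimate $\mu_{1,\bP}(Y>1-\Delta_n)-\Delta_n\gtrsim\Delta_n$ valid for every $\bP$ with $\Delta(\bP)\ge\Delta_n$ — an inequality which, having a definite sign, is preserved under conditional averaging.

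\textbf{Main obstacle.} The genuinely delicate step is this conditional lower bound: the autoregressive structure forces us to work with the \emph{mixed} density $\bfPt=\EB_1[f_{1,\bP_t}\mid\GM_{t-1}]$, and the naive bounds ($\EB_0[(\bfPt-1)^2]\ge(\EB_0[\bfPt-1])^2=0$, or dropping $\|h_t\|_2^2$) give only $\gtrsim\Delta_n^2$, which would produce the wrong boundary $q+p=\tfrac12$. Recovering the correct factor $\Delta_n$ rests entirely on the concentration of the non-leading mass $h_t$ within $O(\Delta_n)$ of $r=1$ — the region where the $\chi^2$-geometries of $\mu_{1,\bP}$ and $\mu_0$ diverge — so that $\|h_t\|_2^2$ exceeds $\|h_t\|_1$ by a factor $\asymp1/\Delta_n$; making this precise uniformly over realizations of $\GM_{t-1}$, over $\bP_t$, and under the weaker regularity hypothesis is where the work lies.
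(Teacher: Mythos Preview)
Your proposal is correct, and the high-level strategy (reduce via Lemma~\ref{lem:H2} to two-sided bounds on $\EB_0[(\bfPt-1)^2]$ of order $\Delta_n$) matches the paper exactly. Part~1 is essentially identical to the paper's argument. The differences are in Part~2, and they are genuine.

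\emph{Lower bound on $\EB_0[(\bfPt-1)^2]$.} The paper (Lemma~\ref{lem:lower-bound-for-square}) expands this as a bilinear form $\EB_{\bP_1,\bP_2}\bigl[\EB_Y f_{1,\bP_1}f_{1,\bP_2}-1\bigr]$ over two independent draws from the conditional law of $\bP_t$, evaluates the inner integral exactly as $\sum_{i,j}P_{1,i}P_{2,j}\tfrac{(1-P_{1,i})(1-P_{2,j})}{1-(1-P_{1,i})(1-P_{2,j})}$, and then minimizes over $\FPM\times\FPM$ by an extreme-point argument (Lemma~\ref{lem:lower-bound-for-cross}), landing on $\bP_\Delta^\star$. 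This works directly under the general hypothesis $\bP_{1:n}\subset\PM_{\Delta_n}$ and reuses the least-favorable distribution that also appears in the efficiency analysis. Your $g+h_t$ decomposition is more elementary and avoids the extreme-point machinery, but as you note it needs the leading exponent to be deterministic, i.e.\ Assumption~\ref{asmp:regular}. Your alternative route for the general case---the variational bound with $\psi=\1_{\{r\ge 1-\Delta_n\}}$---is correct, and the pointwise inequality $\mu_{1,\bP}(Y\ge 1-\Delta_n)\ge(1+e^{-1}+o(1))\Delta_n$ it rests on is exactly Lemma~\ref{lem:help-ratio} in the paper (proved there via Lemma~\ref{lem:worst-P} and the mean value theorem, for the HC analysis). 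So your general-case argument is not independent of the paper's worst-case reduction, but it does recycle a lemma the paper already needs elsewhere, which is a nice economy.

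\emph{Likelihood-ratio test.} Here your argument is strictly simpler: the paper (Lemma~\ref{lem:loglikelihood}) computes $\EB_0 L_n\to-\infty$ and $\Var_0(L_n)/(\EB_0 L_n)^2\to 0$ directly, whereas you just observe that the LR test at threshold $1$ achieves Type~I $+$ Type~II $=\int\min(\rho_0,\rho_1)=1-\mathrm{TV}(\rho_0,\rho_1)$, which is already shown to vanish by Lemma~\ref{lem:H2}(2). This is the cleaner way to close the argument.
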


\begin{proof}[Proof of Theorem~\ref{thm:gumbel-dense}]
We prove this theorem with the help of Lemma \ref{lem:H2}. 
To make Lemma \ref{lem:H2} easier to apply, we first simplify the considered Hellinger distance to more explicit terms.
\begin{lem} 
\label{lem:square-d}
Fix $\bP_t$. It follows that
\begin{enumerate}
\item Let $f_{1,\bP_t}(r) = \sum_{\token \in \Voca} r^{1/P_{t,\token}-1}$ be the PDF of $\mu_{1,\bP_t}$. Then,
\[
H^2(\mu_0, 1-\eps_n + \eps_n \mu_{1,\bP_t})  =\Theta(1) \cdot \eps_n^2 \cdot \EB_{0}(f_{1,\bP_t}(Y)-1)^2.
\]
\item Let $\GM_{n} = \sigma(\{\Yars_t\}_{t=1}^n)$ is the $\sigma$-field generated by all $\Yars_1, \ldots,\Yars_n$.
We define the conditioned PDF and probability measure: $\bfPt(y) = \EB_1[f_{1, \bP_t}(y)|\GM_{t-1}]$ and $\bmut = \EB_1[\mu_{1, \bP_t}|\GM_{t-1}]$.
It follows that
\begin{gather*}
H^2(\mu_0, 1-\eps_n + \eps_n \bmut)  =\Theta(1) \cdot \eps_n^2 \cdot \EB_{0}(\bfPt(Y)-1)^2.
\end{gather*}
Here $\Theta(1)$ in the above denotes a universal positive constant.
\end{enumerate}
\end{lem}

Now, we are ready to prove the main theorem.
\begin{enumerate}
\item To prove the first point, we could use a similar argument in the proof of Theorem~\ref{thm:gumbel-sparse}.
By Lemma \ref{lem:H2}, it suffices to show
\begin{equation}
\label{eq:upper-bound-H-dense}
\sum_{t=1}^n \sup_{\bP_t \in \PM_{\Delta_n}^c}H^2(\mu_0, (1-\eps_n) \mu_0 + \eps_n \mu_{1,\bP_t}) \to 0.
\end{equation}
\begin{lem}
\label{lem:upper-bound-for-square}
Fix $\bP_t$.
Let $f_{1,\bP_t}(r) = \sum_{\token \in \Voca} r^{1/P_{t,\token}-1}$ be the PDF of $\mu_{1,\bP_t}$. Then,
\[
\EB_{0}(f_{1,\bP_t}(Y)-1)^2 = O(|\Voca|) \cdot (1-P_{t,\max}),
\]
where $O(1)$ denotes a universal positive constant.
\end{lem}
By Lemma \ref{lem:H2} and \ref{lem:upper-bound-for-square}, it suffices to ensure that 
\[
\eps_n^2 \sum_{t=1}^n (1-P_{t,\max}) \le n \eps_n^2 \Delta_n = O( n^{1-2p-q}) \to 0
\]
which follows directly from the condition that $2p+q >1$.

\item To prove the second point, by Lemma \ref{lem:H2}, we only have to show that there exists a positive non-random sequence $c_n$ satisfying $n c_n \to \infty$ and 
\[
H^2(\mu_0, (1-\eps_n) \mu_0 + \eps_n  \bmut) \ge c_n~~\text{almost surely for}~~\forall n \ge 1.
\]
\begin{lem}
\label{lem:lower-bound-for-square}
If $\bP_{t} \in \PM_{\Delta_n}$ holds almost surely and $n$ is sufficiently large so that $\Delta_n < 0.5$, then $\EB_{0}(\bfPt(Y)-1)^2 \ge \Theta(\Delta_n)$ almost surely where $\Theta(1)$ denotes a universal positive constant.
\end{lem}
By Lemma \ref{lem:square-d} and \ref{lem:lower-bound-for-square}, it suffices to show that
\[
n\eps_n^2 \Delta_n = \Theta(n^{1-2p-q}) \to \infty,
\]
which follows directly from the condition that $2p+q < 1$.

It remains to show that the likelihood ratio test is effective if $2p+q < 1$.
Since the proofs are similar, we show only that under the null hypothesis. Lemma~\ref{lem:loglikelihood} implies that under $H_0$, the log-likelihood ratio $L_n$ goes to $-\infty$ with probability one.
Hence, the likelihood ratio test that rejects $H_0$ if $L_n \ge 0$ has a vanishing Type I error.

\begin{lem}
\label{lem:loglikelihood}
Let $\ell_t(y) = \log(1-\eps_n + \eps_n \bfPt(y))$ and $L_n = \sum_{t=1}^n \ell_t(\Yars_t)$. If $\bP_{1:n} \subset \PM_{\Delta_n}$ and $2p+q < 1$,
\[
\EB_0 L_n \to - \infty \quad \text{and} \quad \frac{\Var_0(L_n)}{[\EB_0 L_n]^2} \to 0.
\]
\end{lem}
\end{enumerate}

\end{proof}

\subsubsection{Missing Proofs of Lemmas}
At the end of this section, we provide the missing proofs for the lemmas mentioned above.

\begin{proof}[Proof of Lemma~\ref{lem:square-d}]
It suffices to focus on $\EB_{0} \sqrt{1-\eps_n + \eps_n f_{1,\bP_t} (Y)}$ due to the relation that
\[
 H^2(\mu_0, 1-\eps_n + \eps_n \mu_{1,\bP_t}) = 
1-\EB_{0} \sqrt{1-\eps_n + \eps_n f_{1,\bP_t} (Y)}.
\]

Note that as long as $n$ is sufficiently large, we would have $\sup_{y \in [0, 1]}\eps_n |f_{1,\bP_t} (y)-1| \le |\Voca| \cdot \eps_n \le 1$.
Using the inequality that $\sqrt{1+x} \le 1+\frac{x}{2}-\frac{x^2}{18}$ for any $-1 \le x \le 3$, we have
\begin{align*}
\EB_{0} \sqrt{1-\eps_n + \eps_n f_{1,\bP_t} (Y)}  
&\le \EB_{0} \left[
1 + \frac{\eps_n}{2}(f_{1,\bP_t}(Y)-1) -\frac{ \eps_n^2}{18} (f_{1,\bP_t}(Y)-1)^2
\right]\\
&= 1 - \frac{ \eps_n^2}{18} \cdot  \EB_{0}(f_{1,\bP_t}(Y)-1)^2.
\end{align*}
On the other hand, by the inequality that $\sqrt{1+x} \ge 1+\frac{x}{2}-\frac{x^2}{2}$ for any $x \ge -1$, we have 
\begin{align*}
\EB_{0} \sqrt{1-\eps_n + \eps_n f_{1,\bP_t} (Y)} 
&\ge \EB_{0} \left[
1 + \frac{\eps_n}{2}(f_{1,\bP_t}(Y)-1) - \frac{ \eps_n^2}{2}(f_{1,\bP_t}(Y)-1)^2
\right]\\
&= 1 - \frac{ \eps_n^2}{2} \cdot \EB_{0}(f_{1,\bP_t}(Y)-1)^2.
\end{align*}
We then prove the first part by combining these two inequalities on $\EB_{0} \sqrt{1-\eps_n + \eps_n f_{1,\bP_t} (Y)}$.

By an almost identical argument, we can prove the second part (so we omit it).
\end{proof}

\begin{proof}[Proof of Lemma~\ref{lem:upper-bound-for-square}]
We note that
\begin{align}
\label{eq:evalute-integral}
    \EB_{0}(f_{1,\bP_t}(Y)-1)^2
    &= \EB_{0} f_{1,\bP_t}^2(Y) - 1
    = \int_0^1  \left( \sum_{\token \in \Voca} r^{1/P_{t,\token}-1} \right)^2  \rd r-1 \nonumber \\
    &= \sum_{\token \in \Voca} \sum_{j \in \Voca}  \frac{1}{1/P_{t,\token}+1/P_{t,j}-1}-1 \nonumber \\
    &=\sum_{\token \in \Voca} \sum_{j \in \Voca}  P_{t,\token}P_{t,j} \frac{(1-P_{t,\token})(1-P_{t,j})}{1-(1-P_{t,\token})(1-P_{t,j})}.
\end{align}
Note that $1-(1-P_{t,\token})(1-P_{t,j}) \ge P_{t,\token} \vee P_{t,j}$. It then follows that
\[
\EB_{0}(f_{1,\bP_t}(Y)-1)^2 \le \sum_{1 \le \token,j \le |\Voca|} (P_{t,\token} \wedge P_{t,j}) \cdot (1-P_{t,\token})(1-P_{t,j}).
\]

In the following, without loss of generality, we assume $P_{t,1} \ge P_{t,2} \ge \ldots \ge P_{t,|\Voca|}$.
We start by analyzing the target quantity
\[
\sum_{\token \in \Voca}  \sum_{j=1}^{|\Voca|}
(P_{t,\token}	\wedge P_{t,j}) \cdot(1-P_{t,\token})(1-P_{t,j}).  \]
It follows that that quantity can be equivalently written as 
\begin{align*}
&\sum_{\token \in \Voca}
\left[	\sum_{j=1}^{\token-1}(P_{t,\token} \wedge P_{t,j})\cdot (1-P_{t,\token})(1-P_{t,j})  + \sum_{j=\token}^{|\Voca|}(P_{t,\token} \wedge P_{t,j})\cdot (1-P_{t,\token})(1-P_{t,j})  \right] \\
&\overset{(a)}{=}\sum_{\token \in \Voca}\left[	\sum_{j=1}^{\token-1}(P_{t,\token} \wedge P_{t,j})\cdot (1-P_{t,\token})(1-P_{t,j})  + \sum_{j=1}^{\token}(P_{t,\token} \wedge P_{t,j})\cdot (1-P_{t,\token})(1-P_{t,j})  \right] \\
&\overset{(b)}{=} \sum_{\token \in \Voca}\left[ P_{t, \token}(1-P_{t,\token})^2 + 	2\sum_{j=1}^{\token-1}P_{t,\token} \cdot (1-P_{t,\token})(1-P_{t,j})   \right] \\
&=\sum_{\token \in \Voca} P_{t, \token}(1-P_{t,\token})\left[ 1- P_{t, \token} + 	2\sum_{j=1}^{\token-1}(1-P_{t,j})   \right] \\
&\overset{(c)}{\le} 2|\Voca|  \sum_{\token \in \Voca} P_{t, \token}(1-P_{t,\token}) \\
&\le 4 |\Voca| \cdot (1-P_{t,1}),
	\end{align*}
where $(a)$ uses $\sum_{\token =1}^{|\Voca|} \sum_{j=\token}^{|\Voca|} = \sum_{j =1}^{|\Voca|} \sum_{\token=1}^{j}$ first and then switches the notation of $\token$ and $j$, $(b)$ simplifies the expression by using the fact that $P_{t, j} \ge P_{t, \token}$ if $j \le \token$, and $(c)$ uses the relation that 
$1- P_{t, \token} + 	2\sum_{j=1}^{\token-1}(1-P_{t,j}) = 2\token-1 - (P_{t, \token} + 2\sum_{j=1}^{\token-1} P_{t,j}) \le 2\token-1 \le 2 |\Voca|$ for any $\token \in \Voca$.
\end{proof}

\begin{proof}[Proof of Lemma \ref{lem:lower-bound-for-square}]
We denote the distribution of $\bP_t$ conditional on $\GM_{t-1}$ by $\rho$ for notation simplicity.
Then our target quantity can be written as
\begin{align*}
\EB_{0}(\bfPt(Y)-1)^2 
&= \EB_{Y \sim \mu_0}(\EB_{\bP_t \sim \rho}f_{1, \bP_t}(Y)-1)^2\\
&= \EB_{Y \sim \mu_0}(\EB_{\bP_t \sim \rho}f_{1, \bP_t}(Y))^2 - 1\\
&=\EB_{Y \sim \mu_0} \EB_{\bP_1 \sim \rho}\EB_{\bP_2 \sim \rho}f_{1, \bP_1}(Y)f_{1, \bP_2}(Y)-1\\
&=\EB_{\bP_1 \sim \rho}\EB_{\bP_2 \sim \rho} \left[\EB_{Y \sim \mu_0} f_{1, \bP_1}(Y)f_{1, \bP_2}(Y)-1 \right]
\end{align*}
Note that the following inequality holds (which one can prove by a similar analysis in \eqref{eq:evalute-integral}):
\begin{align*}
\EB_{Y \sim \mu_0} f_{1, \bP_1}(Y)f_{1, \bP_2}(Y)-1
&=
\sum_{\token \in \Voca} \sum_{j \in \Voca}  P_{1,\token}P_{2,j} \cdot \frac{(1-P_{1,\token})(1-P_{2,j})}{1-(1-P_{1,\token})(1-P_{2,j})}\\
&\overset{(a)}{\ge} \sum_{\token \in \Voca} \sum_{j \in \Voca}  P_{1,\token}P_{2,j} \cdot \frac{(1-P_{1,\token})(1-P_{2,j})}{P_{1,\token} + P_{2,j}}\\
&\overset{(b)}{\ge} \frac{1}{2} \sum_{\token \in \Voca} \sum_{j \in \Voca}  \left( P_{1,\token} \wedge P_{2,j} \right) \cdot (1-P_{1,\token})(1-P_{2,j})\\
&\overset{(c)}{\ge} \frac{1}{2} \Delta_n(1-\Delta_n)^2 =  \Theta(\Delta_n),
\end{align*}
where $(a)$ uses $1-(1-P_{1,\token})(1-P_{2,j}) \le P_{1,\token} + P_{2,j}$, $(b)$ uses $\frac{2P_{1,\token} P_{2,j}}{P_{1,\token} + P_{2,j}} \ge P_{1,\token} \wedge P_{2,j}$, and $(c)$ follows from Lemma \ref{lem:lower-bound-for-cross} by requiring $\Delta_n < 0.5$.

\begin{lem}
\label{lem:lower-bound-for-cross}
If $\bP_1, \bP_2 \in \PM_{\Delta}$ where $\Delta < 0.5$, then it follows that
\[
\sum_{\token \in \Voca} \sum_{j \in \Voca}  \left( P_{1,\token} \wedge P_{2,j} \right) \cdot (1-P_{1,\token})(1-P_{2,j}) \ge  \Delta(1-\Delta)^2 +3\Delta^2 (1-\Delta).
\]
\end{lem}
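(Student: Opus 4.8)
The plan is to prove the stronger statement that the quantity
$S(\bP_1,\bP_2) := \sum_{\token \in \Voca}\sum_{j \in \Voca}(P_{1,\token}\wedge P_{2,j})(1-P_{1,\token})(1-P_{2,j})$
attains its \emph{minimum} over $\bP_1,\bP_2 \in \PM_{\Delta}$ exactly at $\bP_1=\bP_2=\bP^{\star}:=(1-\Delta,\Delta,0,\ldots,0)$, with minimal value $\Delta(1-\Delta)^2+3\Delta^2(1-\Delta)$. This gives the claimed bound and simultaneously shows it is sharp. The idea is to minimize one argument at a time and exploit a concavity structure.

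First I would fix $\bP_2$ and study $\bP_1 \mapsto S(\bP_1,\bP_2)$ on the polytope $\PM_{\Delta}=\{\bP:\ P_{\token}\ge 0,\ \sum_{\token}P_{\token}=1,\ P_{\token}\le 1-\Delta\}$ (a bounded polytope since $\Voca$ is finite). Rewriting $S(\bP_1,\bP_2)=\sum_{\token}\psi_{\bP_2}(P_{1,\token})$ with $\psi_{\bP_2}(a):=\sum_{j}(1-P_{2,j})\,(1-a)(a\wedge P_{2,j})$, the crucial observation is that each building block $h_c(a):=(1-a)(a\wedge c)$ is concave on $[0,1]$ for every fixed $c\in[0,1]$: it equals $a(1-a)$ on $[0,c]$ and $c(1-a)$ on $[c,1]$, and at the kink $a=c$ the one-sided derivatives obey $1-2c\ge -c$, so the derivative is non-increasing. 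Hence $\psi_{\bP_2}$, being a nonnegative linear combination of such $h_c$'s, is concave on $[0,1]$ with $\psi_{\bP_2}(0)=0$; consequently $\bP_1\mapsto S(\bP_1,\bP_2)$ is concave.

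A concave function on a compact convex set attains its minimum at an extreme point, so I would next determine the vertices of $\PM_{\Delta}$. A vertex must activate $|\Voca|-1$ of the box constraints (in addition to $\sum_{\token}P_{\token}=1$), i.e.\ all but one coordinate lies in $\{0,1-\Delta\}$; the hypothesis $0<\Delta<\tfrac12$ forces exactly one coordinate to equal $1-\Delta$ (with two such coordinates the remaining mass $2\Delta-1$ would be negative, with none it would be $1>1-\Delta$), which pins the free coordinate to $1-(1-\Delta)=\Delta\in(0,1-\Delta)$. Thus the vertices of $\PM_{\Delta}$ are exactly the permutations of $\bP^{\star}$. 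Since $\sum_{\token}\psi_{\bP_2}(P_{1,\token})$ is symmetric in the coordinates of $\bP_1$ and $\psi_{\bP_2}(0)=0$, it takes the common value $\psi_{\bP_2}(1-\Delta)+\psi_{\bP_2}(\Delta)=S(\bP^{\star},\bP_2)$ at every such vertex, so $\min_{\bP_1\in\PM_{\Delta}}S(\bP_1,\bP_2)=S(\bP^{\star},\bP_2)$ for all $\bP_2$. Because $S$ is symmetric in its two arguments, applying the same reasoning to $\bP_2\mapsto S(\bP^{\star},\bP_2)$ yields $\min_{\bP_2\in\PM_{\Delta}}S(\bP^{\star},\bP_2)=S(\bP^{\star},\bP^{\star})$, hence $\min_{\bP_1,\bP_2\in\PM_{\Delta}}S(\bP_1,\bP_2)=S(\bP^{\star},\bP^{\star})$. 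A direct computation in which only the pairs in $\{1,2\}\times\{1,2\}$ contribute gives $S(\bP^{\star},\bP^{\star})=(1-\Delta)\Delta^2+2\Delta^2(1-\Delta)+\Delta(1-\Delta)^2=\Delta(1-\Delta)^2+3\Delta^2(1-\Delta)$, as required.

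The step I expect to be most delicate is the vertex characterization of $\PM_{\Delta}$, which is exactly where $\Delta<\tfrac12$ enters: for $\Delta\ge\tfrac12$ the configuration with two coordinates at $1-\Delta$ becomes feasible and the extremal value changes. Secondary points to check carefully are the concavity of $h_c$ across its kink (handled by the one-sided-derivative comparison above) and the harmless remark that $|\Voca|\ge 2$, so that $\bP^{\star}$ is realizable; this is automatic since any distribution with maximum entry at most $1-\Delta<1$ must have at least two atoms.
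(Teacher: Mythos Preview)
Your proof is correct and follows essentially the same approach as the paper: fix one argument, show concavity in the other, reduce to the extreme points of $\PM_{\Delta}$ (which for $\Delta<\tfrac12$ are the permutations of $(1-\Delta,\Delta,0,\ldots,0)$), use permutation invariance and the symmetry $S(\bP_1,\bP_2)=S(\bP_2,\bP_1)$, then compute $S(\bP^\star,\bP^\star)$. The only difference is that the paper asserts concavity without argument and cites an external lemma for the extreme-point characterization, whereas you supply both details explicitly; this makes your version more self-contained but not substantively different.
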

\end{proof}

We provide the proof of Lemma \ref{lem:lower-bound-for-cross} at the end of the proof.
\begin{proof}[Proof of Lemma \ref{lem:lower-bound-for-cross}]
For simplicity, we define $L(\bP_1, \bP_2) = \sum_{\token \in \Voca} \sum_{j \in \Voca}  \left( P_{1,\token} \wedge P_{2,j} \right) \cdot (1-P_{1,\token})(1-P_{2,j})$.
Let $\Pi$ denote any permutation on $\Voca$.
For any fixed $\bP_2$, we define an auxiliary function $h_{\bP_2}(\bP_1): \bP_1 \mapsto L(\bP_1, \bP_2)$.
It is easy to see that the function $h_{\bP_2}(\bP_1)$ is concave in $\bP_1$ for any fixed $\bP_2$.
By a classic result in convex analysis~\citep{HiriartJeLe96}, the infimum of any concave function over a compact convex set in a Euclidean space is necessarily attained at an extreme point of the convex set. 
We note that by Lemma 3.4 in \citep{li2024statistical}, the set of extreme points of $\FPM$ is given by $\left\{\pi(\bP_{\Delta}^{\star}): \pi \in \Pi \right\}$,
where $\bP_{\Delta}^{\star} = (1-\Delta, \Delta, 0, \ldots, 0)$ (due to the fact that $\Delta < 0.5$).
On the other hand, we find that for any permutation $\pi \in \Pi$, $h_{\bP_2}(\cdot)$ is $\pi$-invariant in the sense that $h_{\bP_2}(\bP_1) = h_{\bP_2}(\pi(\bP_1))$ for any $\bP_1$.
Here $\pi(\bP_1)$ denotes the permuted distribution $(P_{\pi(1)}, \ldots, P_{\pi(n)})$.
As a result, it follows that
\[
h_{\bP_2}(\bP_1) \ge \inf_{\bP_1 \in \FPM} h_{\bP_2}(\bP_1) = \inf_{\pi \in \Pi} h_{\bP_2}(\pi(\bP_{\Delta}^{\star})) =h_{\bP_2}(\bP_{\Delta}^{\star}).
\]
Note that $L(\bP_1, \bP_2)=L(\bP_2, \bP_1)$. Repeating the above argument would yield that
\[
L(\bP_1, \bP_2) 
\ge h_{\bP_2}(\bP_{\Delta}^{\star}) 
= L(\bP_2, \bP_{\Delta}^{\star}) \ge 
h_{\bP_{\Delta}^{\star}}(\bP_{\Delta}^{\star}) =
L(\bP_{\Delta}^{\star}, \bP_{\Delta}^{\star}) = \Delta(1-\Delta)^2 +3\Delta^2 (1-\Delta).
\]
\end{proof}

\begin{proof}[Proof of Lemma~\ref{lem:loglikelihood}]
Assume that $n$ is sufficiently large so that $|\Voca|\eps_n  \le 1$.
As a result, it follows that for any $t \in [n]$, $\eps_n \bfPt (\Yars_t) \le 1$.
Using the inequality that $\log(1+x) \le x - x^2/4$ for $x \in (-1, 1]$, we have that
\begin{align}
\label{eq:help-likelihood}
\EB_0 \ell_t(\Yars_t) 
\le \EB_0  \eps_n (\bfPt(\Yars_t)-1) - \frac{\eps_n^2}{4}  \EB_0(\bfPt(\Yars_t)-1)^2
= - \frac{\eps_n^2}{4}  \EB_0(\bfPt(\Yars_t)-1)^2.
\end{align}
By Lemma~\ref{lem:lower-bound-for-square}, when $P_{t,\max} \le 1-\Delta_n$, we have $ \EB_0(\bfPt(\Yars_t)-1)^2 \ge \Theta(1) \cdot \Delta_n$.
Hence, $\EB_0 L_n = \sum_{t=1}^n \EB_0 \ell_t(\Yars_t) \le - \Theta(1) \cdot n \eps_n^2 \Delta_n = - \Theta(1) \cdot n^{1-2p-q} \to - \infty$ if $2p+q < 1$.

To prove the other statement, it suffices to show $\Var_0(L_n) \le C \cdot \EB_0 L_n$ for some constant $C > 0$.
To that end, we will show $\EB_0 \ell_t(\Yars_t)^2 \le C \cdot | \EB_0 \ell_t(\Yars_t)|$.
Using the inequality that $\log^2(1+x) \le 2\cdot x^2$ for $x \in [-0.5, 1]$, as long as $\eps_n \le 0.5$, we have that
\begin{align*}
\EB_0 \ell_t(\Yars_t)^2
\le 2  \eps_n^2 \cdot \EB_0  (\bfPt(\Yars_t)-1)^2 
\le 8 \cdot |\EB_0 \ell_t(\Yars_t)|,
\end{align*}
where the last inequality uses~\eqref{eq:help-likelihood} which implies $\eps_n^2 \cdot \EB_0  (\bfPt(\Yars_t)-1)^2 \le 4 \cdot |\EB_0 \ell_t(\Yars_t)|$.
\end{proof}

\subsection{Higher Criticism and Its Optimal Adaptivity}
\label{appen:HC}
We first introduce Higher Criticism and then show that it achieves optimal adaptivity introduced in Section \ref{sec:optimality}.

\paragraph{Higher Criticism.}
In literature, Higher Criticism (HC), as a non-parametric procedure, has shown success in the sparse detection problem \citep{donoho2004higher,tony2011optimal,cai2014optimal}.
It is an instance of the above \Algo~test when $s=2$.
We introduce it in more detail in the following.
Given the observed pivotal statistics $\{\Yars_t\}_{t=1}^n$, HC contains three steps:
\begin{enumerate}
\item For each $t \in [n]$, we obtain a p-value by $\rp_t := \PB_{0}( Y \ge \Yars_t|\Yars_t) = 1- \Yars_t$.
\item Sort the p-values to $\rp_{(1)} < \rp_{(2)}<\ldots < \rp_{(n)}$. We make a convention that $\rp_{(n+1)}=1$.
\item Define the HC statistic as
\begin{equation}
\label{eq:higher}
\HC_n^{+} = \sup_{t: \rp_{(t+1)}\ge c_n^+} \HC_{n, t},
\qquad 
\HC_{n, t} = \sqrt{n} \frac{t/n - \rp_{(t)}}{\sqrt{\rp_{(t)}(1-\rp_{(t)})}}.
\end{equation}
\end{enumerate}
For any given $\delta>0$, we would reject $H_0$ if 
\begin{equation}
\label{eq:higher-test}
\HC_n^{+} \ge \sqrt{2(1+\delta) \log \log n}.
\end{equation}

\begin{rem}
We mention that except for the expression in~\eqref{eq:higher}, there are other variants or generalizations of HC statistics. See \citep{jager2007goodness,arias2017distribution,donoho2015higher} for further discussions.
\end{rem}
One can check that $S_n^+(2)$ is connected to $\HC_n^{+}$ via the following relationship \citep{jager2007goodness}:
\begin{equation}
\tag{\ref{eq:relation-between-GoFT-HC}}
n \cdot S_n^+(2) = \sup\limits_{r \in [ \rp^+, 1) }  \frac{n}{2}\frac{(\FB_n(r) - r)^2}{r(1-r)} \1_{\FB_n(r) \ge r} = \frac{1}{2}(\HC_n^{+})^2.
\end{equation}

\paragraph{Heavy Tail of HC.}

The observations in~\citep{donoho2004higher,donoho2015higher} suggest that the statistic $\HC_{n, t}$ behaves poorly for very small values of $t$, such as 1 or 2, often leading to extremely large outlier values.
To mitigate this issue, the constraint $\rp_{(t+1)}\ge c_n^+$ (with $c_n^{+}$ a small positive constant) is introduced in the definition of $\HC_n^{+}$. 
This constraint effectively prevents calculating these extreme values without altering the statistic's asymptotic properties (see Appendix \ref{sec:HC-boundary} for numerical support).
When $c_n^{+} = 0$, the constraint disappears and we refer to this special case as $\HC_n^{\star}$.

\paragraph{Optimal adaptivity of HC.}
Theorem \ref{thm:adaptivity} will be crucial in proving the optimal adaptivity of \Algo.

\begin{thm}[Optimal adaptivity of HC]
\label{thm:adaptivity}
Assume Assumption \ref{asmp:main} holds and $\bP_{1:n} \subset \PM_{\Delta_n}$ almost surely with $\Delta_n \asymp n^{-q}$.
If $q + 2p < 1$, for the HC test in~\eqref{eq:higher-test} with any $0 \le c_n^+ \le \Delta_n$, the sum of Type I and Type II errors tends to 0 as $n\to\infty$.
\end{thm}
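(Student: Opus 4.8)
The plan is to control the Type~I and Type~II errors of the test~\eqref{eq:higher-test} separately; the Type~II analysis carries essentially all the work.

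\emph{Type~I error.} Under $H_0$ the p-values $\rp_t=1-\Yars_t$ are i.i.d.\ $\UM(0,1)$, so $\HC_n^{+}\le\HC_n^{\star}=\sup_{t\in[n]}\HC_{n,t}$, the untruncated Higher Criticism statistic of the uniform empirical process. Peeling off the extreme order statistic $\rp_{(1)}$ --- for which $n\rp_{(1)}$ converges in distribution and hence $\HC_{n,1}=O_P(1)=o_P(\sqrt{\log\log n})$ --- the remaining supremum $\sup_{t\ge 2}\HC_{n,t}$ obeys the classical asymptotics of the standardized uniform empirical process, $\sup_{t\ge 2}\HC_{n,t}=(1+o_P(1))\sqrt{2\log\log n}$. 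Therefore $\PB_0\big(\HC_n^{+}\ge\sqrt{2(1+\delta)\log\log n}\big)\to 0$ for every $\delta>0$; the $r$-truncation $c_n^{+}\ge 0$ only decreases the statistic, so it never hurts Type~I control (it is needed for Type~II).

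\emph{Type~II error: reduction to a deterministic threshold.} Fix the scanning level $r_n:=\Delta_n$ (any fixed constant multiple $c\,\Delta_n$ with $c\ge 1$ works equally well); note $r_n\to 0$ and $r_n\ge c_n^{+}$ by hypothesis. Let $t_n:=n\FB_n(r_n)$ be the number of p-values that are at most $r_n$, so $\rp_{(t_n)}\le r_n<\rp_{(t_n+1)}$, whence $\rp_{(t_n+1)}>r_n=\Delta_n\ge c_n^{+}$ and the index $t_n$ lies in the admissible range of the supremum in~\eqref{eq:higher}. On the event $\{\FB_n(r_n)\ge r_n\}$ (shown below to have probability $\to 1$), the bounds $t_n/n-\rp_{(t_n)}\ge\FB_n(r_n)-r_n\ge 0$ and $\rp_{(t_n)}(1-\rp_{(t_n)})\le r_n(1-r_n)$ (valid once $r_n<1/2$) give
\[
\HC_n^{+}\ \ge\ \HC_{n,t_n}\ \ge\ \widetilde{\HC}_n\ :=\ \sqrt{n}\,\frac{\FB_n(r_n)-r_n}{\sqrt{r_n(1-r_n)}},
\]
so it suffices to prove $\widetilde{\HC}_n/\sqrt{2(1+\delta)\log\log n}\to\infty$ in probability under $H_1^{\mathrm{mix}}$.

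\emph{Type~II error: mean--fluctuation split.} Let $A_n:=\tfrac1n\sum_{t=1}^n\EB_1[\1_{\rp_t\le r_n}\mid\FM_{t-1}]$ be the compensator of $\FB_n(r_n)$, so $\FB_n(r_n)=A_n+\tfrac1n\sum_{t=1}^n M_t$ with $M_t:=\1_{\rp_t\le r_n}-\EB_1[\1_{\rp_t\le r_n}\mid\FM_{t-1}]$ a bounded martingale-difference sequence. By Lemma~\ref{lem:expectation-of-Y} with $h(y)=\1_{y\ge 1-r_n}$,
\[
\EB_1[\1_{\rp_t\le r_n}\mid\FM_{t-1}]=r_n+\eps_n\Big(\mu_{1,\bP_t}(\Yars\ge 1-r_n)-r_n\Big),
\qquad
\mu_{1,\bP}(\Yars\ge 1-r_n)-r_n=\sum_{\token\in\Voca}P_{\token}(1-r_n)\big[1-(1-r_n)^{1/P_{\token}-1}\big]\ge 0.
\]
A key lemma states $\inf_{\bP\in\PM_{\Delta_n}}\big(\mu_{1,\bP}(\Yars\ge 1-r_n)-r_n\big)\ge\Theta(\Delta_n)$; its proof parallels that of Lemma~\ref{lem:lower-bound-for-cross}: the map $P\mapsto P(1-r_n)^{1/P}$ is convex, hence $\bP\mapsto\mu_{1,\bP}(\Yars\ge 1-r_n)$ is concave and permutation-symmetric, so its infimum over the polytope $\PM_{\Delta_n}$ is attained at the extreme point $\bP_{\Delta_n}^{\star}=(1-\Delta_n,\Delta_n,0,\dots)$ (the $\Delta_n<1/2$ case of~\eqref{eq:worst-case-P}, via the extreme-point characterization in~\citep{li2024statistical}), after which $1-(1-r_n)^{1/P-1}\asymp\min\{(1/P-1)r_n,\,1\}$ evaluated at $P\in\{1-\Delta_n,\Delta_n\}$ with $r_n=\Delta_n$ contributes $\Theta(\Delta_n^2)$ and $\Theta(\Delta_n)$, giving the claim. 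Hence $A_n-r_n\ge\Theta(\eps_n\Delta_n)$ almost surely. For the fluctuation, $\sum_t\EB_1[M_t^2\mid\FM_{t-1}]\le nA_n=O(n\Delta_n)$, so Freedman's inequality yields $\tfrac1n\sum_t M_t=O_P(\sqrt{\Delta_n/n})$. The hypothesis $2p+q<1$ is exactly the statement $\eps_n\Delta_n=n^{-p-q}\gg\sqrt{\Delta_n/n}=n^{-(1+q)/2}$, so the compensator dominates and $\FB_n(r_n)-r_n\ge\Theta(\eps_n\Delta_n)>0$ with probability $\to 1$. Substituting into $\widetilde{\HC}_n$ with $r_n=\Delta_n$,
\[
\widetilde{\HC}_n\ \ge\ \Theta(1)\,\sqrt{n}\,\frac{\eps_n\Delta_n}{\sqrt{\Delta_n}}\ =\ \Theta(1)\,\sqrt{n}\,\eps_n\sqrt{\Delta_n}\ \asymp\ n^{\frac12-p-\frac q2},
\]
whose exponent is positive precisely when $2p+q<1$. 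Thus $\widetilde{\HC}_n$ grows polynomially in $n$, dwarfing $\sqrt{2(1+\delta)\log\log n}$, so $\PB_1\big(\HC_n^{+}\ge\sqrt{2(1+\delta)\log\log n}\big)\to 1$; combined with the Type~I bound, the sum of the two errors tends to $0$.

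\emph{Main obstacle.} The crux is the uniform-in-$\bP$ lower bound $\inf_{\bP\in\PM_{\Delta_n}}\big(\mu_{1,\bP}(\Yars\ge 1-r_n)-r_n\big)\ge\Theta(\Delta_n)$ at the scale $r_n\asymp\Delta_n$: one must (i) reduce to the extreme point $\bP_{\Delta_n}^{\star}$ through the concavity/symmetry argument, and (ii) carry out a two-term expansion that is uniform over the doubly-vanishing regime $r_n,\Delta_n\to 0$, where the dominant atom and the residual mass of $\bP_{\Delta_n}^{\star}$ enter at different orders. A secondary point, absent from the classical Higher Criticism analysis~\citep{donoho2004higher}, is that the p-value indicators are not independent, so the i.i.d.\ Bernstein step is replaced by martingale concentration --- this is where Assumption~\ref{asmp:main} (perfect pseudorandomness and the conditional mixture structure, channelled through Lemma~\ref{lem:expectation-of-Y}) enters --- and the constraint $c_n^{+}\le\Delta_n$ is exactly what keeps the deterministic scanning level $r_n\asymp\Delta_n$ inside the truncated supremum.
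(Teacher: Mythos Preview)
Your proposal is correct and follows essentially the same approach as the paper's proof. Both arguments control Type~I via $\HC_n^{+}\le\HC_n^{\star}$ and the Donoho--Jin asymptotics, and both handle Type~II by scanning at the deterministic level $r_n=\Delta_n$, splitting $\FB_n(r_n)-r_n$ into a compensator term (lower-bounded by $\Theta(\eps_n\Delta_n)$ via the extreme-point reduction to $\bP_{\Delta_n}^{\star}$ --- this is the paper's Lemma~\ref{lem:worst-P} and Lemma~\ref{lem:help-ratio}) and a martingale fluctuation of order $O_P(\sqrt{\Delta_n/n})$, yielding $\HC_n^{+}\ge\Theta(n^{1/2-p-q/2})$. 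The only cosmetic differences are that the paper works with the normalized process $W_n(r)$ and its conditional mean $\overline{W}_n(r)$ directly (rather than $\FB_n$ and its compensator $A_n$), and obtains the $\Theta(\Delta_n)$ lower bound via the mean value theorem applied to $f_{1,\bP_{\Delta_n}^{\star}}$ rather than your two-term expansion; the paper also uses a plain $L^2$/Chebyshev bound on the martingale rather than Freedman. Your Type~I peeling of $\rp_{(1)}$ is unnecessary (the paper simply cites the convergence $\HC_n^{\star}/\sqrt{2\log\log n}\to 1$), but it is not wrong.
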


In Theorem~\ref{thm:adaptivity}, we demonstrate that HC remains effective for robustly detecting employing Gumbel-max watermarks.
As a result, it achieves complete optimal adaptivity as the likelihood ratio test but does not necessitate any knowledge of the sampling token distribution $\bP_t$. 

We explain the intuition of why HC works.
Under $H_0$, $\rp_{t}$'s are i.i.d. copies of $\UM(0, 1)$ so that $\HC_{n, t} \approx \NM(0, 1)$.
Using results from empirical processes, we can show that $\PB_{0}(\HC_n^{+} \le \sqrt{2\log\log n}) \to 1$.
Hence, $\HC_n^{+}$ would grow to the infinity very slowly.
In contrast, under $H_1^{\mathrm{mix}}$, as long as each $\mu_{1,\bP_t}$ differs from $\mu_0$ moderately, we will have $\HC_n^{+} = \Omega_{\PB}(n^{\frac{1}{2}-p-\frac{q}{2}})$.
In fact, $\HC_n^{+}$ would grow to the infinity in a rate faster than $\sqrt{\log\log n}$ as long as $\frac{1}{2}-p-\frac{q}{2} > 0$.
As a result, the HC test can eventually separate two hypotheses.

\begin{proof}[Proof of Theorem~\ref{thm:adaptivity}]
For simplicity, we use $\PB_0(\cdot)$ and $\PB_1(\cdot)$ to denote the probability measure under $H_0$ and $H_1^{\mathrm{mix}}$ respectively.
It suffices to show that as $n \to \infty$, 
\begin{subequations}
\begin{align}
\PB_{0}\left( \HC_n^{+} \ge \sqrt{(2+\delta)\log \log n} \right) &\to 0 \label{eq:higher-prob-H0} \\	
~\text{and}~\PB_{1}\left( \HC_n^{+} < \sqrt{(2+\delta)\log \log n} \right) &\to 0. \label{eq:higher-prob-H1}
\end{align}
\end{subequations}
Recall that the $\rp$-value is given by $\rp_t = 1-\Yars_t$.
Let $\BFB_n$ be the empirical CDF of $\rp_t$'s so that $\BFB_n(r) = \frac{1}{n} \sum_{t=1}^n \1_{ \rp_t\le r}$ for any $r \in [0, 1]$.
Let $U$ denote the CDF of $\UM(0, 1)$ so that $\FB_0(r)=r$.
We let $W_n(r)$ be the standardized form of $\BFB_n(r) -{{r}}$,
\begin{equation}
\label{eq:W}
W_n(r) = \sqrt{n} \cdot \frac{\BFB_n(r) -{{r}}}{\sqrt{ {{r}} (1-{{r}}) }}
= \frac{1}{ \sqrt{n}}  \sum_{t=1}^n  \frac{ \1_{ \rp_t \le r}-{{r}} }{\sqrt{ {{r}} (1-{{r}}) }}.  
\end{equation}
For a given $t \in [n]$, note that exactly $t$ p-values are less than or equal to $\rp_{(t)}$ so that $\BFB_n(\rp_{(t)}) = \frac{t}{n}$ which implies that
\[
W_n(\rp_{(t)}) = \sqrt{n} \cdot \frac{t/n-\rp_{(t)}}{\sqrt{\rp_{(t)}(1-\rp_{(t)})}}.
\]
Note that for any $r \in [\rp_{(n)}, 1]$, $W_n(r) = \sqrt{n} \cdot \sqrt{\frac{1-r}{r}} \ge 0$, which implies that $W_n(r)$ is non-negative and $\sup\limits_{r \in [\rp_{(n)}, 1]} W_n(r)= W_n(\rp_{(n)})$.
Furthermore, it is easy to see that for any $r \in [\rp_{(t)}, \rp_{(t+1)}]$, $W_n(r)$ is decreasing in $r$ so that
$\sup\limits_{r \in [\rp_{(t)}, \rp_{(t+1)}]} W_n(r)= W_n(\rp_{(t)})$.
Varying the value of $r$ in \eqref{eq:W} and letting $\rp^+ = \sup \{\rp_{(t)} : \rp_{(t)} \le c_n^+\}$ be the smallest $\rp$-value that is no smaller than $c_n^+$, we have that
\[
\HC_n^{+} =  \sup_{t :\rp_{t} \ge c_n^+} W_n(\rp_{t-1}) = \sup_{r\in [\rp^+, 1)} W_n(r). 
\]
We introduce another sequence 
\[
\HC_n^{\star} := \sup_{t \in [n]} W_n(\rp_t) = \sup_{r \in [\rp_{(1)}, 1)} W_n(r).
\]
Under $H_0$, $\HC_n^{\star}$ equals in distribution the extreme value of a normalized uniform empirical process (see Theorem 1.1 of \citep{donoho2004higher} for a reason) and thus in probability
\[
\frac{\HC_n^{\star}}{\sqrt{2\log\log n}} \to 1
\quad \text{and} \quad
\PB_{0}\left( \HC_n^{\star} \ge \sqrt{(2+\delta)\log \log n} \right) \to 0.
\]
On the other hand, we have $\HC_n^{\star} \ge \HC_n^{+}$ almost surely.
Hence, for any $\delta > 0$, \eqref{eq:higher-prob-H0} follows directly.

We now consider to prove \eqref{eq:higher-prob-H1}.
Define the $\sigma$-field $\FM_t$ in the same way as in Lemma \ref{lem:expectation-of-Y}.
By Lemma \ref{lem:expectation-of-Y}, it follows that
\[
\EB[\1_{\Yars_t \ge 1- r} \mid\FM_{t-1}] = \EB_{1, \bP_t}\1_{\Yars_t \ge 1- r}  = (1-\eps_n) r + \eps_n[ 1- F_{1,\bP_t}(1-r)].
\]
Using the notation of conditional expectation, we introduce another sequence $\overline{W}_n(r)$ following the spirit of $W_n(r)$:
\[
\overline{W}_n(r)  = \frac{1}{ \sqrt{n}}  \sum_{t=1}^n  \frac{  \EB[\1_{\Yars_t \ge 1- r} \mid\FM_{t-1}]  -{{r}}}{\sqrt{ {{r}} (1-{{r}}) }}
= \frac{ \eps_n}{ \sqrt{n}}  \sum_{t=1}^n \frac{1-F_{1,\bP_t}(1-r)- {{r}}  }{\sqrt{ {{r}} (1-{{r}}) }}.
\]
An observation is that for any fixed $r$, $W_n(r)-\overline{W}_n(r)$ is a martingale difference sequence adapted to the filtration $\FM_n$.

For simplicity, we introduce $x_n = 1-\Delta_n$ and denote $a_n \succsim (\precsim) b_n$ if $\lim\limits_{n \to \infty}\frac{a_n}{b_n} \ge(\le) c$ for an absolute positive constant.
Direct calculations show that 
\begin{align}
\overline{W}_n(\Delta_n) &= \frac{\eps_n}{\sqrt{n} }\sum_{t=1}^n \frac{x_n- F_{1,\bP_t}(x_n)}{\sqrt{x_n(1-x_n)}} \nonumber \\
&\overset{(a)}{\succsim} \sqrt{n}\eps_n \cdot \frac{	(1-x_n)
}{\sqrt{x_n(1-x_n)}} \nonumber  \\
&= \sqrt{n} \eps_n \cdot  \sqrt{\frac{1-x_n}{x_n}}  \nonumber
\\
&= \Theta(1) \cdot  \sqrt{n} \eps_n  \cdot \sqrt{\Delta_n} \nonumber
\\
&\overset{(b)}{=} \Theta(1) \cdot n^{\frac{1}{2}-p-\frac{q}{2}}   \label{eq:lower-bound-conditional-expectation}
\end{align}
where $(a)$ uses the following Lemma \ref{lem:help-ratio} and $(b)$ uses the definition of $\eps_n$ and $\Delta_n$.

\begin{lem}
\label{lem:help-ratio}
Recall that $x_n = 1-\Delta_n$ with $\Delta_n \asymp n^{-q}$ and $F_{1,\bP_t}(r) = \sum_{\token \in \Voca} P_{t,\token} r^{1/P_{t,\token}}$. If $q \in [0, 1)$, then as long as $\bP_t \in \PM_{\Delta_n}$,
\begin{equation}
\label{eq:help-ratio}
1 + \re^{-1} \asymp   x_n^{\frac{\Delta_n}{1-\Delta_n}} + x_n^{\frac{1-\Delta_n}{\Delta_n}} \le \frac{1 -  F_{1,\bP_t}(x_n)}{1-x_n} \le  |\mathrm{supp}(\bP_t)|,
\end{equation}
where $\mathrm{supp}(\bP_t) = \{\token: P_{t, \token} \neq 0\}$ collects non-zero entries in the NTP distribution $\bP_t$.
\end{lem}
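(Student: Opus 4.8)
The plan is to reduce the ratio to a sum of a single scalar function of the probabilities and then exploit that function's monotonicity and concavity. Since $\sum_{\token\in\Voca}P_{t,\token}=1$, one can write $\frac{1-F_{1,\bP_t}(x_n)}{1-x_n}=\sum_{\token\in\Voca}g(P_{t,\token})$ with $g(P):=\frac{P(1-x_n^{1/P})}{1-x_n}$ and the convention $g(0):=0$ (consistent because $0\le g(P)\le P/(1-x_n)\to0$ as $P\to0^+$). I would first establish three facts about $g$ on $(0,1]$, writing $a:=-\ln x_n>0$ and $\psi(P):=Px_n^{1/P}=P\re^{-a/P}$: (i) $g$ is nondecreasing with $g(P)\le1$, because $P-\psi(P)$ has derivative $1-\re^{-a/P}(1+a/P)\ge0$ (from $\re^{s}\ge1+s$) and equals $1-x_n$ at $P=1$; (ii) $g(P)\ge P$, since $1/P\ge1$ forces $x_n^{1/P}\le x_n$; (iii) $g$ is concave, since $\psi''(P)=a^2P^{-3}\re^{-a/P}>0$ makes $\psi$ convex, hence (as $g(0)=0$) $g$ is superadditive, $\sum_i g(a_i)\ge g(\sum_i a_i)$ whenever $a_i\ge0$ and $\sum_i a_i\le1$. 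The upper bound then follows at once from (i): $\sum_{\token}g(P_{t,\token})=\sum_{\token:P_{t,\token}\ne0}g(P_{t,\token})\le\lvert\mathrm{supp}(\bP_t)\rvert$.

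For the lower bound I would split on $P_{\max}:=\max_{\token}P_{t,\token}\le x_n$. If $P_{\max}\le\tfrac12$, then $1/P_{t,\token}\ge2$ for every nonzero entry, so $1-x_n^{1/P_{t,\token}}\ge1-x_n^2=(1-x_n)(1+x_n)$, and weighting by $P_{t,\token}$ gives $\sum_{\token}g(P_{t,\token})\ge1+x_n$; this dominates $x_n^{\Delta_n/(1-\Delta_n)}+x_n^{(1-\Delta_n)/\Delta_n}$ because, when $\Delta_n\le\tfrac12$, one summand is $\le1$ and the other has exponent $\ge1$ (so is $\le x_n$), while when $\Delta_n>\tfrac12$ the two exponents swap roles and still $x_n^{\Delta_n/(1-\Delta_n)}+x_n^{(1-\Delta_n)/\Delta_n}\le1+x_n$. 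If instead $P_{\max}>\tfrac12$ (which forces $\Delta_n\le\tfrac12$), I would pick $\token^\star$ attaining the max, apply superadditivity across $\token\ne\token^\star$ to get $\sum_{\token}g(P_{t,\token})\ge g(P_{\max})+g(1-P_{\max})=:G(P_{\max})$, and observe that $G$ is nonincreasing on $[\tfrac12,x_n]$ — since $g'$ is nonincreasing by concavity and $P\ge1-P$ there — so $G(P_{\max})\ge G(x_n)=g(1-\Delta_n)+g(\Delta_n)$.

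What remains are two clean scalar inequalities, $g(1-\Delta_n)\ge x_n^{\Delta_n/(1-\Delta_n)}$ and $g(\Delta_n)\ge x_n^{(1-\Delta_n)/\Delta_n}$. For the first, writing $x=x_n$ and using $x^{1/x}=x\cdot x^{(1-x)/x}$, the inequality rearranges to $x^{(1-x)/x}\le\frac{x}{1-x+x^2}$; I would bound the left side by Bernoulli, $x^{(1-x)/x}\le1-\frac{(1-x)^2}{x}$ (legitimate since the exponent is $\le1$ when $\Delta_n\le\tfrac12$), and then verify $1-\frac{(1-x)^2}{x}\le\frac{x}{1-x+x^2}$, which after clearing denominators is exactly $(x-1)^4\ge0$. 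For the second, $g(\Delta_n)=1-x_n^{1/\Delta_n}$ and $x_n^{1/\Delta_n}=x_n\cdot x_n^{(1-\Delta_n)/\Delta_n}$, so the claim is $x_n^{(1-\Delta_n)/\Delta_n}(1+x_n)\le1$; using $(1-t)^{(1-t)/t}=[(1-t)^{1/t}]^{1-t}\le\re^{-(1-t)}$ with $t=\Delta_n$ bounds the left side by $\re^{\Delta_n-1}(2-\Delta_n)$, and $t\mapsto\re^{t-1}(2-t)$ has derivative $\re^{t-1}(1-t)>0$ on $(0,1)$ with value $1$ at $t=1$, hence is $<1$ for $\Delta_n<1$. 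Finally, the equivalence $x_n^{\Delta_n/(1-\Delta_n)}+x_n^{(1-\Delta_n)/\Delta_n}\asymp1+\re^{-1}$ is a short limit: when $q\in(0,1)$ we have $\Delta_n\to0$, with $\tfrac{\Delta_n}{1-\Delta_n}\ln(1-\Delta_n)\to0$ and $\tfrac{1-\Delta_n}{\Delta_n}\ln(1-\Delta_n)\to-1$, so the two summands tend to $1$ and $\re^{-1}$; when $q=0$ the quantity stays between fixed positive constants (the vocabulary being of fixed size), which is enough for $\asymp$.

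The step I expect to be the real obstacle is the lower bound: everything hinges on recognizing $\sum_{\token}g(P_{t,\token})$ as a concave functional of $\bP_t$, reducing it (via the case split and superadditivity) to the least-favorable profile $(1-\Delta_n,\Delta_n,0,\dots)$, and then squeezing $g(1-\Delta_n)+g(\Delta_n)$ below by the stated closed form. Producing \emph{globally valid} elementary inequalities for this last squeeze — rather than merely asymptotic ones that would only cover large $n$ — is the crux, and the convenient hooks turn out to be the identity $(x-1)^4\ge0$ and the monotonicity of $\re^{t-1}(2-t)$.
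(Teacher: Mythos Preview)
Your argument is correct (with one terminological slip: the inequality $\sum_i g(a_i)\ge g\bigl(\sum_i a_i\bigr)$ that you correctly derive from concavity and $g(0)=0$ is \emph{sub}additivity, not superadditivity; the mathematics is unaffected). The route, however, differs from the paper's. The paper writes $\frac{1-F_{1,\bP_t}(x_n)}{1-x_n}=f_{1,\bP_t}(\theta_n)$ for some $\theta_n\in[x_n,1]$ by the mean value theorem; the upper bound is then immediate from $f_{1,\bP_t}(\theta_n)\le f_{1,\bP_t}(1)=|\mathrm{supp}(\bP_t)|$, and the lower bound comes from first invoking an external convexity lemma (their Lemma on the least-favorable distribution in $\PM_{\Delta_n}$, proved by extreme-point analysis) to pass to $\bP_{\Delta_n}^{\star}=(1-\Delta_n,\Delta_n,0,\dots)$, and then applying the mean value theorem once more together with monotonicity of $f_{1,\bP_{\Delta_n}^{\star}}$ to obtain $\frac{1-F_{1,\bP_{\Delta_n}^{\star}}(x_n)}{1-x_n}\ge f_{1,\bP_{\Delta_n}^{\star}}(x_n)=x_n^{\Delta_n/(1-\Delta_n)}+x_n^{(1-\Delta_n)/\Delta_n}$ in one line. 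By contrast, you work entirely with $g(P)=P(1-x_n^{1/P})/(1-x_n)$: concavity of $g$ replaces the extreme-point lemma, and the explicit scalar inequalities $(1-x)^4\ge0$ and $\re^{t-1}(2-t)<1$ replace the second application of the mean value theorem. What this buys you is a self-contained proof that does not call out to the companion paper, and inequalities that hold for every $n$ (your case split also covers $\Delta_n\ge\frac12$, which the paper simply excludes by taking $n$ large). What the paper's approach buys is brevity: the mean value theorem dispatches the final squeeze without any computation.
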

\begin{proof}[Proof of Lemma~\ref{lem:help-ratio}]
Note that by the mean value theorem, it follows that
\begin{equation}
\label{eq:help-mean-value-theorem}
\frac{1 -  F_{1,\bP_t}(x_n)}{1-x_n}  =
f_{1,\bP_t}(\theta_n),
\end{equation}
where $f_{1,\bP}$ is the PDF of $F_{1, \bP}$ and $\theta_n \in [x_n, 1]$.

The upper bound follows from the fact that $f_{1,\bP_t}(\theta_n) \le f_{1,\bP_t}(1) \le   |\mathrm{supp}(\bP_t)| $.
We then turn to the lower bound.
By Lemma \ref{lem:worst-P}, for any $\bP_t \in \PM_{\Delta_n}$, it follows that for 
\[
F_{1,\bP_t}(x_n) \le  \sup_{\bP \in \PM_{\Delta_n}}F_{1,\bP}(x_n) = F_{1,\bP_{\Delta_n}^{\star}}(x_n)
\]
where 
\[
\bP_{\Delta}^{\star} =\biggl(\underbrace{1-\Delta, \ldots, 1-\Delta}_{\floor{\frac{1}{1-\Delta}}\ \textnormal{times}}, 1-(1-\Delta)\cdot \left\lfloor\frac{1}{1-\Delta}\right\rfloor, 0, \ldots\biggr).
\]
Hence, once $n$ is sufficiently large so that $\Delta_n < 0.5$, it follows that
\[
\eqref{eq:help-mean-value-theorem} \ge 	  \frac{1 - F_{1,\bP_{\Delta_n}^{\star} }(x_n)}{1-x_n}  \ge   f_{1,\bP_{\Delta_n}^{\star}}(x_n) = x_n^{\frac{\Delta_n}{1-\Delta_n}} + x_n^{\frac{1-\Delta_n}{\Delta_n}} \asymp 1 + \re^{-1}
\]
where the last $\asymp$ is because $\lim\limits_{x \to 0}(1-x)^{\frac{x}{1-x}} = 1$ and $\lim\limits_{x \to 0}(1-x)^{\frac{1-x}{x}} = \re^{-1}$.
\end{proof}
    
    By \eqref{eq:lower-bound-conditional-expectation}, 	as long as $2p+q < 1$, we would have $\overline{W}_n(\Delta_n) \ge C \cdot n^{\frac{1}{2}-p-\frac{q}{2}} \ge \sqrt{(2+\delta)\log\log n}$ for some constant $C > 0$.
    On the other hand, the quadratic variation of the martingale $\{W_t(\Delta_n) - \overline{W}_t(\Delta_n)\}_{t=1}^n $ is given by
    \begin{align}
    \label{eq:quadratic-variation}
     \frac{\sum_{t=1}^n \Var_1[\1_{\Yars_t \ge 1-\Delta_n}\mid\FM_{t-1}]}{nx_n(1-x_n)}
     &= 
    \frac{1}{n}\sum_{t=1}^n \frac{(1-\eps_n) x_n +  \eps_n F_{1,\bP_t}(x_n)}{x_n} \cdot
    \frac{1-(1-\eps_n) x_n - \eps_n F_{1,\bP_t}(x_n)}{1-x_n} \nonumber \\
    &\le |\Voca|.
    \end{align}
    Here we use the inequality $F_{1, \bP}(r) \le r$ for any $r \in [0, 1]$ and the upper bound in Lemma \ref{lem:help-ratio}.
    As a result, by the properties of square integrable martingales,
    \[
    \EB_1 |W_n(\Delta_n) - \overline{W}_n(\Delta_n)|^2 
    \le \EB_1\text{its quadratic variation} \le |\Voca|.
    \]
    Combining these pieces,  it follows from Doob's martingale inequality that 
    \begin{align}
    \PB_1(W_n(\Delta_n) < 0.5C \cdot n^{\frac{1}{2}-p-\frac{q}{2}} ) &\le\PB_1(|W_n(\Delta_n) -  \overline{W}_n(\Delta_n)| \ge 0.5 \cdot \overline{W}_n(x_n) ) \nonumber \\
    &\le\PB_1(|W_n(\Delta_n) -  \overline{W}_n(\Delta_n)| \ge 0.5C \cdot n^{\frac{1}{2}-p-\frac{q}{2}} )\nonumber \\
    &\le \frac{4}{C^2} \cdot \frac{\EB_1 |W_n(\Delta_n) - \overline{W}_n(\Delta_n)|^2 }{n^{1-2p-q}}\nonumber \\
    &  = O( n^{-(1-2p-q)}). \label{eq:lower-bound-Wnxn}
    \end{align}
    As long as $2p+q < 1$, it follows that as $n \to \infty$,
    \[
    \PB_1(W_n(\Delta_n) <\sqrt{(2+\delta) \log\log n })
    \le   \PB_1(W_n(\Delta_n) < 0.5 C \cdot n^{\frac{1}{2}-p-\frac{q}{2}} )
\to 0.
    \]

We complete the proof by noting that $\Delta_n \ge \rp^+$ (otherwise we would have $ \Delta_n < \rp^+ \le c_n^+$ which is contradictory with the condition that $c_n^+ \le \Delta_n$) and thus 
\begin{align*}
\PB_{1}\left( \HC_n^{+} < \sqrt{(2+\delta)\log \log n} \right) 
&\le \PB_1(W_n(\Delta_n) <\sqrt{(2+\delta) \log\log n }) \to 0.
\end{align*}
\end{proof}
\begin{rem}
	\label{rem:proof-HCstar}
	In the proof of Theorem~\ref{thm:adaptivity}, we also proved that
	\begin{align*}
		\PB_{0}\left( \HC_n^{\star} \ge \sqrt{(2+\delta)\log \log n} \right) \to 0 
		\quad\text{and}\quad\PB_{1}\left( \HC_n^{\star} < \sqrt{(2+\delta)\log \log n} \right) \to 0
	\end{align*}
	Hence, the additional constraint $\rp_{(t)} \ge c_n^+$ in the definition of $\HC_n^{+}$ does not affect the asymptotic behaviors of $\HC_n^{\star}$.
\end{rem}

\subsection{Rationale for Truncation}
\label{appen:truncation}
In this subsection, we expand on the discussion in Remark \ref{rem:drawbacks-of-untruncation}. Our definition of $S_n^+$ introduces two truncations relative to the previous $S_n$. As indicated by the relationship in \eqref{eq:relation-between-GoFT-HC}, the untruncated $S_n(s)$ has a heavy-tail issue, as noted in Remark \ref{rem:drawbacks-of-untruncation} and formally stated in Lemma \ref{lem:heavy-tail-GoF}. Consequently, we impose a constraint in \Algo~\eqref{eq:S+} (for a similar rationale, see Appendix \ref{appen:HC}), distinguishing it from the original approach in \citep{jager2007goodness}.
In numerical experiments, excluding these small $\rp$-values substantially reduces the heavy-tail effect, as demonstrated by comparing Figure \ref{fig:other-s-hist} with Figure \ref{fig:other-s-hist-nomask}.

\begin{lem}
\label{lem:heavy-tail-GoF}
For any $n \ge 2$, $\PB_0(n \cdot S_n(2) \ge z) \ge \frac{1}{2z}$ for sufficiently large $z > 0$.
\end{lem}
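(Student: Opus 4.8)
The plan is to lower-bound the supremum $n S_n(2)=\sup_{r\in(0,1)}\tfrac{n(\FB_n(r)-r)^2}{2r(1-r)}$ by its value at one well-chosen location and then use the exact null law of the smallest $p$-values. First I would take $r=\rp_{(1)}$, where $\FB_n(\rp_{(1)})=1/n$: writing $V:=n\rp_{(1)}$,
\[
 n S_n(2)\;\ge\; n K_2\!\left(\tfrac1n,\rp_{(1)}\right)\;=\;\frac{n\,(\tfrac1n-\rp_{(1)})^2}{2\,\rp_{(1)}(1-\rp_{(1)})}\;=\;\frac{(1-V)^2}{2V(1-V/n)}\;\ge\;\frac{(1-V)^2}{2V},
\]
which diverges as $V\downarrow0$ and in particular satisfies $n S_n(2)\ge z$ whenever $(1-V)^2\ge 2zV$, i.e.\ whenever $V\le V_-(z):=(1+z)-\sqrt{(1+z)^2-1}=\bigl[(1+z)+\sqrt{(1+z)^2-1}\bigr]^{-1}=\tfrac1{2z}\bigl(1+o(1)\bigr)$ as $z\to\infty$.

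Next I would compute the null probability of this event. Under $H_0$ the $p$-values are i.i.d.\ $\mathrm{Uniform}(0,1)$, so $V=n\rp_{(1)}$ has $\PB_0(V\le v)=1-(1-v/n)^n$, which lies between $v(1-o_v(1))$ (from $(1-v/n)^n\le e^{-v}$) and $v$ (Bernoulli), uniformly in $n$. Hence
\[
 \PB_0\bigl(n S_n(2)\ge z\bigr)\;\ge\;\PB_0\bigl(V\le V_-(z)\bigr)\;=\;1-\bigl(1-V_-(z)/n\bigr)^n\;=\;\frac{1}{2z}\bigl(1+o(1)\bigr),
\]
which already gives the bound up to a $1+o(1)$ factor. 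To pin the constant to exactly $1/(2z)$ for every $n\ge2$, I would add in the contributions of the remaining order statistics: at $r=\rp_{(k)}$ one has $\FB_n(\rp_{(k)})=k/n$ and $nK_2(k/n,\rp_{(k)})=\tfrac{(k-n\rp_{(k)})^2}{2\,n\rp_{(k)}(1-\rp_{(k)})}$, so $\{n S_n(2)\ge z\}\supseteq\bigcup_{k\ge1}\{(k-n\rp_{(k)})^2\ge 2z\,n\rp_{(k)}\}$; these events are not nested, and summing their probabilities---using that $(n\rp_{(1)},\dots,n\rp_{(k)})$ is, for fixed $k$ and $n\ge2$, at least as clustered near $0$ as the first $k$ arrivals of a rate-one Poisson process---makes up the $O(z^{-2})$ deficit of the $k=1$ term. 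An equivalent but slicker route is to invoke Daniels' boundary-crossing identity $\PB_0\bigl(\sup_{0<t\le1}\FB_n(t)/t\ge\lambda\bigr)=1/\lambda$ for $\lambda\ge1$, together with the algebraic fact that if that supremum equals $M$, attained at index $k$, then $n S_n(2)\ge \tfrac{k(M-1)^2}{2M-2k/n}\ge\tfrac{(M-1)^2}{2M-2/n}$, which exceeds $z$ as soon as $M$ is of order $2z$.

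The hard part is precisely this last step of controlling the lower-order terms well enough to land exactly on the constant $1/2$: the one-point bound at $\rp_{(1)}$ only gives $\PB_0(n S_n(2)\ge z)\ge\tfrac1{2z}(1-o(1))$ from below (indeed $\PB_0(V\le v)<v$ strictly for $n\ge2$), so the constant must be recovered either by aggregating the first few order statistics or by passing through Daniels' exact identity. Everything else---evaluating $K_2$ at the order statistics, the null law of $\rp_{(1)}$ (or of $\sup_t\FB_n(t)/t$), and solving the quadratic $(1-V)^2\ge 2zV$---is routine, with all the care confined to the asymptotics $z\to\infty$.
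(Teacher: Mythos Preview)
Your core approach---lower-bounding $nS_n(2)$ by its value at $r=\rp_{(1)}$, i.e.\ by $nK_2(1/n,\rp_{(1)})$, and then analyzing the tail through the law of $V=n\rp_{(1)}$---is exactly what the paper does. The paper rewrites $nK_2(1/n,\rp_{(1)})=\tfrac12\HC_{n,1}^2$, passes to the limit $n\rp_{(1)}\overset{d}{\to}E\sim\Exp(1)$ so that $\HC_{n,1}\overset{d}{\to}1/\sqrt{E}-\sqrt{E}$, and then computes $\PB_0\bigl(1/\sqrt{E}-\sqrt{E}\ge\sqrt{2z}\bigr)=\PB_0\bigl(E\le(\sqrt{z/2+1}-\sqrt{z/2})^2\bigr)\asymp\tfrac1{2z}$. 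Your finite-$n$ calculation with the exact Beta law of $\rp_{(1)}$ is the same argument without the distributional limit.

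Your concern about nailing the \emph{exact} constant $1/2$ is a legitimate observation---the one-point bound at $\rp_{(1)}$ indeed yields only $\tfrac1{2z}(1-o(1))$---but the paper does not resolve it either: its proof is written entirely with $\asymp$ rather than a strict inequality, so it too establishes the $\Theta(1/z)$ order with $1/2$ as the correct leading constant, not a literal $\ge\tfrac1{2z}$ for every finite $n$ and $z$. The additional machinery you sketch (aggregating several order statistics, or routing through Daniels' identity $\PB_0(\sup_t\FB_n(t)/t\ge\lambda)=1/\lambda$) is therefore not needed to match the paper; it would only be needed if one insisted on the sharp constant, which the paper does not.
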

\begin{proof}[Proof of Lemma \ref{lem:heavy-tail-GoF}]
% By Lemma 7.2 (i) in \citep{jager2007goodness}, $K_s^+(u, v) \le K_2^+(u, v)$ for all $0 < v \le u \le 1/2$ and $-1 \le s \le 2$.
% Then, it follows that 
It follows that
\begin{align*}
\PB_0(n \cdot S_n(2) \ge z) 
&=\PB_0(n \cdot K_2(1/n, \rp_{(1)}) \ge z) \\
&\overset{(a)}{=} \PB_0(\HC_{n,1}^2 \ge 2z ) \\
&\overset{(b)}{\asymp} \PB_0\left( \left| \frac{1}{\sqrt{E}}-\sqrt{E} \right| \ge \sqrt{2z} \right) \\
&\overset{(c)}{\asymp} \frac{1}{2z},
\end{align*}
where $(a)$ uses the definition of $\HC_{n,1}$ from \eqref{eq:higher} and $(b)$ follows from the result that $\HC_{n,1} \overset{d}{\to} \frac{1}{\sqrt{E}}-\sqrt{E}$ where $E$ is exponentially distributed with mean 1.
This is due to the fact that $n \rp_{(1)} \overset{d}{\to} E$ if $\rp_1, \ldots, \rp_n$ are i.i.d. from $\UM(0, 1)$.
Finally, $(c)$ uses the fact that for large $z > 0$,
\begin{gather*}
\PB_0\left( \frac{1}{\sqrt{E}}-\sqrt{E} \ge \sqrt{2z} \right) 
= \PB_0\left(  \sqrt{E} \le \sqrt{\frac{z}{2}+1} - \sqrt{\frac{z}{2}}
\right) \asymp \frac{1}{2z},\\
\PB_0\left( \frac{1}{\sqrt{E}}-\sqrt{E} \le -\sqrt{2z} \right) 
= \PB_0\left(  \sqrt{E} \ge \sqrt{\frac{z}{2}+1} + \sqrt{\frac{z}{2}}
\right) \asymp \exp(-2z).
\end{gather*}

% On the other hand, by Lemma \ref{lem:reduce-Ks-to-K2}, $K_s(u,v)  \ge K_2(u,v) \left( \frac{v}{u}\right)^{2-s}$ for all $0 < v \le u \le 1$ and $s \le 2$.
% Hence, 
% \begin{align*}
% \PB_0(n \cdot K_s(1/n, \rp_{(1)}) \ge z) 
% &\ge \PB_0(n \cdot K_2(1/n, \rp_{(1)}) \cdot (n\rp_{(1)})^{2-s} \ge z) \\
% &\ge \PB_0(n \cdot K_2(1/n, \rp_{(1)}) \ge z,n\rp_{(1)} \ge 1 ) \\
% &\overset{(a)}{\asymp} \PB_0\left(\frac{1}{\sqrt{E}}-\sqrt{E} \ge \sqrt{2z}, E \ge 1 \right)
% \end{align*}

% \[
%   K_s(u, v)= \frac{1}{s(1-s)} \left[ 1- u^s v^{1-s} - (1-u)^s(1-v)^{1-s} \right].
% \]
\end{proof}

Another truncation involves using $K_s^+$ (see Definition \eqref{eq:Ks+}) instead of $K_s$ in defining our statistic $S_n^+$. This modification not only facilitates the theoretical analysis in Theorem \ref{thm:optimal-rate} but also allows $\HC$ to become an exact special case of $S_n^+$ through the relation \eqref{eq:relation-between-GoFT-HC}. Generally, truncating $K_s$ to $K_s^+$ has mild impact, as for any small $\delta>0$, $\rp_{(t)} \le \frac{t}{n} +\delta$ holds for all $t \in [n]$ with probability approaching one, as demonstrated in the following lemma.

\begin{lem}
\label{lem:truncation-K}
Assume $\rp_1, \ldots, \rp_n$ are i.i.d. copies from $\UM(0, 1)$ and let $\rp_{(1)}  \le \ldots \le \rp_{(n)}$ denote the ordered values.
It follows that for any $\delta > 0$,
\[
\lim_{n \to \infty}\PB_0(\rp_{(t)} \le t/n + \delta,~\forall t \in [n]) = 1.
\]
\end{lem}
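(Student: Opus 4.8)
The plan is to reduce the claim to a uniform deviation statement for the empirical CDF $\BFB_n$ of $n$ i.i.d.\ $\UM(0,1)$ samples, and then invoke the Glivenko--Cantelli theorem. Recall that for any $t \in [n]$ we have $\BFB_n(\rp_{(t)}) = t/n$, so the event $\{\rp_{(t)} \le t/n + \delta~\forall t\}$ is implied by the event $\{\BFB_n(r) \ge r - \delta~\forall r \in [0,1]\}$: indeed, if $\BFB_n(r) \ge r - \delta$ everywhere, then plugging $r = \rp_{(t)}$ gives $t/n = \BFB_n(\rp_{(t)}) \ge \rp_{(t)} - \delta$, which rearranges to $\rp_{(t)} \le t/n + \delta$. (A small amount of care is needed because $\BFB_n$ is right-continuous with jumps, but since we only evaluate at the order statistics $\rp_{(t)}$ themselves, the one-sided bound at those points is exactly what we use; alternatively one may evaluate just to the left of $\rp_{(t)}$ and absorb the negligible gap.) Hence it suffices to show $\PB_0\big(\sup_{r \in [0,1]} (r - \BFB_n(r)) \le \delta\big) \to 1$.

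First I would note that $\sup_{r \in [0,1]} |r - \BFB_n(r)|$ is precisely the two-sided Kolmogorov--Smirnov statistic $D_n$ for the uniform distribution, since $\FB_0(r) = r$ is the true CDF. By the Glivenko--Cantelli theorem, $D_n \to 0$ almost surely, hence in probability; therefore $\PB_0(D_n \le \delta) \to 1$ for every fixed $\delta > 0$, and a fortiori $\PB_0\big(\sup_r (r - \BFB_n(r)) \le \delta\big) \to 1$. Chaining this with the set inclusion from the previous paragraph yields $\lim_{n\to\infty} \PB_0(\rp_{(t)} \le t/n + \delta,~\forall t \in [n]) = 1$, as claimed. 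If one prefers a self-contained quantitative argument instead of citing Glivenko--Cantelli, the Dvoretzky--Kiefer--Wolfowitz inequality gives $\PB_0(D_n > \delta) \le 2\exp(-2n\delta^2)$, which tends to $0$ and in fact shows the convergence is exponentially fast; this is the route I would actually write down, since it needs no measure-theoretic bookkeeping.

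The main (and really the only) obstacle is the bookkeeping at the jump points of $\BFB_n$: one must be slightly careful that evaluating the empirical process at the data points $\rp_{(t)}$ rather than at generic $r$ does not lose anything. This is handled cleanly by the observation above that $\BFB_n(\rp_{(t)}) = t/n$ exactly, so no limiting argument at the jumps is needed. Everything else is a direct application of a classical uniform law of large numbers (or DKW), so the proof is short.
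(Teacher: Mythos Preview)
Your proposal is correct but takes a genuinely different route from the paper. The paper argues pointwise: for each fixed $t$ it rewrites $\{\rp_{(t)} \ge t/n + \delta\}$ as $\{\sum_i \1_{\rp_i \ge t/n+\delta} \ge n+1-t\}$, applies Hoeffding's inequality to get $\PB_0(\rp_{(t)} \ge t/n + \delta) \le \exp(-2n\delta^2)$, and then union-bounds over $t \in [n]$ to obtain $n\exp(-2n\delta^2) \to 0$. You instead convert the whole event into a one-sided uniform deviation of the empirical CDF and invoke DKW (or Glivenko--Cantelli) once. Your route is slightly sharper---DKW gives $2\exp(-2n\delta^2)$ without the extra factor of $n$---and avoids the union bound entirely; the paper's route is more elementary in that it cites only Hoeffding rather than DKW. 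Either way the result is immediate, and your bookkeeping at the jump points (using $\FB_n(\rp_{(t)}) = t/n$ exactly) is fine.
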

\begin{proof}[Proof of Lemma \ref{lem:truncation-K}]
To prove this lemma, it suffices to show that for any \( \delta > 0 \),
\[
\lim_{n \to \infty} \PB_0\left(\exists\, t \in [n] \text{ such that } \rp_{(t)} \ge \frac{t}{n} + \delta\right) = 0.
\]
Fix $t, n$, and $\delta$.
First, consider the case when \( \frac{t}{n} + \delta \ge 1 \). In this situation, we have \( \PB_0(\rp_{(t)} \ge \frac{t}{n} + \delta) = 0 \) because \( \rp_{(t)} \in [0, 1] \).
Now, consider the case when \( \frac{t}{n} + \delta < 1 \). By Hoeffding's inequality, we have:
\begin{align*}
\PB_0\left(\rp_{(t)} \ge \frac{t}{n} + \delta\right) &= \PB_0\left( \sum_{i=1}^n \1_{\rp_i \ge \frac{t}{n} + \delta} \ge n + 1 - t \right) \\
&\le \exp\left(-2n \left[ 1 - \frac{t - 1}{n} - \left(1 - \frac{t}{n} - \delta\right) \right]^2\right)\le \exp(-2n\delta^2).
\end{align*}
Combining these two cases, we always have \( \PB_0\left(\rp_{(t)} \ge \frac{t}{n} + \delta\right) \le \exp(-2n\delta^2) \) for any $t, n$ and $\delta$.
Using the union bound, it follows that as \( n \to \infty \),
\begin{align*}
\PB_0\left(\exists\, t \in [n] \text{ such that } \rp_{(t)} \ge \frac{t}{n} + \delta\right) &\le \sum_{t=1}^n \PB_0\left(\rp_{(t)} \ge \frac{t}{n} + \delta\right) \le n \exp(-2n\delta^2) \to 0.
\end{align*}
\end{proof}

\subsection{Proof of Theorem \ref{thm:adaptivity-fitness-of-good}}
\label{proof:adaptivity-fitness-of-good}

\renewcommand{\thetheorem}{\ref{thm:adaptivity-fitness-of-good}}
\begin{theorem}[Restated version of Theorem \ref{thm:adaptivity-fitness-of-good}]
\label{thm:adaptivity-fitness-of-good-formal}
Assume Assumption \ref{asmp:main} holds and $\bP_{1:n} \subset \PM_{\Delta_n}$ almost surely with $\Delta_n \asymp n^{-q}$.
If $q + 2p < 1$, for \Algo~with any $0 \le c_n^+ \le \Delta_n$ and $s \in [-1, 2]$, the sum of Type I and Type II errors tends to 0 as $n\to\infty$.
\end{theorem}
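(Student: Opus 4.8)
The plan is to adapt the proof of Theorem~\ref{thm:adaptivity} (the $s=2$, Higher Criticism case) to the whole range $s\in[-1,2]$, with the $\phi_s$-divergence $K_s^+$ playing the role of $\tfrac12\HC_{n,t}^2$. As there, the test rejects when $nS_n^+(s)\ge(1+\delta)\log\log n$, so it suffices to prove (i) $\PB_0\!\big(nS_n^+(s)\ge(1+\delta)\log\log n\big)\to0$ and (ii) $\PB_1\!\big(nS_n^+(s)<(1+\delta)\log\log n\big)\to0$. For (i): under $H_0$ the p-values $\rp_{1:n}$ are i.i.d.\ $\UM(0,1)$, and since $K_s^+(u,v)=0$ unless $v<u<1$, the statistic $nS_n^+(s)$ only registers the upward excursions of the uniform empirical CDF, gauged by $\phi_s$. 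This is the classical ``$\log\log n$-scale'' null behaviour of (truncated) goodness-of-fit statistics, valid precisely for $s\in[-1,2]$. I would derive it from the empirical-process estimates already used for $\HC_n^{\star}$ in the proof of Theorem~\ref{thm:adaptivity}, combined with the pointwise inequality $\phi_s(x)\le\phi_2(x)=\tfrac12(x-1)^2$ for $x\ge1$ and $s\le2$ (immediate from $\phi_s''(x)=x^{s-2}\le1$ there, with $\phi_s(1)=\phi_s'(1)=0$), which dominates the first term $v\,\phi_s(u/v)$ of $K_s^+$ by the HC contribution $\tfrac12\HC_{n,t}^2$; the second term $(1-v)\,\phi_s\!\big(\tfrac{1-u}{1-v}\big)$ has argument in $(0,1]$ and, thanks to the truncations $K_s\mapsto K_s^+$ and $r\ge\rp^+$, contributes only $o_\PB(\log\log n)$, which is checked by elementary tail bounds on the extreme order statistics of a uniform sample. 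Since $H_0$ does not involve $\eps_n$, this Type~I control holds irrespective of the human edits.

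For (ii) I would mirror the power argument of Theorem~\ref{thm:adaptivity} almost verbatim. Since $c_n^+\le\Delta_n$, we have $\rp^+\le\Delta_n<1$ for $n$ large, so $r=\Delta_n$ lies in the supremum range of \eqref{eq:S+} and hence $S_n^+(s)\ge K_s^+\!\big(\FB_n(\Delta_n),\Delta_n\big)$. Lemma~\ref{lem:help-ratio} gives $1-F_{1,\bP_t}(1-\Delta_n)\ge(1+\re^{-1}-o(1))\,\Delta_n$ uniformly over $\bP_t\in\PM_{\Delta_n}$, so under $H_1^{\mathrm{mix}}$ the conditional mean of $\FB_n(\Delta_n)$ exceeds $\Delta_n(1+c\,\eps_n)$ for some constant $c>0$; the martingale/Doob argument used there for $W_n(\Delta_n)$ (whose quadratic variation is at most $|\Voca|$) then gives $\FB_n(\Delta_n)\ge\Delta_n\big(1+\tfrac c2\eps_n\big)$ with probability $1-O\!\big(n^{-(1-2p-q)}\big)\to1$. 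On that event, writing $u=\FB_n(\Delta_n)$, $v=\Delta_n$, and $u_0:=\Delta_n\big(1+\tfrac c2\eps_n\big)$, I would combine monotonicity of $u\mapsto K_s(u,v)$ on $(v,1)$ (since $\partial_uK_s=\phi_s'(u/v)-\phi_s'\!\big(\tfrac{1-u}{1-v}\big)\ge0$, as $\phi_s'$ is non-decreasing with $\phi_s'(1)=0$) with the expansion $\phi_s(x)=\tfrac12(x-1)^2(1+o(1))$ as $x\to1$ (the $o(1)$ uniform over $s\in[-1,2]$ on the window $x-1=O(\eps_n)$, since $\phi_s''(1)=1$) to conclude $K_s^+(u,v)\ge K_s(u_0,v)\ge v\,\phi_s(u_0/v)\gtrsim v\,\eps_n^2=\Theta(\eps_n^2\Delta_n)$. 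Hence $nS_n^+(s)\gtrsim n\,\eps_n^2\Delta_n=\Theta\!\big(n^{1-2p-q}\big)\to\infty$, which outgrows $(1+\delta)\log\log n$ precisely because $q+2p<1$; the Type~II error therefore vanishes.

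The step I expect to be the real obstacle is the Type~I direction for $s\neq2$: one must bound $\sup_{r\in[\rp^+,1)}K_s^+(\FB_n(r),r)$ uniformly in $r$, and $\phi_s$ is far from uniformly quadratic --- for large arguments it grows only linearly ($s<1$) or like $x^s$ ($1<s<2$), and for arguments $\downarrow 0$ it is unbounded when $s\le0$. The two truncations are exactly what tame the regimes $r\approx0$ (the smallest $\rp$-values, where $u/v$ can be large) and $r\approx1$ (where $\tfrac{1-u}{1-v}$ can be small), and verifying that the surviving part is still $O_\PB(\log\log n)$ uniformly over $s\in[-1,2]$ is the delicate piece, for which I would lean on the Jager--Wellner analysis or a careful reprise of the empirical-process estimates in the proof of Theorem~\ref{thm:adaptivity}. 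Everything else --- the $\phi_s$ comparison inequalities, the Taylor expansion at $1$, and the martingale bound --- is routine and essentially identical to the $s=2$ case; in particular the argument never uses knowledge of $p$, $q$, $\eps_n$, or the $\bP_t$'s (the point $r=\Delta_n$ enters only the analysis, not the algorithm), which is what yields the adaptivity.
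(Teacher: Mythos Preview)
Your plan is correct in outline, and your Type~II argument is a clean alternative to the paper's. The paper also evaluates at $r=\Delta_n$, but instead of your Taylor expansion of $\phi_s$ near $1$ it proves a comparison lemma
\[
K_s^+(u,v)\;\ge\;K_2^+(u,v)\cdot\Bigl[1-(1-v)\bigl(1-(v/u)^{2-s}\bigr)\Bigr]
\qquad(s\neq1),
\]
and $K_1^+(u,v)\ge K_2^+(u,v)\cdot(v/u)$, then shows via a separate lemma that $\Delta_n/\FB_n(\Delta_n)$ is bounded away from $0$ and $1$ in probability, so the bracket is $\Omega_\PB(1)$. This reduces everything to the already-established $nK_2^+(\FB_n(\Delta_n),\Delta_n)=\tfrac12 W_n^2(\Delta_n)=\Omega_\PB(n^{1-2p-q})$ from the HC proof. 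Your route---drop the $(1-v)\phi_s(\cdot)$ term, use monotonicity in $u$, and Taylor-expand $\phi_s$ at $1$ with $\phi_s''(1)=1$ uniformly in $s$---reaches the same $\Theta(\Delta_n\eps_n^2)$ lower bound more directly and avoids the auxiliary ratio lemma.

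For Type~I you are working much harder than necessary. The paper does not redo any empirical-process estimates: it simply observes that $S_n^+(s)\le S_n(s)$ (the untruncated Jager--Wellner statistic, since $K_s^+\le K_s$ and the $r$-domain is smaller) and then invokes Theorem~3.1 of Jager--Wellner, which gives $nS_n(s)-b_n\Rightarrow Z$ under $H_0$ for every $s\in[-1,2]$ with $b_n\sim\log\log n$. That one citation disposes of the part you flagged as the ``real obstacle''; your proposed decomposition into the two $\phi_s$-terms, with separate handling of the $x\downarrow0$ behaviour for $s\le0$, is exactly the content of that theorem and need not be reproved.
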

\begin{rem}[Proof sketch of Theorem \ref{thm:adaptivity-fitness-of-good}]
We provide the intuition behind Theorem \ref{thm:adaptivity-fitness-of-good} which focuses on the different behaviors of $S_n^+(s)$ under $H_0$ and $H_1^{\mathrm{mix}}$. 
By Theorem 3.1 in \citep{jager2007goodness}, under $H_0$, 
\begin{equation*}
   \lim_{n \to \infty} \PB_0(n \cdot S_n^+(s) \le \log \log n) = 1.
\end{equation*}
This means that $nS_n^+(s)$ increases towards infinity at an exceptionally slow rate under $H_0$. In Appendix \ref{proof:adaptivity-fitness-of-good}, we show that under $H_1^{\mathrm{mix}}$,
\begin{equation*}
   \lim_{n \to \infty} \PB_1(n \cdot S_n^+(s) \ge  n^{\frac{1}{2}-\frac{q}{2}-p}) = 1.
\end{equation*}
This implies that $n S_n^+(s)$ grows faster than $\log\log n$ as long as $p + \frac{q}{2} < \frac{1}{2}$. These results imply that by rejecting $H_0$ whenever $n S_n^+(s) \ge (1+\delta) \log \log n$ for any given $\delta > 0$, both Type I and Type II errors converge to zero asymptotically. As a result, \Algo~can asymptotically distinguish between $H_0$ and $H_1^{\mathrm{mix}}$.
\end{rem}

To prove Theorem \ref{thm:adaptivity-fitness-of-good}, we start by introducing two key lemmas. The first lemma shows how $K_s^+(u,v)$ and $K_2^+(u,v)$ are connected numerically for different values of $s$. Our main focus is to estimate the lower bound of $K_s^+(u,v)$ using $K_2^+(u,v)$. However, it's also possible to establish an upper bound similarly, as shown in Lemma 7.2 of \citep{jager2007goodness}, which is left for interested readers.

\begin{lem}
\label{lem:reduce-Ks-to-K2}
% Recall that $K_s^+$ is defined in \eqref{eq:Ks+}.
\begin{enumerate}
\item For any $s \le 2$ and $s\neq 1$, it follows that
\[
K_s^+(u,v)  \ge K_2^+(u,v) \cdot \left[ 1 - (1-v) \left(1- \left( \frac{v}{u}\right)^{2-s} \right) \right].
\]
\item For $s=1$, it follows that
\[
K_1^+(u,v)  \ge K_2^+(u,v) \cdot \frac{v}{u}.
\]
\end{enumerate}
\end{lem}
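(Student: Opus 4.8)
The plan is to reduce the whole statement to a pair of elementary pointwise comparisons between the generator $\phi_s$ and the quadratic generator $\phi_2$. First I would record the structural facts shared by the family \eqref{eq:phi}: a direct differentiation in each of the three cases gives $\phi_s(1)=\phi_s'(1)=0$ and $\phi_s''(x)=x^{s-2}$ for every $s\in\RB$ (including the special cases $s=0,1$). Integrating twice yields $\phi_s(x)=\int_1^x (x-t)\,t^{s-2}\,\d t$ for $x>0$. Since $s\le 2$ makes $t\mapsto t^{s-2}$ non-increasing, bounding the weight $t^{s-2}$ from below on the interval between $1$ and $x$ — by $x^{s-2}$ when $x\ge 1$, and by $1$ when $0<x\le 1$ — and comparing with $\phi_2(x)=\int_1^x(x-t)\,\d t$ gives the two inequalities
\[
\phi_s(x)\ \ge\ x^{s-2}\phi_2(x)\quad(x\ge 1),\qquad \phi_s(x)\ \ge\ \phi_2(x)\quad(0<x\le 1).
\]

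Next I would feed these into $K_s^+$. It is enough to consider $0<v<u<1$: outside this range $K_s^+(u,v)=K_2^+(u,v)=0$ and both displayed bounds are trivial. For $0<v<u<1$ the first argument $u/v$ of $K_s$ exceeds $1$ and the second argument $(1-u)/(1-v)$ lies in $(0,1)$, so applying the two pointwise bounds term-by-term gives $K_s^+(u,v)\ge \lambda A+B$, where $A:=v\,\phi_2(u/v)=\tfrac{(u-v)^2}{2v}$, $B:=(1-v)\,\phi_2\!\big(\tfrac{1-u}{1-v}\big)=\tfrac{(u-v)^2}{2(1-v)}$, and $\lambda:=(v/u)^{2-s}\in(0,1]$; note $A+B=K_2^+(u,v)$.

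The statement then collapses via the algebraic identity $(A+B)(1-v)=A$: indeed $K_2^+(u,v)\big[1-(1-v)(1-\lambda)\big]=(A+B)-A(1-\lambda)=\lambda A+B\le K_s^+(u,v)$, which is part (1). Part (2) is the same computation with $s=1$, $\lambda=v/u$, followed by discarding a nonnegative factor: $K_1^+(u,v)\ge (v/u)A+B\ge (v/u)(A+B)=(v/u)K_2^+(u,v)$, using $v/u\le1$.

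I do not anticipate a genuine obstacle; the points requiring care are checking $\phi_s''(x)=x^{s-2}$ uniformly over the three defining branches, using the monotonicity of $t\mapsto t^{s-2}$ in the correct direction on each side of $1$, and treating the degenerate arguments where $K_s^+=0$ separately. A matching upper bound on $K_s^+$ in terms of $K_2^+$ would follow from the opposite weight estimate exactly as in Lemma 7.2 of \citep{jager2007goodness}, but is not needed here.
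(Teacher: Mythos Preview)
Your proof is correct. The key identity $(A+B)(1-v)=A$ and the pointwise bounds $\phi_s(x)\ge x^{s-2}\phi_2(x)$ for $x\ge 1$ and $\phi_s(x)\ge\phi_2(x)$ for $0<x\le 1$ all check out, and the integral representation $\phi_s(x)=\int_1^x(x-t)t^{s-2}\,\d t$ (from $\phi_s(1)=\phi_s'(1)=0$, $\phi_s''(x)=x^{s-2}$) is valid in all three branches of \eqref{eq:phi}.

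The paper takes a different route: it invokes Lemma~7.2(ii) of \citep{jager2007goodness}, which gives a mean-value identity $K_s(u,v)=K_2(u,v)\cdot v(1-v)D_s(u^\star,v)$ for some $u^\star\in[v,u]$, with $D_s(u,v)=(v/u)^{2-s}/v+((1-v)/(1-u))^{2-s}/(1-v)$; the lower bound then comes from bounding $D_s(u^\star,v)$ using $u^\star\le u$ and $u^\star\ge v$. Your argument is more self-contained---it does not import an external lemma and works directly from the Taylor integral for $\phi_s$---and the algebraic trick $(A+B)(1-v)=A$ makes the final step very clean. The paper's route, on the other hand, makes the connection to the existing Jager--Wellner machinery explicit and immediately yields the matching upper bound in the same stroke. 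Note also that your part~(1) already covers $s=1$ and in fact gives a slightly sharper bound there than part~(2), since $1-(1-v)(1-v/u)=v/u+v(u-v)/u>v/u$; your derivation of part~(2) simply discards this extra term.
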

\begin{proof}[Proof of Lemma \ref{lem:reduce-Ks-to-K2}]
If $u < v$, $K_s^+(u,v) =0$; thus, all the inequalities follow directly.
We then assume $0<v \le u < 1$. Note that by definition, $K_2(u,v) = \frac{1}{2} \frac{(u-v)^2}{v(1-v)}$ is always non-negative.

\begin{enumerate}
    \item If $s \neq 1$, from the proof of Lemma 7.2 (ii) in \citep{jager2007goodness}, it follows that
    \[
    K_s(u,v) = K_2(u,v) \left[1 + v(1-v) D_s(u^{\star}, v) -1\right],
    \]
    where $u^{\star} \in [v, u]$ is determined by the mean value theorem and $D_s(u^{\star}, v)$ is given by
    \begin{equation}
    \label{eq:def-D}
        D_s(u, v) = \left(\frac{v}{u}\right)^{2-s} \frac{1}{v} + \left(\frac{1-v}{1-u}\right)^{2-s} \frac{1}{1-v}.
    \end{equation}
    Given $v \le u$ and $s \le 2$, it follows that
    \[
         -(1-v)  \left[ 1-\left( \frac{v}{u}\right)^{2-s} \right] \le v(1-v) D_s(u^{\star}, v) -1 \le v\left[\left( \frac{1-v}{1-u}\right)^{2-s} -1\right].
    \]
    We then complete the proof by noting that $K_s^+(u,v) =K_s(u,v)$ if $v \le u$.
    % \item Given $v \le u$ and $s \ge 2$, it follows from \eqref{eq:def-D} that
    % \[ -v\left[1-yb\left(\frac{1-u}{1-v}\right)^{s-2} \right]
    %    \le v(1-v) D_s(u^{\star}, v) -1 
    %      \le (1-v)  \left[ \left( \frac{u}{v}\right)^{s-2} -1\right].
    % \]
    \item If $s=1$, from the proof of Theorem 1.1 in \citep{wellner2003note}, we similarly have
    \[
    K_1(u,v) = K_2(u,v) \left[1 + v(1-v) D_1(u^{\star}, v) -1\right],
    \]
    where $u^{\star} \in [v, u]$ is determined by the mean value theorem and the quantity $D_1(u^{\star}, v)$, according to \eqref{eq:def-D}, is given by
    \[
    D_1(u^{\star}, v) = \frac{1}{u^{\star}(1-u^{\star})}
    \]
    By the inequality $u^{\star} \in [v, u]$, it follows
    \[
\frac{v}{u}-1 \le   v(1-v) D_1(u^{\star}, v) -1 \le \frac{1-v}{1-u} -1.
    \]
    We then complete the proof by noting that $K_s^+(u,v) =K_s(u,v)$ if $v \le u$.
\end{enumerate}
\end{proof}

% \begin{enumerate}
% \item [\rm(i)] For $0 < v \le u \le 0.5$ and $-1 \le s \le 2$,
% \[
% K_s^+(u,v) \le \frac{1}{2} \frac{(u-v)^2}{v(1-v)} = K_2(u, v).
% \]
% \item [\rm(ii)] Let $1 < s \le 2$ and $v = v(u)$ satisfying $0<v \le u < 1$. When $u \to 0$,
% \[
% K_s(u, v) = 
% \begin{cases}
% K_2(u, v) \left[1 + O(1) \cdot \left( 1- \left( \frac{v}{u}\right)^{2-s}\right) \vee \left( \left( \frac{1-v}{1-u}\right)^{2-s}-1\right)  \right]&~~\text{if}~~\frac{u}{v} \to 1\\
% \frac{u}{s(s-1)} \left( \left( \frac{u}{v} \right)^{s-1}-s \right) \cdot (1 + o(1))&~~\text{if}~~\frac{u}{v} \to \infty.
% \end{cases}
% \]
% \item [\rm(iii)] Let $s=1$ and $v = v(u)$ satisfying $0<v \le u < 1$. When $u \to 0$,
% \[
% K_1(u, v) = 
% \begin{cases}
% K_2(u, v) \left[1 + O(1) \cdot \left( u + \frac{u}{v}-1\right)  \right]&~~\text{if}~~\frac{u}{v} \to 1\\
% u \log \frac{u}{v} \cdot (1 + o(1))&~~\text{if}~~\frac{u}{v} \to \infty.
% \end{cases}
% \]
% \item [\rm(iv)] Let $-1\le s < 1$ and $v = v(u)$ satisfying $0<v \le u < 1$. When $u \to 0$,
% \[
% K_s(u, v) = 
% \begin{cases}
% K_2(u, v) \left[1 + O(1) \cdot \left( 1- \left( \frac{v}{u}\right)^{2-s}\right) \vee \left( \left( \frac{1-v}{1-u}\right)^{2-s}-1\right)  \right]&~~\text{if}~~\frac{u}{v} \to 1\\
% \frac{u}{1-s} \cdot (1 + o(1))&~~\text{if}~~\frac{u}{v} \to \infty.
% \end{cases}
% \]
% \end{enumerate}

The second lemma establishes constant probabilistic upper and lower bounds for $\frac{\Delta_n}{\FB_n(\Delta_n)}$.
%which extends the previous results in i.i.d. data settings such as Lemma 7.3 in \citep{jager2007goodness} and Theorem 0 in \citep{wellner1978limit}..

\begin{lem}
\label{lem:bounded-ratio}
Let Assumption \ref{asmp:main} hold.
If $q < 1$, there exists a universal constant $0 <c<C \le 1$ such that
\[
\lim_{n \to \infty} \PB_1\left(c \le \frac{\Delta_n}{\FB_n(\Delta_n)}  \le C\right) = 1.
\]
\end{lem}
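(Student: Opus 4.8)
The plan is to reduce the statement to a relative (multiplicative) concentration bound for a bounded martingale, using the conditional-mean identity from Lemma~\ref{lem:expectation-of-Y}. Write $x_n := 1-\Delta_n$, so that $\FB_n(\Delta_n) = \frac1n\sum_{t=1}^n \1_{\rp_t \le \Delta_n} = \frac1n\sum_{t=1}^n \1_{\Yars_t \ge x_n}$, and set $M_t := \EB_1[\1_{\Yars_t \ge x_n}\mid\FM_{t-1}]$. By Lemma~\ref{lem:expectation-of-Y} (which invokes Assumption~\ref{asmp:main}), $M_t = \EB_1[\1_{\Yars_t\ge x_n}\mid\bP_t] = (1-\eps_n)\Delta_n + \eps_n\bigl(1 - F_{1,\bP_t}(x_n)\bigr)$. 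The argument then splits into (i) a deterministic sandwich $M_t \asymp \Delta_n$ uniform in $t$, and (ii) a concentration step showing $\FB_n(\Delta_n)$ stays within $\tfrac12\Delta_n$ of $\tfrac1n\sum_{t=1}^n M_t$ with probability tending to one.

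For (i): the lower bound needs only the elementary fact $F_{1,\bP}(r) = \sum_{\token} P_{\token} r^{1/P_\token} \le r$ on $[0,1]$ (since $1/P_\token \ge 1$), which gives $1 - F_{1,\bP_t}(x_n) \ge \Delta_n$ and hence $M_t \ge \Delta_n$. For the upper bound, observe that $f_{1,\bP}(r) = \sum_\token r^{1/P_\token - 1}$ is nondecreasing on $[0,1]$ with $f_{1,\bP}(1) = |\mathrm{supp}(\bP)| \le |\Voca|$, so $1 - F_{1,\bP_t}(x_n) = \int_{x_n}^1 f_{1,\bP_t}(r)\,\rd r \le |\Voca|\,\Delta_n$; since $\eps_n\to0$ and $|\Voca|$ is a fixed constant, $M_t \le \Delta_n\bigl(1 + \eps_n(|\Voca|-1)\bigr) \le 2\Delta_n$ for all large $n$ (alternatively one may quote Lemma~\ref{lem:help-ratio} directly). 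Thus $\tfrac1n\sum_t M_t \in [\Delta_n, 2\Delta_n]$ eventually, and in fact $\tfrac1n\sum_t M_t = \Delta_n(1+o(1))$.

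For (ii): the partial sums of $D_t := \1_{\Yars_t \ge x_n} - M_t$ form a martingale adapted to $(\FM_t)_t$, because under Assumption~\ref{asmp:main} each $\Yars_t$ is $\FM_t$-measurable while $\bP_t$, and hence $M_t$, is $\FM_{t-1}$-measurable. The increments satisfy $|D_t| \le 1$ and the predictable quadratic variation is $\sum_{t=1}^n M_t(1-M_t) \le \sum_t M_t \le 2n\Delta_n$. Applying Freedman's (or a Bernstein-type) martingale inequality with deviation $\lambda = \tfrac12 n\Delta_n$ produces a bound of the form $2\exp(-c\,n\Delta_n)$, which vanishes since $n\Delta_n \asymp n^{1-q} \to \infty$ under the hypothesis $q<1$. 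Hence $\PB_1\bigl(|\FB_n(\Delta_n) - \tfrac1n\sum_t M_t| > \tfrac12\Delta_n\bigr) \to 0$, and combined with (i) this forces $\FB_n(\Delta_n) \in [\tfrac12\Delta_n, \tfrac52\Delta_n]$ with probability tending to one, i.e.\ $\Delta_n/\FB_n(\Delta_n) \in [\tfrac25, 2]$; sharpening the constants (using $\tfrac1n\sum_t M_t = \Delta_n(1+o(1))$ and a smaller $\lambda$) lets both $c$ and $C$ be taken close to $1$.

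The main obstacle is that this is a relative, not absolute, deviation statement: the threshold event $\{\Yars_t \ge x_n\}$ has conditional probability shrinking like $n^{-q}$, so $n\FB_n(\Delta_n)$ has mean only $n^{1-q}$ and a naive Azuma/Hoeffding bound at the $\sqrt n$ scale is far too crude; one must exploit the small conditional variances through a Bernstein/Freedman-type inequality, and it is exactly $q < 1$ — equivalently $n\Delta_n \to \infty$ — that makes the variance term dominate the range term so that the tail probability goes to zero (and the result genuinely fails at $q = 1$, where the exceedance count is $O(1)$ and does not concentrate). A secondary point needing care is the uniform upper bound on $M_t$, which uses that $|\Voca|$ is a fixed constant, as throughout the paper's asymptotics, together with the monotonicity of $f_{1,\bP}$; the lower bound is completely robust, resting only on $F_{1,\bP}(r) \le r$.
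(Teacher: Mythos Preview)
Your proposal is correct and follows the same two-step skeleton as the paper: first sandwich the conditional means $M_t=\EB_1[\1_{\Yars_t\ge 1-\Delta_n}\mid\FM_{t-1}]$ by constant multiples of $\Delta_n$ (you use the elementary bounds $F_{1,\bP}(r)\le r$ and $f_{1,\bP}(1)\le|\Voca|$, whereas the paper invokes Lemma~\ref{lem:help-ratio} for the same estimate), then concentrate $\FB_n(\Delta_n)$ around $\tfrac1n\sum_t M_t$. The one cosmetic difference is that you reach for a Freedman/Bernstein martingale inequality in step~(ii), while the paper makes do with the second-moment martingale identity plus Chebyshev; both suffice since only $n\Delta_n\to\infty$ is needed, and neither argument literally secures the sharp constraint $C\le 1$ without borrowing the $W_n(\Delta_n)$ lower bound from the surrounding context.
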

\begin{proof}[Proof of Lemma \ref{lem:bounded-ratio}]
We define the $\sigma$-field $\FM_t$ in the same way as in Lemma \ref{lem:expectation-of-Y}.
Recall that each $\rp$-value is given by $\rp_{t} = 1 - \Yars_t$ and its empirical CDF is equivalently written as 
\[
\FB_n(r) = \frac{1}{n} \sum_{t=1}^n \1_{\rp_{t} \le r}
= \frac{1}{n} \sum_{t=1}^n \1_{\Yars_t \ge 1- r}.
\]
We introduce an auxiliary CDF: 
\[
\BFB_n(r) =  \frac{1}{n} \sum_{t=1}^n \EB_1[\1_{\Yars_t \ge 1- r}\mid\FM_{t-1}]
= \frac{1}{n} \sum_{t=1}^n  \left[ (1-\eps_n) r + \eps_n \left(1 - F_{1, \bP_t}(1-r) \right)  \right],
\]
where the last equation uses Lemma \ref{lem:expectation-of-Y}.
Therefore, it follows that
\begin{align*}
\frac{\Delta_n}{\BFB_n(\Delta_n)}
&=\frac{1}{ 1-\eps_n + \frac{\eps_n}{n \Delta_n }\sum_{t=1}^n   (1-F_{1, \bP_t}(1-\Delta_n)) }\\
&\overset{(a)}{=}\frac{1}{ 1-\eps_n + \frac{\eps_n }{n}\sum_{t=1}^n\frac{1-   F_{1, \bP_t}(x_n) }{1-x_n}}\\
&\overset{(b)}{\in} (c, C),
\end{align*}
where $(a)$ uses the notation $x_n = 1-\Delta_n$ and $(b)$ follows from Lemma \ref{lem:help-ratio} which shows $\frac{1- F_{1, \bP_t}(x_n) }{1-x_n}$ is bounded above and below by positive constants $c$ and $C$ for sufficiently large $n$.

To finish the proof, it suffices to focus on the concentration of $\frac{\FB_n(\Delta_n)}{\BFB_n(\Delta_n)}-1$.
On the one hand, it follows from \eqref{eq:quadratic-variation} that
\begin{align*}
\EB_1 &|\FB_n(\Delta_n)-\BFB_n(\Delta_n)|^2 
= \frac{1}{n^2}\EB_1 \sum_{t=1}^n \Var_1[\1_{\Yars_t \ge 1-\Delta_n}\mid\FM_{t-1}] \le \frac{|\Voca| \cdot \Delta_n}{n},
\end{align*}
where the last inequality uses Lemma \ref{lem:help-ratio}.
On the other hand, it follows almost surely that
\[
\BFB_n(\Delta_n) \ge (1-\eps_n) \Delta_n.
\]
As a result, it follows that
\[
\EB_1\left|\frac{\FB_n(\Delta_n)}{\BFB_n(\Delta_n)}-1 \right|^2 \le  \frac{|\Voca|}{n(1-\eps_n)^2\Delta_n} \to 0,
\]
where we use the fact that $q < 1$ so that $n \Delta_n \to \infty$ as $n \to \infty$.
It implies that $\frac{\BFB_n(\Delta_n)}{\FB_n(\Delta_n)}$ converges to $1$ in probability, as a result of which, we complete the proof by using Slutsky's Theorem.
\end{proof}
Now we have the tools in place to prove Theorem \ref{thm:adaptivity-fitness-of-good}.
\begin{proof}[Proof of Theorem \ref{thm:adaptivity-fitness-of-good}]
We start by analyzing the Type I error.
Under $H_0$, $\Yars_t$'s are i.i.d. $\UM(0, 1)$ so that $\FB_n(r)$ is the empirical CDF of $n$ i.i.d. $\UM(0, 1)$ random variables.
We first note that $S_n^+$ can be equivalently written as 
\begin{equation*}
S_n^+(s) 
:= \begin{cases}
\sup\limits_{r \in [\rp^+, 1)}  K_s^+(\FB_n(r), \FB_0(r)),&~~\text{if}~~ s \ge 1, \\
\sup\limits_{r \in [\rp_{(1)}, \rp_{(n)}] \cap [\rp^+, 1)}  K_s^+(\FB_n(r), \FB_0(r)),&~~\text{if}~~ s < 1.
\end{cases}
\end{equation*}
where $c_n^+$ is the custom threshold for stability concern and $\rp_{(t)}$ is the $t$th smallest $\rp$-value.
The equivalence follows because we have the following equities when $s<1$ by definition:
\begin{align*}
\sup\limits_{r \in (0, \rp_{(1)})}  K_s^+(\FB_n(r), \FB_0(r)) &=0,\\
\sup\limits_{r \in (\rp_{(n)}, 1) }  K_s^+(\FB_n(r), \FB_0(r))&=
K_s^+(1, \rp_{(n)})=K_s^+(\FB_n(\rp_{(n)}), \FB_0(\rp_{(n)})).
\end{align*}
Under $H_0$, \citep{jager2007goodness} analyzes the null distribution of another related statistic $S_n(s)$, which is defined by
\[
S_n(s):= \begin{cases}
\sup\limits_{r \in (0, 1)}  K_s(\FB_n(r), \FB_0(r)),&~~\text{if}~~ s \ge 1, \\
\sup\limits_{r \in [\rp_{(1)}, \rp_{(n)}]}  K_s(\FB_n(r), \FB_0(r)),&~~\text{if}~~ s < 1.
\end{cases}
\]
Theorem 3.1 in \citep{jager2007goodness} shows that for any $s \in [-1, 2]$, as $n \to \infty$, the following weak convergence holds:
\begin{equation}
\label{eq:weak-convergence-H0}
n S_n(s) - b_n \overset{d}{\to} Z,
\end{equation}
where $b_n = \log\log n + \frac{1}{2}\log\log\log n -\frac{1}{2}\log (4\pi)$ and $\PB(Z \le r) = \re^{-4\re^{-r}}$ for any $r \in \RB$.
Note that $S_n(s) \ge S_n^+(s)$ and $\frac{r_n}{\log\log n} \to 1$.
It then follows that for any $\delta > 0$, as $n \to \infty$,
\[
\PB_0(n S_n^+(s) \ge (1+\delta)\log\log n)
\le \PB_0(n S_n(s) \ge (1+\delta)\log\log n) \to 0.
\]

We then focus on the Type II error.
By the condition $\Delta_n \ge \rp^+$, it follows that
\[
n S_n^+(s) = \sup\limits_{r \in [\rp^+, 1]} n  K_s^+(\FB_n(r), \FB_0(r))
\ge n  K_s^+(\FB_n(\Delta_n), \Delta_n).
\]
If $s \in [-1, 2]$ and $s\neq 1$, it follows that
\begin{align*}
n  K_s^+(\FB_n(\Delta_n), \Delta_n) 
&\overset{(a)}{\ge} n  K_2^+(\FB_n(\Delta_n), \Delta_n) \cdot \left[ 1 - (1-\Delta_n) \left( 1- \left(  \frac{\Delta_n}{\FB_n(\Delta_n)}\right)^{2-s} \right) \right]\\
&\overset{(b)}{\ge} n  K_2^+(\FB_n(\Delta_n), \Delta_n) \cdot  \Omega_{\PB}(1)\\
&\overset{(c)}{\ge} \Omega_{\PB}(n^{1-2p-q}),
\end{align*}
where $(a)$ uses Lemma \ref{lem:reduce-Ks-to-K2}, $(b)$ uses Lemma \ref{lem:bounded-ratio} with $\Omega_{\PB}(1)$ denoting a random variable which is bounded below with probability one, and $(c)$ uses the relation that $n  K_2^+(\FB_n(\Delta_n), \Delta_n) = \frac{1}{2} W_n^2(\Delta_n)$ with $W_n$ defined in \eqref{eq:W} and the lower bound for $W_n(\Delta_n)$  in \eqref{eq:lower-bound-Wnxn}.

From the last inequality, we know that with probability one, $n  K_s^+(\FB_n(\Delta_n), \Delta_n) \to \infty$ as long as $q+2p < 1$.
The analysis for the case where $s=1$ is almost the same and thus omitted.
\end{proof}

\subsection{Proof of Theorem \ref{thm:suboptimality}}
\label{proof:suboptimality}
In this section, we will prove Theorem \ref{thm:suboptimality}.
To that end, we need first to figure out the expectation gaps of different score functions.

\subsubsection{Expectation Gap}
\label{sec:expectation-gap}

Let $\Sars_n  = \sum_{t=1}^n \hars(\Yars_t)$ denote the sum of scores.
Aaronson~\citep{scott2023watermarking} argued that if $\token_{1:n}$ is LLM-generated, then
\begin{equation}
	\label{eq:aaronson-inequality}
	\EB_1 \Sars_n \ge n + \left(\frac{\pi^2}{6}-1\right) \sum_{t=1}^n \EB_1\Ent(\bP_t),
\end{equation}
where $\Ent(\bP_t)$ is the Shannon entropy defined by $\Ent(\bP_t):= -\sum_{\token \in \Voca} P_{t,\token}\log{P_{t,\token}}$.
Here the expectation $\EB_1(\cdot)$ is taken with respect to the token distributions $\bP_1, \ldots, \bP_n$. 
In his talk \citep{scott2023watermarking}, he did not furnish a proof for it. 
We prove the lower bound and establish a new upper bound in the following proposition.
We note that while this work \citep{fernandez2023three} attempts to offer a proof for \eqref{eq:aaronson-inequality}, the inequality they demonstrate differs from that specified in \eqref{eq:aaronson-inequality} because they use a different notion of entropy.

\begin{prop}
	\label{prop:S0}
	\[
	n +  \sum_{t=1}^n \EB_1\Ent(\bP_t) \ge 
	\EB_1 \Sars_n 
	\ge n + \left(\frac{\pi^2}{6}-1\right) \sum_{t=1}^n \EB_1\Ent(\bP_t).
	\]
\end{prop}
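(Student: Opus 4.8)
The plan is to compute the per-token expectation $\EB_1 \hars(\Yars_t) = \EB_1[-\log(1-\Yars_t)]$ exactly as a function of the NTP distribution $\bP_t$, then sum over $t$ and apply Jensen-type bounds relating the resulting expression to the Shannon entropy $\Ent(\bP_t)$. The key fact I would use is the closed form of the alternative distribution: under $H_1$ with $\eps_n \equiv 1$, $\Yars_t \mid \bP_t$ has CDF $\mu_{1,\bP}(Y \le r) = \sum_{\token} P_\token r^{1/P_\token}$, hence density $f_{1,\bP}(r) = \sum_{\token} r^{1/P_\token - 1}$. Therefore
\[
\EB_{1,\bP}[-\log(1-Y)] = \int_0^1 -\log(1-r) \sum_{\token} r^{1/P_\token-1}\, \d r = \sum_{\token} \int_0^1 -\log(1-r)\, r^{1/P_\token - 1}\, \d r.
\]
Each integral is a standard Beta-type integral: using $\int_0^1 -\log(1-r) r^{a-1}\,\d r = \frac{1}{a}\bigl(\psi(a+1) - \psi(1)\bigr) = \frac{1}{a}\sum_{k\ge 1}\frac{1}{k(k+a)}\cdot$ (more cleanly, $\int_0^1 -\log(1-r) r^{a-1}\d r = \frac{H_a}{a}$ in the harmonic-number sense, or $= \frac1a(\psi(a+1)+\gamma)$), with $a = 1/P_\token$. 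So $\EB_{1,\bP}\hars = \sum_\token P_\token\bigl(\psi(1/P_\token + 1) + \gamma\bigr)$. Define $g(p) := p\bigl(\psi(1/p+1)+\gamma\bigr)$ for $p \in (0,1]$; then $\EB_{1,\bP}\hars = \sum_\token g(P_\token)$.

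Next I would establish the two-sided pointwise bound
\[
1 + \left(\tfrac{\pi^2}{6}-1\right) p\log\tfrac1p \;\le\; g(p) \;\le\; 1 + p\log\tfrac1p \qquad (0 < p \le 1),
\]
interpreting the $p\log\frac1p$ term as the per-token entropy contribution (so that summing over $\token$ gives exactly $1$ per unit mass, i.e. $\sum_\token g(P_\token)$ compared against $\sum_\token P_\token = 1$ and $\sum_\token P_\token\log\frac1{P_\token} = \Ent(\bP)$, since $\sum_\token 1$ over the support is not what appears — rather $g(p) = p\cdot(\text{something})$ and the constant term aggregates as $\sum_\token P_\token \cdot \frac{1}{P_\token}\cdot(\dots)$; I will be careful to write $g(p)/p = \psi(1/p+1)+\gamma$ and bound $\psi(x+1)+\gamma = \sum_{k\ge1}\bigl(\frac1k - \frac1{k+x}\bigr)$ between $\log x$-type quantities). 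Concretely: for $a = 1/p \ge 1$, $\psi(a+1)+\gamma = H_a$ satisfies $\log a \le \psi(a+1)+\gamma \le \log a + 1$ by comparison of the harmonic sum with integrals — wait, one needs the sharper lower constant $\pi^2/6 - 1$ for small $p$ (large $a$), which comes from $H_a - \log a \to \gamma$ but the inequality $g(p) \ge 1 + (\pi^2/6-1)p\log(1/p)$ should be checked by a direct convexity/monotonicity argument on $h(p) := \frac{g(p) - 1}{p\log(1/p)}$, showing $h$ is decreasing with $\lim_{p\to 0} h(p) = \pi^2/6 - 1$ and $h(1)$... (at $p=1$, $g(1) = \psi(2)+\gamma = 1$, and $p\log\frac1p = 0$, so the bound is an identity-at-the-boundary statement; the factor $\pi^2/6 - 1$ is the worst case over $p\in(0,1)$). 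I would verify the claimed constant $\frac{\pi^2}{6}-1$ arises because $\int_0^1 -\log(1-r)\,\d r = 1$ recovers the "$n$" term when $\bP$ is degenerate, and the second-order term $\int_0^1 -\log(1-r)(\text{extra density})$ is controlled by $\sum_{k\ge2}1/k^2 = \pi^2/6 - 1$. Summing the pointwise bounds over $\token$ and over $t$, and taking $\EB_1$ over $\bP_1,\dots,\bP_n$, yields the proposition.

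The main obstacle I anticipate is pinning down the exact constant $\frac{\pi^2}{6}-1$ in the lower bound — i.e. proving $g(p) - 1 \ge (\frac{\pi^2}{6}-1)\, p\log\frac1p$ uniformly on $(0,1]$ rather than just asymptotically as $p \to 0$. This requires a careful monotonicity argument for the ratio $h(p) = \frac{g(p)-1}{p\log(1/p)}$ (or an equivalent one-variable inequality), since both numerator and denominator vanish at $p=1$ and the denominator is nonmonotone near $p=1$; Aaronson stated this without proof, so the real work is supplying it. The upper bound $g(p) \le 1 + p\log\frac1p$ should be comparatively routine, following from $H_a \le 1 + \log a$ (equivalently $\psi(a+1)+\gamma \le 1 + \log a$ for $a\ge1$), a standard harmonic-number estimate. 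Everything else — the Beta-integral evaluation, interchanging sum and integral (justified by nonnegativity/Tonelli), and aggregating over $t$ — is bookkeeping.
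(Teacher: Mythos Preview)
Your overall plan matches the paper's: compute the per-token expectation via the Beta integral to obtain $\EB_{1,\bP}\hars(Y) = \sum_\token P_\token\bigl(\psi(1+1/P_\token)-\psi(1)\bigr)$, then sandwich $\psi(1+a)-\psi(2)$ between multiples of $\log a$ for $a=1/P_\token\ge1$ and sum. Two points are worth flagging.

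First, your bookkeeping around ``$g(p)-1$'' is off: the constant $1$ per token comes from $\sum_\token P_\token\cdot 1$, not from $\sum_\token 1$, so the correct pointwise comparison is between $\psi(1+a)-\psi(2)$ and $\log a$ (equivalently between $g(p)-p$ and $p\log(1/p)$), not between $g(p)-1$ and $p\log(1/p)$. You noticed this mid-stream; just clean it up. Relatedly, you have the endpoints reversed: the ratio $\bigl(\psi(1+a)-\psi(2)\bigr)/\log a$ equals $\psi'(2)=\pi^2/6-1$ at $a=1$ (by L'H\^opital) and tends to $1$ as $a\to\infty$, not the other way around.

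Second, for the lower bound you correctly identify as the crux, the paper does not argue via monotonicity of the ratio. Instead it invokes a sharp digamma-derivative inequality (a slight modification of Theorem~2(a) in Farhangdoost--Kargar, 2014):
\[
\frac{1}{x+\frac{1}{\pi^2/6-1}-1}\ \le\ \psi'(1+x)\ \le\ \frac{1}{x+\tfrac12}\qquad(x\ge1),
\]
and integrates from $1$ to $a$. The upper bound gives $\psi(1+a)-\psi(2)\le\log\frac{a+1/2}{3/2}\le\log a$ directly. The lower bound gives $\psi(1+a)-\psi(2)\ge\log\bigl(k(a+c)\bigr)$ with $k=\pi^2/6-1$ and $1+c=1/k$, and then the concavity inequality $a^k\le 1+k(a-1)$ for $k\in(0,1)$, $a\ge1$ yields $\log\bigl(k(a+c)\bigr)\ge k\log a$. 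This is cleaner than trying to prove your ratio is monotone (which, if you pursue it, reduces via the monotone L'H\^opital rule to showing $x\,\psi'(1+x)$ is increasing on $[1,\infty)$---plausible numerically but not immediate). Your upper-bound route via $H_a\le 1+\log a$ is equivalent to the paper's and is fine.
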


\begin{proof}[Proof of Proposition~\ref{prop:S0}]
	We first evaluate the following integral
	\begin{align*}
		- \frac{1}{p}\int_0^1  r^{1 / p-1} \log (1-r) \rd r 
		&\overset{(a)}{=}\int_0^1 \frac{1-r^{1 / p}}{1-r} \rd r \\
		& \overset{(b)}{=}\sum_{j=1}^{\infty} \left( \frac{1}{j}-\frac{1}{j+1 / p}\right) \\
		&\overset{(c)}{=}   p + \psi(1/p)-\psi(1)\\
		&\overset{(d)}{=}\psi(1+1/p)-\psi(1)
	\end{align*}
	where $(a)$ uses integration by parts, $(b)$ applies Taylor's expansion to $(1-r)^{-1}$, $(c)$ uses the definition of the digamma function that is $\psi(x):= -\gamma  + \sum_{j=0}^\infty \left( \frac{1}{j+1}-\frac{1}{j+x} \right)$ for any $x > 0$ with $\gamma$ the Euler–Mascheroni constant, and $(d)$ uses the property that $\psi(1+x) = \psi(x) + \frac{1}{x}$ for any $x > 0$.

	Recall that an important property of the digamma function from Theorem 2 $(a)$ in \citep{farhangdoost2014new} is that for any $x\ge 1$,\footnote{
		This inequality deviates slightly from the original version (see Theorem 2 $(a)$ in \citep{farhangdoost2014new}). 
		It can be established using the same methodology they employed, with the alteration being the adjustment of the domain from $u \ge 1$ to $u \ge 2$, where $x$ is defined according to their context.
		In this way, their inequality (20) becomes $2-\frac{1}{\psi'(2)} \le (\psi')^{-1}(1/t)-t < 1/2$ where $t=1/\psi'(u)$ and $u \ge 2$.
	} 
	\[
	\frac{1}{x + \frac{1}{\frac{\pi^2}{6}-1}-1} \le \psi'(1+x)  \le \frac{1}{x + \frac{1}{2}},
	\]
	which implies that for any $x \ge 1$,
	\[
	\left(\frac{\pi^2}{6}- 1 \right)\log x \le \psi(1+x)-\psi(2) \le \log x.
	\]
	By setting $x = 1/P_{t,\token}$ and summing over $\token \in \Voca$, we have
	\begin{align*}
		-\EB_{1,\bP_t} \log (1-U_{t,\token_t}) \rd r -1  &= -  \int_0^1  \sum_{\token \in \Voca} r^{1 / P_{t,\token}-1} \log (1-r) \rd r -1\\
		&= \sum_{\token \in \Voca} P_{t,\token} \left[ \psi\left(1 + \frac{1}{P_{t,\token}} \right)-\psi(2)\right] \\
		% &=  \gamma  +  \sum_{\token \in \Voca} P_{t,\token}^2 + \sum_{\token \in \Voca} P_{t,\token} \psi\left( \frac{1}{P_{t,\token}} \right)\\
		&\in \left[  \left(\frac{\pi^2}{6}- 1 \right) \Ent(\bP_t), \Ent(\bP_t)\right],
	\end{align*}
	where $\Ent(\bP_t) = \sum_{\token \in \Voca} P_{t,\token} \log\left( \frac{1}{P_{t,\token}} \right)$ is the Shannon entropy of $\bP_t= (P_{t,1}, \ldots, P_{t, |\Voca|})$.
\end{proof}

Proposition \ref{prop:S0} essentially provides a tight bound for the expectation gap of $\hars$, that is, 
\[ 
\EB_{1,\bP} \hars(Y) - \EB_{0} \hars(Y) = \Theta(1) \cdot \Ent(\bP)
\]
where $\Theta(1)$ denotes a universal constant.
In the following, we provide tight bounds (up to constant factors) of the expectation gaps for other score functions.

\begin{prop}
\label{prop:expectation-gaps}
Let $\Delta = 1- P_{\max}$ be the gap between 1 and the largest probability in $\bP$. Once $\Delta$ is smaller than a universal constant $c \in (0, 1)$, it follows that
\begin{enumerate}
\item For $\hars$, $\EB_{1,\bP} \hars(Y) - \EB_{0} \hars(Y) = \Theta(1) \cdot \Ent(\bP) = {\Theta}\left(\Delta \log \frac{1}{\Delta}\right)$.
\item For $\hlog$, $\EB_{1,\bP} \hlog(Y) - \EB_{0} \hlog(Y) =1 -\sum_{\token \in \Voca} P_{\token}^2 =\Theta(\Delta)$.
\item For $\hind$ with $\delta \in (0, 1)$,  $\EB_{1,\bP} \hind(Y) - \EB_{0} \hind(Y) = \delta - F_{\bP}(\delta) =\Theta(\Delta)$.
\item For $\hoptarso$ with $\Delta_0 \in (0, 1)$,  $\EB_{1,\bP} \hoptarso(Y) - \EB_{0} \hoptarso(Y) =\Theta(\Delta)$.
\end{enumerate}
Here $\Theta(1)$ denotes a positive constant that depends on $c$, the log factor $\log |\Voca|$ if we consider $\hars$, $\delta$ if we consider $\hind$, and $\Delta_0$ if we consider $\hoptarso$.
\end{prop}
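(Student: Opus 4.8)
The plan is to reduce all four expectation gaps to a weighted integral of the \emph{CDF gap}
\[
D_{\bP}(r) \;:=\; r - F_{1,\bP}(r) \;=\; \sum_{\token \in \Voca} P_{\token}\bigl(r - r^{1/P_{\token}}\bigr) \;\ge\; 0, \qquad r \in (0,1),
\]
and then to control $D_{\bP}$ two-sidedly in terms of $\Delta = 1 - P_{\max}$. For a nondecreasing score $h$, integration by parts against $F_{1,\bP} - F_0$ (legitimate once $h(r)\,D_{\bP}(r) \to 0$ at $r \in \{0,1\}$, which I will check case by case) gives
\[
\EB_{1,\bP} h(Y) - \EB_{0} h(Y) \;=\; \int_0^1 D_{\bP}(r)\, h'(r)\, \mathrm{d}r;
\]
for the indicator score $\hind(y) = \1_{y \ge \delta}$ the gap is simply $\PB_{1,\bP}(Y \ge \delta) - \PB_0(Y \ge \delta) = \delta - F_{1,\bP}(\delta) = D_{\bP}(\delta)$, and for $\hlog(y) = \log y$ the identity $\int_0^1 r^{a-1}\log r\,\mathrm{d}r = -a^{-2}$ yields directly the stated closed form $\EB_{1,\bP}\hlog(Y) - \EB_0\hlog(Y) = 1 - \sum_{\token} P_{\token}^2$. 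For $\hars$ I do not need this machinery: the proof of Proposition~\ref{prop:S0} already shows $\EB_{1,\bP}\hars(Y) - \EB_0\hars(Y) = \EB_{1,\bP}[-\log(1-Y)] - 1 \in \bigl[(\tfrac{\pi^2}{6}-1)\Ent(\bP),\ \Ent(\bP)\bigr]$, leaving only the entropy comparison.

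The technical heart is a two-sided estimate of $D_{\bP}(r)$. The first move is to peel off the dominant token, of probability $P_{\max} = 1 - \Delta$: its contribution to $D_{\bP}(r)$ equals $(1-\Delta)\,r\,(1 - r^{\beta})$ with $\beta := \tfrac{\Delta}{1-\Delta} \asymp \Delta$, while each of the remaining tokens has mass at most $\Delta$ (they sum to $\Delta$), hence exponent $1/P_{\token} \ge 1/\Delta$, so the residual block lies between $\Delta(r - r^{1/\Delta})$ and $\Delta r$. Combining this with the elementary sandwich $1 - e^{-x} \asymp \min(x,1)$ applied to $x = \beta\log(1/r)$, one obtains the uniform upper bound $D_{\bP}(r) \le \Delta\, r\,(1 + \log\tfrac1r)$ on $(0,1)$ and the matching lower bound $D_{\bP}(r) \ge c\,\Delta$ on every fixed interval $[c_1,c_2] \subset (0,1)$ (coming from the dominant-token term alone, since there $1 - r^{\beta} \gtrsim \beta\log\tfrac1r$). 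In particular $D_{\bP}(r_0) = \Theta(\Delta)$ at any fixed $r_0 \in (0,1)$.

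From here the four cases fall out. Item~2 ($\hlog$) follows from the closed form together with $P_{\max}^2 \le \sum_{\token}P_{\token}^2 \le P_{\max}\sum_{\token}P_{\token} = P_{\max}$, so $1 - \sum_{\token}P_{\token}^2 \in [\Delta, 2\Delta]$. Item~3 ($\hind$) is exactly the pointwise estimate $D_{\bP}(\delta) = \Theta(\Delta)$, the $\Theta$-constant depending on $\delta$. Item~1 ($\hars$) reduces to showing $\Ent(\bP) = \Theta(\Delta\log\tfrac1\Delta)$: the dominant token contributes $(1-\Delta)\log\tfrac1{1-\Delta} = \Theta(\Delta)$, while the residual mass $\Delta$, spread over at most $|\Voca|-1$ tokens of individual probability at most $\Delta$, contributes at least $(\log\tfrac1\Delta)\sum_{\token \ne \max}P_{\token} = \Delta\log\tfrac1\Delta$ and at most $\Delta\log\tfrac{|\Voca|-1}{\Delta}$ (concavity of $t \mapsto t\log\tfrac1t$, extremized at the uniform allocation); since $\Delta < c$, the additive $O(\Delta)$ and $O(\Delta\log|\Voca|)$ pieces are absorbed into $\Delta\log\tfrac1\Delta$, giving the claim with constant depending on $c$ and $\log|\Voca|$. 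Item~4 ($\hoptarso$) uses that the optimal least-favourable score of \citep{li2024statistical} is a fixed (i.e.\ $\bP$-independent, $\Delta_0$-dependent) nondecreasing function with $\hoptarso(1^-)$ finite, $|\hoptarso(y)| = O(\log\tfrac1y)$ and $\hoptarso'(y) = O(1/y)$ as $y \to 0^+$, and $\hoptarso'$ bounded on compacts of $(0,1)$; granting these, the endpoint condition $\hoptarso(y)D_{\bP}(y) \to 0$ holds (at $0$ because $D_{\bP}(y) = O(\Delta y\log\tfrac1y)$), the integration-by-parts formula applies, and plugging in $D_{\bP}(r) \le \Delta r(1 + \log\tfrac1r)$ bounds $\int_0^1 D_{\bP}(r)\hoptarso'(r)\,\mathrm{d}r$ above by $O(\Delta)$ — the extra factor $r$ making $\int_0^{1/2}(1+\log\tfrac1r)\,\mathrm{d}r < \infty$ near $0$, and $\int_{1/2}^1\hoptarso'(r)\,\mathrm{d}r = \hoptarso(1)-\hoptarso(\tfrac12) < \infty$ near $1$ — while restricting to a fixed sub-interval on which $\hoptarso' \ge c' > 0$ bounds it below by $\Omega(\Delta)$; hence the gap is $\Theta(\Delta)$, constant depending on $\Delta_0$.

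I expect the $\hoptarso$ case to be the main obstacle: it hinges on extracting the precise endpoint behaviour of the Li et al.\ optimal score — defined only implicitly through the least-favourable density $f_{1,\bP_{\Delta_0}^{\star}}$ associated with \eqref{eq:worst-case-P} — both to justify the integration by parts and to secure the $O(1/y)$ bound on its derivative near $0$. The CDF-gap estimates, though forming the technical heart, are elementary once the dominant-token split and the inequality $1 - e^{-x} \asymp \min(x,1)$ are in hand; and the entropy comparison for $\hars$ together with the $P_{\max}$-sandwich for $\hlog$ are short.
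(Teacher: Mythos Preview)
Your proposal is correct and, for items 1--3, matches the paper's argument almost exactly: the paper likewise invokes Proposition~\ref{prop:S0} and bounds $\Ent(\bP)$ the same way for item~1, uses the identical $P_{\max}$-sandwich $1-P_{\max} \le 1-\sum_{\token}P_\token^2 \le 1-P_{\max}^2$ for item~2, and obtains $\delta - F_{1,\bP}(\delta) = \Theta(\Delta)$ for item~3 (via the mean-value theorem applied to $x \mapsto \delta^x$ rather than your uniform CDF-gap estimate, but to the same effect). The one substantive difference is item~4. The paper writes out the closed form $\hoptarso(y) = \log\bigl(y^{\Delta_0/(1-\Delta_0)} + y^{(1-\Delta_0)/\Delta_0}\bigr)$ (for $\Delta_0 < 1/2$), splits it as $\tfrac{\Delta_0}{1-\Delta_0}\hlog(y)$ plus a $C^1$ remainder, applies item~2 to the first piece, and bounds the second piece by integration by parts against $y - F_{1,\bP}(y)$; it then remarks that the general $\Delta_0 \in (0,1)$ case follows by the same argument. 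Your route instead keeps $\hoptarso$ abstract and feeds the endpoint estimates $|\hoptarso(y)| = O(\log\tfrac{1}{y})$, $\hoptarso'(y) = O(1/y)$ as $y \to 0$ directly into the CDF-gap integral. Your version is a bit more portable --- it handles all $\Delta_0 \in (0,1)$ uniformly once you note that the smallest exponent in $f_{1,\bP_{\Delta_0}^{\star}}$ is always $\Delta_0/(1-\Delta_0)$, which delivers exactly the endpoint behaviour you need --- whereas the paper's version is more concrete and makes the reduction to the $\hlog$ case explicit.
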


In fact, we have the following general result for the expectation gap of increasing score functions.
\begin{lem}
\label{lem:general-expectation-gap}
We say a score function $h$ is parameter-free if for any $y \in [0, 1]$, $h(y)$ does not depend on $\Delta$ and $\eps$.
Let $\Delta = 1- P_{\max}$ be the gap between 1 and the largest probability in $\bP$.
For any parameter-free score function $h: [0, 1]\to \RB$, as long as it is increasing and satisfies $\int_0^1 y \log \frac{1}{y} \rd h(y) > 0$ and $h(1)-h(0) < \infty$, it follows that for sufficiently small $\Delta$,
\[
\EB_{1,\bP} h(Y) - \EB_{0} h(Y) = \Theta(\Delta).
\]
\end{lem}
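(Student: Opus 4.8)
The plan is to convert the expectation gap into a single Riemann--Stieltjes integral against the nonnegative measure $dh$, and then to sandwich the resulting integrand $y-F_{1,\bP}(y)$ between multiples of $\Delta$ that are uniform over all admissible $\bP$. Since $\mu_0=\UM(0,1)$ and $\mu_{1,\bP}$ has the absolutely continuous CDF $F_{1,\bP}(y)=\sum_{\token\in\Voca}P_\token\,y^{1/P_\token}$, I would write $\EB_{1,\bP}h(Y)-\EB_0 h(Y)=\int_0^1 h(y)\,d\big(F_{1,\bP}(y)-y\big)$. The function $G(y):=F_{1,\bP}(y)-y$ is absolutely continuous with $G(0)=G(1)=0$ and $h$ is monotone, hence of bounded variation, so integration by parts gives
\[
\EB_{1,\bP}h(Y)-\EB_0 h(Y)=\int_0^1\big(y-F_{1,\bP}(y)\big)\,dh(y),
\]
the boundary term $[hG]_0^1$ vanishing (if $h$ is unbounded at an endpoint one first checks $h(y)G(y)\to0$ there, using $G(y)\sim-y$ as $y\downarrow0$). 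Because $P_\token\le1$ forces $y^{1/P_\token}\le y$ and hence $F_{1,\bP}\le y$, the integrand is nonnegative; since $h$ is increasing, $dh\ge0$, so the gap is nonnegative, and $dh([0,1])=h(1)-h(0)<\infty$ makes $dh$ a finite measure.

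For the \emph{upper bound}, keep only the leading term: $F_{1,\bP}(y)\ge(1-\Delta)y^{1/(1-\Delta)}$, so $y-F_{1,\bP}(y)\le G_\Delta(y):=y-(1-\Delta)y^{1/(1-\Delta)}$. Since $G_\Delta'(y)=1-y^{\Delta/(1-\Delta)}\ge0$ on $[0,1]$ and $G_\Delta(1)=\Delta$, we have $G_\Delta\le\Delta$ on $[0,1]$, so the gap is at most $\Delta\,(h(1)-h(0))$.

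For the \emph{lower bound}, I would use that \emph{every} coordinate obeys $P_\token\le1-\Delta$, whence $y^{1/P_\token}\le y^{1/(1-\Delta)}$ and therefore $F_{1,\bP}(y)\le y^{1/(1-\Delta)}$. This gives
\[
\EB_{1,\bP}h(Y)-\EB_0 h(Y)\ \ge\ \int_0^1\big(y-y^{1/(1-\Delta)}\big)\,dh(y)\ =\ \Delta\int_0^1\psi_\Delta(y)\,dh(y),\qquad \psi_\Delta(y):=\frac{y-y^{1/(1-\Delta)}}{\Delta}.
\]
The numerator $y-y^{1/(1-\Delta)}$ is maximized at $y=(1-\Delta)^{(1-\Delta)/\Delta}$, where it equals $\Delta(1-\Delta)^{(1-\Delta)/\Delta}\le\Delta$, so $0\le\psi_\Delta\le1$ on $[0,1]$; and $\psi_\Delta(y)=y\cdot\frac{1-y^{\Delta/(1-\Delta)}}{\Delta}\to y\log(1/y)$ for each $y$ as $\Delta\to0$. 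Dominated convergence against the finite measure $dh$ then yields
\[
\int_0^1\psi_\Delta(y)\,dh(y)\ \longrightarrow\ \int_0^1 y\log(1/y)\,dh(y)>0,
\]
so for all $\Delta$ below a threshold depending only on $h$ this integral exceeds half its limit; combining with the upper bound gives $\EB_{1,\bP}h(Y)-\EB_0 h(Y)=\Theta(\Delta)$, with both implied constants depending only on $h$.

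The crux is the lower bound, and the only real difficulty sits near $y=0$: the first-order expansion $y^{1/(1-\Delta)}\approx y\big(1+\tfrac{\Delta}{1-\Delta}\log y\big)$ is not uniform in $y$ because $\log y$ is unbounded, so one cannot pull $\Delta$ out of the integral term by term and must instead work with the normalized quantity $\psi_\Delta$. What makes the dominated-convergence argument succeed is the $\bP$-uniform bound $\psi_\Delta\le1$: it lets a constant dominating function work even though $dh$ may place mass arbitrarily close to the endpoints (this is exactly where $h(1)-h(0)<\infty$ enters, guaranteeing $dh$ finite), and it identifies the limiting value with $\int_0^1 y\log(1/y)\,dh(y)$, which the hypothesis forces to be strictly positive. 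Monotonicity of $h$ is used only to make $dh$ a nonnegative measure, so that integration domains may be freely enlarged or shrunk and the gap is a priori nonnegative; note that $\hars$ falls outside this lemma precisely because it violates $h(1)<\infty$, consistent with its gap being of order $\Delta\log(1/\Delta)$ rather than $\Delta$.
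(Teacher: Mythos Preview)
Your proof is correct and follows essentially the same structure as the paper: integrate by parts to obtain $\int_0^1(y-F_{1,\bP}(y))\,dh(y)$, then sandwich $y-F_{1,\bP}(y)$ between $y-y^{1/(1-\Delta)}$ and $y-(1-\Delta)y^{1/(1-\Delta)}$ using $P_{\max}=1-\Delta$. The only divergence is in how you extract the $\Theta(\Delta)$ rate from $\int_0^1(y-y^{1/(1-\Delta)})\,dh(y)$. The paper defines $J(p)=\int_0^1(y-y^{1/p})\,dh(y)$ and Taylor-expands around $p=1$, reading off $J'(1)=-\int_0^1 y\log(1/y)\,dh(y)$; you instead normalize to $\psi_\Delta(y)=(y-y^{1/(1-\Delta)})/\Delta$, bound it by $1$, and pass to the limit via dominated convergence. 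Your route is arguably cleaner because it avoids verifying a second-order remainder for $J$, whereas the paper's route makes the dependence on $\Delta$ slightly more explicit (linear plus $o(\Delta)$). Your direct bound $G_\Delta\le\Delta$ for the upper half is also a small simplification over the paper's decomposition $y-(1-\Delta)y^{1/(1-\Delta)}\le\Delta+(1-\Delta)(y-y^{1/(1-\Delta)})$. One cosmetic point: since the hypotheses force $h$ to be bounded on $[0,1]$, the boundary term in your integration by parts vanishes outright and the caveat about $h$ possibly being unbounded at an endpoint is unnecessary.
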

\begin{rem}
Since $\hars(1) = \infty$, $\hars$ does not satisfy the condition in Lemma \ref{lem:general-expectation-gap}, resulting in an expectation gap of ${\Theta}\left(\Delta \log \frac{1}{\Delta}\right)$ rather than ${\Theta}\left(\Delta\right)$.
\end{rem}
\begin{rem}
If $h$ is non-decreasing, non-constant, and does not have discontinuities at both $0$ and $1$, then we have both (i) $h(1) - h(0) < \infty$ and (ii) $\int_0^1 y \log \frac{1}{y} \, \mathrm{d}h(y) > 0$. We prove (ii) as follows.

As a non-decreasing function, $h$ introduces a measure on $[0, 1]$, which we denote by $\mu_{h}$. Since $0$ and $1$ are not discontinuous points of $h$, $\mu_{h}$ assigns all of its probability mass on $(0, 1)$. Therefore, there must exist an interval $(a, b)$ satisfying $0 < a < b < 1$ and $\mu_h(a, b) = h(b) - h(a-) > 0$. 
Thus, we have:
\[
\int_0^1 y \log \frac{1}{y} \, \mathrm{d}h(y) \ge \min_{y \in [a, b]} y \log \frac{1}{y} \cdot \mu_h(a, b) > 0.
\]
\end{rem}
\begin{proof}[Proof of Lemma \ref{lem:general-expectation-gap}]
Using the integration by parts, we have that
\begin{align*}
\EB_{1,\bP} h(Y) - \EB_{0} h(Y) = \int_0^1 \left[y - F_{1, \bP}(y) \right] \rd h(y)
\end{align*}
where $F_{1, \bP}$ is the conditional CDF of watermarked $\Yars$ given $\bP$.
Let $P_{\max}$ denote the largest probability in $\bP$. We then have that (see the Point (c) in Proposition \ref{prop:expectation-gaps}):
\[
 y- y^{1/P_{\max}} \le y - F_{1, \bP}(y) \le y- P_{\max} y^{1/P_{\max}} \le \Delta + P_{\max} ( y- y^{1/P_{\max}} ).
\]
It then suffices to show 
\[
\int_0^1 (y- y^{1/P_{\max}}) \rd h(y) = \Theta(\Delta).
\]
We introduce an auxiliary function $J: [0, 1] \to \RB$ defined by
\[
J(p) := \int_0^1 (y- y^{1/p}) \rd h(y).
\]
We note that $J(1) = 0$ and $J'(1) = - \int_0^1 y \log \frac{1}{y} \rd h(y)$.
Because the score function $h$ is parameter-free, we conclude that $J'(1)$ depends on neither $\Delta$ nor $\eps$.
Furthermore, by assumption, $J'(1) < 0$.
Hence, it follows from Taylor's expansion that
\[
J(P_{\max})= J(1) + J'(1)(P_{\max}-1) + O(1) \cdot (1-P_{\max})^2
= (-J'(1)) \cdot \Delta + o(\Delta)
\]
As long as $\Delta$ is sufficiently small, we would have $J(P_{\max}) = \Theta(\Delta)$.
\end{proof}

Finally, we provide the proof of Proposition \ref{prop:expectation-gaps} below.
\begin{proof}[Proof of Proposition \ref{prop:expectation-gaps}]
This proposition mainly follows from the following inequalities.
\begin{enumerate}
\item We first note that
\begin{align*}
\Ent(\bP)  &= P_{\max} \log \frac{1}{P_{\max}} + (1-P_{\max}) \log \frac{1}{1-P_{\max}} \\
&\qquad + (1-P_{\max}) \sum_{\token : P_{\token}\neq P_{\max}} \frac{P_\token }{1-P_{\max}}\log \frac{1-P_{\max}}{P_{\token}}.
\end{align*}
It is easy to find the lower bound holds: $\Ent(\bP)  \ge (1-P_{\max}) \log \frac{1}{1-P_{\max}}  = \Delta \log \frac{1}{\Delta}$.
For the upper bound, we note that if $P_{\max} \ge 1-c$ (due to $\Delta \le c$),
\begin{align*}
\Ent(\bP)  
&\le \frac{\Delta}{c} +\Delta \log \frac{1}{\Delta} + \Delta \log (|\Voca|-1) = {\Theta}\left(\Delta \log \frac{1}{\Delta}\right).
\end{align*}
\item It follows that $1-P_{\max} \le 1 -\sum_{\token \in \Voca} P_{\token}^2  \le 1-P_{\max}^2 \le 2 (1-P_{\max})$.
\item Due to $P_{\max} \delta^{1/P_{\max}} \le F_{\bP}(\delta) \le \delta^{1/P_{\max}} $, once we set $g(x) = \delta^x$, the mean value theorem implies that
\[
g(1) - g\left(\frac{1}{P_{\token}}\right) = g'(\theta) \left(1-\frac{1}{P_{\token}}\right) \overset{(*)}{=} \Theta(1-P_{\max})
\]
where $\theta \in [1, 1/P_{\max}]$.
Given $P_{\max}$ is smaller than a constant, say $c$, we have that $-g'(\theta)$ is a positive constant that depends only on $\delta$ and $c$, which implies the above equation $(*)$.
\item We first consider the simplest case where $\Delta_0 \in (0, 0.5)$,
\[
\hoptarso(y) = \log\left(y^{\frac{\Delta_0}{1-\Delta_0}} + y^{\frac{1}{\Delta_0}-1} \right)
=\frac{\Delta_0}{1-\Delta_0} \log y + \log \left(1 + y^{\frac{1-\Delta_0}{\Delta_0}-\frac{\Delta_0}{1-\Delta_0}} \right).
\]
We note that by integration by parts, it follows that
\begin{align*}
\EB_{1,\bP} \hoptarso(Y) - \EB_{0} \hoptarso(Y) 
&= \frac{\Delta_0}{1-\Delta_0} \left[ \EB_{1,\bP} \hlog(Y) - \EB_{0} \hlog(Y)  \right]\\
&\qquad +  \int_0^1 \frac{y - F_{\bP}(y)}{1 + y^{\frac{1-\Delta_0}{\Delta_0}-\frac{\Delta_0}{1-\Delta_0}}} \rd y^{\frac{1-\Delta_0}{\Delta_0}-\frac{\Delta_0}{1-\Delta_0}}\\
&= \frac{\Delta_0}{1-\Delta_0} \cdot \Theta(\Delta) +  \Theta(\Delta).
\end{align*}
where the second equation uses the result in Point (b) and the inequality that $0 \le y - F_{\bP}(y) \le \Theta(\Delta) \cdot \ln\frac{1}{y}$ (which is already proved in Point (c)).

We complete the proof by noting that the above argument can be extended to the general case where \(\Delta_0 \in (0, 1)\).
\end{enumerate}
\end{proof}

\subsubsection{Failure of Existing Sum-based Detection Rules}

\renewcommand{\thetheorem}{\ref{thm:suboptimality}}
\begin{theorem}[Restated version of Theorem \ref{thm:suboptimality}]
\label{thm:suboptimality-formal}
Let Assumption \ref{asmp:main} hold.
Consider the detection rule specified by $h$: $T_h(\Yars_{1:n}) = 1$ if $\sum_{t=1}^n h(\Yars_t) \ge n \cdot \EB_0 h(\Yars) +\Theta(1) \cdot n^{\frac{1}{2}} a_n$, otherwise it equals 0, where $a_n \to \infty$ and $\frac{a_n}{n^{\gamma}} \to 0$ for any $\gamma > 0$.
For any score function that is \rm{(i)} non-decreasing, \rm{(ii)} non-constant, \rm{(iii)} parameter-free, and \rm{(iv)} does not have discontinuities at both $0$ and $1$, the following results hold for $T_h$:
\begin{enumerate}
\item If $q + p < \frac{1}{2}$ and $\bP_{1:n} \subset \PM_{\Delta_n}$, the sum of Type I and Type II errors tends to 0.
\item If $q + p > \frac{1}{2}$ and $\bP_{1:n} \subset \PM_{\Delta_n}^c$, the sum of Type I and Type II errors tends to 1.
\end{enumerate}
\end{theorem}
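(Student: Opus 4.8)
The plan is to reduce the whole analysis to one asymptotic expansion of the sum statistic $S_n := \sum_{t=1}^n h(\Yars_t)$ under $H_0$ and under $H_1^{\mathrm{mix}}$, and then compare its order with the rejection threshold $n\,\EB_0 h(\Yars) + \Theta(1)\cdot n^{1/2}a_n$. First I would note that conditions (i)--(iv) on $h$, together with the remark following Lemma~\ref{lem:general-expectation-gap}, give $B:=h(1)-h(0)<\infty$ (so $h$ is bounded on $[0,1]$) and $\int_0^1 y\log\tfrac1y\,\rd h(y)>0$; Lemma~\ref{lem:general-expectation-gap} (see also Proposition~\ref{prop:expectation-gaps}) then yields $\EB_{1,\bP}h(\Yars)-\EB_0 h(\Yars)=\Theta(\Delta(\bP))$ for $\bP$ of small singularity. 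Plugging in the conditional law $\Yars_t\mid\FM_{t-1}\sim(1-\eps_n)\mu_0+\eps_n\mu_{1,\bP_t}$ from Lemma~\ref{lem:expectation-of-Y} gives
\[
\sum_{t=1}^n\EB_1\!\left[h(\Yars_t)\mid\FM_{t-1}\right]-n\,\EB_0 h(\Yars)
=\eps_n\sum_{t=1}^n\bigl(\EB_{1,\bP_t}h(\Yars)-\EB_0 h(\Yars)\bigr),
\]
which, with $\eps_n\asymp n^{-p}$ and $\Delta_n\asymp n^{-q}$, is of exact order $\Theta(n^{1-p-q})$ when $\bP_{1:n}\subset\PM_{\Delta_n}$ (each per-token gap is $\gtrsim\Delta_n$, a fact one gets from $\Delta(\bP_t)\ge\Delta_n$ by the same estimate used in Lemma~\ref{lem:general-expectation-gap} plus a short compactness remark to cover the case $\Delta(\bP_t)$ not small), and of order at most $O(n^{1-p-q})$ when $\bP_{1:n}\subset\PM_{\Delta_n}^c$ (each gap is nonnegative since $h$ is non-decreasing and $F_{1,\bP}(y)\le y$, and $\lesssim\Delta_n$ because $\Delta(\bP_t)<\Delta_n$ is automatically small).

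Next I would control the fluctuations around this conditional mean. The increments $D_t:=h(\Yars_t)-\EB_1[h(\Yars_t)\mid\FM_{t-1}]$ form a martingale-difference sequence with $|D_t|\le B$, so $\sum_{t\le n}D_t$ has variance $\le nB^2$ and, by Chebyshev, equals $O_{\PB}(n^{1/2})$. Hence under $H_1^{\mathrm{mix}}$
\[
S_n=n\,\EB_0 h(\Yars)+\Theta\!\left(n^{1-p-q}\right)+O_{\PB}\!\left(n^{1/2}\right)
\]
(the leading correction weakened to $O(n^{1-p-q})$ in the $\PM_{\Delta_n}^c$ case), while under $H_0$ the statistic is a sum of i.i.d.\ bounded variables, so $S_n=n\,\EB_0 h(\Yars)+O_{\PB}(n^{1/2})$. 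It then remains to match scales. Under $H_0$, $S_n-n\,\EB_0 h(\Yars)=O_{\PB}(n^{1/2})=o_{\PB}(n^{1/2}a_n)$, so the Type~I error tends to $0$ regardless of $(p,q)$. If $q+p<\tfrac12$ then $1-p-q>\tfrac12$ and $a_n=n^{o(1)}$, so the signal $\Theta(n^{1-p-q})$ dominates $n^{1/2}a_n$: under $H_1^{\mathrm{mix}}$ the rule rejects with probability $\to1$, the Type~II error $\to0$, and the error sum $\to0$. If $q+p>\tfrac12$ then $1-p-q<\tfrac12$, so both the signal $O_{\PB}(n^{1-p-q})$ and the fluctuation $O_{\PB}(n^{1/2})$ are $o_{\PB}(n^{1/2}a_n)$: under $H_1^{\mathrm{mix}}$ the rule fails to reject with probability $\to1$, the Type~II error $\to1$, and the error sum $\to1$.

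I expect the real obstacle to be not any single estimate — the analytic content is carried entirely by the already-proved Lemmas~\ref{lem:expectation-of-Y} and~\ref{lem:general-expectation-gap} — but pinning down the \emph{two-sided} behaviour of the per-token expectation gap: the lower bound $\gtrsim\Delta_n$ that drives the positive half of the theorem requires $\bP_t\in\PM_{\Delta_n}$ (and, strictly, either Assumption~\ref{asmp:regular} or the compactness remark above when $\Delta(\bP_t)$ is not small), whereas the upper bound $\lesssim\Delta_n$ that drives the negative half requires $\bP_t\in\PM_{\Delta_n}^c$; this asymmetry is exactly what the two clauses of the statement encode. The accompanying corollary for $\hlog,\hind,\hoptarso$ is then a matter of verifying conditions (i)--(iv) through Proposition~\ref{prop:expectation-gaps}, and $h=\hars$ — whose gap is $\Theta(\Delta_n\log\tfrac1{\Delta_n})$, only a logarithmic factor larger, and which is unbounded at $y=1$ but has finite variance under both hypotheses — is handled by the same argument with the bounded-increment martingale step replaced by a finite-variance bound, leaving the boundary $q+p=\tfrac12$ unchanged.
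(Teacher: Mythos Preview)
Your proposal is correct and follows essentially the same route as the paper: decompose $S_n$ into its conditional mean (controlled via Lemma~\ref{lem:expectation-of-Y} and Lemma~\ref{lem:general-expectation-gap}, giving a signal of order $\eps_n\sum_t\Delta(\bP_t)$) plus a bounded martingale-difference remainder (variance $O(n)$, handled by Chebyshev), then compare scales with the threshold $n^{1/2}a_n$. Your attention to the case where $\Delta(\bP_t)$ is not uniformly small in part~1 (the ``compactness remark'') is a point the paper glosses over; otherwise the arguments match step for step, including the separate handling of $\hars$.
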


\begin{proof}[Proof of Theorem \ref{thm:suboptimality}]
Recall that the considered detection rule has the following form:
\begin{equation*}
T_h(\Yars_{1:n}) = 
\begin{cases}
1 & ~\text{if}~\sum_{t=1}^n h(\Yars_t) \ge n \cdot \EB_0 h(\Yars) + C \cdot n^{\frac{1}{2}} a_n,\\
0 &~\text{if}~\sum_{t=1}^n h(\Yars_t) < n \cdot \EB_0 h(\Yars) + C \cdot n^{\frac{1}{2}} a_n,
\end{cases}
\end{equation*}
where $\{a_n\}$ is a positive sequence satisfying $a_n \to \infty$ and $\frac{a_n}{\sqrt{n}} \to 0$.

By the choice of the sequence $\{a_n\}$, the Markov inequality implies that the Type I error converges to zero as $n \to \infty$,
\begin{align*}
\PB_0(T_h(\Yars_{1:n})=1)
&=\PB_0\left(\sum_{t=1}^n  [h(\Yars_t) - \EB_0 h(\Yars)] \ge C \cdot n^{\frac{1}{2}} a_n \right) \le \frac{\Var_0(h(\Yars))}{ C^2 a_n^2} \to 0.
\end{align*}

We then focus on the Type II error. 
By Lemma \ref{lem:expectation-of-Y}, we have that
\[
\EB_{1}[h(\Yars_t)\mid\FM_{t-1}] = (1-\eps_n) \EB_0 h(\Yars_t) + \eps_n \EB_{1, \bP_t}h(\Yars_t).
\]
Therefore, the Type II error can be equivalently written as
\[
\PB_1(T_h(\Yars_{1:n})=0)
= \PB_1 \left(   \frac{X_n}{\sqrt{n}} \le   \frac{\overline{X}_n }{\sqrt{n}}  + C a_n   \right),
\]
where we denote by
\begin{gather*}
X_n =  \sum_{t=1}^n  \left( h(\Yars_t)- \EB_1[h(\Yars_t)\mid\FM_{t-1}] \right)
~~\text{and}~~\overline{X}_n= \sum_{t=1}^n  \left( \EB_0 h(\Yars_t)- \EB_{1} [h(\Yars_t)\mid\FM_{t-1}]\right) .
\end{gather*}
By the construction of an embedded watermark, we know that $X_n$ is a square-integrable martingale under $H_1^{\mathrm{mix}}$.
For the considered score functions, given that the alternative CDF of $h(\Yars_t)$ is continuous in $\bP$, we know its conditional variance is also continuous in $\bP$ such that it is bounded uniformly over $\bP \in \Simplex(\Voca)$.
It implies that there exists some $C > 0$ so that for any $n \ge 1$,
\[
\EB_1 |X_n|^2 \le C \cdot n.
\]
On the other hand, by Lemma \ref{lem:general-expectation-gap}, it follows that
\[
\overline{X}_n = -\Theta\left(\eps_n\sum_{t=1}^n (1-P_{t,\max}) \right).
\]

\begin{enumerate}
\item If $p+q < \frac{1}{2}$ and $\bP_{1:n} \subset \PM_{\Delta_n}$, we then have that
\[
\overline{X}_n \le - \Theta(n \cdot\eps_n\cdot \Delta_n).
\]
By condition $\frac{a_n}{n^{1/2-p-q}} \to 0$, we have $\frac{\overline{X}_n }{\sqrt{n}} \to - \infty$ as $n \to \infty$.
Hence, by Chebyshev's inequality, it follows that as long as $n$ is sufficiently large,
\begin{align*}
\PB_1(T_h(\Yars_{1:n})=0)
\le\PB_1 \left(   \left|\frac{X_n}{\sqrt{n}}\right|  \ge    \left|\frac{\overline{X}_n }{\sqrt{n}} \right| - C a_n   \right)
\le \frac{O(1)}{n^{\frac{1}{2}-p-q}} \to 0.
\end{align*}
\item  If $p+q > \frac{1}{2}$ and $\bP_{1:n} \subset \PM_{\Delta_n}^c$, we then have that
\[
0 \ge \overline{X}_n \ge - \Theta(n \cdot \eps_n \cdot\Delta_n ),
\]
which implies that $\frac{\overline{X}_n }{\sqrt{n}} \to 0$ as $n \to \infty$. 	Hence, by Chebyshev's inequality, it follows that as long as $n$ is sufficiently large,
\begin{align*}
\PB_1(T_h(\Yars_{1:n})=1)
\le\PB_1 \left(   \left|\frac{X_n}{\sqrt{n}}\right|  \ge   C a_n -  \left|\frac{\overline{X}_n }{\sqrt{n}} \right|   \right)
\le \frac{O(1)}{a_n^2} \to 0.
\end{align*}
\end{enumerate}
The case for $\hars$ where $\overline{X}_n = -\Theta\left( \sum_{t=1}^n \eps_n (1-P_{t,\max})  \log \frac{1}{1-P_{t,\max}} \right)$ can be analyzed similarly and thus omitted.
\end{proof}

\subsection{Proof of Theorem \ref{thm:optimal-rate}}
\label{proof:rate}

\begin{rem}[Proof sketch of Theorem \ref{thm:optimal-rate}]
If all the underlying NTP distributions $\bP_t$'s are identical (e.g., by setting $\PM$ as a singleton class), the $\PM$-efficiency reduces to the Hodges-Lehmann asymptotic efficiency. When $s \in (0, 1)$ and $c_n^+ = 0$, the asymptotic behavior of $S_n^+(s)$ closely resembles that of the Kolmogorov–Smirnov test. 
Since the Hodges-Lehmann asymptotic efficiency of the Kolmogorov–Smirnov test is well-established in the literature \citep{nikitin1987hodges}, in our proof,  we address the heterogeneity of the NTP distributions \(\bP_t\)'s by reducing them to the least-favorable NTP distribution, \(\bP_{\Delta}^{\star}\), and apply this established result \citep{nikitin1987hodges} to analyze \(R_{\PM_{\Delta}}(S_n^+(s))\). It is also worth noting that the conditions $s \in (0, 1)$ and $c_n^+ = 0$ are also required by the conventional analysis of Bahadur efficiency for $S_n^+(s)$ (see Theorem 4.4 in \citep{jager2007goodness}).
Whether and how to relax these conditions is still open even in the original line of research.
\end{rem}

\begin{proof}[Proof of Theorem \ref{thm:optimal-rate}]
By the definition of the $\FPM$-efficiency (in Definition \ref{def:efficiency}), we should analyze the rate of the exponential decrease of the Type II error $\PB_1(S_n^+(s) \le \gamma_{n, \alpha})$ where $\gamma_{n, \alpha}$ is the critical value satisfying $\PB_0(S_n^+(s) \ge \gamma_{n, \alpha})=\alpha$.
By Theorem \ref{thm:adaptivity-fitness-of-good}, it follows that $\gamma_{n, \alpha} = O(\frac{\log\log n}{n})$.
It implies that for any small $\epso > 0$, as long as $n$ is sufficiently large, we have
\begin{equation}
\label{eq:rate-help1}
\PB_1(S_n^+(s) \le \gamma_{n, \alpha})
\le \PB_1(S_n^+(s) \le \epso).
\end{equation}

Subsequently, we focus on the probability $\PB_1(S_n^+(s) \le \epso)$. To this end, we need two auxiliary lemmas. The first one, Lemma \ref{lem:simplify-K}, bounds the event $\{S_n^+(s) \le \epso\}$ with a more manageable one.
\begin{lem}
\label{lem:simplify-K}
Let $s \in (0, 1)$ and $c_n^+=0$.
For any $\epso \to 0$, there exists $\deltao \to 0$ so that the following event inclusion holds:
\[
\{ S_n^+(s) \le \epso \} \subseteq \{ \FB_n(r) \le r + \deltao,~\forall r \in [0, 1] \}
\]
where $\FB_n(r) = \frac{1}{n} \sum_{t =1}^n \1_{\rp_t \le r}$ is the empirical CDF of observed $\rp$-values (where $\rp_t = 1-\Yars_t$).
\end{lem}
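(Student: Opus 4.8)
The plan is to reduce Lemma~\ref{lem:simplify-K} to a pointwise statement about the bivariate divergence $K_s$ and then patch together the two ends of $[0,1]$ separately. The structural fact that makes this work — and the reason the hypothesis $s\in(0,1)$ appears — is that for such $s$ the function $K_s(u,v)=\frac{1}{s(1-s)}[\,1-u^sv^{1-s}-(1-u)^s(1-v)^{1-s}\,]$ extends to a continuous function on the compact square $[0,1]^2$, is nonnegative there, and vanishes exactly on the diagonal $u=v$. (For $s=1$ or $s=2$ the corresponding divergence blows up as $v\to0$ or $v\to1$, so this reduction would fail.)

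First I would establish a uniform ``inverse modulus of continuity'': there is a map $\epso\mapsto\deltao(\epso)$ with $\deltao(\epso)\to0$ as $\epso\to0$ such that $0\le v\le u\le1$ and $K_s(u,v)\le\epso$ together force $u-v\le\deltao(\epso)$. Indeed, for fixed $g\in(0,1]$ the set $A_g=\{(u,v)\in[0,1]^2:u-v\ge g\}$ is compact and disjoint from the diagonal, so $m(g):=\min_{A_g}K_s$ is attained and strictly positive, and $m$ is nondecreasing in $g$; taking $\deltao(\epso):=\sup\{g\in(0,1]:m(g)\le\epso\}$ (with $\sup\emptyset=0$) works, since for each $g>0$ one has $\deltao(\epso)\le g$ as soon as $\epso<m(g)$.

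Next I would transfer this to the empirical CDF. On the event $\{S_n^+(s)\le\epso\}$ I claim $\FB_n(r)\le r+\deltao'$ for all $r\in[0,1]$, with $\deltao':=\deltao(\epso)+1/n\to0$; here I assume, as the paper does, that the $\rp$-values are a.s.\ distinct and lie in $(0,1)$. For $r<\rp_{(1)}$, $\FB_n(r)=0$ and there is nothing to prove. For $r\in[\rp_{(1)},\rp_{(n)})$ we have $\FB_n(r)=t/n\in(0,1)$ for the appropriate $t$; since $c_n^+=0$ the supremum defining $S_n^+(s)$ effectively extends over $[\rp_{(1)},\rp_{(n)}]$, so $K_s^+(\FB_n(r),r)\le S_n^+(s)\le\epso$. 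If $\FB_n(r)\le r$ the bound is trivial; otherwise $K_s^+(\FB_n(r),r)=K_s(\FB_n(r),r)$ and the previous paragraph gives $\FB_n(r)-r\le\deltao(\epso)$. For $r\in[\rp_{(n)},1]$ we have $\FB_n(r)=1$, so it suffices to show $\rp_{(n)}\ge1-\deltao'$: evaluating at $r=\rp_{(n-1)}$, where $\FB_n=(n-1)/n<1$, we get either $\rp_{(n-1)}\ge(n-1)/n$ or $K_s((n-1)/n,\rp_{(n-1)})\le\epso$, and in the latter case the modulus bound gives $(n-1)/n-\rp_{(n-1)}\le\deltao(\epso)$; in both cases $\rp_{(n)}\ge\rp_{(n-1)}\ge1-1/n-\deltao(\epso)$, hence $\FB_n(r)=1\le r+\deltao'$. (If one instead reads $K_s^+$ at $u=1$ as its continuous extension — matching the $t=n$ term of Higher Criticism in~\eqref{eq:relation-between-GoFT-HC} — one evaluates directly at $r=\rp_{(n)}$ and drops the $1/n$.) Taking $\deltao=\deltao'$ completes the argument.

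I expect the only delicate point to be the behaviour at the two ends of $[0,1]$: near $r=0$ the conclusion genuinely requires $c_n^+=0$, so that no small $\rp$-values are excluded from the supremum in $S_n^+(s)$; and near $r=1$ one must confirm that $S_n^+(s)\le\epso$ really forces $\rp_{(n)}$ close to $1$, which rests on the boundedness of $K_s$ up to the corner $u=1$ — equivalently, on the continuity of its extension used in the first step. Everything else (the $\phi_s$-divergence identities, the compactness argument) is routine bookkeeping and I do not anticipate difficulty there.
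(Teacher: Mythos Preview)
Your proposal is correct and follows essentially the same route as the paper: use that for $s\in(0,1)$ the divergence $K_s$ extends continuously to the compact square $[0,1]^2$ and vanishes only on the diagonal, extract from this a uniform inverse modulus of continuity, and then read off the bound on $\FB_n(r)-r$ pointwise. Your write-up is in fact more careful than the paper's in two places: you take a minimum over $\{u-v\ge g\}$ where the paper defines $J(x)=\sup_v K_s(v+x,v)$ (a supremum gives the inequality in the wrong direction for the implication needed), and you explicitly handle the endpoint region $r\ge \rp_{(n)}$ via the extra $1/n$ slack, which the paper's argument skips.
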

\begin{proof}[Proof of Lemma \ref{lem:simplify-K}]
When $s \in (0, 1)$, $K_s(u, v)$ is jointly continuous in the closed domain $[0, 1]^2$, so it is uniformly continuous.
By defining an auxiliary function $J$ by $J(x) = \sup_{v \in [0, 1]} K_s(v+x, v)$, we have that $J$ is a continuous function.

On the other hand, $K_s(u, u)=0$ and $\frac{\partial}{\partial u} K_s(u, v) = \frac{1}{1-s} \left[\left(  \frac{1-v}{1-u}\right)^{1-s} - \left( \frac{v}{u}\right)^{1-s} \right] \ge 0$ if $u \ge v$.
It's easy to see that $K_s(\cdot, v)$ is strictly increasing on $[v, 1]$.
As a result, $J$ is an increased function on $[0, 1]$.
As $J(0) = 0$, for any $\epso \to 0$, there exists $\deltao \to 0$ so that $\{x\ge 0: J(x) \le \epso \} = \{x \ge 0: x \le \deltao \}$. In fact, we should have $\deltao=J(\epso)$.

We aim to analyze the event $\{ S_n^+(s) \le \epso \}$ for a given $\epso$.
As $c_n^+=0$, we have that $S_n^+(s)  = \sup\limits_{r \in [0, 1] }  K_s^+(\FB_n(r), r) = \sup\limits_{r \in [0, 1] }  K_s(\FB_n(r), r) \1_{\FB_n(r) \ge r}$.
It then follows that $\{ S_n^+(s) \le \epso \} 
\subseteq \{\FB_n(r)-r \le \deltao,~\forall~r \in [0, 1] \}$.
\end{proof}

The second lemma examines the superiority of $\bP_t$ over $\FPM$ in terms of the CDF value.

\begin{lem}
\label{lem:worst-P}
Let $F_{1,\bP}(r) = \mu_{1,\bP}(\Yars \le r) = \sum_{\token \in \Voca} P_{\token} r^{1/P_{\token}}$.
It then follows that for any $r \in [0, 1]$,
\[
\sup_{\bP \in \FPM} F_{1,\bP}(r) = F_{1, \bP_{\Delta}^\star}(r)
\]
where $\bP_{\Delta}^\star$ is defined in \eqref{eq:worst-case-P}. 
\end{lem}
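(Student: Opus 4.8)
The plan is to recognize that $F_{1,\bP}(r)=\sum_{\token\in\Voca}\phi_r(P_\token)$, where $\phi_r(p):=p\,r^{1/p}$ for $p\in(0,1]$ and $\phi_r(0):=0$, so that the problem becomes maximizing a separable sum over the constrained simplex $\FPM$. First I would show that for each fixed $r\in(0,1)$ the scalar function $\phi_r$ is convex on $[0,1]$: writing $\phi_r(p)=p\,\mathrm{e}^{(\log r)/p}$, a direct differentiation gives $\phi_r''(p)=r^{1/p}(\log r)^2/p^3\ge 0$ on $(0,1]$, and since $\lim_{p\to0^+}\phi_r(p)=0$ the function extends to a convex function on the closed interval $[0,1]$ (the endpoints $r=0$ and $r=1$ are trivial, as $F_{1,\bP}(r)$ is then independent of $\bP$). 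It would then follow that $\bP\mapsto F_{1,\bP}(r)$ is a symmetric, separably convex function of its coordinates, hence Schur-convex by the Hardy--Littlewood--P\'olya (Karamata) majorization inequality.

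Next I would establish that $\bP_{\Delta}^{\star}$ is the majorization-maximal element of $\FPM$, i.e.\ $\bP\prec\bP_{\Delta}^{\star}$ for every $\bP\in\FPM$. Indeed, ordering the entries of any $\bP\in\FPM$ as $P_{(1)}\ge P_{(2)}\ge\cdots$, the cap $\max_\token P_\token\le 1-\Delta$ forces $\sum_{i=1}^k P_{(i)}\le\min\{k(1-\Delta),1\}$ for every $k$; and with $m=\lfloor 1/(1-\Delta)\rfloor$ the sorted entries of $\bP_{\Delta}^{\star}$ are $m$ copies of $1-\Delta$ followed by $1-m(1-\Delta)\in[0,1-\Delta)$ and then zeros, whose partial sums are exactly $\min\{k(1-\Delta),1\}$. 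Since both vectors sum to $1$, this is precisely the majorization relation, and $\bP_{\Delta}^{\star}\in\FPM$ because $|\Voca|\ge m+1$ (the pigeonhole bound already noted in the paper). Combining the two ingredients would give $F_{1,\bP}(r)\le F_{1,\bP_{\Delta}^{\star}}(r)$ for all $\bP\in\FPM$ and all $r\in[0,1]$, and since $\bP_{\Delta}^{\star}\in\FPM$ the supremum is attained and equals $F_{1,\bP_{\Delta}^{\star}}(r)$.

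An equivalent route that reuses machinery already in the paper would be to note that $F_{1,\cdot}(r)$ is convex on the compact convex polytope $\FPM$, hence attains its maximum at an extreme point; by Lemma~3.4 of \citep{li2024statistical} the extreme points of $\FPM$ are exactly the permutations of $\bP_{\Delta}^{\star}$, and $F_{1,\cdot}(r)$ is permutation-invariant, so the maximum is $F_{1,\bP_{\Delta}^{\star}}(r)$. The only steps needing care are verifying that the convexity of $\phi_r$ genuinely extends to the boundary $p=0$ (handled via the continuous extension, using that $\phi_r$ is star-shaped at $0$) and the bookkeeping when $1/(1-\Delta)$ is an integer, in which case the remainder entry of $\bP_{\Delta}^{\star}$ is $0$. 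I do not anticipate a substantive obstacle here: the statement is at heart a convexity-plus-majorization fact, and the bulk of the work is the one-line second-derivative computation plus the partial-sum comparison.
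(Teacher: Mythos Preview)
Your proposal is correct, and your second route---convexity of $F_{1,\cdot}(r)$ in $\bP$, maximum attained at an extreme point of $\FPM$, extreme points being permutations of $\bP_{\Delta}^{\star}$ by the companion paper, permutation-invariance---is exactly the paper's proof (the paper writes only ``Note that $F_{1,\bP}(r)$ is convex in $\bP$ in any $r\in[0,1]$. A similar argument of Lemma~3.3 in \citep{li2024statistical} can prove this lemma.'').

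Your first route via Schur-convexity and majorization is a genuinely different packaging. It trades the extreme-point characterization of $\FPM$ (which requires citing the companion paper) for the elementary fact that $\bP_{\Delta}^{\star}$ is majorization-maximal in $\FPM$, proved by the one-line partial-sum bound $\sum_{i=1}^k P_{(i)}\le\min\{k(1-\Delta),1\}$. This makes the argument fully self-contained and slightly more informative: it shows that $\bP\mapsto F_{1,\bP}(r)$ is monotone along the entire majorization order on $\FPM$, not merely maximized at the extreme points. The cost is that you must check convexity of $\phi_r$ at the boundary $p=0$, which you handle correctly via the star-shapedness observation (equivalently, $r^{1/p}$ is increasing in $p$ for $r\in(0,1)$). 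Both routes rest on the same convexity computation $\phi_r''(p)=r^{1/p}(\log r)^2/p^3\ge 0$.
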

\begin{proof}[Proof of Lemma \ref{lem:worst-P}]
Note that $F_{1,\bP}(r)$ is convex in $\bP$ in any $r\in[0, 1]$. 
A similar argument of Lemma 3.3 in \citep{li2024statistical} can prove this lemma.
\end{proof}

Recall that $\rp_t = 1 - \Yars_t$.
By Lemma \ref{lem:expectation-of-Y}, we know that the CDF of $\rp_t$ conditioned on $\FM_{t-1}$ depends only on $\bP_t$. We denote the CDF of $\rp_t$ conditioned on $\FM_t$ by $G_{\bP_t}$.
It then follows from Lemma \ref{lem:expectation-of-Y} that for any $r \in [0, 1]$,
\begin{align*}
G_{\bP_t}(r) = \PB_1(\Yars_t \ge 1- r|\bP_t)
=(1-\eps) r + \eps[1- F_{1,\bP_t}(1-r)].
\end{align*}
By Lemma \ref{lem:worst-P}, it follows that $G_{\bP_t}(r)\ge G_{\bP_{\Delta}^{\star}}(r)$ for any $r \in [0,1]$ and $t \ge 1$ (since $\bP_t \in \FPM$ by the definition of $\FPM$-efficiency).
Note that each $G_{\bP}$ is strictly increasing and continuous, then its inverse exists uniquely which we denote by $G_{\bP}^{-1}$.
It follows that for any $r \in [0, 1]$,
\begin{equation}
\label{eq:inverse-G}
G_{\bP_t}^{-1}(r) \le G_{\bP_{\Delta}^{\star}}^{-1}(r).
\end{equation}

Now, we are ready to prove this theorem.
\begin{align*}
\PB_1(S_n^+(s) 
\le \gamma_{n, \alpha})
&\overset{(a)}{\le} \PB_1(S_n^+(s) \le \epso) \\
&\overset{(b)}{\le} \PB_1(\FB_n(r)\le r +\deltao,~\forall~r \in [0, 1]) \\
&= \PB_1\left(\FB_n\left(G_{\bP_{\Delta}^{\star}}^{-1}(r)\right)\le G_{\bP_{\Delta}^{\star}}^{-1}(r) +\deltao,~\forall~r \in [0, 1]\right) \\
&\overset{(c)}{\le} \PB_1\left( \frac{1}{n} \sum_{t=1}^n \1_{\rp_t \le G_{\bP_t}^{-1}(r) } \le G_{\bP_{\Delta}^{\star}}^{-1}(r) +\deltao,~\forall~r \in [0, 1]\right) \\
&\overset{(d)}{=} \PB_1\left( \frac{1}{n} \mathrm{N}(nr) \le G_{\bP_{\Delta}^{\star}}^{-1}(r) +\deltao,~\forall~r \in [0, 1]\right),
\end{align*}
where $(a)$ uses \eqref{eq:rate-help1}, $(b)$ uses Lemma \ref{lem:simplify-K}, $(c)$ uses \eqref{eq:inverse-G}, and $(d)$ uses the fact that $G_{\bP_t}(\rp_t) \overset{i.i.d.}{\sim} \UM(0, 1)$, as a result of which, $\frac{1}{n} \sum_{t=1}^n \1_{\rp_t \le G_{\bP_t}^{-1}(r) } = \frac{1}{n} \sum_{t=1}^n \1_{ G_{\bP_t}(\rp_t) \le r}$ has the same distribution of $\frac{1}{n} \mathrm{N}(nr)$ where $\mathrm{N}(t)$ is a Poisson process with parameter 1 under the condition that $\mathrm{N}(n)=n$.
We emphasize that the right-hand side of $(d)$ does not depend on the NTP distributions $\bP_t$'s.

We will apply Theorem 2 in \citep{nikitin1987hodges} to bound the right-hand side of $(d)$. To do this, we need to verify that \(G_{\bP_{\Delta}^{\star}}^{-1}(r)\) is a convex function of \(r\), which is a prerequisite of that Theorem 2. This follows easily from the fact that \(F_{1, \bP_{\Delta}^{\star}}(r)\) is a convex function of \(r\). Therefore, by Theorem 2 in \citep{nikitin1987hodges}, it follows that
\begin{align*}
\lim_{\epso \to 0} \lim_{n \to \infty}\frac{1}{n}\log \PB_1\left( \frac{1}{n} \mathrm{N}(nr) \le G_{\bP_{\Delta}^{\star}}^{-1}(r) +\deltao,~\forall~r \in [0, 1]\right) 
&\le - \int_0^1 [G_{\bP_{\Delta}^{\star}}^{-1}]'(r) \log [G_{\bP_{\Delta}^{\star}}^{-1}]'(r) \rd r\\
&= -\KL(\mu_0, (1-\eps)\mu_0 + \eps \mu_{1, \bP_{\Delta}^{\star}}),
\end{align*}
where $[G_{\bP_{\Delta}^{\star}}^{-1}]'$ is the derivative of $G_{\bP_{\Delta}^{\star}}^{-1}$. 
The proof technique used therein is mainly adapted from the classic works \citep{borovkov1967boundary,borokov1968asymptotically} that study large deviations in boundary-value problems.
As a result, we prove the upper bound:
\[
\limsup_{n \to \infty}\sup_{\bP_t \in \FPM}\frac{1}{n} \log \PB_1(S_n^+(s) \le \gamma_{n, \alpha}) \le -\KL(\mu_0, (1-\eps)\mu_0 + \eps \mu_{1, \bP_{\Delta}^{\star}}).
\]
For the lower bound, it essentially follows from the worst nature:
\begin{align*}
\liminf_{n \to \infty}\sup_{\bP_t \in \FPM}\frac{1}{n} \log \PB_1(S_n^+(s) \le \gamma_{n, \alpha}) 
&\ge \liminf_{n \to \infty}\sup_{\bP_t \in \{\bP_{\Delta}^{\star} \} }\frac{1}{n} \log \PB_1(S_n^+(s) \le \gamma_{n, \alpha}) \\
&\overset{(*)}{\ge}  -\KL(\mu_0, (1-\eps)\mu_0 + \eps \mu_{1, \bP_{\Delta}^{\star}}).
\end{align*}
We explain the last inequality $(*)$ as follows.
When $\bP_t \equiv \bP_{\Delta}^{\star}$, we are essentially dealing with an i.i.d. case where the alternative CDF of each $\Yars_t$ is $G_{\bP_{\Delta}^{\star}}$.
The inequality $(*)$ represents the lower bound for i.i.d. case efficiency, where our $\FPM$-efficiency reduces to the Hodges-Lehmann efficiency \citep{nikitin1987hodges}.
This inequality $(*)$ follows from the existing lower bound found in \citep{rao1962efficient}.

Combining these two parts, we complete the proof.
\end{proof}

\section{Details of Simulation Studies}
\label{sec:additional-simulation}

\subsection{Additional Histogram of \Algo~Statistics}
\label{sec:additional-result-GOT-nomask}

\begin{figure}[!th]
\centering
\includegraphics[width=1.0\textwidth]{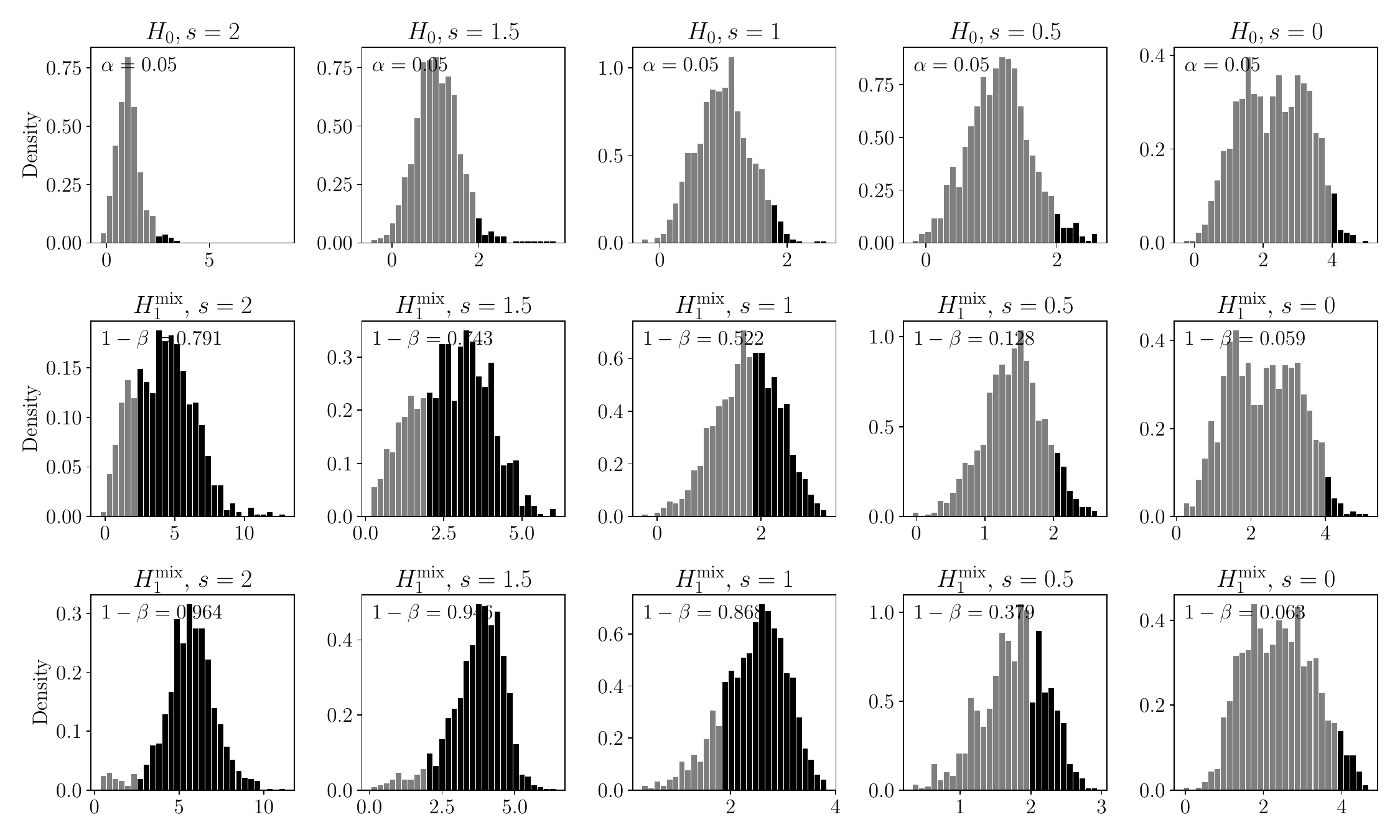}
\caption{Density histograms and powers of $\log(nS_n^+(s))$ for different values of $s$ with $c_n^+ = 0$ and $(p, q)=(0.2, 0.5)$.
The first row presents results under $H_0$, while the second and third rows display results under $H_1^{\mathrm{mix}}$, corresponding to the \textsf{M1} and \textsf{M2} settings, respectively. The dark area indicates the portion of the distribution that rejects $H_0$, which is the Type I error $\alpha$ under the null hypothesis $H_0$ and the power $1 - \beta$ under the alternative hypothesis $H_1^{\mathrm{mix}}$.
 }
\label{fig:other-s-hist-nomask}    
\end{figure}

In Figure \ref{fig:other-s-hist-nomask}, we present the density histograms and powers of $\log(nS_n^+(s))$ for different values of $s$ with $c_n^+ = 0$ and $(p, q)=(0.2, 0.5)$, following the same setting introduced in Section \ref{sec:simulation-main}. We observe that even with $c_n^+$ set to $0$, the patterns identified in Figure \ref{fig:other-s-hist} are still maintained, except that the support under $H_1^{\mathrm{mix}}$ is considerably larger due to the heavy-tailed behavior of $\rp_{(1)}$.

\subsection{Histograms of HC}
\label{sec:exp-hist}

\begin{figure}[!th]
\centering
\includegraphics[width=1.0\textwidth]{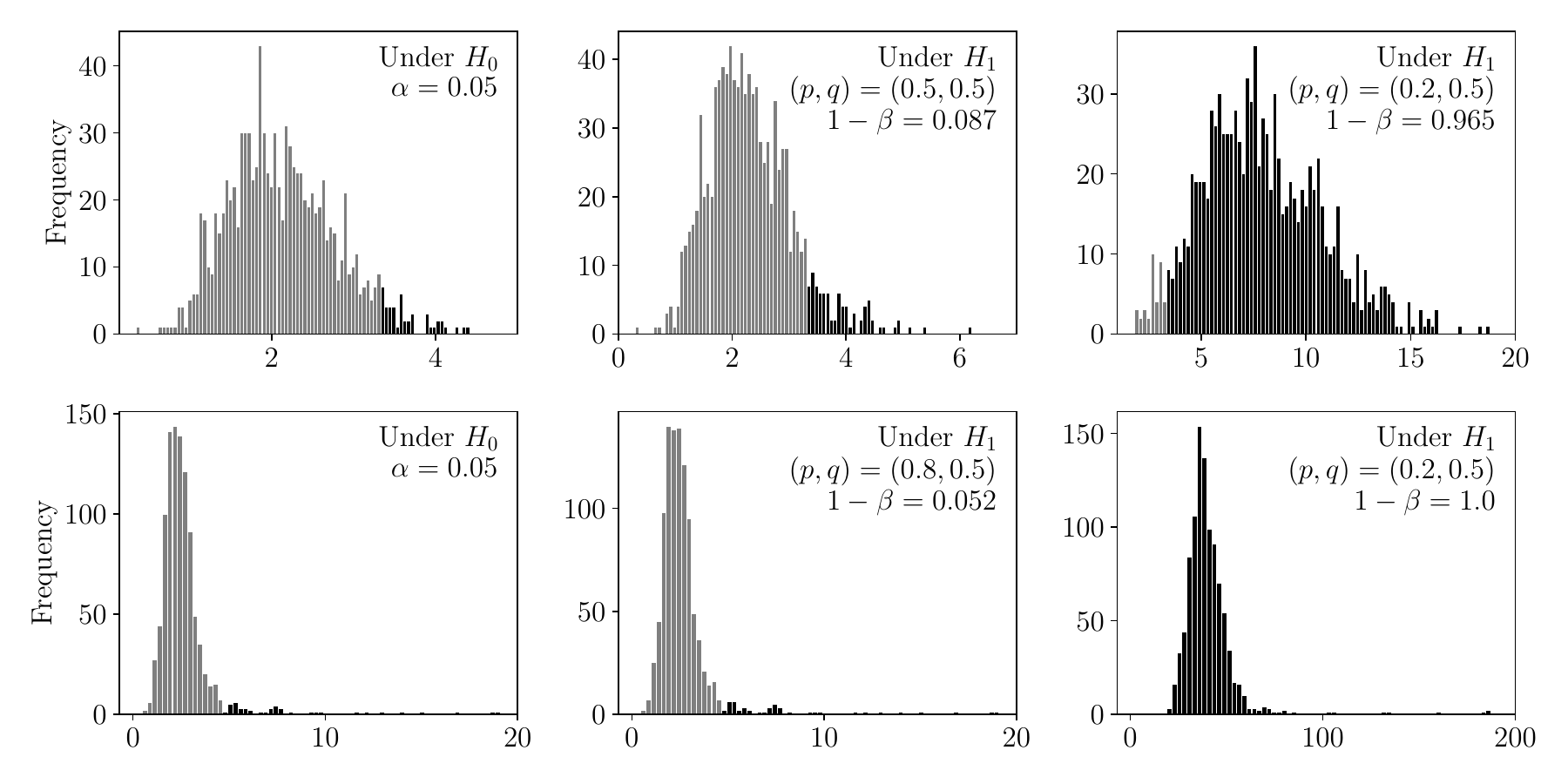}
\caption{Frequency histograms for $\HC_n^{+}$ with $c_n^+ = \frac{1}{n}$ (top) and $\HC_n^{\star}$ (bottom). The dark area indicates the portion of the distribution that rejects $H_0$ which is the the Type I error $\alpha$ under the null hypothesis $H_0$ and the power $1 - \beta$ under the alternative hypothesis $H_1^{\mathrm{mix}}$.
}
\label{fig:robust-hist}
\end{figure}

Though $\HC_n^{+}$ is connected to $S_n^+(s)$ via the square relation in \eqref{eq:relation-between-GoFT-HC}, their distribution might have slightly different shapes and ranges.
We perform an empirical study of the distribution of $\HC_n^{+}$ and $\HC_n^{\star}$ in this subsection.
Recall that $\HC_n^{\star}$ is the special instance of $\HC_n^{+}$ where $c_n^+=0$.
We set $\Delta_n = n^{-q}$ and $\eps_n = n^{-p}$, and use the following procedure to create samples from null and alternative.
\begin{enumerate}
\item Draw $n=10^4$ samples from $\UM(0, 1)$ to represent $H_0$ and then calculate $\HC_n^{+}$ or $\HC_n^{\star}$.
\item Replace $\ceil{n \eps_n}$ of the previous samples by the same number of samples from $F_{1,\bP_t}$ where $\bP_t$ is generated according to the \textsf{M3} method in which the top probability is forced to be $1-\Delta_n$. 
\item Repeat Steps 1 and 2 over $N=10^3$ times and make histograms of the simulated $\HC_n^{+}$ or $\HC_n^{\star}$.
\end{enumerate}

See Figure~\ref{fig:robust-hist} for the distribution of $\HC_n^{+}$ (top) and $\HC_n^{\star}$ (bottom) under interesting $(p, q)$ pairs.
Let's first focus on the top row.
Under $H_0$, the distribution of $\HC_n^{+}$ values is primarily concentrated around 2.
As suggested in~\citep{tony2011optimal}, we pick the $(1-\alpha)$-quantile of the empirical distribution of $\HC_n^{+}$ as the critical value, which is usually much more accurate than $\sqrt{(2+\delta)\log\log n}$.
This rejection region is marked in black in Figure~\ref{fig:robust-hist}, which corresponds to a Type I error rate of $\alpha$ under $H_0$ and a power of $1 - \beta$ under $H_1^{\mathrm{mix}}$.
Recall that our experiment setup confirms that $P_{t, \max}= 1-\Delta_n$ for all $t \in [n]$.
Under $H_1^{\mathrm{mix}}$ with $(p, q) = (0.5, 0.5)$, the distribution nudges slightly right, with a power of $1 - \beta = 0.087$. This slight shift aligns with Theorem~\ref{thm:gumbel-dense}, where the statistical indistinguishability of $H_1^{\mathrm{mix}}$ from $H_0$ results in a similar distribution of $\HC_n^{+}$ if $q + 2p > 1$.
Conversely, if $q + 2p < 1$, the distributions of $\HC_n^{+}$ under $H_1^{\mathrm{mix}}$ and $H_0$ diverge.
As a result, under $H_1^{\mathrm{mix}}$ with $(p, q) = (0.2, 0.5)$, $\HC_n^{+}$ centers around 8 instead and achieves a higher power of $1 - \beta = 0.965$.

The bottom row of Figure~\ref{fig:robust-hist} shows that $\HC_n^{\star}$ tends to exhibit significantly large values particularly when $2p+q<1$. 
This extreme value could potentially lead to the outlier issue.
To illustrate, under $H_0$, the largest value observed for $\HC_n^{+}$ typically hovers around 4, whereas for $\HC_n^{\star}$, it can exceed 15. 
This observation aligns with the theoretical analysis in Section 3 \citep{donoho2004higher} that shows $\HC_n^{\star}$ has ``heavy tails'' under $H_0$.
This heavy-tail issue is more severe under $H_1^{\mathrm{mix}}$: the largest value of $\HC_n^{\star}$ could exceed 180, while the counterpart of $\HC_n^{+}$ is merely around 18.
This disparity raises a concern: it becomes challenging to determine whether a large value of $\HC_n^{\star}$ is attributable to its tendency towards extreme values (due to the heavy-tail issue), or a strong indication of embedded watermarks.
$\HC_n^{+}$ is proposed to mitigate this issue and exhibits less heavy-tailed performance.

\subsection{Detection Boundaries for HC}
\label{sec:HC-boundary}
Generally, the decision boundary for HC methods should align with that of $n S_n^+(2)$ due to the relation in \eqref{eq:relation-between-GoFT-HC}. 
For completeness, we will include numerical illustrations for HC methods after presenting the corresponding results for $n S_n^+(s)$ in the main text.

\begin{figure}[!th]
\centering
\includegraphics[width=1.0\textwidth]{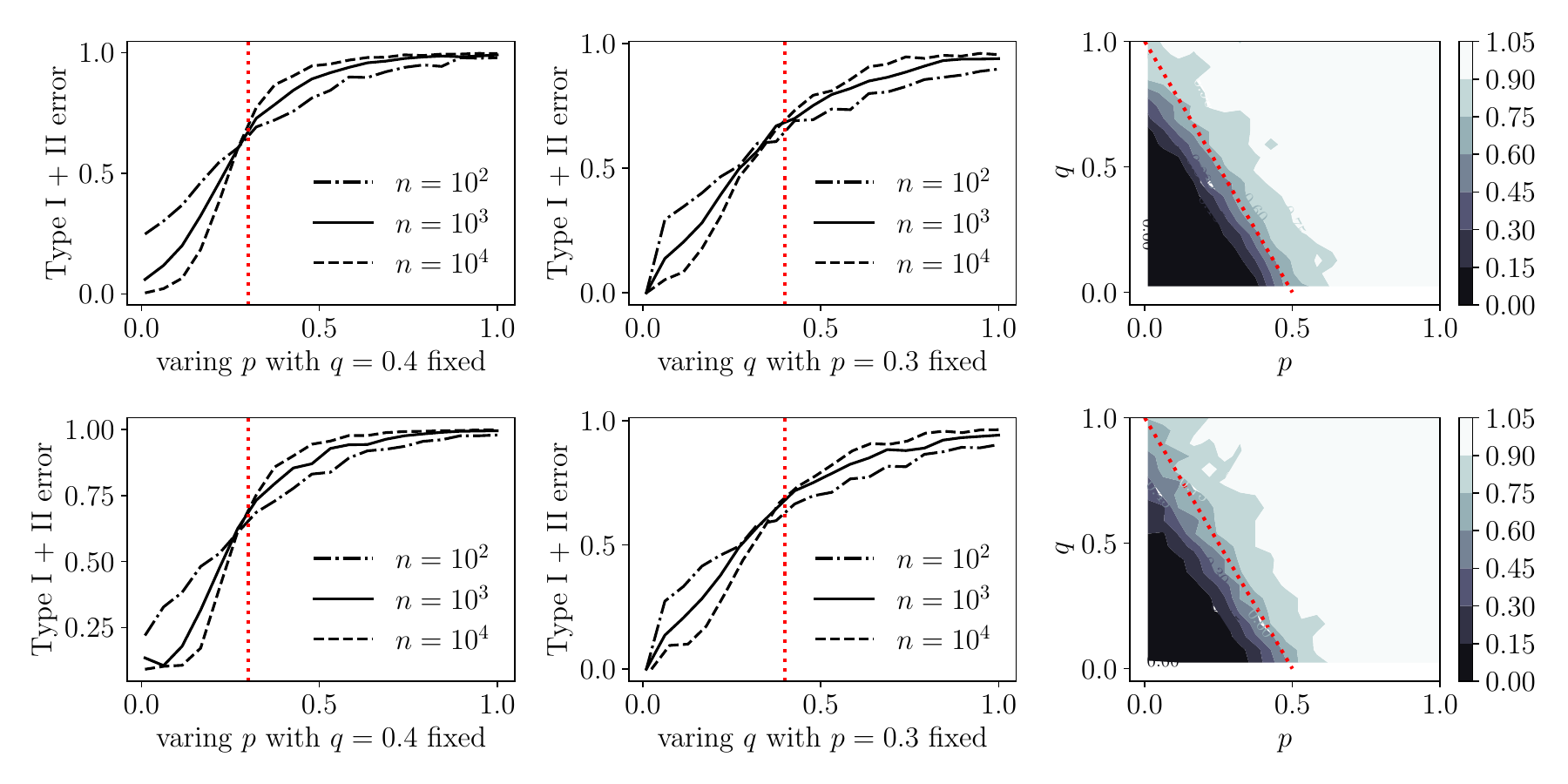}
\caption{The smallest sum of Type I and Type II errors of $\HC_n^{+}$ with $c_n^+ = \frac{1}{n}$ (top) and $\HC_n^{\star}$ (bottom). The red dotted line indicates the theoretical boundary. Each point is obtained by averaging $N=10^3$ independent experiments.
}
\label{fig:empirical-error}
\end{figure}

We aim to verify the correctness of Theorem \ref{thm:adaptivity}.
This theory implies that $\HC_n^{+}$ is of $O_{\PB}(\sqrt{\log\log n})$ under $H_0$ and is of $\Omega_{\PB}(n^{\frac{1}{2}-p-\frac{q}{2}})$ under $H_1^{\mathrm{mix}}$.
Hence, detectability requires increasingly large samples as one approaches the detection boundary $q+2p=1$. 
We investigate the detection boundary by checking the smallest sum of Type I and Type II errors in the following.
Following the approach in \citep{tony2011optimal}, we tune the parameter $\delta$ as the optimal value in the set $\{0, 0.2, 0.4, \ldots, 3.8, 4\}$ that results in the smallest sum of Type I and Type II errors.
We use the same setup introduced in Section \ref{sec:simulation-main} where the vocabulary size is $|\Voca|=10^3$ and each NTP distribution $\bP_t$ is $\bQ_{\Delta_n}$ which is a uniform distribution, after ensuring that the highest probability is $1-\Delta_n$.
Again, we replace pseudorandomness with true randomness for illustration purposes.

The results are displayed in Figure \ref{fig:empirical-error}. All observations noted in Figure \ref{fig:empirical-error-others} remain applicable here as well.
These results not only substantiate the accuracy of our theory but also provide empirical evidence supporting the claim: $\HC_n^{+}$ has the same asymptotic behavior as $\HC_n^{\star}$, making them an identical detection boundary. 
As discussed in Remark~\ref{rem:proof-HCstar}, Theorem~\ref{thm:adaptivity} remains true even if we replace $\HC_n^{+}$ with $\HC_n^{\star}$. 

% We repeated the generation $10^2$ times to obtain an empirical distribution of $\HC_n^+$.

\section{Details of Language Model Experiments}
\label{sec:LLM}

\subsection{Details of Experiment Setup}
\label{exp:detials}
We employ a context window of size $m=5$, allowing the randomness variable $\xi_t = \AM(s_{(t-m):(t-1)}, \Key)$ to depend on the previous $m$ tokens. We utilize the hash function $\AM$, as used in \citep{zhao2024permute}, for generating this randomness. 
We set different values of $c_n^+$ for the \Algo~test in different experiments. For the experiments in Section \ref{sec:statistical-power}, we use $c_n^+ = \frac{1}{n}$. The specific values of $c_n^+$ for other experiments will be provided in the following implementation details. In all experiments, we apply a watermark to a token only if the current text window is unique within the generation history, aiming to prevent repetitive generation \cite{hu2023unbiased, wu2023dipmark, Dathathri2024}. When no watermark is applied, we use multinomial sampling from the NTP distribution with the temperature set to 0.7.

\paragraph{How to generate prompts.}
The experimental setup we employed is largely based on the methodology described in Appendix D.1 of \citep{kuditipudi2023robust}. 
In our approach, each generation is conditioned on a prompt which is obtained by sampling documents from the news-oriented segment of the C4 dataset \citep{raffel2020exploring}.
We enforce a minimum prompt size of 50 tokens in all experiments and skip over any document that is not long enough.
Note that retokenization may not precisely match the original tokens. Therefore, to guarantee that the verifier consistently receives at least $n$ tokens, we augment its input with special padding tokens, which vary according to the tokenizer of each model. Additionally, to mitigate the need for padding, we initially generate many buffer tokens beyond $n$. 
We set the number of buffer tokens to be 20 in every experiment.
This additional buffer typically makes padding unnecessary.

\paragraph{Computation of critical values.}
Critical values are used to control the Type I error.
For sum-based detection rules, we estimate the critical value by
\[
\hat{\gamma}_{n,\alpha} = n \cdot \EB_0 h (Y) +  \Phi^{-1}(1-\alpha) \cdot \sqrt{n \cdot \Var_0(h(Y))}.
\]
It is easy to see that $\PB_0(\sum_{t=1}^n h(\Yars_t) \ge \hat{\gamma}_{n,\alpha}  ) \to \alpha$ due to the central limit theorem.
To determine the critical value for HC and \Algo, we resort to simulation.
For a given text length $n$ and $s \in [-1 ,2]$, we generate $n$ i.i.d. copies of $\Yars_t$ from $\UM(0, 1)$ and calculate the corresponding statistic $\log(nS_n^+(s))$.
This procedure is replicated $10^4$ times, using the empirical $1-\alpha$ quantile of these $10^4$ samples as an initial estimate. 
To enhance the precision of this estimate, we repeat the process 10 times and average these 10 initial estimates to establish the final critical value.

\paragraph{Details of the random edits.}
In general, we set the text length to $n = 400$ to fully utilize the observed data. However, when the detection problem becomes easier, $n = 400$ may be excessive, leading to near-zero Type II errors for most detection methods. In such cases, we reduce the length for better visualization. In the substitution and insertion experiments, $n$ is set to 400 for temperatures of 0.1 and 0.3, and reduced to 200 for temperatures of 0.7 and 1. For the deletion experiment, we initially generate more than $n$ watermarked tokens to ensure that, even after deleting a fraction, at least $n$ tokens remain. In this case, $n$ is set to 200.
For the \Algo~test, we select the best stability parameter $c_n^+$ from the set $\{0, 10^{-3}, \frac{1}{n}\}$ (with $n$ varying accordingly).

\paragraph{Details of Figures \ref{fig:intro} and \ref{fig:intro-better-result}}
In Figures \ref{fig:intro} and \ref{fig:intro-better-result}, for the paraphrase edit, we use random synonym substitution to edit watermarked texts and evaluate Type II error as a function of text length, with a temperature parameter of 0.3. We select 1000 prompts from the C4 news-like dataset as before and randomly replace $5\%$ of words. For each selected word, synonyms with multiple alternatives are retrieved from WordNet \citep{miller1995wordnet} to ensure meaningful substitutions. The modified text is then reconstructed with each target word replaced by a randomly chosen synonym.

In the adversarial edit, we first compute the pivotal statistics for all tokens, then identify the top 5\% with the largest values, replacing these tokens with uniformly selected alternatives. This type of edit assumes that the human editor has knowledge of the hash function $\AM$ and the secret key $\Key$, which generally results in the removal of more watermark signals.

For the \Algo~test, we set $s=1.5$ and $c_n^+ = \frac{1}{n}$. For $\hoptarso$, we set $\Delta_0 = 0.1$.

\paragraph{Details of the edit tolerance limit.}
We prompt ChatGPT-4o to generate 100 popular poems along with their authors and ask the target model (either OPT-1.3B or Sheared-LLaMA-2.7B) to perform either the poem recitation or poem generation task. Taking the poem \textsf{Adonais by Percy Bysshe Shelley} as an example, the prompt for poem recitation is: \textsf{Please recite the poem: Adonais by Percy Bysshe Shelley}. For poem generation, the prompt is: \textsf{Please write a new poem in the style of this one: Adonais by Percy Bysshe Shelley}.
The temperature is set to 1 for both the poem recitation and generation tasks.

We use binary search (Algorithm \ref{algo:binary}) to determine the edit tolerance limit. 
We set the initial length to $\tn = 400$. 
We pass only the first $n$ tokens in the edited text to our verifier.
For random substitution and insertion, the test length is $n = 200$, while for random deletion, $n = 100$.
All critical values are computed with a Type I error rate of $\alpha=0.01$. For simplicity, the stability parameter $c_n^+$ in the \Algo~test is set to $10^{-3}$.

\begin{algorithm}[t]
\caption{Binary search to compute the edit tolerance limit}
\begin{algorithmic}[1]
\State \textbf{Input:} A watermarked sentence $\ttoken_{1:\tn}$ and a specified edit.
\State \textbf{Initial:} Set $l = 1$, $u = \tn$, and generate a random permutation $\pi_0$ over the vocabulary $\Voca$.
\While{$u - l \ge 2$}
\State Compute the middle point $m = \left\lfloor\frac{u+l}{2}\right\rfloor$.
\State Apply the considered edit type to corrupt the watermarked tokens $\ttoken_{\pi(1)},\ldots, \ttoken_{\pi(m)}$.
\State Denote the resulting edited text by $\token_{1:n_1}^{(m)}$.
\State Pass the first $n$ tokens of $\token_{1:n_1}^{(m)}$ to the detection method.
\If{the detection method rejects the null hypothesis $H_0$}
\State Set $l \gets m$
\Else
\State Set $u \gets m$
\EndIf
\EndWhile
\State \textbf{Output:} The edit tolerance limit is $\frac{m}{\tn} \times 100\%$.
\end{algorithmic}
\label{algo:binary}
\end{algorithm}

\paragraph{Details of the roundtrip translation.}
The roundtrip translation, as described in Appendix D.2 of \citep{kuditipudi2023robust}, involves translating text from English to French and back to English using the OPUS-MT series of translation models. These models are available on Huggingface Hub.\footnote{https://huggingface.co/.} The specific models used for this attack are:
\begin{enumerate}
    \item \textsf{Helsinki-NLP/opus-mt-tc-big-en-fr} for English to French translation,
    \item \textsf{Helsinki-NLP/opus-mt-tc-big-fr-en} for French to English translation.
\end{enumerate}
This method leverages the subtle nuances of translation to detect inconsistencies or vulnerabilities in language models.
Since the text length may change after roundtrip translation, we use the last 200 tokens from each (edited) sentence as the input to our verifier. If a sentence is shorter than 200 tokens, we pad it with zeros at the beginning to ensure a consistent length of 200.
We set $c_n^+ = \frac{1}{n}$ for the \Algo~test.

\subsection{Additional Results on Statistical Power}
\label{appen:extened-results}

% \afterpage{\clearpage}

Figure \ref{fig:type-I-and-II-errors-2p7} presents results analogous to those shown in Figure \ref{fig:type-I-and-II-errors}, but with the use of a larger model, Sheared-LLaMA-2.7B \citep{xia2023sheared}. All observations noted in Section \ref{sec:statistical-power} remain valid: \Algo~performs exceptionally well at low temperatures and achieve performance comparable to the practical $\hars$ at high temperatures.

\begin{figure}[htbp]
\centering
\includegraphics[width=\textwidth]{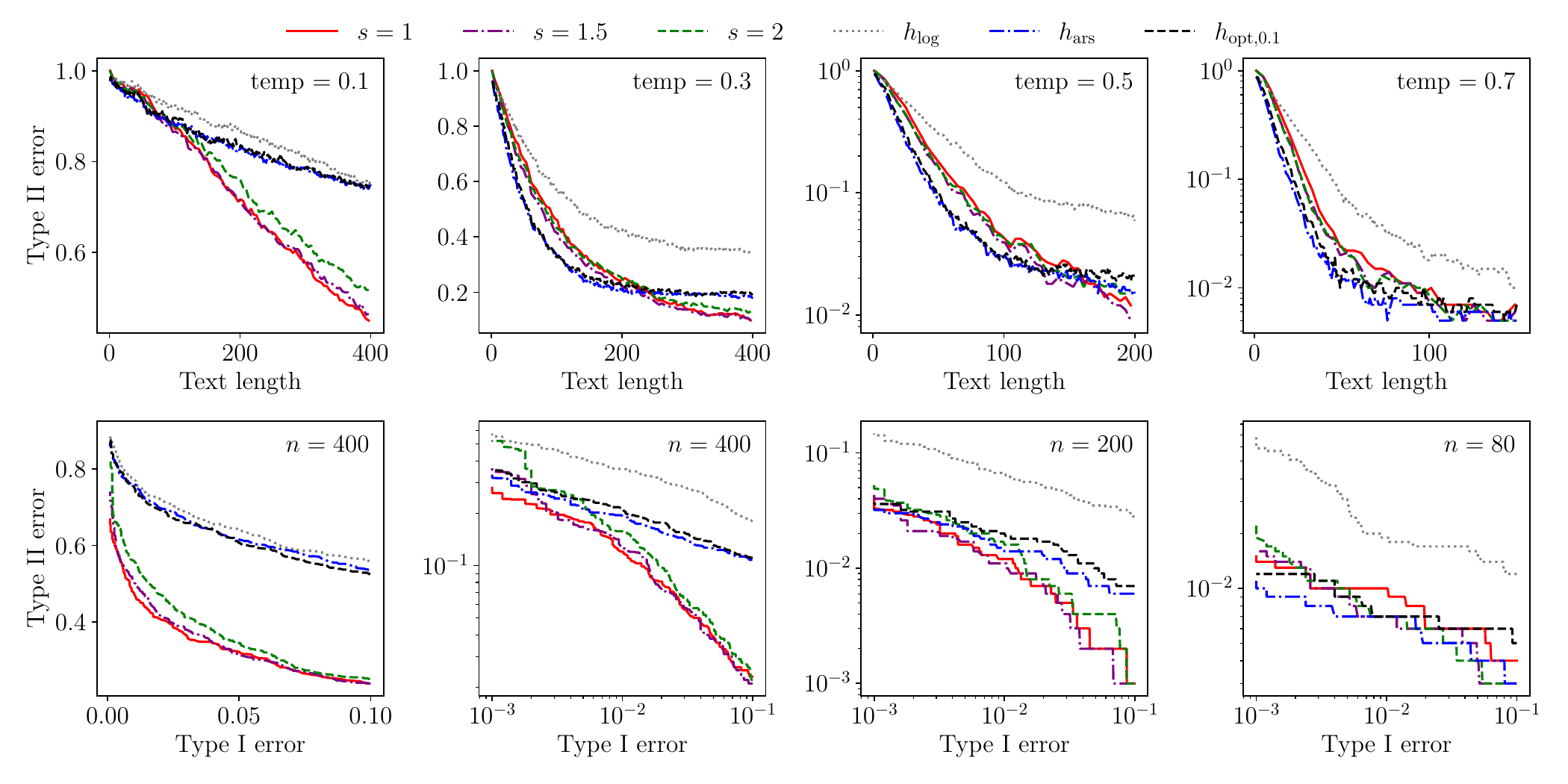}
\caption{Empirical Type II errors (top row) on the C4 dataset across different detection rules applied to the Gumbel-max watermark. Here we use Sheared-LLaMA-2.7B \citep{xia2023sheared}. The bottom row illustrates the trade-off function for a specific text length $n$. The temperatures used, from left to right columns, are 0.1, 0.3, 0.5, and 0.7, respectively.
}
\label{fig:type-I-and-II-errors-2p7}
\end{figure}

\newpage
\subsection{Additional Results on Robustness}
\label{appen:extened-robust-results}
Figure \ref{fig:modification-2p7} presents results analogous to those shown in Figure \ref{fig:modification-nomask}, but with the use of a larger model, Sheared-LLaMA-2.7B \citep{xia2023sheared}. 
Similarly, Figure \ref{fig:translation-2p7} presents results analogous to those shown in Figure \ref{fig:translation-best}, and Figure \ref{tab:largest-fraction-2.7B} presents results analogous to those shown in Figure \ref{tab:largest-fraction-1.3B} with this larger model.
Figure \ref{fig:adversarial-all} provides the complete results of adversarial edits on OPT-1.3B, expanding on Figure \ref{fig:adversarial-best}.
Corresponding results for the Sheared-LLaMA-2.7B model are shown in Figure \ref{fig:adversarial-all-2p7B}.

All observations from Section \ref{sec:robust-evaluation} remain consistent: (i) \Algo~performs exceptionally well at low temperatures and shows comparable performance to the practical $\hars$ at higher temperatures, and (ii) they exhibit the largest edit tolerance limit in most cases among all the detection methods considered.

\begin{figure}[htbp]
\centering
\includegraphics[width=\textwidth]{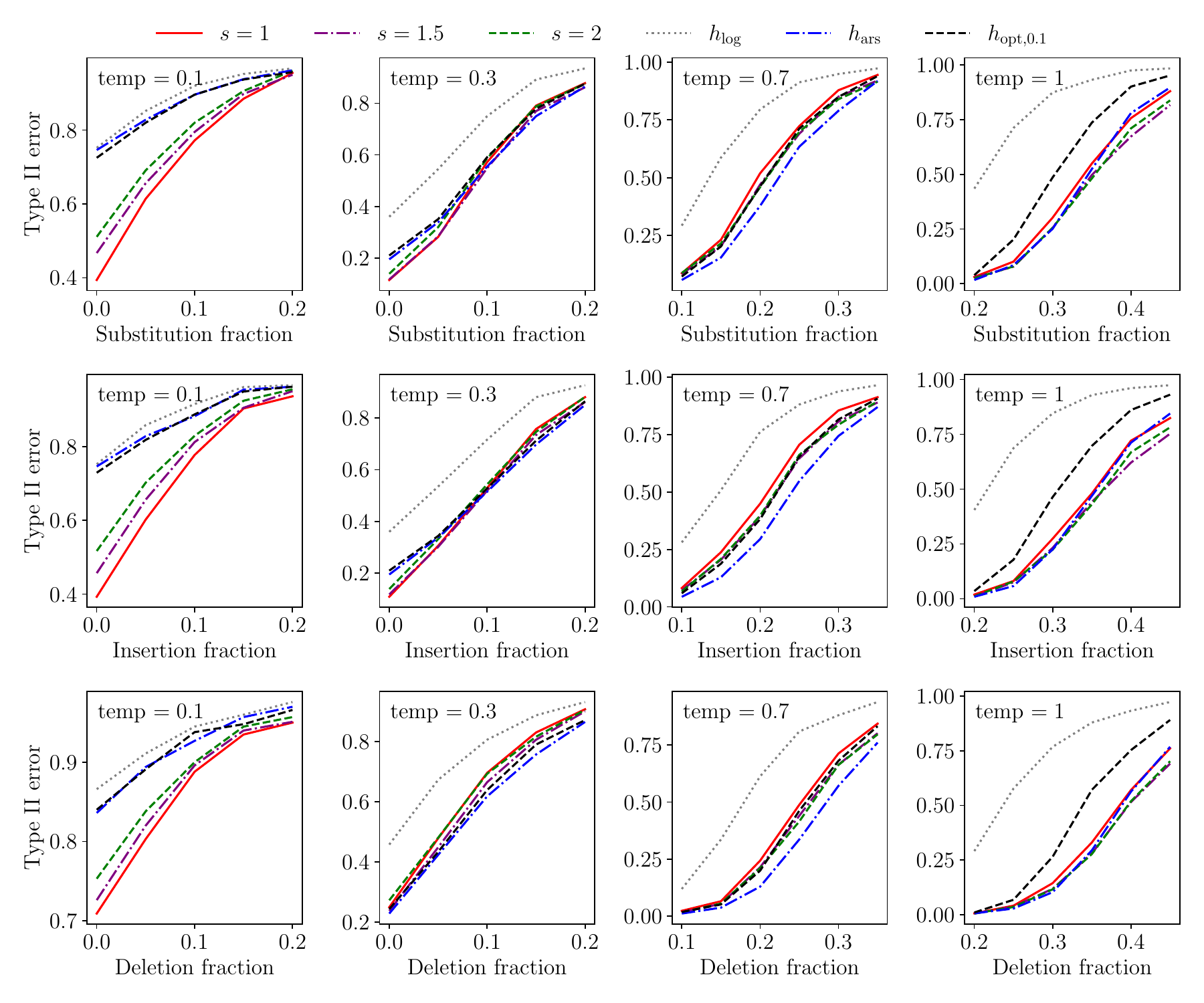}
\caption{
Effect of three random edits on Type II error across different temperatures at $\alpha=0.01$ on the Sheared-LLaMA-2.7B model. The top, middle, and bottom plots correspond to random substitution, insertion, and deletion, respectively.
}
\label{fig:modification-2p7}
\end{figure}

\begin{table}
\centering
\resizebox{\textwidth}{!}{ 
\begin{tabular}{c|c|cccccccc}
\toprule
Task & {Edit types}  & $s=1$ & $s=1.5$& $s=2$  & $\hlog$ & $\hars$ & $h_{\mathrm{opt}, 0.3}$ & $h_{\mathrm{opt}, 0.2}$& $h_{\mathrm{opt}, 0.1}$ \\
\midrule
\multirow{3}{*}{\shortstack{Poem \\Recitation}} 
& Substitution & 36.22&38.49&\textbf{38.76}&22.39&37.08&26.1&28.43&32.03\\
\cmidrule{2-10}
& Insertion &38.49&40.77&\textbf{40.83}&24.93&40.45&28.92&31.79&36.01\\
\cmidrule{2-10}
& Deletion & 35.85&38.66&\textbf{39.43}&22.66&37.73&25.2&28.38&32.34\\
\midrule
\multirow{3}{*}{\shortstack{Poem \\Generation}} 
& Substitution & 36.06&38.15&\textbf{38.81}&22.0&36.88&26.07&28.85&31.94\\
\cmidrule{2-10}
& Insertion & 38.7&40.86&\textbf{41.75}&24.94&40.22&28.35&31.47&35.03\\
\cmidrule{2-10}
& Deletion &39.7&\textbf{42.12}&41.83&21.93&41.08&26.6&30.8&35.23\\
\bottomrule
\end{tabular}}
\caption{The edit tolerance limits $(\%)$ on the Sheared-LLaMA-2.7B model.}
\label{tab:largest-fraction-2.7B}
\end{table}

\begin{figure}[htbp]
\centering
\includegraphics[width=\textwidth]{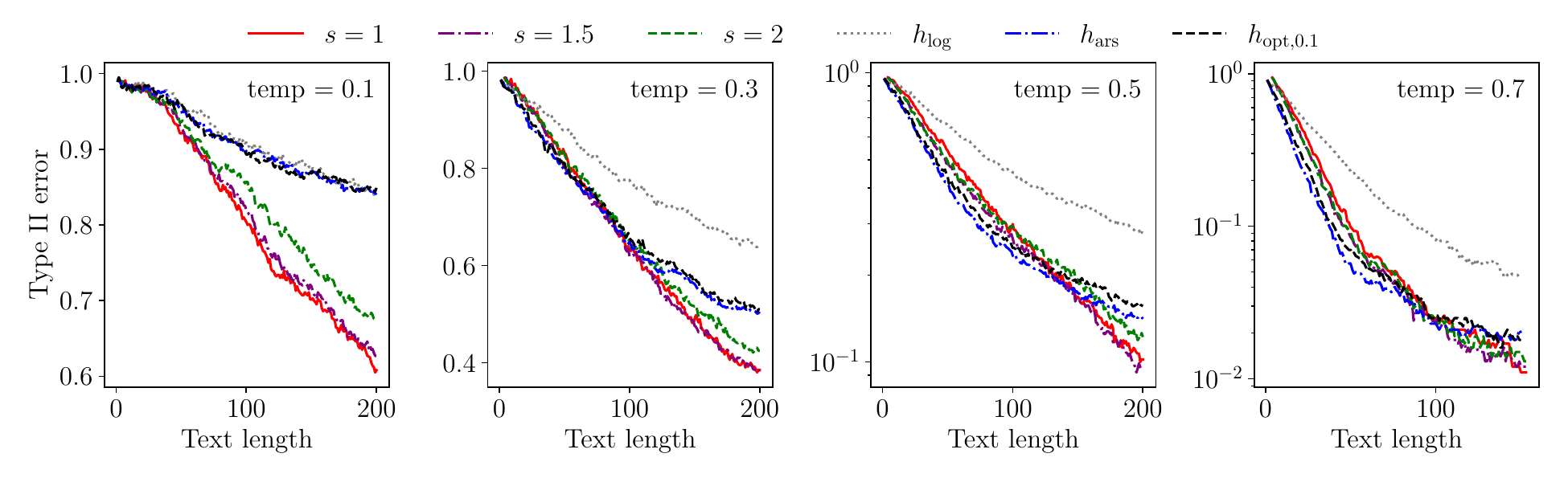}
\caption{Empirical Type II errors on Sheared-LLaMA-2.7B across different lengths of edited texts at different temperature parameters with a fixed Type I error of $\alpha=0.01$.
}
\label{fig:translation-2p7}
\end{figure}

\begin{figure}[ht]
\centering
\includegraphics[width=\textwidth]{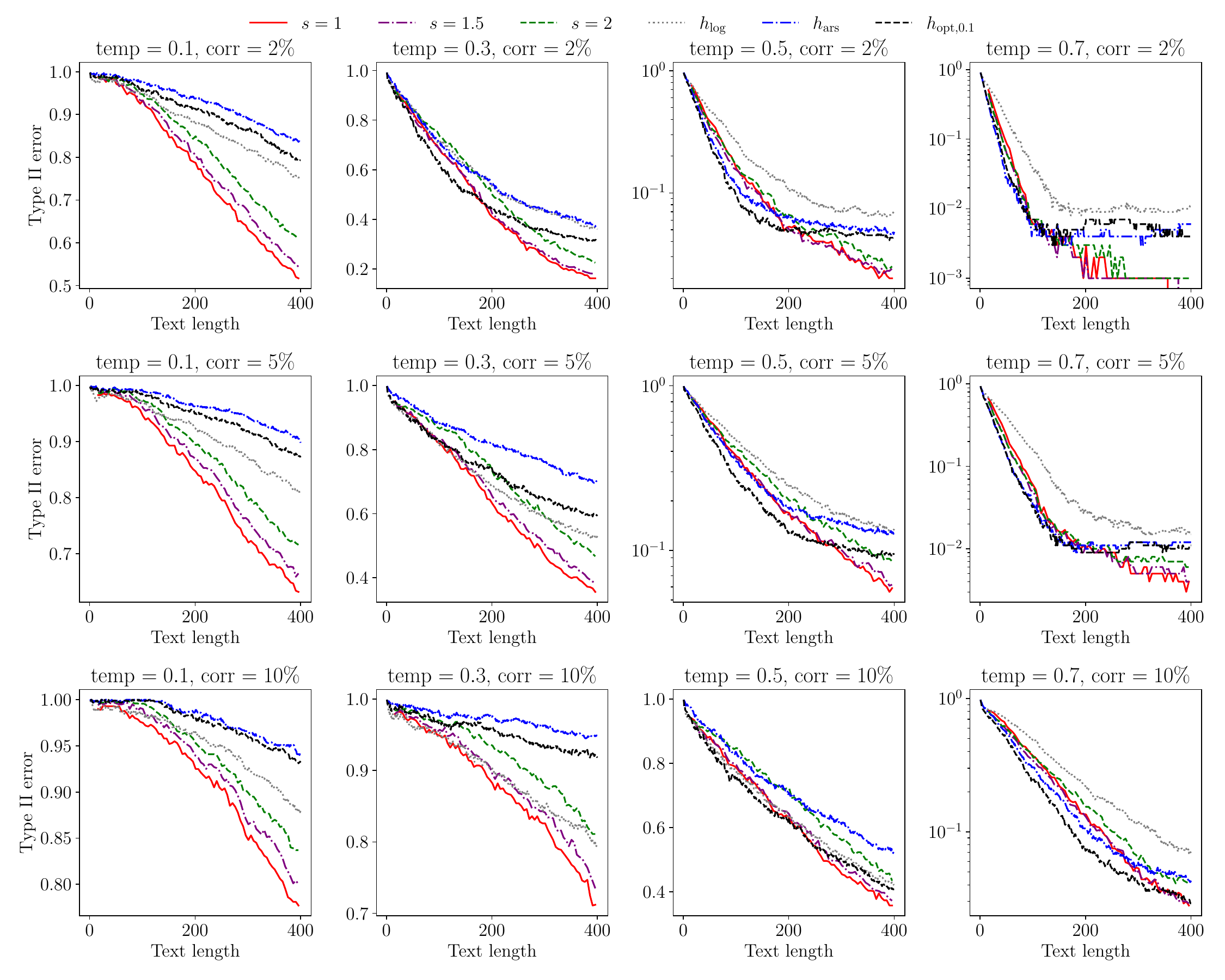}
\caption{
Complete results of Type II error under adversarial edits across various temperatures and edit fractions at $\alpha=0.01$ on the OPT-1.3B model. Columns represent four temperatures: $\{0.1, 0.3, 0.5, 0.7\}$, and rows correspond to four edit fractions: $\{2\%, 5\%, 10\%\}$.
}
\label{fig:adversarial-all}
\end{figure}

\begin{figure}[ht]
\centering
\includegraphics[width=\textwidth]{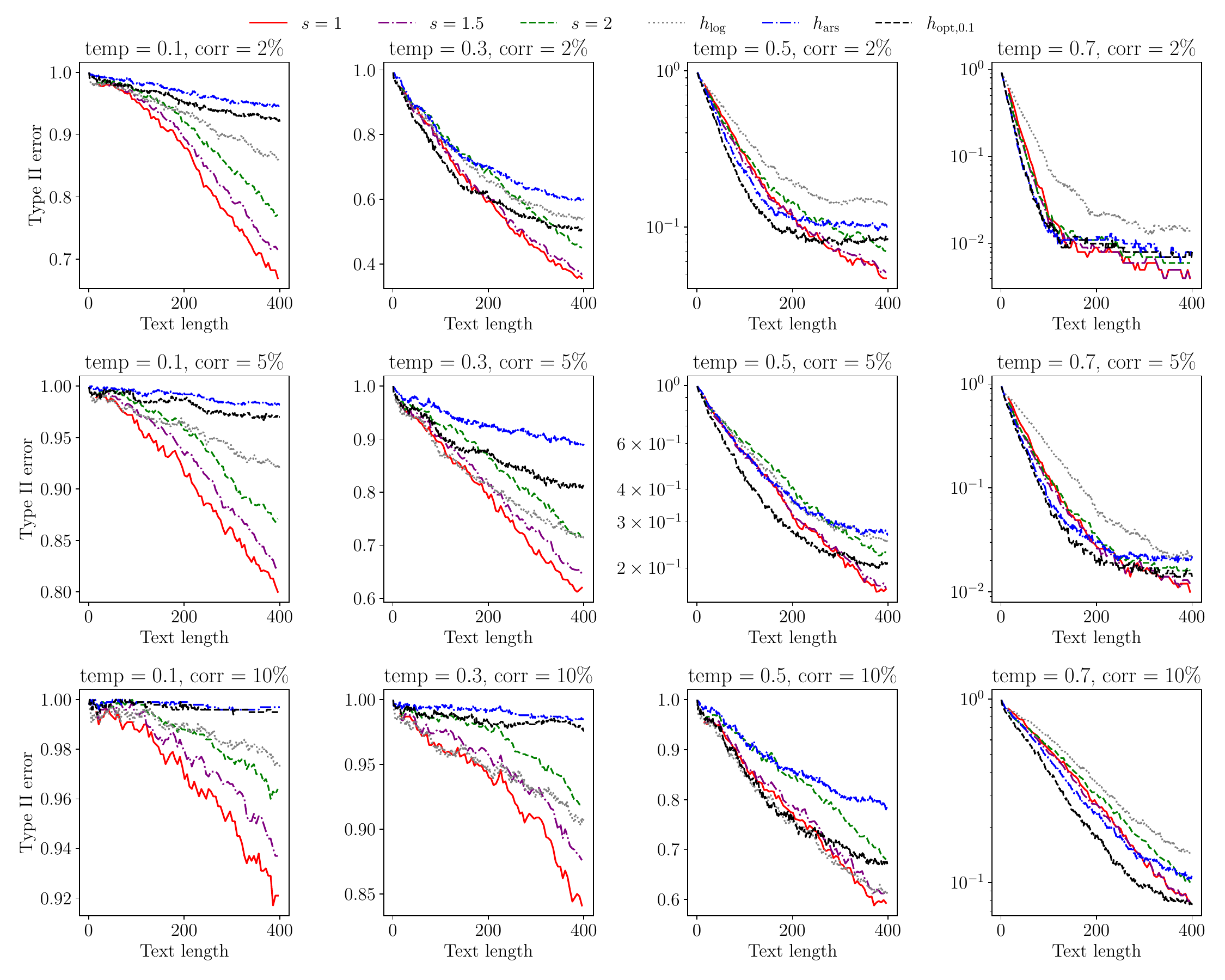}
\caption{
Complete results of Type II error under adversarial edits across various temperatures and edit fractions at $\alpha=0.01$ on the Sheared-LLaMA-2.7B model. Columns represent four temperatures: $\{0.1, 0.3, 0.5, 0.7\}$, and rows correspond to four edit fractions: $\{2\%, 5\%, 10\% \}$.
}
\label{fig:adversarial-all-2p7B}
\end{figure}

\end{appendix}

\end{document}